\documentclass[11pt]{article}

\usepackage{etoolbox}
\usepackage[margin=1in]{geometry}
\usepackage{graphicx}
\usepackage{xcolor}
\usepackage{mathtools}
\usepackage{amssymb}
\usepackage{amsthm}
\usepackage{amsfonts}
\usepackage{bm}
\usepackage{mathpazo}
\usepackage{newpxtext}
\usepackage[T1]{fontenc}
\usepackage[utf8]{inputenc}
\usepackage{microtype}
\usepackage{framed}
\usepackage[pagebackref]{hyperref}
\usepackage{url}
\usepackage[capitalize]{cleveref}
\usepackage[bf]{caption}
\usepackage{booktabs}
\usepackage{titling}
\usepackage{multicol}
\usepackage{array}
\usepackage{tikz}
\usetikzlibrary{arrows.meta}

\definecolor{citegreen}{HTML}{208054}
\definecolor{citeblue}{HTML}{0055cc}
\hypersetup{
colorlinks=true,
citecolor=citegreen,
linkcolor=citeblue,
linktocpage=true
}
\renewcommand*{\backref}[1]{}
\renewcommand*{\backrefalt}[4]{%
  \ifcase #1
  \or
    (p.~#2)%
  \else
    (pp.~#2)%
  \fi
}

\makeatletter
\newcommand{\unlinkedfootnotemark}{%
  \stepcounter{footnote}%
  \protected@xdef\@thefnmark{\thefootnote}%
  \H@@footnotemark
}

\DeclareRobustCommand\thanks[1]{%
  \unlinkedfootnotemark
  \protected@xdef\@thanks{\@thanks
    \protect\footnotetext[\the\c@footnote]{#1}}%
}
\makeatother

\numberwithin{equation}{section}

\theoremstyle{plain}
\newtheorem{theorem}{Theorem}[section]
\newtheorem{lemma}[theorem]{Lemma}
\newtheorem{corollary}[theorem]{Corollary}
\newtheorem{proposition}[theorem]{Proposition}

\newtheorem{remark}[theorem]{Remark}

\DeclareMathOperator{\Reop}{Re}

\theoremstyle{plain}
\newtheorem{definition}[theorem]{Definition}
\newtheorem{problem}[theorem]{Problem}

\newcommand{\bq}[0]{\mathbf{q}}
\newcommand{\bel}[0]{\mathbf{e}_\ell}
\newcommand{\cK}[0]{\mathcal{K}}

\newcommand{\cO}[0]{\mathcal{O}}
\newcommand{\cB}[0]{\mathcal{B}}

\renewcommand{\O}{\mathcal{O}}
\newcommand{\Ot}{\widetilde{\mathcal{O}}}

\DeclareMathOperator{\poly}{poly}
\DeclareMathOperator{\polylog}{polylog}
\newcommand{\R}{\mathbb{R}}
\newcommand{\C}{\mathbb{C}}
\newcommand{\N}{\mathbb{N}}

\newcommand{\I}{\mathbb{I}}
\renewcommand{\Re}{\operatorname{Re}}

\newcommand{\U}{\mathrm{U}}

\newcommand{\cG}{\mathcal{G}}
\newcommand{\cH}{\mathcal{H}}

\DeclareMathOperator{\spn}{span}
\DeclareMathOperator{\rank}{rank}
\DeclareMathOperator*{\E}{\mathbb{E}}

\DeclareMathOperator{\sgn}{sgn}

\DeclareMathOperator{\tr}{\operatorname{Tr}}

\DeclareMathOperator{\ad}{ad}
\newcommand{\op}[2]{\ket{#1}\!\bra{#2}}

\newcommand{\melem}[3]{\langle #1 | #2 | #3 \rangle}
\newcommand{\ev}[2]{\melem{#2}{#1}{#2}}
\newcommand{\ket}[1]{| #1 \rangle}
\newcommand{\bra}[1]{\langle #1 |}

\newcommand{\T}{\mathrm{T}}

\renewcommand{\l}[1]{\mathopen{}\left#1}
\renewcommand{\r}[1]{\right#1\mathclose{}}

\newcommand{\BPP}{\mathsf{BPP}}
\newcommand{\BQP}{\mathsf{BQP}}

\newcommand{\DQC}{\mathsf{DQC}}

\newcommand{\anc}{\mathsf{a}}
\newcommand{\norm}[1]{\|#1\|}
\DeclareMathOperator{\TO}{\mathcal{T}}
\newcommand{\BE}{\mathrm{BE}}

\newcommand{\HAMT}{\mathrm{HAM\text{-}T}}
\newcommand{\SWAP}{\mathrm{SWAP}}
\newcommand{\fSWAP}{\mathrm{fSWAP}}
\newcommand{\CZ}{\mathrm{CZ}}
\newcommand{\parity}{\mathsf{Par}}

\newcommand{\cut}[0]{\tau}

\newcommand{\bS}[0]{\mathbb{S}}

\newcommand{\br}[0]{\mathbf{r}}

\newcommand{\es}[0]{\mathsf{J}}

\usepackage{scalerel}

\newcommand{\dE}[0]{\scaleobj{1.1}{\Phi}}

\title{Polynomial-time classical and quantum simulation of\\ quantum impurity models}

\author{
Jiaqing Jiang\thanks{UC Berkeley, \texttt{\{jjiang5, nju\}@berkeley.edu}}
\qquad
Nathan Ju\thanksmark{1}
\qquad
Ojas Parekh\thanks{Sandia National Laboratories, \texttt{\{odparek, azhao\}@sandia.gov}}
\qquad
Chaithanya Rayudu\thanks{University of Cambridge, University of New Mexico, \texttt{sscr2@cam.ac.uk}}
\qquad
Andrew Zhao\thanksmark{2}
}

\date{October 4, 2026}

\begin{document}

\maketitle

\begin{abstract}
    Quantum impurity models are paradigmatic models of interacting quantum matter, as well as key computational primitives for modern electronic-structure methods. They describe a small subsystem of interacting fermions coupled to a large, noninteracting bath. We perform a comprehensive study of the computational complexity of simulating impurity models, delineating the boundary between classical and quantum tractability for this class of problems. Our main finding is that static properties of quantum impurity models can be calculated efficiently on a classical computer. Specifically, we give classical algorithms that (1) estimate the ground-state energy to additive precision $\delta$ in time $\mathrm{poly}(n,\delta^{-1})$, and (2) estimate the partition function at inverse temperature $\beta$ to relative precision $\delta$ in time $\mathrm{poly}(n,\beta,\delta^{-1})$, where $n$ is the system size. These results improve the previous best-known complexity for ground-state energy estimation from quasipolynomial to polynomial time, while establishing for the first time rigorous polynomial-time guarantees for simulating impurity models in thermal equilibrium. On the other hand, we find that simulating dynamical properties of impurity models is hard for classical computers but easy on a quantum computer. As a canonical example, we show that computing their nonequilibrium Green's functions captures the full power of quantum computation, even at finite temperature. Taken together, our results rule out superpolynomial quantum speedups for computing static properties, but provide an avenue for quantum advantage in simulating impurity physics out of equilibrium.
\end{abstract}


\tableofcontents

\clearpage

\section{Introduction}

Quantum impurity models describe a small, strongly interacting fermionic subsystem, called an impurity, coupled to a large noninteracting bath. Historically, Anderson introduced his impurity model to describe localized magnetic moments in metals~\cite{anderson1961localized}. Kondo subsequently used the impurity paradigm to explain the anomalous low-temperature resistance of dilute magnetic alloys~\cite{kondo1964resistance}, and Wilson later gave a nonperturbative solution to the Kondo problem using the numerical renormalization group (NRG)~\cite{wilson1975renormalization}. Impurity models have since found applications ranging from tunneling spectroscopy of single magnetic atoms on metallic surfaces~\cite{madhavan1998tunneling}, to transport through mesoscopic devices such as semiconductor quantum dots and single-molecule transistors~\cite{goldhaber1998kondo,liang2002kondo}.

On the frontier of computational physics and chemistry, the quantum impurity model also plays a central role in embedding methods such as dynamical mean-field theory (DMFT)~\cite{georges1992hubbard,georges1996dynamical,RevModPhys.78.865} and density matrix embedding theory~\cite{knizia2012density,wouters2016practical}. These methods are widely used to study strong correlation in both lattice models and \emph{ab initio} electronic structure. At a high level, embedding methods handle large interacting systems through a sequence of effective impurity problems, solved iteratively in a self-consistent manner. As such, the success of these heuristics heavily depends on the efficiency and accuracy of the underlying impurity solver. 

Despite the fact that the impurity is localized to a subsystem of small, fixed size, these models are surprisingly nontrivial to simulate. The Kondo problem is a classic example:~a perturbative calculation diverges logarithmically at low temperatures~\cite{kondo1964resistance}. Wilson developed NRG to overcome this breakdown~\cite{wilson1975renormalization}, thereby introducing a nonperturbative method that is still widely used today to solve impurity models at low energies~\cite{bulla2008numerical}. Since then, a diverse toolbox of complementary impurity solvers has emerged, including exact diagonalization, matrix product state (MPS) methods~\cite{PhysRevLett.93.246403,PhysRevB.90.115124}, and continuous-time quantum Monte Carlo (CT-QMC)~\cite{werner2006continuous,gull2011continuous}. More recently, general-purpose many-body methods from quantum chemistry, including configuration interaction~\cite{PhysRevB.86.165128,PhysRevB.100.125165} and coupled-cluster theory~\cite{PhysRevB.100.115154}, have also been adapted as impurity solvers.

While these numerical methods are empirically successful for a wide range of quantum impurity problems, their rigorous efficiency is not well understood. This leads us to the central question:~are impurity models fundamentally easy or hard to simulate? For comparison, the ground-state energy problem for certain Hubbard and electronic-structure Hamiltonians is believed to be intractable even for quantum computers~\cite{schuch2009computational,ogorman2022intractability}. However, the prospects for impurity models are more optimistic:~in seminal work by Bravyi and Gosset~\cite{bravyi2017complexity}, it was shown that the ground-state energy can be estimated to inverse-polynomial error in quasipolynomial time on a classical computer. Building on these ideas, Erakovic \emph{et al.}~\cite{erakovic2025high} later gave polynomial-time quantum algorithms under strong assumptions, such as a constant spectral gap in the bath or access to the ground state's one-body reduced density matrix. Thus, the question of whether there is an unconditional polynomial-time algorithm for this problem, classical or quantum, has remained open.

We settle this question by proving that static properties of quantum impurity models are classically easy to compute. Specifically, we give polynomial-time classical algorithms for estimating both the ground-state energy and finite-temperature properties, at any temperature, to inverse-polynomial error. In addition to improving the previous quasipolynomial-time guarantee~\cite{bravyi2017complexity}, this also provides the first rigorous analysis of thermal simulation for impurity models. Our results demonstrate that there is no fundamental obstruction to computing these static properties of impurity models efficiently on classical computers, thereby ruling out any superpolynomial quantum speedup within this regime.

This is particularly relevant to proposals for quantum-enhanced DMFT, which use a quantum computer to solve the underlying impurity problem~\cite{bauer2016hybrid}:~our results imply that a superpolynomial quantum advantage for this application, if any, must come from the dynamical quantities required of the impurity solver. Indeed, we also show as a complementary result that this classical tractability of static properties need not extend to dynamical quantities. We demonstrate this with a canonical example, the nonequilibrium Green's function~\cite{keldysh1965diagram,aoki2014nonequilibrium}, and prove that estimating it to inverse-polynomial error at finite temperature is a complete problem for universal quantum computation. Furthermore, we show that even at infinite temperature, the problem remains complete for the one-clean-qubit model~\cite{knill1998power}, a restricted model of quantum computation that is nevertheless believed to solve problems beyond the reach of classical computers~\cite{shor2008estimating}.

\subsection{Main results}\label{sec:results_summary}

We consider quantum impurity models consisting of a constant-size, arbitrarily interacting impurity coupled to a large free-fermion bath. Let $\gamma_1,\ldots,\gamma_{2n}$ be the Majorana operators on $n$ fermionic modes. The Hamiltonian takes the form
\begin{equation}
H=H_0+H_{\mathrm{imp}},\qquad
H_0=\alpha \I+\frac{i}{4}\sum_{j,k=1}^{2n}h_{jk}\gamma_j\gamma_k, \label{eq:impurity_model}
\end{equation}
where $\alpha\in\mathbb{R}$ is an energy shift and $h$ is a real antisymmetric matrix.
The impurity term $H_{\mathrm{imp}}$ is supported only on the first $m$ modes,
$\gamma_1,\ldots,\gamma_{2m}$, where $m=\cO(1)$, and is a sum of even-degree products of these operators. The term $H_0$ acts extensively but is strictly noninteracting. We make no assumptions on the geometry or any spectral properties.

Note that the canonical fermionic ladder operators are simply a linear rewriting of the Majorana operators, $a_j=(\gamma_{2j-1}+i\gamma_{2j})/2$. Under the Jordan--Wigner transformation, Majorana operators are also equivalent to Pauli strings,
$\gamma_{2j-1} = Z_1\cdots Z_{j-1}X_j$ and 
$\gamma_{2j} = Z_1\cdots Z_{j-1}Y_j$. 

\paragraph{Efficient classical simulation of static properties.} Our main result is that both ground- and thermal-state properties of any quantum impurity model can be classically computed in polynomial time. We remark that these runtimes feature exponential dependence in the impurity size $m$, which is constant in our setting. 

\begin{theorem}[Efficient classical simulation, informal (Theorems \ref{thm:classcial_ground_energy} and \ref{thm:classical-thermal})]\label{thm:intro_main_classical} For any quantum impurity model on $n$ fermion modes, for any $\delta>0$,	 
\begin{itemize}
\item \emph{Ground states:} There is a classical algorithm with runtime $\poly(n,\delta^{-1})$ to compute the ground energy of the quantum impurity model to additive precision $\delta$.

\item  \emph{Thermal states:} For any inverse temperature $\beta$, there is a classical algorithm with runtime $\poly(n,\beta,\delta^{-1})$ to compute the expectation value of any fermionic Gaussian operator of the form $\exp\l(\sum_{j,k} A_{jk} \gamma_j \gamma_k\r)$ to additive precision $\delta$, or the partition function $Z = \tr(e^{-\beta H})$ to relative precision $\delta$.
\end{itemize}
\end{theorem}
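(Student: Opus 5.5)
Both parts rest on one structural fact: after a Gaussian change of basis, the bath decouples from the impurity up to a small effective environment. Concretely, the impurity couples to the bath through only $2m$ Majorana operators. Acting on the bath part of $h$, I will run a Lanczos / block-Krylov tridiagonalization seeded by these $2m$ coupling vectors. This turns the bath into a semi-infinite chain (Wilson chain) of blocks of $\cO(m)$ modes, with only the first block touching the impurity. It is an orthogonal transformation of Majoranas, hence a Gaussian unitary, so it preserves both the spectrum and the form of Gaussian observables. I will then represent everything in the Gaussian-state formalism (covariance matrices, Pfaffian formulas), so that overlaps and expectation values of Gaussian operators cost $\poly(n)$.

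\emph{Ground states.} I would follow Bravyi--Gosset: the ground state is approximated by a superposition of $\chi$ Gaussian states, and one variationally minimizes the energy over such superpositions. Their bound has $\chi$ quasipolynomial in $\delta^{-1}$. To make it polynomial, I would use the chain geometry: truncate the chain after $L$ blocks, where the truncated tail is a free-fermion problem solved exactly. Then bound the error by how quickly the ground state's correlations between the impurity and chain site $\ell$ decay. A pure dimension count of the $2^{\cO(mL)}$-dimensional truncated space is only polynomial if $L=\cO(\log(1/\delta))$. That in turn requires exponential decay of the Wilson-chain couplings, or of the entanglement across sites, which fails without a gap; this is the main obstacle. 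My first attempt is to replace exact Lanczos by an energy-scale-adapted (logarithmic) discretization: coarse-grain bath modes by energy shells of geometric width. The error from low-energy shells is then controlled by their energy $\le \delta$, while high-energy shells are perturbatively frozen into their vacuum with error $\cO(\|H_{\mathrm{imp}}\|^2/E)$. Combining the two, only $\cO(\log(n/\delta))$ shells of $\cO(m)$ effective modes each matter, giving dimension $\poly(n,\delta^{-1})$. Exact diagonalization in this effective space, dressed by the Gaussian transformation, then yields the energy to precision $\delta$.

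\emph{Thermal states.} Write $e^{-\beta H}$ by a Trotter or Dyson expansion in $H_{\mathrm{imp}}$ in the interaction picture of $H_0$, i.e.\ the hybridization-expansion series of CT-QMC. Since $H_{\mathrm{imp}}$ is a constant-size even polynomial in $2m$ Majoranas, it is a linear combination of $2^{\cO(m)}$ Gaussian operators. Every term of the series is then a trace of a product of Gaussian operators, computable via Pfaffians in $\poly(n)$. To avoid exponentially many terms, I will not expand naively. Instead I use the fact that the impurity Hilbert space is constant-dimensional: tracing out the bath gives a transfer-matrix/influence-functional representation on the $4^m$-dimensional impurity operator space. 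The plan is to discretize $[0,\beta]$ into $K=\poly(\beta,\delta^{-1})$ slices and compute $Z/Z_0$, where $Z_0=\tr e^{-\beta H_0}$ is obtained from a determinant formula. Because the bath is Gaussian, the influence functional is Gaussian in the impurity Grassmann variables, so $Z/Z_0$ is a Grassmann integral over $\cO(mK)$ variables with a constant-degree interaction local in time. The impurity is not free, however, so the Grassmann integral is not a single Pfaffian. The difficulty is the nonlocal-in-time hybridization kernel. I would compress it into a sum of $\poly$ exponentials, equivalently a small set of auxiliary Gaussian bath modes reproducing the kernel to precision $\delta$, which reduces to the ground-state-style truncation above and then to exact simulation of a $\poly$-dimensional effective system. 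Positivity of $e^{-\beta H}$ converts additive error in $\log Z$ per slice into relative error for $Z$. Expectations of Gaussian operators $\exp(\sum A_{jk}\gamma_j\gamma_k)$ follow by the same procedure, with one extra Gaussian insertion in the trace, divided by $Z$.
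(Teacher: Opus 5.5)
Your diagnosis of the obstacle is accurate, and two of your moves match the paper's preprocessing: lifting bath energies below $\delta/(2n)$, and splitting the spectrum into $\cO(\log(n/\delta))$ geometric bands. The gap is the step that is supposed to make each band contribute only $\cO(m)$ effective modes.

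\emph{Freezing high shells.} Freezing a high-energy shell in its vacuum incurs an error of order $\|\Pi_\ell\mu_j\|^2/\omega_\ell$. This is set by the hybridization strength, not $\|H_{\mathrm{imp}}\|$, and nothing makes it small: the coupling vectors are constrained only by $\|H\|=\poly(n)$. This second-order shift is a genuine part of the ground energy, not an error to discard.

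\emph{Coarse-graining a shell.} Replacing a factor-$2$ band by $\cO(m)$ modes is the heuristic discretization step of NRG. It discards the band's internal Krylov chain and changes $H$ by up to order $n\omega_\ell$ in norm, with no a priori bound on the resulting energy error.

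\emph{What the paper does instead.} It keeps every band and moves only the \emph{first} Krylov shell of each band into an enlarged impurity. The residual coupling in band $\ell$ is then an off-diagonal block of $\Pi_\ell E_0$ of norm at most $2\omega_\ell$, so $\xi_\ell/\omega_\ell^2=\cO(m)$ whatever the original coupling. It then \emph{proves} that residual occupations decay exponentially in Krylov depth, using three ingredients:
\begin{itemize}
\item the commutator identity $[A_I,H]=B_I+C_I$, which never raises residual-bath degree;
\item the weighted positivity $\Re(e^{-D_\ell/4}E_\ell e^{D_\ell/4})\geq\frac{\omega_\ell}{2}\I$, which follows from the block-tridiagonal structure and $\omega_\ell\I\leq E_\ell\leq2\omega_\ell\I$;
\item $\Delta_I\geq0$ for a ground state.
\end{itemize}
Together these give $\alpha_\bq F_\bq\leq4\sum_\ell(\xi_\ell/\omega_\ell)F_{\bq-\bel}$, hence $\tr(\rho e^{\mathsf D/2})\leq e^{4\kappa}$. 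This yields a Fock-space truncation to $\{x:D(x)\leq T\}$ of polynomial size, independent of $\|H_{\mathrm{imp}}\|$ and of the hybridization. Without an estimate of this kind, your effective space has no proven relation to the true ground state.

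Your thermal plan has a similar gap. A time-sliced Grassmann integral with a non-Gaussian term and a time-nonlocal kernel is not tractable in general. Compressing the hybridization kernel into $\poly$ many exponentials buys nothing: the kernel of the given bath already is a sum of at most $n$ exponentials, so you are simply returned to an impurity model with polynomially many bath modes. A compression to few modes to precision $\delta$ would be an uncontrolled bath fit. More fundamentally, no ground-state-style truncation can apply to the whole bath. Modes with $\epsilon_j\lesssim1/\beta$ are thermally populated, so the Gibbs state carries entropy extensive in their number, and an expansion around $H_0$ in the full impurity and hybridization terms costs $e^{\cO(\beta W)}$ with $W$ as large as $\poly(n)$.

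The missing idea is the soft--hard decomposition at $\beta\omega_*=9\log n$:
\begin{itemize}
\item soft modes orthogonal to the impurity couplings form a free dark sector, handled exactly by Gaussian/Pfaffian formulas;
\item the at most $2m$ soft modes touching the impurity are absorbed into it;
\item the thermally gapped active part is compressed by the same recursion, now with the Jensen bound $-\Delta_I\leq x_I\log(1/x_I)/\beta$ and capped depth $\min\{d-1,\tau\}$;
\item the only term left to expand is $V_{\mathrm{res}}$, with $\beta\|V_{\mathrm{res}}\|\leq54m\log n$.
\end{itemize}
CT-QMC in $V_{\mathrm{res}}$ alone then costs $e^{\cO(\beta W)}=\poly(n)$. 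Each configuration value is a trace over the compressed active space times a dark-sector Pfaffian. Your handling of Gaussian observables by one extra insertion is fine, but it rests on an algorithm whose efficiency you have not established.
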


The above theorem therefore allows efficient classical simulation to arbitrary inverse-polynomial precision. Note that all local fermionic observables are subsumed under the class of fermionic Gaussian operators, and a relative-error approximation of the partition function $Z$ implies an additive-error approximation to the free energy $F_\beta = -(1/\beta) \log Z$. For ground states, this improves the previous quasipolynomial-time classical algorithm~\cite{bravyi2017complexity} to polynomial time. It is worth noting that, although quantum impurity models can be mapped to one-dimensional spin systems, the resulting Hamiltonian may be gapless, so MPS ground-state algorithms for gapped 1D systems do not necessarily apply here~\cite{landau2015polynomial}.

For thermal states, the 1D structure also allows for an MPO representation~\cite{kuwahara2021improved,molnar2014approximating}, but the bond dimension, and hence the computational cost, can grow subexponentially with $\beta$. CT-QMC can also scale exponentially in $\beta$ due to the sign problem~\cite{gull2011continuous,shinaoka2015negative}. In contrast, our result achieves a  polynomial dependence on $\beta$, rigorously extending the regime of efficient simulation to low temperatures.

The main technical ingredient behind both algorithms is a compression lemma for impurity models. After an appropriate change of basis (a Bogoliubov transformation), we show that the ground state, or the relevant part of the thermal state, can be compressed to an exponentially smaller subspace of the full $2^n$-dimensional Hilbert space.
Our approach draws on ideas from numerical methods, notably the logarithmic separation of bath-energy scales and 1D chain transformations pioneered by Wilson for NRG~\cite{wilson1975renormalization,PhysRevB.21.1003,bulla2008numerical}.

\begin{lemma}[Compression lemma, informal (Corollaries \ref{cor:ground_compress} and \ref{cor:compression_thermal_extension})]\label{lem:compress_intro}
Consider a quantum impurity model on $n$ fermion modes.
Let $\omega$ denote the spectral gap of the bath term. For any $\delta>0$,
\begin{itemize}
    \item \emph{Ground states:} There exists a ground state whose weight is at least $1-\delta$ on a subspace of dimension $\poly(n,\delta^{-1},\omega^{-1}).$
    \item \emph{Thermal states:} Provided $\beta\omega\ge 9\log n$, the thermal state at inverse temperature $\beta$ has weight at least $1-\delta$ on a subspace of dimension $\poly(n,\beta,\delta^{-1})$.
\end{itemize}
Notably, the dimensions of these subspaces are independent of both the impurity interaction strength $\|H_{\mathrm{imp}}\|$ and the impurity--bath coupling strength.
\end{lemma}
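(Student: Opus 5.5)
Our plan follows the Bravyi--Gosset strategy, refined with a Wilson-style logarithmic energy decomposition to replace the quasipolynomial counting by a polynomial one.

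\textbf{Step 1: normal form.} First we bring the Hamiltonian into a normal form. Since $H_{\mathrm{imp}}$ acts only on $\gamma_1,\dots,\gamma_{2m}$, we apply an orthogonal (Bogoliubov) rotation that fixes a block of $\O(m)$ ``impurity-adjacent'' modes. This block consists of the impurity modes together with the bath modes they couple to through $h$. The rotation diagonalizes the remaining bath into normal modes $b_k$ with energies $\epsilon_k\ge\omega>0$. The bath term then reads $\sum_k \epsilon_k b_k^\dagger b_k$ up to a shift, coupled to the $\O(1)$-mode core only through $\O(m)$ linear terms $\sum_k (v_k b_k + \text{h.c.})\,O$, where $O$ is an operator on the core. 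This is a star geometry; alternatively, a Lanczos/Wilson-chain rewriting gives the same structure. The subspace we propose is spanned by core states tensored with bath Fock states that have few excitations, weighted by energy. Concretely, we group bath modes into logarithmic energy shells $[2^{j}\omega,2^{j+1}\omega)$, $j=0,\dots,\O(\log n)$ (the bath norm is $\poly(n)$ after normalization). Within each shell we allow only a bounded number of excitations, with total excitation energy bounded in units of the lowest shell energy.

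\textbf{Step 2: ground states.} We show that a suitable ground state has small weight outside this subspace. The key estimate is that the number of bath excitations in shell $j$, measured by $N_j=\sum_{k\in j} b_k^\dagger b_k$, has $\langle N_j\rangle$ bounded independently of the coupling strength. The mechanism is Bravyi--Gosset's: we compare energies after projecting out the impurity, using a variational/commutator argument. We bound $[H,b_k]$ and use $\langle\psi|b_k^\dagger [H,b_k]|\psi\rangle\ge0$ at a ground state. This gives $\epsilon_k\langle b_k^\dagger b_k\rangle\le |v_k|\,\|\cdot\|$-type bounds. Here the interaction strength cancels because we work with a ground state of the projected core dynamics: we first choose the ground state within the core sector, then apply the Gaussian rotation that makes the coupling rank-$\O(m)$. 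Summing $\epsilon_k\langle n_k\rangle$ over a shell and dividing by $2^j\omega$ gives $\langle N_j\rangle=\O(1)$, and in fact decay in $j$. Markov's inequality per shell and a union bound over $\O(\log n)$ shells then show that weight $1-\delta$ lies on configurations with $\O(\log(1/\delta))$ total excitations, placed with energy budget $E\lesssim \omega\,\poly(\delta^{-1})$. We then count such bath configurations. Because the energy budget is measured relative to $\omega$ and the excitation count per shell is bounded, the count is polynomial. Each excitation in shell $j$ costs at least $2^j\omega$, so the number of configurations with total ``shell weight'' $W$ is like a partition count, $\exp(\O(\sqrt{W}))$, and the number of mode choices is $n^{\O(\#\text{exc})}$. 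Naively this is quasipolynomial; this is exactly Bravyi--Gosset's bottleneck.

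\textbf{Step 3: polynomial count (main obstacle).} The hardest step is converting the count to polynomial. Our plan is to exploit that only the $\O(m)$ coupling directions matter within each shell. Within shell $j$, we rotate the bath modes so that only $\O(m)$ modes couple to the core; this is Wilson's chain construction applied shell-by-shell. The remaining modes in the shell then decouple and can be set to vacuum exactly. The effective bath therefore has $\O(m\log n)$ modes, and bounded occupations with geometric decay of occupation in $j$ give a dimension of the form $\prod_j \poly(\cdot)$. The difficulty is that the shell-wise rotation does not exactly block-diagonalize the intra-shell energies; there is a residual hopping of size $\lesssim 2^j\omega$ coupling into further modes. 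We would control this by iterating to a chain within each shell and truncating at depth $\O(\log(1/\delta))$, with errors summed geometrically. This yields dimension $\poly(n,\delta^{-1},\omega^{-1})$.

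\textbf{Step 4: thermal states.} For the thermal case we replace the ground-state inequality with the Gibbs-state analogue. The KMS/detailed-balance relation gives $\langle b_k^\dagger b_k\rangle_\beta\lesssim e^{-\beta\epsilon_k}+(\text{coupling term})/\epsilon_k$. The condition $\beta\omega\ge 9\log n$ makes the thermal occupation $n\,e^{-\beta\omega}\le n^{-8}$, which is negligible across all $n$ modes. After that, the same shell counting applies with $\omega^{-1}\lesssim\beta$, giving dimension $\poly(n,\beta,\delta^{-1})$.
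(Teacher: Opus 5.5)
Your skeleton (log-spaced bands, a Lanczos/Wilson chain inside each band) matches the paper's, but the steps that actually carry the proof are missing or incorrect.

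(i) The estimate $\langle N_j\rangle=O(1)$ is not established. Your commutator inequality gives $\epsilon_k\langle n_k\rangle\le|v_k|\,\|O\|\sqrt{\langle n_k\rangle}$, hence $\langle N_j\rangle\le\|\Pi_j v\|^2\|O\|^2/(2^j\omega)^2$. With your core (impurity plus the bath modes it touches), $\|\Pi_j v\|$ is controlled only by the global bath norm $\es$, not by $2^j\omega$. The remark about a ground state of the ``projected core dynamics'' does not change this. For a polynomial count this ratio must be $O(1)$ in every band; in the paper the moment bound is $e^{4\kappa}$ with $\kappa=\sum_\ell\xi_\ell/\omega_\ell^2$, where $\xi_\ell$ is the squared coupling norm in band $\ell$. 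Getting $O(1)$ requires enlarging the impurity \emph{band by band}. Build the Krylov chain of band $\ell$ from $\Pi_\ell\mu_j$ and absorb its first shell. The residual bath then couples to the enlarged impurity only through the intra-band block $Q_{\ell0}$, with $\|Q_{\ell0}\|\le2\omega_\ell$, so $\xi_\ell/\omega_\ell^2\le8m$. This costs only $O(mL)$ extra impurity modes, i.e.\ a factor $(\es/\omega)^{O(m)}$.

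(ii) First moments plus Markov cannot give $O(\log\delta^{-1})$ excitations with probability $1-\delta$. They give $\poly(\delta^{-1})$ excitations, hence $n^{\poly(\delta^{-1})}$ configurations. You need an exponential moment $\tr(\rho\,e^{\mathsf D/2})$ of a depth-weighted occupation. The paper obtains it from a recursion over \emph{all} weighted joint occupations $F_{\mathbf q}$. For $A_I=a_{i_q}\cdots a_{i_1}$, the commutator $[A_I,H]$ splits into a same-degree bath-replacement term, bounded below by weighted positivity, and a strictly lower-degree hybridization term, bounded by Cauchy--Schwarz. This yields $F_{\mathbf q}\le\prod_\ell(4\xi_\ell/\omega_\ell^2)^{q_\ell}/q_\ell!$.

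(iii) Step~3 does not yield a polynomial dimension. The non-coupling modes of a band do not decouple. Your fallback of truncating each chain at depth $O(\log\delta^{-1})$ has no argument for the geometric decay. Even granting it, it leaves $O(mL\log\delta^{-1})$ unconstrained modes, i.e.\ dimension $\exp(O(m\log(\es/\omega)\log\delta^{-1}))$, which is quasipolynomial. Likewise, a product of $\poly$ factors over $\Theta(\log n)$ bands is $n^{O(\log n)}$.

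Decay along a chain comes from weighted positivity, $\Re(e^{-\lambda D_\ell}E_\ell e^{\lambda D_\ell})\ge\frac{\omega_\ell}{2}\I$ for $\lambda\le\frac14$, where $D_\ell$ weights Krylov shell $d$ by $d-1$. This uses both the block-tridiagonal form and the band condition $\omega_\ell\I\le E_\ell\le2\omega_\ell\I$, which is why bands must be formed before chains. The polynomial count then comes from the \emph{weighted} set $\{x:\sum_i(d(i)-1)x_i\le T\}$ with $T=O(\log\delta^{-1}+\kappa)$. Its size is at most $e^{8T+3\chi}$, because each depth, summed over bands, carries only $\chi=O(mL)$ modes.

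(iv) In the thermal case, $\Delta_I=\Re\tr(\rho A_I^\dagger[H,A_I])\ge0$ fails, and thermal occupations need not decay with depth. The paper instead uses $-\Delta_I\le\beta^{-1}x_I\log(1/x_I)$ with $x_I=\tr(\rho A_I^\dagger A_I)$, obtained from Jensen. It absorbs this via $x\log(1/x)\le tx+e^{-t-1}$ with $t=\beta\alpha_{\mathbf q}/8$, where $\alpha_{\mathbf q}=\sum_\ell q_\ell\omega_\ell$. It also caps the depth weight at $\tau=\max\{1,\frac18\log n\}$. The hypothesis $\beta\omega\ge9\log n$ is used exactly to keep the leftover error $\sum_ie^{-\beta\omega_{\ell(i)}/8+\tau/2}$ of order one. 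Your KMS estimate addresses none of this and still carries the coupling term.
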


The ground- and thermal-state compression lemmas share the same proof strategy developed in Section~\ref{sec:compress}, with slight differences in the details given in Sections~\ref{sec:compress_ground} and~\ref{sec:compress_thermal}.
The basis for each compressed subspace can be computed in time proportional to its dimension. For ground-energy estimation, the parameter $\omega$ can be artificially adjusted to $\Omega(\delta/n)$ while only changing the ground energy by $\cO(\delta)$. The compression lemma then immediately gives a classical algorithm with desired complexity by diagonalizing the Hamiltonian within the compressed subspace. As a corollary, when $H$ has an inverse-polynomial spectral gap above its ground space, the same techniques also imply efficient preparation of a ground state on a quantum computer~\cite{lin2020nearoptimal}.

For thermal states, we only apply the compression lemma to part of the Hamiltonian, leading to the following structural result.

\begin{lemma}[Thermal structural lemma, informal (Section \ref{sec:soft_hard}, Eq.~\eqref{eq:partial_compress})]
Consider any quantum impurity model on $n$ fermion modes. At inverse temperature $\beta$, the thermal state  problem can be efficiently reduced, using the compression lemma, to a Hamiltonian of the form
\begin{equation}
H_{\mathrm{ref}}+V,\qquad \beta\|V\|=\cO(\log n),
\end{equation}
where $H_{\mathrm{ref}}$ is a sum of a compressed interacting Hamiltonian and a free-fermion Hamiltonian, both supported on disjoint sets of modes;~as such, it can be handled efficiently with standard techniques.

Applying CT-QMC to this decomposition gives the classical algorithm of Theorem~\ref{thm:intro_main_classical}, while applying quantum belief propagation~\cite{hastings2007quantum} gives a quantum algorithm for preparing the thermal state within trace-distance $\delta > 0$ with gate complexity $\poly(n, \beta, \delta^{-1})$.
\end{lemma}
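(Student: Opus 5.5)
The plan is to first bring the bath into a normal form and then split it by energy scale. Apply a Bogoliubov transformation that keeps the impurity modes $\gamma_1,\ldots,\gamma_{2m}$ fixed and block-diagonalizes the remaining part of $h$. The bath then becomes a sum $\sum_k \epsilon_k(b_k^\dagger b_k - 1/2)$ of normal modes with energies $\epsilon_k \ge 0$, and each impurity mode is coupled linearly to these modes. Next, split the bath modes at a threshold $\Lambda = c\log n/\beta$, where $c$ is a constant. The \emph{soft} modes, with $\epsilon_k \le \Lambda$, are kept together with the impurity. The \emph{hard} modes, with $\epsilon_k > \Lambda$, are handled differently depending on their coupling. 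For modes with $\beta\epsilon_k$ large, the thermal state is exponentially close to the Gaussian vacuum. This regime is exactly where the condition $\beta\omega \ge 9\log n$ of Lemma~\ref{lem:compress_intro} applies.

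Concretely, write $H = H_{\mathrm{int}} + H_{\mathrm{hard}} + V_{\mathrm{coup}}$. Here $H_{\mathrm{int}}$ consists of the impurity, the soft modes, and whatever part of the hard sector is needed to reach the compression regime. $H_{\mathrm{hard}}$ is free, and $V_{\mathrm{coup}}$ is the impurity--hard-bath hybridization. The steps are as follows.
\begin{enumerate}
\item Reorganize the hard sector by a Lanczos/Wilson chain transformation starting from the impurity coupling vectors. Only $\cO(m)$ chain sites of the hard bath couple directly to the impurity.
\item Absorb these $\cO(m)$ sites, together with the impurity and the soft block, into the interacting part. Apply the thermal compression lemma (Lemma~\ref{lem:compress_intro}) to this part. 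Its relevant bath gap is now at least $\Lambda$, so $\beta\omega \ge 9\log n$ holds. Replace it by its restriction $\widetilde H_{\mathrm{int}}$ to a subspace of dimension $\poly(n,\beta,\delta^{-1})$.
\item Move everything that does not fit into $H_{\mathrm{ref}} = \widetilde H_{\mathrm{int}} + H_{\mathrm{free}}$ into $V$. This includes the residual chain hopping between the absorbed sites and the rest of the hard chain, and the energy-cutoff corrections. The two summands of $H_{\mathrm{ref}}$ act on disjoint mode sets.
\end{enumerate}

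The bound $\beta\norm{V} = \cO(\log n)$ has to come from choosing the cutoff so that each discarded term has norm $\cO(\Lambda)$. The chain hoppings in a bath whose spectrum lies between $\Lambda$ and the scale of $\norm{h}$ are not automatically of size $\Lambda$. To fix this, I would further split the hard sector logarithmically into energy shells $[2^j\Lambda, 2^{j+1}\Lambda)$. I would then use the Wilson separation of scales, which says that in the chain basis the hopping out of a shell is bounded by the shell width. With this, only the term linking the compressed block to the next shell enters $V$, and that term has norm $\cO(\Lambda)$. All higher couplings lie within the free part, since there the impurity is absent.

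A separate argument is needed for the soft modes. There may be $\Theta(n)$ of them, and their direct coupling to the impurity is arbitrary. I would group these couplings, again via the chain transformation, into $\cO(m)$ effective soft modes that couple to the impurity. The rest of the soft block then decouples as free fermions into $H_{\mathrm{free}}$. Because every soft energy is at most $\Lambda$, the remaining soft-chain hoppings have norm $\le\Lambda$ and can safely be placed in $V$.

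The main obstacle I expect is making this shell/chain bookkeeping rigorous: showing that after all the reorganizations the total discarded term really satisfies $\norm{V} = \cO(\Lambda)$, independently of $\norm{H_{\mathrm{imp}}}$ and of the coupling strength. I would attempt this first with an explicit block-tridiagonal form and a Gershgorin-type bound on the cross-shell blocks. Finally, I would check that the reduction is efficient. Each transformation is a Gaussian unitary computable in $\poly(n)$ time, and Lemma~\ref{lem:compress_intro} supplies the compressed basis in time proportional to its dimension.
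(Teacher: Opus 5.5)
Your soft-sector treatment is sound and matches the paper up to a choice of basis. Absorbing the at most $2m$-dimensional soft shell $\operatorname{span}\{\Pi_<\mu_j\}$ into the impurity leaves a free soft block. That block is linked to the impurity by a block of norm at most $\Lambda$ and rank at most $2m$, which is exactly the paper's $V_{\mathrm{res}}$. The paper diagonalizes all of $H_0$ and uses $\mathcal S_2=\cK_<\ominus(\cK_<\cap\cB)$ instead, but your bath-only diagonalization works equally well.

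The genuine gap is the hard sector. In steps 1--3 you put the hopping between the absorbed hard chain sites and the rest of the hard chain into $V$, and you try to make it $\cO(\Lambda)$ with logarithmic energy shells. This fails. Within the band $[2^j\Lambda,2^{j+1}\Lambda)$ the off-diagonal Krylov block is bounded only by $2^{j+1}\Lambda$; this is the paper's bound $\|G_{\ell j}\|\le 2\omega_\ell$. So the top bands can contribute hoppings comparable to $\|h_{\mathrm{bath}}\|$, and $\beta\|V\|$ becomes of order $\beta\es$ rather than $\cO(\log n)$. Separation of scales bounds each coupling relative to its own band energy, not by the cutoff.

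Your step 2 is also inconsistent with this plan. Once the remaining hard chain sits in $H_{\mathrm{free}}$, the block you compress contains only $\cO(m)$ impurity-like modes plus the soft block. It therefore has no bath with gap at least $\Lambda$, so the compression lemma is either vacuous or inapplicable. Likewise, strongly hybridized hard modes are not close to the vacuum, which is exactly why a compression lemma is needed at all.

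The missing idea is that nothing from the hard sector needs to be treated perturbatively. The thermal compression (Corollary~\ref{cor:compression_thermal_extension}) already tolerates arbitrarily strong impurity--bath coupling and arbitrary $\|H_{\mathrm{imp}}\|$, provided every bath energy satisfies $\beta\omega\ge 9\log n$. This is because its recursion depends on the ratio $\xi_\ell/\omega_\ell^2=\cO(m)$, not on the coupling itself. The paper therefore takes $H_{\mathrm{act}}$ to consist of three pieces: the impurity, the absorbed soft modes, and the entire hard sector as bath, with gap at least $\omega_*=9\log n/\beta$. It compresses $H_{\mathrm{act}}$ and puts only the soft link into $V$, giving $\beta\|V_{\mathrm{res}}\|\le 54m\log n$.

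You also need a point you did not address: the compression must hold for the Gibbs state of the full $H$, not of $H_{\mathrm{act}}$ alone. The paper gets this from the extension in that corollary. The operator $M=H_{\mathrm{dark}}+V_{\mathrm{res}}$ commutes with every residual-bath mode, so the occupation recursion is unchanged and $P_{\bS}$ commutes with $M$.

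Finally, once $\beta\|V\|=\cO(\log n)$ is established, the $e^{\cO(\beta W)}$ cost of CT-QMC and the $e^{\cO(\beta\|V\|)}$ cost of QBP become polynomial. That part of the statement is left implicit in your proposal.
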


Here, $H_{\mathrm{ref}}$ is the easy part of the Hamiltonian, while the product of the inverse temperature and perturbation norm, $\beta\|V\|$, controls the simulation cost. Since we show that $\|V\|$ scales as $\beta^{-1}$, this cancels the exponential $\beta$ dependence of both CT-QMC and quantum belief propagation (QBP), resulting in a cost of $\poly(n, \beta, \delta^{-1})$ in both cases.  In contrast, most existing quantum Gibbs samplers for preparing thermal states feature runtime bounds with exponential or worse dependence on $\beta$, even in 1D~\cite{bergamaschi2025fast} or in perturbative regimes~\cite{tong2025fast,smid2025polynomial}.
For completeness, we provide a rigorous analysis of CT-QMC and QBP in Appendix \ref{sec:CT-QMC} and Appendix \ref{sec:QBP}, respectively.

\paragraph{Hardness results for out-of-equilibrium simulation.} Our polynomial-time classical algorithms imply that calculating static properties of impurity models do not admit any superpolynomial quantum speedup. As a complementary result, we show that this conclusion does not extend to dynamical properties. To this end, we consider time-dependent impurity Hamiltonians, i.e., of the form \cref{eq:impurity_model} but where the coefficients are now allowed to be efficiently computable functions of time. It was previously shown that such Hamiltonians are universal for quantum computation, in the sense that they can encode any quantum circuit within a logical subspace of the fermions~\cite{bravyi2017complexity,brod2014computational}. We extend on this idea to show that this universality persists even in the thermal regime, and over the full fermionic Hilbert space.

Specifically, we consider a canonical object in nonequilibrium physics, the time-dependent Green's function~\cite{aoki2014nonequilibrium}. We prove that computing it to inverse-polynomial error is complete for one of two central quantum complexity classes, depending on the temperature:~$\BQP$ (bounded-error quantum polynomial time) and $\DQC_1$ (deterministic quantum computation with one clean qubit). The former describes the full power of quantum computation~\cite{bernstein1993quantum}, while the latter is a restricted model wherein the quantum computer is only allowed to initialize one pure qubit, with all other qubits maximally mixed~\cite{knill1998power}. The analogy between $\DQC_1$ and impurity models is particularly compelling, as they both feature a small, active subsystem coupled to a large, trivial bath.

\begin{theorem}[Complexity of nonequilibrium Green's functions, informal (Theorems \ref{thm:DQC1-complete} and \ref{thm:BQP-complete})]
    For any $j, k$, consider the dynamical two-point correlation function
    \begin{equation}
        G_\beta(t, t') = \langle \gamma_j(t) \gamma_k(t') \rangle_\beta, \quad \text{where } \gamma_p(s) = U(s)^\dagger \, \gamma_p \, U(s), 
    \end{equation}  
    $U(s)$ is the time evolution under a impurity Hamiltonian $H(s)$ with bounded, time-dependent coefficients, and the thermal expectation is with respect to the Gibbs state of $H(0)$ at inverse temperature $\beta$. For $t, t' = \poly(n)$, estimating $G_\beta(t, t')$ to $1/{\poly(n)}$ additive precision is:
    \begin{itemize}
        \item $\DQC_1$-complete for $\beta = 0$, and
        \item $\BQP$-complete for $\Omega(1) \leq \beta \leq \poly(n)$.
    \end{itemize}
\end{theorem}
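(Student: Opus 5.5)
Both parts of the theorem have an easy direction (membership) and a hard direction (hardness), and I would handle them in that order.

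\textbf{Membership.} For $\beta=0$ the Gibbs state is $\I/2^n$. Then $G_0(t,t') = 2^{-n}\tr\bigl(U(t)^\dagger\gamma_j U(t)U(t')^\dagger\gamma_k U(t')\bigr)$. Under Jordan--Wigner each $\gamma_p$ is a Pauli string, so this is the normalized trace of a polynomial-size circuit: two evolutions, each simulated by Trotterization of the bounded, time-dependent local Hamiltonian $H(s)$ in time $\poly(n)$, interleaved with two Pauli strings. The operator is unitary but not Hermitian. We therefore estimate its real and imaginary parts separately with the standard one-clean-qubit Hadamard test, which puts the problem in $\DQC_1$.

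For finite $\beta$ we need a purified Gibbs state of $H(0)$ to inverse-polynomial trace distance. The paper's thermal-state preparation via quantum belief propagation gives this with gate complexity $\poly(n,\beta,\delta^{-1})$. A Hadamard test on that state then puts the problem in $\BQP$ for all $\beta\le\poly(n)$.

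\textbf{Hardness.} I would start from the known universality of time-dependent impurity Hamiltonians~\cite{bravyi2017complexity,brod2014computational}. There, a constant-size impurity together with matchgate-type SWAP/fSWAP moves generated by free-fermion terms shuttles logical qubits through the impurity, where an interacting gate such as $\CZ$ is applied. This lets us implement any poly-size circuit $C$ as $U(T)$ up to a known Gaussian frame, and Trotter errors are $1/\poly$.

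For $\DQC_1$-hardness, take the complete problem of estimating $2^{-k}\Re\tr(C)$ in the trace-estimation form of~\cite{knill1998power,shor2008estimating}. I would design $U(t)$ and $U(t')$ so that $U(t)^\dagger\gamma_j U(t)\,U(t')^\dagger\gamma_k U(t')$ equals a Pauli string times the controlled-$C$ network. A Hadamard-test-like gadget achieves this: $\gamma_j$ acts as $X$ on a control mode. The main thing to check is that the encoding works on the \emph{full} fermionic Hilbert space rather than only a logical subspace, because the maximally mixed state weights every sector. I would choose an encoding in which each logical qubit is one mode, or a parity-fixed pair of modes with the parity absorbed in a sign. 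Leakage sectors must then contribute either exactly the same trace or zero, and this can be arranged by making the gate construction parity-preserving and sector-independent.

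For $\BQP$-hardness at $\beta\ge\Omega(1)$, choose $H(0)$ to be a noninteracting Hamiltonian with a large polynomial gap, e.g.\ $-\Delta\sum_p a_p^\dagger a_p$ with $\Delta=\poly(n)$ rescaled so coefficients stay bounded. A cleaner route is to use only $\beta=\Omega(1)$ together with a clean-up step. At constant $\beta$ each mode is independently polarized with bias $\tanh(\beta\Delta/2)$, so the state is a product of biased qubits. Many noisy-biased qubits can be converted to near-pure ones by algorithmic cooling or majority encoding, implemented by the time evolution itself. Alternatively, one can use the standard fact that $\DQC$ with a constant number of $\epsilon$-biased qubits per logical qubit plus repetition is $\BQP$-complete. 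Either way we obtain a pure $\ket{0\cdots0}$ up to $1/\poly$ error. We then run $C$ and read out $\langle Z_1\rangle$ as $G_\beta$ with $\gamma_j=\gamma_k$ chosen to yield $-i\gamma_1\gamma_2$-type parity; equivalently, take $t'=0$ and absorb a $\gamma$ into the evolution.

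\textbf{Main obstacle.} The delicate step is the full-Hilbert-space, finite-temperature encoding. We must ensure that the thermal (or maximally mixed) weight on non-logical sectors does not spoil the signal, and that bounded coefficients suffice when $\beta$ is only $\Omega(1)$. For this I would first try the purification-by-cooling construction.
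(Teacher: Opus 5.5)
Your membership arguments match the paper's. So do your two hardness ingredients: a controlled-$C$ Hadamard-type network read out through single Majoranas, and Schulman--Vazirani algorithmic cooling of constant-bias thermal modes. The gap is the step you flag as the main obstacle. It is the paper's central new ingredient, and you assert it rather than construct it.

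\emph{Why your candidate encodings fail.} Starting from the Bravyi--Gosset/Brod--Childs universality cannot work. That code carries quadratic space overhead, so the code space has dimension $2^k$ inside $2^n$ with $n\gg k$. At $\beta=0$ the signal $\tr(C)/2^k$ is then suppressed by a factor $2^{-(n-k)}$, or polluted by uncontrolled leakage. Your criterion that leakage sectors contribute ``the same trace or zero'' is therefore not sufficient. The zero option is harmless only if the code space has inverse-polynomial relative dimension. Your option ``one mode per logical qubit'' is impossible as stated. Every impurity Hamiltonian is even, so $U(t)$ commutes with $\prod_j Z_j$, whereas e.g.\ $X_1$ does not. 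Hence no even evolution on $k$ modes equals an arbitrary $k$-qubit circuit on the full Fock space.

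\emph{What is missing.} You need an explicit full-space encoding with $O(1)$ mode overhead in which every superselection sector carries the same circuit up to a known global phase. You also need to compute how a parity-odd single Majorana acts between sectors. Without these, neither ``$\gamma_j$ acts as $X$ on a control mode'' nor your cooled thermal readout is justified.

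\emph{How the paper supplies it.} The paper uses a single ancilla mode, $\mathcal{E}_b\ket{x}=\ket{b\oplus\parity(x)}\ket{x}$, with logical operators $\overline{X}_j=X_\anc X_j$, $\overline{Y}_j=X_\anc Y_j$, $\overline{Z}_j=Z_j$. Then $e^{-(\theta/2)\gamma_{\anc,2}\gamma_1}$, $e^{-(\theta/2)\gamma_{\anc,2}\gamma_2}$ and $e^{-(\theta/2)\gamma_1\gamma_2}$ are logical rotations of qubit $1$, and $\CZ_{1,2}$ sits on the impurity. Qubits are routed by $\overline{\SWAP}=\overline{\fSWAP}\cdot\overline{\CZ}$, with $\CZ$ transported by $\fSWAP$ conjugation without signs. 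Every generator commutes with $X_\anc$ and $\mathcal{E}_1=X_\anc\mathcal{E}_0$, so $U=e^{i\phi}(V\oplus V)$ on the two parity sectors. Thus $\tr(U)/2^{k+1}=e^{i\phi}\tr(V)/2^k$, with no attenuation. For readout, $\gamma_{2q}\mathcal{E}_b\ket{x}=i(-1)^b\mathcal{E}_{1-b}X_q\ket{x}$ gives $U^\dagger\gamma_{2q}U\gamma_{2q}\mathcal{E}_b=\mathcal{E}_bV^\dagger X_qVX_q$. With $V_\theta=\I\otimes\op{0}{0}_c+e^{i\theta}C\otimes\op{1}{1}_c$ this yields $\Re(e^{i\theta}\tr(C)/2^k)$. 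A four-Majorana ``parity-fixed pair'' code might also be made to work. It would need exactly this verification: sector-independence of the routing gates, and the action of a single $\gamma$ as a logical Pauli between pair-parity sectors. Your proposal does not supply either.

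\emph{Consequence for the $\BQP$ part.} The same encoding is what makes your cooling route go through. Cooling only helps if, in every sector, the Gibbs state of $H(0)$ is a product of biased logical qubits acted on by the same encoded circuit. In the paper this holds because $\overline{Z}_j=Z_j$ and the ancilla is omitted from $H(0)=\Delta\sum_j n_j$. Then $\sigma_\beta=\frac12\sum_b\mathcal{E}_b\tau_{\eta_\beta}^{\otimes(N+1)}\mathcal{E}_b^\dagger$ with constant bias $\eta_\beta=\tanh(\beta\Delta/2)$. Applying $S_N$, then a controlled $C_x^\dagger Z_1C_x$, and reading out with $A=B=\gamma_{2(N+1)}$ at $t'=0$ gives $1-2p_x$ up to $e^{-N^{\Theta(1)}}$. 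Your readout ``$\gamma_j=\gamma_k$ \ldots\ $-i\gamma_1\gamma_2$'' is not a well-defined choice and should be replaced by this one. Alternatively, in this encoding the equal-time product $\gamma_1\gamma_2=i\overline{Z}_1$ would also read out logical $Z_1$ directly.
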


There are two main technical ingredients behind this theorem. First, the previous universality result for time-dependent impurity models~\cite{bravyi2017complexity} relied on an encoding of $q$ logical qubits into $n = \poly(q)$ physical modes~\cite{brod2014computational}. However, the Green's function, at $\beta = 0$ for instance, would be $\tr(W)/2^{\poly(q)}$ for some unitary $W$. In contrast, the canonical $\DQC_1$-complete problem is estimating $\tr(C)/2^q$ for any poly-size quantum circuit $C$ on $q$ qubits~\cite{shor2008estimating}. This encoding therefore introduces an exponentially small attenuation factor, making it insufficient for proving the desired theorem. To remedy this, we propose a new universal encoding that only uses $n = q + 1$ fermion modes, and therefore only incurs a constant-factor attenuation. The second ingredient concerns proving $\BQP$-hardness even at high temperature, as $\BQP$ is typically understood with respect to pure-state computations. However, we can apply algorithmic cooling~\cite{schulman1999molecular}:~an efficient reversible circuit that concentrates $\cO(q)$ thermal qubits into $q$ nearly pure qubits with exponentially small error. This turns out to be more than sufficient to encode $\BQP$-complete problems into the finite-temperature Green's function.

\subsection{Technical overview}

We now provide a high-level overview of our classical algorithms, which is the most technically demanding component of this work. Our key technique is a compression lemma showing that the ground state and the relevant part of the thermal state are approximately supported on a small subspace of the full $2^n$-dimensional Hilbert space, as described in Lemma \ref{lem:compress_intro}. For simplicity of the present exposition, we focus attention on the ground state $\ket{\psi}$ here. 

To study this support, recall the annihilation operators
\begin{equation}
a_j\coloneqq \frac{\gamma_{2j-1}+i\gamma_{2j}}{2}
=Z_1\cdots Z_{j-1}\op{0}{1}_j,
\end{equation}
which flips the $j$th fermionic mode from $1$ to $0$ while tracking the parity of all preceding modes. 
The corresponding occupation number is
$n_j\coloneqq a_j^\dagger a_j=\op{1}{1}_j$ whose statistics control the support of $\ket{\psi}$.
For example, if $\bra{\psi}\sum_j n_j\ket{\psi}$ is small, Markov's inequality implies that $\ket{\psi}$ is approximately supported on a small subspace~\cite{bravyi2017complexity}.

\paragraph{The Krylov basis.}
A crucial point is that the support of the ground state is dependent on the choice of single-particle basis, and our compression lemma depends critically on this choice. Indeed, for the canonical basis used in previous work~\cite{bravyi2017complexity}, obtained by diagonalizing $H_0$, one can construct an example in which approximating the ground state to constant accuracy requires at least quasipolynomially large support in this basis.\footnote{In this example, $H_{\mathrm{imp}}$ is chosen to be quadratic, so the full Hamiltonian $H=H_0+H_{imp}$ remains free-fermionic and is therefore not hard to solve. Nevertheless, the example shows that although the ground state of $H_0$ has exactly $1$ support in the canonical basis, a constant-size impurity can significantly change the size of the support of the new ground state in the same basis.}
On the other hand, if the one-particle reduced density matrix of the ground state is known, one can instead construct a basis in which the ground state has small support~\cite{bravyi2017complexity,erakovic2025high}. Our first key observation is that a suitable basis can be constructed \emph{without any prior knowledge of the ground state}. We use a bandwise variant of the Krylov basis, whose standard version is commonly used in numerical methods for  impurity models~\cite{wilson1975renormalization} and explored in previous work~\cite{bravyi2017complexity}. Both the Krylov basis and its bandwise variant can be computed efficiently by block-tridiagonalizing an $n\times n$ matrix.

The key structural consequence of the Krylov basis is that it transforms every impurity model into a coarse-grained one-dimensional chain of fermions:
\vspace{0.2em}
\begin{center}
\begin{tikzpicture}[
    box/.style={
        draw,
        rounded corners=3pt,
        minimum height=1.15cm,
        minimum width=1.10cm,
        inner sep=3pt
    },
    line/.style={thick}
]

\draw (0,0) ellipse [x radius=1.05cm, y radius=0.575cm];
\node at (0,0) {impurity};

\node[box] (W0) at (2.7,0) {$W_0$};
\node[box] (W1) at (4.5,0) {$W_1$};
\node[box] (W2) at (6.3,0) {$W_2$};
\node[box] (W3) at (8.1,0) {$W_3$};
\node at (9.2,0) {$\cdots$};
\node[box] (Wd) at (10.5,0) {$W_d$};
\node at (11.6,0) {$\cdots$};

\draw[line] (1.05,0) -- (W0.west);
\draw[line] (W0.east) -- (W1.west);
\draw[line] (W1.east) -- (W2.west);
\draw[line] (W2.east) -- (W3.west);
\draw[line] (W3.east) -- (8.75,0);
\draw[line] (9.65,0) -- (Wd.west);
\draw[line] (Wd.east) -- (11.15,0);

\end{tikzpicture}
\end{center}
\vspace{0.2em}
In this basis, the annihilation operators are grouped into blocks $W_d$:~the first block contains all the impurity modes, each subsequent block contains at most $2m$ bath modes, and the transformed Hamiltonian couples only neighboring blocks. Although a one-dimensional structure alone does not imply efficient classical simulation~\cite{aharonov2009power,gottesman2009quantum}, having the impurity confined to the first block of the chain suggests that the occupation probability $\langle\psi|n_i|\psi\rangle$ may decay exponentially with the distance from the impurity.
This provides the basic intuition for that the ground state has small support in this basis, although the compression lemma requires a much finer analysis of the occupation statistics.

We next illustrate the key proof strategy and the main technical obstacle for the compression lemma through a simple one-dimensional example.

\paragraph{Explicit example and proof strategy.}

To illustrate the proof strategy, we consider a simple impurity model that already has the desired one-dimensional structure. For clarity, in this example we relabel the impurity modes $a_1,a_2$ as $b_1,b_2$, and the bath modes $a_3,\ldots,a_n$ as $a_1,\ldots,a_N$, where $N=n-2$. Consider the Hamiltonian
\begin{equation}
H
=
U \left(b_1^\dagger b_1 b_2^\dagger b_2 + b_1^\dagger b_2 +b_2^\dagger b_1\right)
+g\left(
a_1^\dagger b_1
+
b_1^\dagger a_1
\right)
+2\omega\sum_{j=1}^{N}a_j^\dagger a_j
-\frac{\omega}{2}\sum_{j=1}^{N-1}
\left(
a_j^\dagger a_{j+1}
+
a_{j+1}^\dagger a_j
\right)\!\,
\label{eq:overview-example}
\end{equation}
Here $U,g\in \R$ are the impurity interaction strength and  the impurity--bath coupling strength, respectively, and $\omega>0$ is a lower bound on the single-particle energies of the bath. 

Let $\ket{\psi}$ be a ground state. Define $n_j=a_j^\dagger a_j$ for the bath modes $j=1,\ldots,N$. We first illustrate our proof strategy by proving the weaker statement that the occupation probability
$\bra{\psi}n_j\ket{\psi}$ decays exponentially with the distance of the $j$th bath mode from the impurity. Instead of estimating
$\bra{\psi}\sum_j n_j\ket{\psi}$, we consider the one-mode \emph{weighted occupation statistic}
\begin{equation}
F_1
\coloneq
\sum_{j=1}^{N}
e^{j-1}
\bra{\psi}n_j\ket{\psi}.
\end{equation}
Here the subscript $1$ indicates that $F_1$ involves one-mode occupations.

The key tool for estimating $F_1$ is the \emph{commutator identity} with respect to $[a_j,H]$.
The algebraic relations for fermionic ladder operators imply that
$[a_j,H]$ is at most linear in the bath modes. For this particular example,
we have
\begin{equation}
[a_j,H]
=
2\omega a_j
-\frac{\omega}{2}
\left(
a_{j-1}+a_{j+1}
\right)
+
g\delta_{j1}b_1,
\label{eq:overview-commutator-identity}
\end{equation}
where $a_0=a_{N+1}\coloneqq 0$ and  $\delta_{j1}=1$ if $j=1$ and $0$ otherwise. If we let $\alpha_{\mathrm{gs}}$ be the ground energy, then 
$\alpha_{\mathrm{gs}}\I-H \preceq 0$ and hence
\begin{equation}
\bra{\psi}
a_j^\dagger[a_j,H]
\ket{\psi}
=
\alpha_{gs}
\bra{\psi}a_j^\dagger a_j\ket{\psi}
-
\bra{\psi}a_j^\dagger H a_j\ket{\psi}
\leq0.
\label{eq:overview-commutator-bound}
\end{equation}
Inserting the commutator identity into the left-hand side of the above equation leads to a recursion for $F_1$ with a lower-degree contribution, namely a constant $|g|$ that controls the boundary term $\|g\delta_{j1} b_1\|$. Intuitively, this is because the commutator identity in Eq.~\eqref{eq:overview-commutator-identity} never increases the degree in the bath modes:~$[a_j,H]$ remains linear in the bath modes, while the term $g\delta_{j1}b_1$ has strictly lower degree.

To make this intuition precise, let $E$ be the coefficient matrix of the quadratic bath term, which is indexed by the bath modes ${a_j}$ and whose entry $E_{kj}$ is the coefficient of $a_j^\dagger a_k$ in Eq.~\eqref{eq:overview-example}.

We also define
$D\coloneqq\sum_{j=1}^N(j-1)\op{j}{j}$,
which records the distance of each bath mode from the impurity.
Define the weighted correlation matrix $M$ and the corresponding weighted bath matrix $B$ by
\[
M_{jk}
\coloneqq
e^{\frac12(j-1)+\frac12(k-1)}
\bra{\psi}a_j^\dagger a_k\ket{\psi},
\qquad
B
\coloneqq
\Reop\l(e^{-D/2}Ee^{D/2}\r).
\]
where $\Reop(A)=(A+A^\dagger)/2$ is the Hermitian part of an operator $A$. Then $M\succeq0$ and $\tr(M)=F_1$. Next, we multiply Eq.~\eqref{eq:overview-commutator-bound} by $e^{j-1}$, insert the commutator identity, take the real part, and sum over $j$. Regrouping the resulting terms using the definitions of $B$ and $M$, we get
\begin{equation}
\tr(BM)
+
g\,\Reop\bra{\psi}a_1^\dagger b_1\ket{\psi}\leq 0.
\label{eq:overview-weighted-commutator}
\end{equation}
The term
$g \Reop\bra{\psi}a_1^\dagger b_1\ket{\psi}$
is bounded by
$
|g\,\bra{\psi}a_1^\dagger b_1\ket{\psi}|
\leq |g|\sqrt{F_1}$
by Cauchy--Schwarz. The key use of the one-dimensional structure is the \emph{positivity of the weighted bath}:~$B\succeq\frac{\omega}{2}\I$. Since $M\succeq 0$, we have
$\tr(BM)\geq\frac{\omega}{2}\tr(M)
=\frac{\omega}{2}F_1.
$
Combining these two bounds with Eq.~\eqref{eq:overview-weighted-commutator}, we obtain
\begin{equation}
\frac{\omega}{2}F_1
\leq
|g|\sqrt{F_1}
\qquad\implies\quad
F_1
\leq
\frac{4g^2}{\omega^2}
\quad \implies\quad \bra{\psi}n_j\ket{\psi}
\leq
\frac{4g^2}{\omega^2}e^{-(j-1)},
\label{eq:overview-F1-recursion}
\end{equation}
where the last equation comes from the definition of $F_1$.

\paragraph{Bandwise Krylov basis and enlarging the impurity.}
Although Eq.~\eqref{eq:overview-F1-recursion} shows exponential decay of the one-mode occupation, a first-moment bound of this form is not sufficient for the compression result in general. The recursion strategy, however, extends to \emph{higher-order} weighted occupation statistics $F_q$, obtained by replacing the one-mode occupation $n_j$ in $F_1$ by the $q$-mode occupation $n_{i_1}\cdots n_{i_q}$. Similar commutator identity and arguments then give a recursion relating $F_q$ to the lower-degree statistic $F_{q-1}$. Controlling this recursion for all $q$ allows us to go beyond the first-moment estimate and obtain a stronger concentration bound.

A second important issue is that the bound in Eq.~\eqref{eq:overview-F1-recursion} depends on $g/\omega$. In general, $\omega$ can be small (for example $\omega \sim 1/n)$ while the impurity--bath coupling can be arbitrary, so $g/\omega$ need not be small. In that case, a bound like Eq.~\eqref{eq:overview-F1-recursion} is not sufficient. To handle this, we use ideas similar to the logarithmic separation of bath-energy scales in NRG~\cite{wilson1975renormalization,PhysRevB.21.1003,bulla2008numerical}. We first perform some preprocessing to ensure $\omega>0$ and set $\omega_\ell=2^\ell \omega$. We then partition the spectrum of the bath matrix $E$ into intervals $[\omega_\ell,2\omega_\ell)$ for $\ell\geq 0$. On each interval, we restrict $E$ to the corresponding spectral subspace and apply the one-dimensional Krylov-basis transformation to the restricted matrix. We refer to the resulting basis (over the entire single-particle space) as the \emph{bandwise Krylov basis}.

In this way, the single-particle energies in band $\ell$ lie between $\omega_\ell$ and $2\omega_\ell$. At the same time, we enlarge the impurity by including the bath modes in the first Krylov block. The coupling between this enlarged impurity and the remaining bath then comes from the bath matrix $E$ itself, rather than from the original impurity--bath coupling $g$. Since this coupling lies within band $\ell$, its strength is bounded by the energy scale $\cO(\omega_\ell)$ of that band. Thus, the problematic $g/\omega$ coupling is replaced by an effective coupling which is bounded by a constant within each band. 

After this reorganization, the recursion can be applied band by band without requiring the original impurity--bath coupling to be small relative to the bath's spectral gap. We ultimately arrive at our compression lemma that states that the ground state can be compressed into a subspace whose dimension is independent of the impurity interaction strength $U$, the impurity--bath coupling strength $g$, and scales polynomially with $\omega^{-1}$.
For ground-energy estimation to precision $\delta$, we can without loss of generality take $\omega \geq \Omega(\delta/n)$, and hence the resulting compressed subspace has polynomial dimension. More details can be found in Figure \ref{fig:1D}, Corollary~\ref{cor:ground_compress} and Theorem~\ref{thm:classcial_ground_energy}. A similar compression scheme also works for the relevant part of the thermal state; see Corollary~\ref{cor:compression_thermal_extension} and Theorem~\ref{thm:ctqmc}.

\subsection{Outlook}

In this work, we have established polynomial-time classical algorithms for static properties of quantum impurity models, settling an important open question about their computational complexity~\cite{bravyi2017complexity}, while also providing evidence for possible quantum advantage in certain dynamical problems. The key technical component of our classical algorithms, the compression framework, shows that for impurity models, the ground states and the thermally gapped part of the Gibbs states can be well approximated in low-dimensional subspaces. In particular, our results establish that the ground-state problem for quantum impurity models is fundamentally classically tractable, placing the longstanding empirical success of impurity solvers~\cite{wilson1975renormalization,bulla2008numerical,gull2011continuous} on a rigorous footing. They also imply that any superpolynomial quantum advantage in solving impurity models, for example as an application to embedding theories~\cite{bauer2016hybrid,rubin2016hybrid}, cannot arise from the static problem, but must rather exploit the hardness of simulating dynamics.

An interesting question is whether this compression framework can be extended further, for example excited states, or otherwise adapted to certain dynamical problems~\cite{anders2005real,aoki2014nonequilibrium}. Such extensions could broaden the range of impurity properties that admit provably efficient classical simulation. In a separate vein, our current algorithms feature exponential dependence on the impurity size $m$, which we have treated as a fixed constant in this work. Improvements to this dependence could open the possibility of nontrivial algorithms for systems with an extensive number of interacting degrees of freedom, such as the Fermi--Hubbard model~\cite{qin2022hubbard}. Finally, while our classical algorithms rule out a superpolynomial quantum advantage for static properties of impurity models, the possibility for a polynomial speedup is left open. One possible route is to identify a substantially smaller subspace that has \emph{constant} overlap with the ground state;~combined with quantum phase estimation, such a property, if it exists, could enable a nontrivial quantum speedup.

\section{Preliminaries}

In this section, we introduce the notation used throughout the paper.

\subsection{Background on fermions}\label{sec:fermi_operator}

We review the fermionic operators, Fock basis, and single-particle basis rotations used throughout the paper. For further background on fermionic quantum information, see~\cite{nielsen2005fermionic,bravyi2005lagrangian}.

\paragraph{Second quantization.}
A system of $n$ fermionic modes is described by $n$  operators $a_1,\ldots,a_n$ satisfying the \textit{canonical anticommutation relations} (CAR)
\begin{align}
    \{a_j,a_k^\dagger\}
    \coloneqq a_j a_k^\dagger+a_k^\dagger a_j
    =\delta_{jk}\I,\qquad
    \{a_j,a_k\} =0,
    \qquad j,k\in[n].
    \label{eq:annihilation-car}
\end{align}
Using the Jordan--Wigner transformation, these operators act on an $n$-qubit space and are represented by
\begin{align}
	a_j = Z_1 \otimes \cdots \otimes Z_{j-1}\otimes \begin{bmatrix}
		0 & 1\\
		0&0
\end{bmatrix},\qquad 1\leq j \leq n, \label{eq:annihilation}
\end{align}
where $Z_k$ represents the Pauli-$Z$ operator on qubit $k$. The operators
$a_j^\dagger$ and $a_j$ are called the \textit{creation and annihilation operators}.

Alternatively, the same fermionic system can be described in terms of $2n$ Majorana operators $\gamma_1,\ldots,\gamma_{2n}$, defined by
\begin{align}
    \gamma_{2j-1} =a_j+a_j^\dagger,\qquad
    \gamma_{2j} = -i (a_j-a_j^\dagger),\quad\text{where } \{\gamma_j,\gamma_k\}=2\delta_{jk} \I. \label{eq:CAR}
\end{align}
Using the same Jordan--Wigner transformation, these Majorana operators are represented by 
\begin{align}
	\gamma_{2k-1} = Z_1 \otimes \cdots \otimes Z_{k-1} \otimes X_k, \qquad
	\gamma_{2k} = Z_1 \otimes \cdots \otimes Z_{k-1} \otimes Y_k,
\end{align}
where $X_k,Y_k$ are Pauli-$X$, Pauli-$Y$ operators on qubit $k$.

We say that an operator $O$ has \textit{even parity} if it is a sum of even-degree monomials in the creation and
annihilation operators (equivalently, in the Majorana operators), such as $a_1^\dagger a_2$ or $\gamma_1\gamma_2+\gamma_1\gamma_3\gamma_5\gamma_6$. We define \textit{odd parity} similarly. Note that this is distinct from the fermionic parity of states.

\paragraph{Fock basis.}
The annihilation operators $a_1,\ldots,a_n$ define a $2^n$-dimensional Hilbert space called the \textit{fermionic Fock space}. Its standard basis, called the \textit{Fock basis}, is labeled by occupation numbers.
We define
the \textit{vacuum state} $\lvert \mathrm{vac}\rangle$ as the common zero-eigenstate of the operators $\{a_j^\dagger a_j\}_{j=1}^n$. It satisfies
\begin{equation}
    a_j\lvert \mathrm{vac}\rangle=0 \qquad \forall j\in[n].
\end{equation}
For any $x\in\{0,1\}^n$, with respect to the ordering $a_1,\ldots,a_n$ of the annihilation operators, the corresponding \textit{Fock basis state} is defined by
\begin{align}
\lvert x_1,\ldots,x_n\rangle
\coloneqq
(a_1^\dagger)^{x_1}\cdots(a_n^\dagger)^{x_n}
\lvert \mathrm{vac}\rangle.
\end{align}
Here $x_j=1$ means that the $j$th fermionic mode is occupied, while $x_j=0$ means that it is unoccupied. The states
    $\{\lvert x\rangle:x\in\{0,1\}^n\}
$
form an orthonormal basis of the Fock space.

Under the Jordan--Wigner mapping, these Fock basis states correspond directly to the computational basis states $\lvert x_1\cdots x_n\rangle$ of $n$ qubits. In particular, $\lvert \mathrm{vac}\rangle$ corresponds to the all-zero state $\lvert 0^n \rangle$.

\paragraph{Single-particle basis rotation.}

It is convenient to use an $n$-dimensional vector to represent a linear
combination of annihilation operators. For any $\mu\in\C^n$, denote
\begin{align}
    a(\mu)\coloneqq\sum_{j=1}^n \mu_j a_j.
\end{align}
Although $\mu$ is an $n$-dimensional vector, $a(\mu)$ acts on the $2^n$-dimensional Fock space. 
The space $\mathbb{C}^n$ that $\mu$ lives in is the \emph{single-particle space} of the fermions, and any orthonormal basis $\{f_k\}_{k=1}^n$ of $\mathbb{C}^n$ is called a \emph{single-particle basis}.
In particular, the standard basis vector $e_j$ represents the annihilation operator $a_j$, i.e.,
    $a(e_j)=a_j.$
With a slight abuse of terminology, for any subspace $W\subseteq\mathbb{C}^n$, we refer to operators $a(\mu)$ with $\mu\in W$ as fermionic modes in $W$.

A unitary change of basis on the single-particle space induces a corresponding change of fermionic modes. Let $V$ be a unitary on $\C^n$, whose columns $\{Ve_j\}_j$ form a single-particle basis, and define
\begin{equation}
b_j\coloneqq a(Ve_j)=\sum_{k=1}^n V_{kj}a_k,\qquad j\in[n].
\end{equation}
The operators $b_1,\ldots,b_n$ also satisfy the CAR in Eq.~(\ref{eq:annihilation-car}), and hence define a new set of fermionic annihilation operators with a corresponding Fock basis.

These two bases are related by a unitary $\mathcal{U}(V)$ on the full $2^n$-dimensional Fock space called a \textit{single-particle basis rotation}, where  $\mathcal{U}(V)a_j\mathcal{U}(V)^\dagger=b_j$. The unitary $\mathcal{U}(V)$ is fully characterized by the $n\times n$ matrix $V$, and it suffices to work with the new fermionic modes $b_1,\ldots,b_n$. When a qubit representation is needed for algorithmic implementation, one can apply a fermion-to-qubit mapping with respect to these new annihilation operators.

\paragraph{Quadratic Hamiltonians.}
Recall that $e_j\in\mathbb{C}^n$ denotes the $j$th standard basis vector.
For any $n\times n$ Hermitian matrix $P$, we define the corresponding quadratic Hamiltonian by
\begin{align}
\dE(P)\coloneqq\sum_{j,k}\langle e_j,P e_k\rangle a_k^\dagger a_j.
\end{align}
Let $V$ be a unitary and denote its column vectors as $\{f_j\coloneqq Ve_j\}_j$, which form the corresponding single-particle basis. One can check that 
\begin{align}
    \dE(P)\coloneqq\sum_{j,k}\langle e_j,V^\dagger P V e_k\rangle a(f_k)^\dagger a(f_j).\label{eq:V_change_basis}
\end{align}
In particular, when choosing $V$ that diagonalizes $P$, that is    
$P=V\operatorname{diag}(\lambda_1,\ldots,\lambda_n)V^\dagger,$ 
and defining new annihilation operators
$b_j\coloneqq a(f_j)$, we have $\dE(P)=\sum_{j=1}^n \lambda_j b_j^\dagger b_j. $
Applying the fermion-to-qubit mapping with respect to the modes $\{b_j\}_{j=1}^n$, we can therefore represent
$\dE(P)=\sum_{j=1}^n \lambda_j \op{1}{1}_j$ and can calculate the full spectrum of $\dE(P)$. Such Hamiltonians are also said to be \textit{free-fermionic}.

The following identities are direct consequences of the canonical anticommutation relations and will be used  in later proofs. For any $\mu,\nu\in\C^n$, one can check that
\begin{align}
    \{a(\mu)^\dagger, a(\nu)\}=\langle \mu,\nu\rangle \I,\qquad \{a(\mu), a(\nu)\}=0,\qquad  [\dE(P),a(\mu)]=-a(P\mu).
\end{align}

\subsection{Preprocessing of the impurity Hamiltonian}\label{sec:hyb_decom}

Consider the quantum impurity model defined in Eq.~(\ref{eq:impurity_model}). 
We denote its ground state as $\rho^{(\infty)}$ and
its thermal state at inverse temperature $\beta$ as 
    $\rho^{(\beta)}\coloneqq e^{-\beta H}/{\tr(e^{-\beta H})}.$

Note that adding a multiple of the identity to the Hamiltonian in Eq.~(\ref{eq:impurity_model}) does not change its ground or thermal states, so we will fix the value of $\alpha\in\R$ later. We assume $\|H\|=\poly(n)$. 

Let $h_{\mathrm{bath}}$ denote the principal submatrix of $h$ corresponding to the bath Majorana operators $\gamma_{2m+1},\ldots,\gamma_{2n}$, and define $\es$ as an upper bound for the bath single-particle energies,  
\begin{align}
\|h_{\mathrm{bath}}\|\leq \es=\poly(n).
\end{align}
Note that $\es$ is independent of the impurity--bath coupling strength and the impurity interaction strength $\|H_{imp}\|$.

\paragraph{Preprocessing and the hybridization form.} 
Before turning to the proofs, we perform several preprocessing steps that put the general Hamiltonian into a convenient form.
We regard $\gamma_1,\ldots,\gamma_{2m}$ as the impurity Majorana operators and the remaining Majorana operators as the bath Majorana operators. It is convenient to decompose $H$ into an impurity-only part, a bath-only part, and a hybridization part that couples the impurity and the bath.\footnote{In fact, this is how impurity models are typically presented in the physics literature.} In Appendix~\ref{app:cano_bath}, we show that, by applying a canonical transformation to the bath, one can efficiently construct annihilation operators $a_1,\ldots,a_n$ and write $H$ in the hybridization form:

\begin{equation}
    H
    = K_0+ 
    \dE(E_0)
    +
    \sum_{j=1}^{2m}
    \left(
        a^\dagger(\mu_j)O_j
        +
        O_j^\dagger a(\mu_j)
    \right).\label{eq:impurity_model_hyb}
\end{equation}
Here $a_1,\ldots,a_m$ are called the impurity modes, and the remaining modes are called the bath
modes. The term $K_0$ only acts on impurity modes, $\dE(E_0)$ only acts on the bath modes, and $E_0$ is diagonal. The term $a^\dagger(\mu_j)O_j$ and its Hermitian conjugate describe the hybridization term, where
$a(\mu_j)$ only acts on the bath modes and $O_j$ only acts on the impurity modes and has odd parity. Here the vectors $\{\mu_j\}_j$ need not
be orthogonal.

Define the corresponding \textit{impurity single-particle space} and \textit{bath single-particle space} to be 
\begin{align}
W_{\mathrm{imp}}\coloneqq\operatorname{span}\{e_1,\ldots,e_m\}\subseteq \C^n,\qquad W_{\mathrm{bath}}\coloneqq W_{\mathrm{imp}}^\perp.
\end{align}

In Appendix~\ref{app:cano_bath}, we explain that, when $E_0$ is viewed as a matrix on $W_{\mathrm{bath}}$, without loss of generality for both ground-state and thermal-state simulation, we may assume that $E_0$ is strictly positive; that is, there exists $\omega>0$ such that
\begin{align}
0<\omega\I\leq E_0\leq\es\I. \label{eq:lower_single_particle}
\end{align}

\section{Bandwise Krylov representation}\label{sec:band_krylov_rep}

In the hybridization form Eq.~\eqref{eq:impurity_model_hyb}, we express $H$ in terms of the annihilation operators $\{a_j\}_j$; we write the corresponding single-particle basis as the standard basis $\{e_j\}_j$, i.e., $a_j=a(e_j)$. 

The first key step of our proof is to choose a different basis $\{f_j\}_j$ for the bath single-particle space, which we call the \textit{bandwise Krylov basis}, and re-express the Hamiltonian in terms of the corresponding annihilation operators $\{a(f_j)\}_j\cup \{a(e_j)\}_{j=1}^m$, where the second set corresponds to the basis for the impurity single-particle space. This particular choice of bandwise Krylov basis transforms any impurity model into a one-dimensional structure. 

In this section, we explain how the bandwise Krylov basis $\{f_j\}_j$ is chosen. In Section \ref{sec:1D_from_block_tri}, we explain how a one-dimensional structure can be obtained via block-tridiagonalization.  To handle small single-particle energies, we then use the ideas of bandwise Krylov space in Section \ref{sec:band_krylov} and the enlarged impurity  in Section \ref{sec:enlarged_impurity}. To guide the reader, we illustrate and summarize the constructions in these sections in Figure~\ref{fig:1D}.

\begin{figure}[htp]
 \centering
    \scalebox{0.9}{\begin{tikzpicture}[x=0.95mm,y=0.95mm,inner sep=0pt]
  \tikzset{
    shell/.style={
      draw=black!90,
      fill=white,
      line width=.27mm,
      rounded corners=1.9mm
    },
    main/.style={
      text=black,
      font=\sffamily\normalsize
    },
    secondary/.style={
      text=black,
      font=\sffamily\small
    },
    band/.style={
      text=black,
      font=\sffamily\normalsize
    }
  }

  \node[main,anchor=west] at (-3.0,78.6)
    {\textbf{(a)} Krylov basis};

  \draw[black!90,line width=.25mm]
    (52.4,78.6)--(64.6,78.6)
    (77.5,78.6)--(90.5,78.6)
    (103.5,78.6)--(112.6,78.6)
    (125.5,78.6)--(135.0,78.6);

\draw[shell] (42.75,78.6) ellipse [x radius=9.65, y radius=5.3];
  
  \draw[shell] (64.6,73.3) rectangle (77.5,83.9);
  \draw[shell] (90.5,73.3) rectangle (103.5,83.9);
  \draw[shell] (112.6,73.3) rectangle (125.5,83.9);

  \node[main] at (42.75,78.6) {Impurity};
  \node[main] at (71.05,78.6) {$W_0$};
  \node[main] at (97.0,78.6) {$W_1$};
  \node[main] at (119.05,78.6) {$W_2$};

  \node[secondary] at (58.5,96.2)
    {hybridization term};
  \node[secondary] at (58.5,92.1)
    {$O_j^\dagger a(\mu_j)\,\cdots$};

  \draw[->,black!65,line width=.25mm]
    (58.5,88.6)--(58.5,79.9);

  \node[secondary] at (58.5,74.2)
    {$\|\mu_j\|$};

  \node[main,anchor=west] at (-3.0,62.0)
    {\textbf{(b)} Bandwise};
  \node[main,anchor=west] at (3.5,57.9)
    {Krylov basis};

  \draw[
    black!90,
    line width=.25mm,
    dash pattern=on 2.1mm off 1.2mm
  ]
    (31.8,17.1) rectangle (79.1,68.2);

  \draw[
    black!48,
    line width=.25mm,
    dash pattern=on .30mm off .45mm,
    line cap=round
  ]
    (90.0,17.1) rectangle (136.3,68.2);

  \draw[black!90,line width=.25mm]
    (52.2,49.5)--(64.6,59.2)
    (52.2,47.8)--(64.6,45.4)
    (52.2,45.1)--(64.6,25.7);

  \foreach \yy in {59.2,45.4,25.7}{
    \draw[black!90,line width=.25mm]
      (77.5,\yy)--(90.9,\yy)
      (103.8,\yy)--(112.7,\yy)
      (125.6,\yy)--(134.6,\yy);
  }

  \draw[shell] (42.65,47.85) ellipse [x radius=9.55, y radius=5.25];
  \node[main] at (42.65,47.85) {Impurity};

  \foreach \yy/\lab in {
    59.2/{W_{0,0}},
    45.4/{W_{1,0}},
    25.7/{W_{L,0}}
  }{
    \draw[shell]
      (64.6,\yy-5.25) rectangle (77.5,\yy+5.25);
    \node[main] at (71.05,\yy) {$\lab$};
  }

  \node[main] at (71.05,35.5) {$\cdots$};

  \foreach \yy/\first/\second in {
    59.2/{W_{0,1}}/{W_{0,2}},
    45.4/{W_{1,1}}/{W_{1,2}},
    25.7/{W_{L,1}}/{W_{L,2}}
  }{
    \draw[shell]
      (90.9,\yy-5.25) rectangle (103.8,\yy+5.25);
    \draw[shell]
      (112.7,\yy-5.25) rectangle (125.6,\yy+5.25);

    \node[main] at (97.35,\yy) {$\first$};
    \node[main] at (119.15,\yy) {$\second$};
  }

  \node[main] at (97.35,35.5) {$\cdots$};
  \node[main] at (119.15,35.5) {$\cdots$};

  \node[secondary] at (50.2,34.5)
    {$\|\Pi_L\mu_j\|$};
  \node[secondary] at (84.4,28.6)
    {$2\omega_L$};

  \node[band,anchor=west] at (140.0,59.2)
    { Band 0: $[\omega_0,2\omega_0)$};
  \node[band,anchor=west] at (140.0,45.4)
    {Band 1: $[\omega_1,2\omega_1)$};
  \node[band,anchor=west] at (140.0,25.7)
    {Band L: $[\omega_L,2\omega_L)$};

  \node[main] at (52.8,10.2)
    {Enlarged impurity};
  \node[main] at (124.7,10.2)
    {Residual bath};

  \draw[
    black!65,
    line width=.22mm,
    dash pattern=on 1.2mm off .8mm
  ]
    (71.2,0.4) rectangle (109.8,10.6);

  \node[
    text=black,
    font=\sffamily\footnotesize
  ] at (90.5,7.6)
    {new hybridization term};

  \node[secondary] at (89.5,3.5)
    {$G_{Lj}^{\dagger}a(u_{Lj})\,\cdots$};

  \draw[->,black!65,line width=.25mm]
    (84.5,11.3)--(84.5,23.4);

\end{tikzpicture}}
    \captionsetup{width=1\textwidth, font=small}
\caption{Illustration of the one-dimensional structure induced by a Krylov basis for the bath single-particle space. Each block represents a group of annihilation operators, and an edge indicates that the Hamiltonian contains coupling terms between the corresponding blocks.\\
\textbf{(a) Krylov basis.} We choose a bath single-particle basis $\{g_j\}_j$ consistent with the Krylov shells $W_0,W_1,\ldots$. When $H$ is expressed in terms of the annihilation operators $\{a(g_j)\}_j$, it has a one-dimensional block structure. Each block has  at most $2m$ modes. The impurity couples only to the first block through the  hybridization terms 
$\{O_j^\dagger a(\mu_j)\}_j$, 
and each subsequent block couples only to its neighboring blocks.\\
\textbf{(b) Bandwise Krylov basis.} We first decompose the bath single-particle space into logarithmically spaced energy bands $[\omega_\ell,2\omega_\ell)$ and construct a Krylov basis separately within each band. The complexity of our classical simulation depends on
$\kappa\coloneqq\sum_\ell \xi_\ell\omega_\ell^{-2}$,
where $\xi_\ell$ measures the squared hybridization strength in band $\ell$. For the original impurity--bath partition, the hybridization vectors $\Pi_\ell\mu_j$ need not be small relative to $\omega_\ell$. We therefore enlarge the impurity by including the first Krylov block from each band. The  coupling between the enlarged impurity and the residual bath is an off-diagonal block of the bath matrix restricted to band $\ell$ and has norm at most $2\omega_\ell$. Consequently, $\xi_\ell/\omega_\ell^2=O(m)$ for each band, and hence $\kappa$ is controlled.\\
\textbf{Note on renaming the operators.} After choosing the bandwise Krylov basis, we relabel the enlarged-impurity modes as $\{b_j\}_j$ and the residual-bath modes as $\{a_j\}_j$. 
}\label{fig:1D} 
 \end{figure}

\subsection{One-dimensional structure via block-tridiagonalization}\label{sec:1D_from_block_tri}
We first explain
that, to obtain a one-dimensional structure, it suffices to find a unitary that block-tridiagonalizes the quadratic bath term.
Recall the hybridization form of the Hamiltonian in Eq.~\eqref{eq:impurity_model_hyb}. For the moment, we replace $E_0$ by a placeholder Hermitian matrix $E_*$ acting on a subspace $W_*$ of the bath single-particle space,  project each $\mu_{j}$ onto $W_*$ and denote the resulting vector by $\mu_{j*}$. Denote the resulting Hamiltonian by $H_*$,

\begin{equation}
    H_*
    = K_0+ 
    \dE(E_*)
    +
    \sum_{j=1}^{2m}
    \left(
        a^\dagger(\mu_{j*})O_j
        +
        O_j^\dagger a(\mu_{j*})
    \right).\label{eq:hyb_placeholder}
\end{equation}

Denote $W_{*0}\coloneqq \spn\{\mu_{1*},\ldots,\mu_{2m*}\}$ and $s_0\coloneqq\dim(W_{*0})\leq 2m$.
To transform the Hamiltonian $H_*$ into a one-dimensional structure, 
it suffices to find an orthonormal basis of the bath single-particle subspace $W_*$, whose basis vectors form the columns of an isometry $V$,
such that  (i) the first $s_0$ column vectors of 
$V$ span a space containing the hybridization vectors $\{\mu_{1*},\ldots,\mu_{2m*}\}$,  and  (ii) the resulting matrix is block-tridiagonal,
\begin{align}
    V^\dagger E_* V  = 
\begin{pmatrix}
P_{0} & Q_{0}^\dagger & 0 & 0 \\
Q_{0} & P_{1} & Q_{1}^\dagger & 0\\
0& Q_{1} & P_{2} & \ddots\\
0& 0 & \ddots & \ddots 
\end{pmatrix},\label{eq:block-tridiag}
\end{align}
where $P_0\in \C^{s_0\times s_0}$  and all subsequent blocks  $P_k$ are of size at most $2m
\times 2m$. The norms  $\|P_j\|, \|Q_j\|$ are bounded by $\|E_*\|$ since the corresponding blocks are submatrices of $V^\dagger E_* V$. 
To see why Eq.~(\ref{eq:block-tridiag}) leads to a one-dimensional structure, suppose we have found such a $V$ and denote its column vectors by $\{g_j\}_j$. Define a new set of annihilation operators as $\{a(g_j)\}_j$. Then, in this new basis, $H_*$ is re-expressed by
\begin{align}
    H_*= K_0  + \sum_{j,k} \langle e_j,\,(V^\dagger E_* V)\, e_k\rangle \,\, a(g_k)^\dagger a(g_j) + \sum_{j=1}^{2m} \left( a^\dagger( \mu_{j*}) O_j +  O_j^\dagger a( \mu_{j*}) \right)
\end{align}
where $K_0$ and $\{O_j\}_j$ are unchanged since we only change the bath single-particle basis.
The Hamiltonian $H_*$ has a one-dimensional structure as in Figure \ref{fig:1D} (a) due to the block-tridiagonal form of $V^\dagger E_*V$ in Eq.~(\ref{eq:block-tridiag}):
More specifically,
let $I_k$ denote the set of column indices of $V$ corresponding to the
diagonal block $P_k$ in Eq.~\eqref{eq:block-tridiag}.
We group the annihilation operators
$\{a(g_j):j\in I_k\}$ as the $k$th \textit{bath-mode group}, which
corresponds to the $(k+1)$st square block in Figure~\ref{fig:1D}(a). The block-tridiagonal form of $V^\dagger E_*V$ in Eq.~(\ref{eq:block-tridiag}) implies that the bath blocks only couple to their neighbors. Besides, since  each hybridization vector $\mu_{j*}$ lies in the span of $g_1,\ldots,g_{s_0}$,   the
hybridization term $a^\dagger( \mu_{j*}) O_j +  O_j^\dagger a( \mu_{j*})$  couples the impurity only to the first bath block. 
 
Moreover, by assumption the blocks $\{P_k\}_{k\geq 0}$ are each of size at most $2m\times 2m$, so the size of each bath group is $|I_k|\leq 2m$.

\paragraph{Block-tridiagonalization via orthogonalizing a Krylov sequence.}
To construct the isometry $V$  that block-tridiagonalizes $E_*$ and whose first few columns span a space containing the hybridization vectors, one could use the  standard block Lanczos tridiagonalization, or equivalently, orthogonalize a Krylov sequence.

More specifically, let $W_{*0}$ denote the space spanned by the hybridization vectors, and define the depth-$d$ Krylov space $K_{*d}$ by successively applying $E_*$ to $W_{*0}$:
\begin{align}
    W_{*0}
    \coloneqq
    \operatorname{span}\{\mu_{1*},\ldots,\mu_{2m*}\},\qquad
    K_{*d}
    \coloneqq
    \operatorname{span}\{W_{*0},E_*W_{*0},\ldots,E_*^dW_{*0}\}.
\end{align}

The column vectors of $V$ are obtained by successively orthogonalizing the Krylov spaces. More precisely, set $K_{*,-1}\coloneqq\{0\}$ and define the depth-$d$ Krylov shell spaces
\begin{align}
    W_{*d}\coloneqq K_{*d}\ominus K_{*,d-1},\qquad d\geq 1; \qquad  W_{*,-1}\coloneqq\{0\}.
\end{align}

Choose an orthonormal basis for each $W_{*d}$, and let $\{g_j\}_j$ denote the union of these bases. Without loss of generality, we may assume that $\{g_j\}_j$ spans the whole space $W_*$.\footnote{If $\{g_j\}_j$ does not span the full space $W_*$, let
$M$ be the orthogonal complement of $\operatorname{span}\{g_j\}_j$
in $W_*$. Since $M$ is invariant under $E_*$ and is orthogonal to all
hybridization vectors, there is no coupling in the Hamiltonian between
the modes in $M$ and the remaining modes. Moreover, since $H_*$ has
even parity, under the corresponding
Fock-space factorization, the Hamiltonian therefore decomposes as
$H_*=H_{\mathrm{active}}\otimes \I
+\I\otimes \dE(E_*|_M).$
Hence the modes in $M$ form a decoupled free-fermion sector that can be
treated separately, and we assume $M=\{0\}$ for simplicity.}
Let $V$ denote the isometry whose columns are the basis vectors constructed above,
\begin{align}
    V=[\ldots\,\, ,g_j, \,\,\ldots]
\end{align}

Then  we have
\begin{lemma}[One-dimensional structure]\label{lem:krylov_1D}
We have  $\dim(W_{*d})\leq 2m$ for all $d\geq 0$. Besides, 
\begin{align}
    E_*W_{*d}
    \subseteq
    W_{*,d-1}\oplus W_{*d}\oplus W_{*,d+1}. 
\end{align}
Thus, when the columns of $V$ are ordered according to $W_{*0},W_{*1},\ldots$, the matrix $V^\dagger E_*V$ has the block-tridiagonal form in Eq.~\eqref{eq:block-tridiag}.
\end{lemma}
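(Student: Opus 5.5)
The plan is to argue directly from the definitions of the Krylov spaces $K_{*d}$ and the shells $W_{*d}=K_{*d}\ominus K_{*,d-1}$. There are two claims: a dimension bound and a three-term inclusion. Block-tridiagonality then follows by reading off matrix entries in the ordered basis $\{g_j\}_j$.

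\textbf{Dimension bound.} First I will show by induction that $K_{*d}=K_{*,d-1}+E_*^dW_{*0}$. This is immediate from the definition, since $K_{*d}$ is spanned by $K_{*,d-1}$ together with $E_*^dW_{*0}$. Hence
\[
\dim W_{*d}=\dim K_{*d}-\dim K_{*,d-1}\le \dim E_*^dW_{*0}\le \dim W_{*0}=s_0\le 2m.
\]
For $d=0$ the bound is just $s_0\le 2m$, because $W_{*0}$ is spanned by $2m$ vectors.

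\textbf{Three-term inclusion.} I take $v\in W_{*d}\subseteq K_{*d}$.
\begin{itemize}
\item \emph{Upper bound on the range.} Since $E_*K_{*d}\subseteq K_{*,d+1}$, we get $E_*v\in K_{*,d+1}=W_{*0}\oplus\cdots\oplus W_{*,d+1}$, an orthogonal decomposition by construction.
\item \emph{Lower bound on the range.} It remains to show that $E_*v$ is orthogonal to $W_{*k}$ for every $k\le d-2$. Take $w\in W_{*k}\subseteq K_{*k}$. By Hermiticity of $E_*$, $\langle w,E_*v\rangle=\langle E_*w,v\rangle$. Now $E_*w\in K_{*,k+1}\subseteq K_{*,d-1}$, and $v\perp K_{*,d-1}$ by the definition of $W_{*d}$. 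So the inner product vanishes, and $E_*v\in W_{*,d-1}\oplus W_{*d}\oplus W_{*,d+1}$. For $d=0$ the convention $W_{*,-1}=\{0\}$ handles the boundary.
\end{itemize}

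\textbf{Block-tridiagonal form.} Order the columns of $V$ by shells. The $(k,d)$ block of $V^\dagger E_*V$ consists of the entries $\langle g_i,E_*g_j\rangle$ with $g_i\in W_{*k}$ and $g_j\in W_{*d}$. By the inclusion, these vanish unless $|k-d|\le1$. Hermiticity gives the $Q_k^\dagger$ structure above the diagonal. Since the columns of $V$ span $W_*$ by the standing assumption in the footnote, this is exactly the form in Eq.~\eqref{eq:block-tridiag}. The first block $P_0$ has size $s_0$ and contains all the $\mu_{j*}$.

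\textbf{Main obstacle.} The only real subtlety is the lower bound, i.e.\ ruling out couplings to shells $k\le d-2$. This is where Hermiticity of $E_*$ is essential; the rest is bookkeeping.
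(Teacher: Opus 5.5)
Your proposal is correct and follows the paper's proof essentially step for step: you use the containment $E_*W_{*d}\subseteq K_{*,d+1}$ together with the Hermiticity argument showing orthogonality to $K_{*,d-2}$ (you test against each shell $W_{*k}$ with $k\le d-2$, which is equivalent). Your dimension count via $K_{*d}=K_{*,d-1}+E_*^dW_{*0}$ just spells out what the paper states follows ``directly from the definition.''
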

\begin{proof} 
The dimension bound for $W_{*d}$ comes directly from its definition and the fact that the dimension of $W_{*0}$ is smaller than $2m$.  
To prove the inclusion, first 
note that by definition, $W_{*d}\subseteq K_{*d}$, and hence 
$E_*W_{*d}\subseteq K_{*,d+1}.$ 

For $d\geq 1$, let $u\in W_{*d}$ and $v\in K_{*,d-2}$. Since $E_*$ is Hermitian, we have $
\langle v,E_*u\rangle = \langle E_*v,u\rangle. $
But $E_*v\in K_{*,d-1}$, while $u\in W_{*d}$ is orthogonal to $K_{*,d-1}$. Therefore 
\begin{align} 
\langle v,E_*u\rangle=0. 
\end{align} 

Thus $E_*W_{*d}$ is contained in $K_{*,d+1}$ and orthogonal to $K_{*,d-2}$, which proves the Lemma for $d\geq 1$.   The case $d=0$ follows directly from $E_*W_{*0}\subseteq K_{*1}$.
Finally, with respect to the decomposition
\[
    W_*=\bigoplus_{d\geq 0}W_{*d},
\]
the inclusion above implies that $V^\dagger E_*V$ is block-tridiagonal, as in Eq.~\eqref{eq:block-tridiag}.
\end{proof}

\subsection{Bandwise Krylov basis}\label{sec:band_krylov}

Recall that $\omega$ denotes the lower bound for single-particle energy after preprocessing, i.e.
$$0<   \omega \I \leq E_0 \leq \es \I.$$

In Section \ref{sec:1D_from_block_tri},
for the Hamiltonian with placeholder notation in Eq.~(\ref{eq:hyb_placeholder}), 
 we explained 
how to construct a single-particle basis for $W_*$ so that $H_*$ is transformed to the desired one-dimensional structure. 
If we set $E_*=E_0$, $\mu_{j*}=\mu_j$, and $W_*=W_{\mathrm{bath}}$, we could get a single-particle basis $\{g_j\}_j$ (the \textit{Krylov basis}) that  transforms the impurity Hamiltonian $H$ into a one-dimensional structure, as illustrated in Figure~\ref{fig:1D}(a). 
However, the resulting complexity of the classical simulation is exponential in $\omega^{-1}$, and is therefore inefficient when $\omega$ is small.

We instead use a bandwise Krylov space, which treats different energy bands of $E_0$ separately.
Partition the spectrum of $E_0$ into logarithmically spaced bands,
\begin{align}
    [\omega_\ell,2\omega_\ell),\qquad
    \omega_\ell=2^\ell\omega,\qquad
    0\leq\ell\leq L\coloneqq
    \left\lceil\log_2\frac{\es}{\omega}\right\rceil.
    \label{eq:L}
\end{align}

Let $W_\ell$ be the space spanned by eigenvectors of $E_0$
with eigenvalues in $[\omega_\ell,2\omega_\ell)$, and let
$\Pi_\ell$ be the corresponding projector.
Since $\sum_\ell \Pi_\ell$ is the identity map in the bath single-particle space, one could rewrite Eq.~(\ref{eq:impurity_model_hyb}) 
as
\begin{equation}
    H
    = K_0+ 
    \sum_\ell\left( \dE( \Pi_\ell E_0)
    +
    \sum_{j=1}^{2m}
    \left(
        a^\dagger(\Pi_\ell \mu_j)O_j
        +
        O_j^\dagger a(\Pi_\ell \mu_j)
    \right)\right).\label{eq:hyb_strength_original}
\end{equation}

Applying the constructions in Section~\ref{sec:1D_from_block_tri}
to every $\ell$, with
\begin{align}
    E_*\coloneqq E_{(\ell)}\coloneqq\Pi_\ell E_0,\qquad
    \mu_{j*}\coloneqq\Pi_\ell\mu_j,\qquad
    W_*\coloneqq W_\ell,
    \label{eq:hyb_bandwise}
\end{align}
we denote the resulting Krylov shell spaces and their orthonormal bases by
\begin{align}
\text{Krylov shell spaces:}\,\,\, W_{\ell,d}, \,\,\,d\geq 0;\qquad \text{Orthonormal basis of } W_\ell=\bigoplus_{d\geq 0} W_{\ell d}\text{:}\,\,\, \{g_{\ell j}\}_j.\label{eq:Wld}
\end{align}

For each band $\ell$, the basis vectors $\{g_{\ell j}\}_j$ are ordered so that those spanning $W_{\ell,0}$ come first, followed by those spanning $W_{\ell,1}$, then $W_{\ell,2}$, and so on.  Combining the bases from all bands,  we obtain the orthonormal basis   $\{g_{\ell j}\}_{\ell,j}$ for the entire bath single-particle space.

Define $V_\ell=[g_{\ell1},\ldots,g_{\ell j},\ldots]$. Then $V_\ell^\dagger E_{(\ell)} V_\ell$ has the block-tridiagonal form given in Eq.~(\ref{eq:block-tridiag}). Thus the impurity model $H$ in terms of $\{a(g_{\ell j})\}_{\ell j}$ has a bandwise one-dimensional structure as in Figure \ref{fig:1D} (b).

We call this basis $\{f_j\}_j\coloneqq\{g_{\ell j}\}_{\ell j}$ the \textit{bandwise Krylov basis}.

\subsection{Bandwise hybridization form with an enlarged impurity} 

\paragraph{Relabel annihilation operators.}\label{sec:enlarged_impurity}

As in Figure~\ref{fig:1D}(b), we enlarge the impurity to include the first Krylov shell in each band. In other words, we call
$\{a(e_j)\}_{j=1}^m\cup\{a(g_{\ell j})\}_{\ell;\,j\leq \dim(W_{\ell 0})}$
the \textit{enlarged impurity modes}, and call the remaining operators
$\{a(g_{\ell j})\}_{\ell;\,j>\dim(W_{\ell 0})}$
the \textit{residual bath modes}. We explain the reason for this partition below. To simplify the notation, throughout the remainder of the manuscript, we relabel the enlarged-impurity modes as $\{b_k\}_{k\geq 1}$ and the residual-bath modes as $\{a_i\}_{i\geq 1}$. Accordingly, we also represent vectors in the enlarged-impurity single-particle space $W_{\mathrm{imp}}\bigoplus_\ell W_{\ell,0}$ and the residual-bath single-particle space $\bigoplus_{\ell;\,d\geq 1} W_{\ell,d}$ by viewing the basis $\{e_j\}_{j=1}^m\cup\{g_{\ell j}\}_{\ell,j}$ as the standard basis. For any vector $v$ in the enlarged impurity single-particle space,  define $b(v)\coloneqq\sum_k v_k b_k$.

The reason for enlarging the impurity is to reduce the cost of our classical simulation algorithm, which will heavily rely on the interaction strength between the ``impurity'' part and the ``bath'' part. Note that we do not have much control of the interaction strength between the original impurity and the bath part:  $\|\Pi_\ell \mu_j\|$ in Eq.~(\ref{eq:hyb_strength_original}) need not be small relative to $\omega_\ell$, since $\mu_j$ could be arbitrary.

In contrast, we have good control of the interaction strength between the first and second bath blocks in Figure \ref{fig:1D}(b), which is the interaction strength between the enlarged impurity and the residual bath. Let $s_{\ell 0}$ and $s_{\ell 1}$ be the dimensions of the first two Krylov shells $W_{\ell 0}$ and $W_{\ell 1}$, respectively.  
The corresponding interaction terms in the Hamiltonian $H$ can be written as
\begin{align}
    \sum_{\ell=0}^L \sum_{j=1}^{r_\ell}
    \sigma_{\ell j}
    \left(
        a^\dagger(u_{\ell j})b(v_{\ell j})
        +
        b^\dagger(v_{\ell j})a(u_{\ell j})
    \right),
    \qquad
    |\sigma_{\ell j}|\leq 2\omega_\ell,
    \label{eq:hyb_first_second}
\end{align}
where $\{u_{\ell j}\}_j$ and $\{v_{\ell j}\}_j$ are orthonormal vectors in $W_{\ell 1}$ and $W_{\ell 0}$, respectively, and $r_\ell\leq 2m$: 
note that $V_\ell^\dagger E_{(\ell)} V_\ell$ has the block-tridiagonal form in Eq.~\eqref{eq:block-tridiag}, where the off-diagonal block $Q_{\ell 0}$ connecting $W_{\ell 0}$ and $W_{\ell 1}$ satisfies
    $\|Q_{\ell 0}\|\leq \|E_{(\ell)}\|\leq 2\omega_\ell.$
Since $\dim(W_{\ell 0})\leq 2m$, the quantity $r_\ell\coloneqq\rank(Q_{\ell 0})$ satisfies the same bound.    
Taking the singular-value decomposition
\begin{align}
    Q_{\ell 0}
    =
    \sum_{j=1}^{r_\ell}
    \sigma_{\ell j}
    u_{\ell j}v_{\ell j}^\dagger, \qquad r_\ell\leq 2m,
\end{align}
where 
$\sigma_{\ell j}\leq\|Q_{\ell 0}\|\leq2\omega_\ell$, 
we obtain Eq.~\eqref{eq:hyb_first_second}.

Putting these preprocessing steps together, we arrive at the following bandwise hybridization form with enlarged impurity, which we use throughout the remainder of the manuscript.

\begin{center}
\fbox{%
\begin{minipage}{0.95\linewidth}

\paragraph{Bandwise hybridization form with enlarged impurity.}
With the relabeled operators $\{b_k\}_k\cup\{a_j\}_j$ and the decomposition into enlarged-impurity and residual-bath modes illustrated in Figure~\ref{fig:1D}(b), the corresponding terms of the Hamiltonian can be grouped into the following \textit{bandwise hybridization form with enlarged impurity}:
\begin{equation}
    \label{eq:hyb_decom}
\begin{aligned}
&H=K+\dE(E)+\sum_{\ell=0}^L \sum_{j=1}^{r_\ell}
    \left(
        a^\dagger(u_{\ell j})G_{\ell j}
        +G_{\ell j}^\dagger a(u_{\ell j})
    \right), \qquad r_\ell\leq 2m,\\
    &E=\bigoplus_{\ell=0}^L E_\ell,\qquad
    \omega_\ell\I\leq E_\ell\leq 2\omega_\ell\I,\qquad\text{$E_\ell$ acts on $W_{\ell}\ominus W_{\ell 0}$,}\\
    &G_{\ell j}\coloneqq\sigma_{\ell j}b(v_{\ell j}),
    \qquad
    \|G_{\ell j}\|\leq 2\omega_\ell. 
\end{aligned}
\end{equation}
Here $K$ and $E$ are the corresponding parts of the Hamiltonian under the decomposition in Figure~\ref{fig:1D}(b), where
$K$ corresponds to terms that act only on the enlarged-impurity modes and has even parity, while the quadratic term $\dE(E)$, with Hermitian matrix $E$, acts only on the residual bath. Moreover, as  in Figure~\ref{fig:1D}(b), $E$ can be decomposed as a direct sum over the band index $\ell$,
$E=\bigoplus_{\ell=0}^L E_\ell.$
Here $E_\ell$ should be distinguished from the previously defined $E_{(\ell)}$: $E_{(\ell)}=\Pi_\ell E_0$ is the quadratic bath matrix in band $\ell$ before enlarging the impurity, while $E_\ell$ is the block of $E_{(\ell)}$ acting within the residual-bath subspace $W_\ell\ominus W_{\ell,0}$. The vectors $\{u_{\ell j}\}_j$ are orthonormal vectors in  $W_{\ell 1}$, where $W_{\ell 1}$ is the Krylov shell at band $\ell$ and depth $1$.
The term $a^\dagger(u_{\ell j})G_{\ell j}$ and its Hermitian conjugate form the new hybridization term described in Eq.~\eqref{eq:hyb_first_second}. 

\end{minipage}%
}
\end{center}

\paragraph{Quantities that control the final complexity.}
For reference, we list below the key quantities that control the complexity of our algorithm, where $\xi_\ell$ denotes the interaction strength between the enlarged impurity and the residual bath in band $\ell$:
   \begin{align}
   \xi_\ell &\coloneqq\left\| \sum_{j=1}^{r_\ell} G^\dagger_{\ell j}G_{\ell j}\right\|\leq 8m \omega_\ell^2,\label{eq:xi}\\
\kappa &\coloneqq\sum_{\ell=0}^L\frac{\xi_\ell}{\omega_\ell^2}  \leq 8m (L+1) = 8m \left(1+\left\lceil \log_2 \frac{\es}{\omega} \right\rceil\right). \label{eq:kappa}
   \end{align}

We also define a parameter that bounds the enlarged-impurity size together with the number of bath modes at each depth, summed over all bands:
\begin{align}
\chi\coloneqq\dim \left(W_{\mathrm{imp}} \oplus \bigoplus_{\ell=0}^L W_{\ell,0}\right)+
\max_{d\geq 1}
\dim\left(
\bigoplus_{\ell=0}^L W_{\ell,d}
\right)\leq m + 4m (L+1).\label{eq:chi}
\end{align}

\subsection{Weighted positivity of the bath in the bandwise Krylov basis}

The key property of the bandwise Krylov basis is the one-dimensional structure established in Lemma~\ref{lem:krylov_1D}, which implies the following weighted-positivity property of the bath. For any matrix $K$, define $\Re(K)=(K+K^\dagger)/2$.

\begin{lemma}[Weighted positivity of the bath]
\label{lem:weighted-energy} 
Let $\{h_d\}_{d\geq 1}$ be real numbers satisfying $h_1=0$ and
$|h_{d+1}-h_d|\leq 1$. For each band $\ell$, let $E_\ell$ be the bandwise bath matrix in Eq.~(\ref{eq:hyb_decom}), and let $\Pi_{\ell,d}$ be the  projector onto the Krylov shell $W_{\ell,d}$ defined in Eq.~(\ref{eq:Wld}). Define the depth-weighted operator
\begin{align}
    D_\ell\coloneqq\sum_{d\geq1} h_d\Pi_{\ell,d}.
\end{align}
Then, for every band $\ell$ and every $0<\lambda\leq1/4$ (by default we set $\lambda=1/4$),
\begin{align}
    \Re\!\left(
        e^{-\lambda D_\ell}E_\ell e^{\lambda D_\ell}
    \right)
    \geq\frac{\omega_\ell}{2}\I.
\end{align}
\end{lemma}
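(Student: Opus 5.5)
We decompose $E_\ell$ with respect to the shell decomposition $W_\ell\ominus W_{\ell 0}=\bigoplus_{d\geq 1}W_{\ell,d}$. By Lemma~\ref{lem:krylov_1D}, applied with $E_*=E_{(\ell)}$, and since $E_\ell$ is the compression of $E_{(\ell)}$ to the residual subspace, $E_\ell$ is block-tridiagonal:
\begin{align}
E_\ell = \sum_{d\geq1} P_d + \sum_{d\geq 1}\left(Q_d+Q_d^\dagger\right),
\end{align}
where $P_d=\Pi_{\ell,d}E_\ell\Pi_{\ell,d}$ and $Q_d=\Pi_{\ell,d+1}E_\ell\Pi_{\ell,d}$. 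Because $D_\ell$ is constant on each shell, conjugation leaves the diagonal blocks unchanged and rescales the off-diagonal ones:
\begin{align}
e^{-\lambda D_\ell}Q_d e^{\lambda D_\ell}=e^{-\lambda(h_{d+1}-h_d)}Q_d,\qquad e^{-\lambda D_\ell}Q_d^\dagger e^{\lambda D_\ell}=e^{\lambda(h_{d+1}-h_d)}Q_d^\dagger .
\end{align}
Writing $c_d=\cosh(\lambda(h_{d+1}-h_d))$ and taking the Hermitian part, the scalars $e^{\pm\lambda(h_{d+1}-h_d)}$ average to $c_d$. This gives
\begin{align}
\Re\!\left(e^{-\lambda D_\ell}E_\ell e^{\lambda D_\ell}\right)=E_\ell+\sum_{d}(c_d-1)\left(Q_d+Q_d^\dagger\right).
\end{align}

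It therefore suffices to show that the perturbation $R=\sum_d (c_d-1)(Q_d+Q_d^\dagger)$ satisfies $\|R\|\leq \omega_\ell/2$. Combined with $E_\ell\geq \omega_\ell\I$, this yields the claim.

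We have $0\leq c_d-1\leq \cosh(1/4)-1\approx 0.0314$, since $|\lambda(h_{d+1}-h_d)|\leq 1/4$. Each block satisfies $\|Q_d\|\leq\|E_\ell\|\leq 2\omega_\ell$. To control the sum, split the off-diagonal blocks by the parity of $d$. The blocks with $d$ even act on mutually orthogonal pairs of shells $W_{\ell,d}\oplus W_{\ell,d+1}$, so their sum is a direct sum. Its norm is therefore at most $\max_d (c_d-1)\|Q_d+Q_d^\dagger\|\leq 0.0314\cdot 2\omega_\ell$. The same holds for the odd-$d$ blocks. Hence
\begin{align}
\|R\|\leq 4\cdot 0.0314\,\omega_\ell\approx 0.126\,\omega_\ell<\omega_\ell/2,
\end{align}
using that $\|Q_d+Q_d^\dagger\|=\|Q_d\|$ for a block off-diagonal matrix.

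The main step to check carefully is the block-tridiagonality of $E_\ell$ on the residual shells. This requires that $W_{\ell,0}$ is removed cleanly, i.e.\ that $E_\ell$ is exactly the lower-right principal part of the block-tridiagonal matrix $V_\ell^\dagger E_{(\ell)}V_\ell$ in Eq.~\eqref{eq:block-tridiag}. That is immediate from the construction. The remainder is the elementary norm estimate above, and the slack ($0.126$ versus $0.5$) shows that the constant $\lambda=1/4$ is comfortably sufficient.
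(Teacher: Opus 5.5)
Your proposal is correct and follows the paper's proof: the same block-tridiagonal decomposition of $E_\ell$ over the shells $W_{\ell,d}$, the same $\cosh(\lambda(h_{d+1}-h_d))$ rescaling of the off-diagonal blocks, and a norm bound on the resulting perturbation compared against $E_\ell\geq\omega_\ell\I$. Your even/odd splitting into direct sums makes explicit the step the paper only asserts, namely $\|R_\ell\|\leq 4\lambda^2\omega_\ell$, and gives the same bound with $\lambda^2$ replaced by $\cosh\lambda-1$.
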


\begin{proof}
By the one-dimensional structure property in Lemma \ref{lem:krylov_1D}, we can decompose $E_\ell$ as a sum of diagonal and off-diagonal blocks,
\begin{align}
    E_\ell = \sum_{d\geq 1} \Pi_{\ell d} E_\ell \Pi_{\ell d} +  \sum_{d\geq 1} \left(\Pi_{\ell d} E_\ell \Pi_{\ell, d+1} + \Pi_{\ell, d+1} E_\ell \Pi_{\ell d}\right).
\end{align}
 Define $\eta_d
\coloneqq
\cosh\!\left(\lambda(h_{d+1}-h_{d})\right)-1$. 
Using the notation $\Reop(A)=(A+A^\dagger)/2$, one can check that 
\begin{align}
\Re\!\left(
e^{-\lambda D_\ell}E_\ell e^{\lambda D_\ell}
\right)
=
E_\ell+  R_\ell,\qquad
R_\ell\coloneqq\sum_{d\geq1}
\eta_d
\left(
\Pi_{\ell,d}E_\ell\Pi_{\ell,d+1}
+
\Pi_{\ell,d+1}E_\ell\Pi_{\ell,d}
\right),\label{eq:weighed_positive_bath}
\end{align}
Since $|h_{d+1}-h_{d}|\leq1$ and $\cosh \lambda -1\leq \lambda^2$ for $0\leq \lambda\leq 1/4$, we have $0\leq \eta_d \leq \lambda^2.$

Besides, recall that $\|E_\ell\|\leq 2\omega_\ell$; thus, based on Eq.~(\ref{eq:weighed_positive_bath})  we have  
$\|R_\ell\|\leq 4 \lambda^2 \omega_\ell.$
Recall that $ E_\ell\geq \omega_\ell \I$; thus, using $0\leq \eta_d \leq \lambda^2$, we conclude that 
\[
 \Re\!\left(e^{-\lambda D_\ell}E_\ell e^{\lambda D_\ell}\right)
 \ge(1-4\lambda^2)\omega_\ell\I
 \ge\frac{\omega_\ell}{2}\I.
\]
where in the last inequality we use $0<\lambda\leq 1/4$.
\end{proof}

\section{Compression lemma and recursion for occupation statistics}
\label{sec:compress}

Let $\rho$ denote the state of interest, which will later be taken to be either a ground state or the relevant part of the thermal state of an impurity model. In this section, we develop a compression lemma showing that, although $\rho$ is in principle defined on a Hilbert space of dimension $2^n$, it can be compressed to a much smaller subspace.

Recall from Section~\ref{sec:band_krylov_rep} that $\{b_k\}_{k\geq 1}\cup\{a_i\}_{i\geq 1}$ are the annihilation operators for the enlarged impurity and the residual bath, respectively. For concreteness, we give an ordering of these $n$ annihilation operators by listing $a_1,a_2,\ldots$ first, followed by $b_1,b_2,\ldots$, and let $\{\ket{x}:x\in\{0,1\}^n\}$ denote the corresponding Fock basis. We write
$$
x\sim \rho,
$$
to denote sampling from the distribution over $x\in \{0,1\}^n$ obtained by measuring $\rho$ in this Fock basis.
Our main strategy is to control the \textit{occupation statistics} of the \textit{residual-bath modes}, namely, the statistics of the bits in $x$ corresponding to $\{a_i\}_{i\geq 1}$.

In Section~\ref{sec:occup_stat_support}, we first explain the \textit{weighted} occupation statistics and how they can be used to control the support of the state of interest. In Section~\ref{sec:occup_stat_notation}, we introduce more formal notation. Then, in Sections~\ref{sec:pull_through} and~\ref{sec:Fq_recursion}, we use the commutator identity for impurity models to derive a recursion formula for estimating the weighted occupation statistics.

\subsection{Bounding the support via weighted occupation statistics}\label{sec:occup_stat_support}
 
Let $\bS_{\mathrm{bath}}\subseteq\{1,\ldots,n\}$ denote the index set of the bath modes $\{a_i\}_i$, that is 
$$
\{a_i\}_i = \{a_i\}_{i\in \bS_{\mathrm{bath}}}. 
$$
Define the occupation number operator for the $i$th bath mode as
\begin{align}
n_i\coloneqq a_i^\dagger a_i=\op{1}{1}_i.
\end{align}
The statistics of $n_i$ can be used to control the support of $\rho$. For example, if $\tr\left(\rho\left(\sum_i n_i\right)\right)$ is small, then by Markov's inequality, $\rho$ is approximately supported on Fock basis states $\ket{x}$ with small Hamming weight $|x|$.  Note that the enlarged impurity contains only \(O(m(L+1))\) modes.

For our purpose of polynomial-time classical simulation, we carry out a more refined analysis based on \textit{weighted occupation statistics} and their \textit{exponential moments}.

\paragraph{Weighted occupation number.} As in Figure \ref{fig:1D}(b), recall that the  residual-bath single-particle space is partitioned into different bands, and each band is partitioned into Krylov shell spaces, i.e.  $W=\bigoplus_{\ell=0}^L W_{\ell}$ and $W_\ell\coloneqq\bigoplus_{d\geq 1}W_{\ell d}$.  We refer to the subscripts in $W_{\ell,d}$ as ``band''  $\ell$  and ``depth'' $d$.   For each residual-bath mode $a_i$ whose corresponding single-particle basis vector belongs to $W_{\ell,d}$, we define its band and depth as  $\ell(i)\coloneqq\ell$ and $d(i)\coloneqq d$.  

We weight the occupation statistics according to the depth of each mode. Fix nonnegative numbers $\{h_d\}_{d\geq1}$ with 
\begin{align}
    h_1=0, \quad |h_{d+1}-h_d|\leq1, \quad h_d\geq 0.
\end{align} In the later sections, we take $h_d=d-1$ for the ground-state argument and
$h_d=\min\{d-1,\cut\}$ for the thermal-state argument, where $\cut\geq 1$ is a cutoff parameter set later.

For any $x\in\{0,1\}^n$, define the weighted total occupation number $D(x)$ and the corresponding depth-weighted occupation-number operator $\mathsf{D}$ on the Fock space by
\begin{align}
D(x)\coloneqq\sum_{i\in \bS_{\mathrm{bath}}} h_{d(i)}x_i, \qquad  \mathsf{D}=\sum_{i\in \bS_{\mathrm{bath}}} h_{d(i)} n_i. \label{eq:D}
\end{align}
Recall that $\Pi_{\ell d}$ is the projector onto the single-particle subspace $W_{\ell d}$. For convenience, for each band $\ell$, we also define the corresponding depth-weighted operator on the single-particle space by
\begin{align}
D_\ell=\sum_{d\geq1}h_d\Pi_{\ell,d}.
\end{align}

\paragraph{Exponential moments and joint occupation.} 
The key quantity we estimate is the exponential moment of the weighted occupation  number.
Our goal is to estimate 
\begin{align}
	&\mathbb{E}_{x\sim \rho}\left( e^{2 \lambda D(x)} \right) = \tr\left(\rho e^{2\lambda \mathsf D }\right), \qquad\lambda\coloneqq1/4.
\end{align}
The key fact is that an upper bound on $\tr\left(\rho e^{2\lambda \mathsf D }\right)$ leads to an upper bound on the support of $\rho$:

\begin{theorem}[Exponential moment to size of support]
\label{thm:exp_moment_poly_support}
Suppose $h_d\geq \min\{d-1,\cut\}$ and set
$\cut= \max\{1,\frac18\log n\}$. Then, for any $0<\delta<1$, define a subset $\bS$
of the Fock configurations by
\begin{align}
	&\bS
\coloneqq
\left\{
x:
D(x)\leq
T
\right\}, \quad \text{for}\quad T=4\log(2\delta^{-1})
+
2\log \tr \left(\rho e^{\mathsf D/2}\right).
\end{align}
Denote the corresponding subspace by
$\mathcal H_{\bS}
\coloneqq
\operatorname{span}\{\ket{x}:x\in\bS\}$, and let $P_\bS$ denote the projection onto $\cH_\bS$. Then
\begin{align}
	\tr\left((\I-P_\bS)\rho\right)\leq \frac{\delta^2}{4},\qquad   \dim(\cH_\bS)
\leq
\exp\!\left[
\mathcal O\left(
m\log \frac{2\es}{\omega}+\log\delta^{-1}
+\log\tr\left(\rho e^{\mathsf D/2}\right)
\right)
\right].
\end{align}
Moreover,  the set $\bS$ can be enumerated in time $\cO(n|\bS|)$ by a classical algorithm.
Note that $D(x)$ involves only residual-bath occupations; thus $P_\bS$ acts
trivially on the enlarged-impurity modes.
\end{theorem}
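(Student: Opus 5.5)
The proof has two parts: a Markov-type tail bound for the weight outside $\bS$, and a combinatorial count of $\bS$. For the tail bound, note that $\mathsf D$ is diagonal in the Fock basis with eigenvalue $D(x)$ on $\ket{x}$, so $\tr(\rho e^{\mathsf D/2})=\E_{x\sim\rho}e^{D(x)/2}$. By Markov's inequality applied to $e^{D(x)/2}$,
\begin{align}
\tr\left((\I-P_\bS)\rho\right)=\Pr_{x\sim\rho}[D(x)>T]\leq e^{-T/2}\,\tr\left(\rho e^{\mathsf D/2}\right).
\end{align}
Substituting $T=4\log(2\delta^{-1})+2\log\tr(\rho e^{\mathsf D/2})$ gives exactly $e^{-2\log(2/\delta)}=\delta^2/4$. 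This part is routine.

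The second part, bounding the dimension, is the main obstacle. We must count configurations $x$ of the residual-bath bits with $\sum_i h_{d(i)}x_i\leq T$; the enlarged-impurity bits are unconstrained and contribute a factor $2^{\chi_0}$ with $\chi_0\leq m+2m(L+1)$, which is $\exp(\cO(m\log(2\es/\omega)))$. By \eqref{eq:chi}, each depth $d\geq1$ contains at most $\chi':=4m(L+1)$ residual-bath modes in total across bands.

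The count splits into deep and shallow modes. Modes with $d\geq \cut+1$ have weight $h_d\geq\cut$, so at most $T/\cut$ of them can be occupied. Modes with $2\leq d\leq\cut$ have weight $h_d\geq d-1$. Modes at depth $1$ have weight $0$; there are at most $\chi'$ of them, contributing a factor $2^{\chi'}$.

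For depths $2,\dots,\cut$, let $k_d$ be the number of occupied modes at depth $d$. The constraint $\sum_d (d-1)k_d\leq T$ means the number of occupied shallow modes is at most $T$ and the occupied depths are pinned down. I would count via a generating function: the number of such configurations is at most
\begin{align}
e^{sT}\prod_{d\geq2}(1+e^{-s(d-1)})^{\chi'}\leq e^{sT}\exp\big(\chi'/(e^{s}-1)\big)
\end{align}
for any $s>0$. With $s=1$, this is $e^{T+\cO(\chi')}$.

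For the deep modes, the count is at most $\binom{n}{\leq T/\cut}\leq n^{T/\cut}=e^{T\log n/\cut}\leq e^{8T}$, using $\cut\geq\frac18\log n$. When $\cut=1$ we have $\log n\leq 8$, so the bound still holds.

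Multiplying all factors gives
\begin{align}
|\bS|\leq \exp\big(\cO(\chi'+T)\big)=\exp\left[\cO\left(m\log\tfrac{2\es}{\omega}+\log\delta^{-1}+\log\tr(\rho e^{\mathsf D/2})\right)\right],
\end{align}
using $L+1=\cO(\log(2\es/\omega))$. The delicate point is the choice $\cut\approx\frac18\log n$: it is exactly what makes the deep-mode factor $n^{T/\cut}$ become $e^{\cO(T)}$ rather than a quasipolynomial in $n$.

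For enumeration, I would run a depth-first search over the bath modes ordered by depth. The search extends a partial configuration by a bit only if the accumulated weight stays at most $T$; every leaf is then a valid element, so the search tree has $\cO(n|\bS|)$ nodes. Pairing each leaf with all impurity assignments gives total time $\cO(n|\bS|)$. Finally, $P_\bS$ depends only on the bath bits, so it acts trivially on the enlarged impurity.
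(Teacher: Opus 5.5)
Your proposal is correct and follows the paper's proof. It uses Markov's inequality on $e^{D(x)/2}$ for the tail weight, then an exponential-weight count of $\{x:D(x)\le T\}$ in which $\cut\ge\frac18\log n$ makes the deep modes cost only $e^{\cO(T)}$; the paper folds these modes into a single weight $e^{-8\sum_i\min\{d(i)-1,\cut\}x_i}$ and uses $ne^{-8\cut}\le 1$, whereas you count them directly. The enumeration is the same pruned depth-first search.

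Two cosmetic fixes are needed.
\begin{itemize}
\item $\sum_{j\le k}\binom{n}{j}\le n^k$ is false for $k=1$. Use $(n+1)^k$ instead, which only turns $e^{8T}$ into $e^{16T}$.
\item When $\cut$ is not an integer, the depth $d=\lceil\cut\rceil$ falls in neither of your classes. It should go with the shallow modes, since there $h_d\ge d-1$.
\end{itemize}
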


\begin{proof}
Suppose we measure $\rho$ in the Fock basis and obtain $x\sim \rho$. By Markov's inequality, for $\lambda=1/4$,
\begin{align}
    \Pr_{x\sim\rho}\!\left(
        e^{2\lambda D(x)}>e^{2\lambda T}
    \right)\leq
    e^{-2\lambda T}
    \mathbb E_{x\sim\rho}
    \left[e^{2\lambda D(x)}\right]=
     e^{-2\lambda T}\tr\left(\rho e^{2\lambda \mathsf D }\right).
\end{align}

By the choice of $T$, we have
$e^{-2\lambda T}\tr\left(\rho e^{\mathsf{D}/2}\right) \leq \delta^2/4$.
Thus,
$\tr\left((\I-P_\bS)\rho\right)\leq \delta^2/4$.
The bound on the size of $\bS=\{x:D(x)\leq T\}$ and the algorithm to enumerate it follow from the direct combinatorial bound in Appendix~\ref{app:count_bound}, Lemma~\ref{lem:count_set}.
\end{proof}

Theorem \ref{thm:exp_moment_poly_support} implies that 
$\rho$ can be compressed, with inverse-polynomial error, to a polynomial-size subspace $\cH_{\bS}$, 
if $\omega= {1}/{\poly(n)}$ and $\tr(\rho e^{\mathsf{D}/2})=\poly(n)$.
For the task of estimating the ground-state energy, we can always assume that $\omega\geq {1}/{\poly(n)}$, as explained in Section~\ref{sec:ground_energy_classical}. For tasks related to thermal states, we will only apply Theorem \ref{thm:exp_moment_poly_support} to the thermally gapped region where $\beta \omega\geq 9\log n$. The rest of this section, together with Sections~\ref{sec:compress_ground} and~\ref{sec:compress_thermal}, is devoted to estimating $\tr(\rho e^{\mathsf{D}/2})$.

\subsection{Notation for joint occupations}\label{sec:occup_stat_notation}

Let $I=(i_1,\ldots,i_q)$ be an increasing list of bath-mode
indices, with $i_1<\cdots<i_q$. 
To simplify notation, from now on, we abbreviate $\sum_{i\in \bS_{\mathrm{bath}}}$ as $\sum_i$ and $\sum_{I\subseteq  \bS_{\mathrm{bath}}}$ as $\sum_I$. We also write
\begin{align}
 A_I\coloneqq a_{i_q}\cdots a_{i_1},\qquad A_\varnothing\coloneqq\I,
 \qquad
 h(I)\coloneqq\sum_{i\in I}h_{d(i)}, \qquad \rho_I\coloneqq e^{2\lambda h(I)}A_I\rho A_I^\dagger.
 \label{eq:list-data}
\end{align}

\paragraph{Estimate the exponential bound by joint occupation.}
In Section \ref{sec:occup_stat_support}, Theorem \ref{thm:exp_moment_poly_support}, we reduce the problem of compressing a state to bounding the exponential moment $\tr\left(\rho e^{2\lambda \mathsf D }\right)$. We further bound this quantity in terms of the joint occupation statistics $\tr(\rho \prod_{i\in I}n_{i})$. Notice that

\begin{align}
	\tr\left(\rho e^{2\lambda \mathsf D }\right)\leq\sum_{I\subseteq \bS_{\mathrm{bath}}}  e^{2\lambda h(I)}  \tr\left(\rho  \prod_{i\in I} n_i\right), \label{eq:joint_occupation}
\end{align}
where the inequality holds since 
	$e^{2\lambda \mathsf D}=\prod_{i}  \left(\I + (e^{2\lambda h_{d(i)}}-1)n_i\right ) \leq \sum_{I}  e^{2\lambda h(I)} \prod_{i\in I}n_i.$

For convenience, we group the indices in $I$ by band. For each $\ell=0,\ldots,L$, let
$q_\ell\coloneqq\bigl|{i\in I:\ell(i)=\ell}\bigr|$
denote the number of indices in $I$ belonging to band $\ell$.
Define the  (band-grouped) weighted  joint occupation  as
\begin{align}
F_{q_0,\ldots,q_L}
\coloneqq
\sum_{\substack{I:\,|\{i\in I:\ell(i)=\ell\}|=q_\ell\\
                    \text{for every }\ell}}
e^{2\lambda h(I)}
\tr\!\left(\rho\prod_{i\in I}n_i\right).
\label{eq:Fq}
\end{align}

To simplify notation, we use the notation
\begin{align}
    \bq\coloneqq(q_0,\ldots,q_L),
    \qquad
    \bq-\bel\coloneqq(q_0,\ldots,q_\ell-1,\ldots,q_L),\qquad I\sim\bq,
\end{align}
where $\bel$ is the unit vector in the $\ell$-th coordinate, and
 $I\sim\bq$
means that $I$ is an increasing list consistent with the prescribed band
occupations, namely,
$|\{i\in I:\ell(i)=\ell\}|=q_\ell$ for every $\ell$.   Using this notation, Eq.~(\ref{eq:joint_occupation}) and Eq.~(\ref{eq:Fq}) become
\begin{align}
    \tr\left(\rho e^{2\lambda \mathsf D }\right)\leq \sum_{\bq} F_\bq,\qquad\qquad F_\bq = \sum_{I\sim \bq} \tr(\rho_I).\label{eq:exp_Fq}
\end{align}

\paragraph{Substitution of modes.}
For convenience,  for $j\notin I\setminus\{i\}$, we define $I^{(i\to j)}$ to be the increasing list
obtained by replacing $i$ with $j$.

\paragraph{Remark.}
In this section we use the notation $ F_\bq, D_\ell \ldots$. In Sections~\ref{sec:ground_state_sim} and~\ref{sec:thermal_state_sim}, we distinguish the corresponding quantities for the ground state and thermal state by adding superscripts $(\infty)$ and $(\beta)$, respectively, e.g., $F_\bq^{(\infty)}$ and $F_\bq^{(\beta)}$.

\subsection{The commutator identity}\label{sec:pull_through}

We estimate $F_\bq$ by deriving a recursion that relates it to the lower-order quantities $F_{\bq-\bel}$. The key ingredient is the commutator identity in Lemma \ref{lem:commutator-identity}, an  algebraic identity for the commutator $[A_I,H]$ that exploits the structure of impurity models.

To begin, we first compute the commutator $[H,a_i]$. Recall that $\ell(i)$ denotes the band containing mode $i$. Recall that $e_i$ corresponds to mode $a_i$ and is a standard basis vector in $W_{\ell(i),d(i)}$. 

\begin{lemma}
\label{lem:elementary-commutators}
For any residual-bath mode $a_i$,
\begin{align}
 [H,a_i]
 ={}&-\sum_{j:\ell(j)=\ell(i)}
 \langle e_j,E_{\ell(i)}e_i\rangle a_j
 -\sum_{s}
 \langle u_{\ell(i)s},e_i\rangle G_{\ell(i)s}.
 \label{eq:H-ai}
\end{align}
In particular, $[H,a_i]$ does not increase the degree in the residual-bath
operators.
\end{lemma}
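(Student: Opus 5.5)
We compute $[H,a_i]$ term by term in the bandwise hybridization form of Eq.~\eqref{eq:hyb_decom}, namely $H=K+\dE(E)+\sum_{\ell,s}\bigl(a^\dagger(u_{\ell s})G_{\ell s}+G_{\ell s}^\dagger a(u_{\ell s})\bigr)$. For the first term, $K$ is an even-parity polynomial in the enlarged-impurity operators $\{b_k\}$. Each $b_k$ anticommutes with $a_i$, since the two modes are distinct and the CAR hold in the combined basis. Hence every even-degree monomial of $K$ commutes with $a_i$, and $[K,a_i]=0$.

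For the quadratic term we use the identity $[\dE(P),a(\mu)]=-a(P\mu)$ from Section~\ref{sec:fermi_operator} with $\mu=e_i$. This gives $[\dE(E),a_i]=-a(Ee_i)=-\sum_j\langle e_j,Ee_i\rangle a_j$. Because $E=\bigoplus_\ell E_\ell$ and $e_i\in W_{\ell(i)}$, the coefficient $\langle e_j,Ee_i\rangle$ vanishes unless $\ell(j)=\ell(i)$, in which case it equals $\langle e_j,E_{\ell(i)}e_i\rangle$. This produces the first sum in Eq.~\eqref{eq:H-ai}.

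The hybridization terms take the most care with signs. The operator $G_{\ell s}=\sigma_{\ell s}b(v_{\ell s})$ is linear in the $b$'s, so it anticommutes with $a_i$ and with $a^\dagger(u_{\ell s})$. We first use
\[
[a^\dagger(u)G,a_i]=a^\dagger(u)\{G,a_i\}-\{a^\dagger(u),a_i\}G=-\langle u,e_i\rangle G,
\]
where the convention $\{a(\mu)^\dagger,a(\nu)\}=\langle\mu,\nu\rangle\I$ gives $\{a^\dagger(u),a_i\}=\langle u,e_i\rangle$. Next, $G^\dagger a(u)$ is a product of two annihilation-type operators, each anticommuting with $a_i$, so it commutes with $a_i$. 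Summing over $\ell$ and $s$, and keeping only the terms with $\ell=\ell(i)$ (the vectors $u_{\ell s}$ lie in $W_{\ell 1}\subseteq W_\ell$, which is orthogonal to $e_i$ when $\ell\neq\ell(i)$), yields the second sum in Eq.~\eqref{eq:H-ai}.

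The degree claim is then immediate: the result is linear in the residual-bath operators $a_j$ plus a term involving only enlarged-impurity operators. The only real obstacle is sign bookkeeping in the graded commutators, in particular checking that $K$ and $G_{\ell s}$ have the stated parity so that the anticommutation arguments apply.
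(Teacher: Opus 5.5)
Your proof is correct and follows the same route as the paper: you compute term by term, using $[K,a_i]=0$ by parity, $[\dE(E),a_i]=-a(Ee_i)$, and the graded commutator for the hybridization terms, and then the band-diagonal structure of $E$ together with $u_{\ell s}\in W_{\ell 1}$ restricts both sums to $\ell=\ell(i)$. One wording fix: $G_{\ell s}^\dagger a(u_{\ell s})$ is a creation-type factor times an annihilation operator, not two annihilation-type factors, and $K$ involves both $b_k$ and $b_k^\dagger$; your argument only needs that each factor anticommutes with $a_i$, which holds in both cases, so the conclusion stands.
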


\begin{proof} Recall the decomposition of $H$ given in Eq.~\eqref{eq:hyb_decom}.
The operator $K$  acts only on the enlarged impurity and has even parity, so it
commutes with every residual-bath annihilation operator. Moreover, the canonical anticommutation relation gives
\begin{align}
 &[\dE(E),a_i]=-a(Ee_i),
 \label{eq:bath-commutator}\\
 &[a^\dagger(u_{\ell s})G_{\ell s}
   +G_{\ell s}^\dagger a(u_{\ell s}),a_i]
 =-\langle u_{\ell s},e_i\rangle G_{\ell s}.
\end{align}
Note that $E=\oplus_{\ell=0}^L E_\ell$ leaves vectors in each
band invariant; thus $E e_i =E_{\ell(i)} e_i \in W_\ell$. Besides, recall that $u_{\ell s}$ lies in band $\ell$. Thus we have
\begin{align}
 [H,a_i]
 &=[\dE(E),a_i]
 +\left[
 \sum_{\ell} \sum_{s} \left(
 a^\dagger(u_{\ell s})G_{\ell s}
 +G_{\ell s}^\dagger a(u_{\ell s})
 \right),a_i
 \right]\notag\\
 &=-\sum_{j:\ell(j)=\ell(i)}
 \langle e_j,E_{\ell(i)}e_i\rangle a_j
 -\sum_{s} \langle u_{\ell(i)s},e_i\rangle G_{\ell(i)s}.
\end{align}
\end{proof}

We now derive the key equation, which is called the commutator identity.

\begin{lemma}[Commutator identity]
\label{lem:commutator-identity}
For every nonempty increasing list $I=(i_1<\cdots<i_q)$,
\begin{align}
    [A_I,H]=B_I+C_I,
    \label{eq:commutator-identity-short}
\end{align}
where \(B_I\) denotes the bath-replacement contribution obtained by replacing a residual-bath annihilation operator \(a_i\) in \(A_I\) with another from the same band, whereas \(C_I\) denotes the hybridization contribution obtained by replacing \(a_i\) with an annihilation operator acting on the enlarged impurity. Explicitly,
\begin{align}
 B_I
 &\coloneqq
 \sum_{i\in I}
\sum_{\substack{j:\ell(j)=\ell(i)\\
 j\notin I\setminus\{i\}}}
 \varepsilon_I(i,j)
 \langle e_j,E_{\ell(i)}e_i\rangle
 A_{I^{(i\to j)}},
 \label{eq:BI}\\
 C_I
 &\coloneqq
 \sum_{i\in I}\eta_I(i)
 \sum_{s=1}^{r_{\ell(i)}}
 \langle u_{\ell(i)s},e_i\rangle
 G_{\ell(i)s}A_{I\setminus\{i\}}.
 \label{eq:CI}
\end{align}
Here $\varepsilon_I(i,j)$ and $\eta_I(i)\in\{\pm1\}$ are the signs arising from restoring the standard decreasing order of the annihilation operators. In particular, $\varepsilon_I(i,j)$ is obtained by replacing $a_i$ with $a_j$ in $a_{i_q}\cdots a_{i_1}$ and reordering the resulting product into decreasing-index order.
\end{lemma}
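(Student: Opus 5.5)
We prove the identity by expanding $[A_I,H]$ with the graded Leibniz rule and inserting Lemma~\ref{lem:elementary-commutators} factor by factor. Since $H$ is even, the ordinary commutator with a product of odd operators satisfies
\begin{align}
[a_{i_q}\cdots a_{i_1},H]=\sum_{p=1}^{q} a_{i_q}\cdots a_{i_{p+1}}\,[a_{i_p},H]\,a_{i_{p-1}}\cdots a_{i_1}.
\end{align}
This is just telescoping, $XY H - HXY = X[Y,H]+[X,H]Y$, and needs no grading because the commutators are with the even operator $H$. Lemma~\ref{lem:elementary-commutators} gives $[a_i,H]=-[H,a_i]$, which equals
\begin{align}
\sum_{j:\ell(j)=\ell(i)}\langle e_j,E_{\ell(i)}e_i\rangle a_j+\sum_s\langle u_{\ell(i)s},e_i\rangle G_{\ell(i)s}.
\end{align}
Substituting this for each $p$ splits the sum into a bath-replacement part and a hybridization part.

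For the bath-replacement part, the $p$th term is $a_{i_q}\cdots a_j\cdots a_{i_1}$, with $a_j$ sitting in position $p$. If $j\in I\setminus\{i_p\}$, the product contains $a_j$ twice. Because the annihilation operators anticommute, $a_j^2=0$, and the product vanishes; this is why the sum in~\eqref{eq:BI} is restricted to $j\notin I\setminus\{i\}$. Otherwise the operators are pairwise distinct and anticommute, so we reorder them into decreasing-index order. This costs a sign $\varepsilon_I(i,j)=(-1)^{\#\text{transpositions}}$ and produces exactly $A_{I^{(i\to j)}}$. Summing over $p$ (that is, over $i\in I$) and over $j$ gives $B_I$.

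For the hybridization part, $G_{\ell s}=\sigma_{\ell s}b(v_{\ell s})$ is a linear combination of enlarged-impurity annihilation operators, so it anticommutes with every residual-bath $a_k$. We move it to the far left, past $a_{i_q},\ldots,a_{i_{p+1}}$, which produces the sign $\eta_I(i_p)=(-1)^{q-p}$. What remains is $A_{I\setminus\{i_p\}}$, which is already in decreasing order. Summing over $p$ and $s$ gives $C_I$.

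Nothing here is deep; the only point needing care is sign bookkeeping. We should check two things. First, the Leibniz expansion uses commutators, not anticommutators, so the only signs that appear come from the reorderings. Second, the orthogonality of the $b$'s and $a$'s, and the fact that $K$ is even and acts only on the enlarged impurity, are what allow the lemma's form to be used verbatim. The proof is then a direct computation.
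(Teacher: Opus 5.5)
Your proposal is correct and follows the paper's proof in Appendix~\ref{app:commutator-identity} step for step: you expand $[A_I,H]$ by the product rule, insert Lemma~\ref{lem:elementary-commutators}, discard repeated modes via $a_j^2=0$ and reorder to obtain $B_I$, and anticommute $G_{\ell(i)s}$ to the left to obtain $C_I$. Your explicit sign $\eta_I(i_p)=(-1)^{q-p}$ is a correct refinement, and the Leibniz rule $[XY,H]=X[Y,H]+[X,H]Y$ holds for any $H$, so your appeal to the evenness of $H$ is unnecessary, though harmless.
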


The proof of Lemma~\ref{lem:commutator-identity} follows from the fact that $[A_I,H]
=\sum_{r=1}^q
a_{i_q}\cdots
[a_{i_r},H]\cdots a_{i_1}$  and an application of Lemma~\ref{lem:elementary-commutators}. We give the detailed calculation in Appendix~\ref{app:commutator-identity}.

\paragraph{Triangular structure and recursion strategy.}
Note that the commutator identity $[A_I,H]=B_I+C_I$ starts from an $A_I$ containing $|I|$ residual-bath annihilation operators. The bath-replacement term $B_I$ contains monomials $A_{I^{(i\rightarrow j)}}$ of the same degree, $|I^{(i\rightarrow j)}|=|I|$, whereas $C_I$ contains $G_{\ell(i)s}A_{I\setminus\{i\}}$, with strictly smaller residual-bath degree $|I\setminus{i}|=|I|-1$. This triangular structure will allow us to derive a recursion for $F_\bq$ in terms of lower-order quantities.

More precisely, the recursion will be based on the 
following direct corollary of the commutator identity. For any nonempty increasing list $I$,
define the commutator contribution as
\begin{align}
    \Delta_I \coloneqq \Re\tr\!\left(
 \rho A_I^\dagger[H,A_I]
 \right).
\end{align}

\begin{corollary}\label{cor:commutator-identity}
Fix nonzero $\bq=(q_0,\ldots,q_L)$ and consider $I\sim \bq$.
Multiplying \eqref{eq:commutator-identity-short} by
$e^{2\lambda h(I)}\rho A_I^\dagger$, taking the real part of the trace,
and summing over $I\sim\bq$, we have
\begin{align}
 -\sum_{I\sim\bq}e^{2\lambda h(I)} \Delta_I  = \sum_{I\sim\bq}e^{2\lambda h(I)}
 \Re\tr(\rho A_I^\dagger B_I)
 +\sum_{I\sim\bq}e^{2\lambda h(I)}
 \Re\tr(\rho A_I^\dagger C_I)
\label{eq:key_observation_pull_through}
\end{align}
\end{corollary}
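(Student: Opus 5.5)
This is an algebraic rearrangement of the commutator identity, Lemma~\ref{lem:commutator-identity}. Taking adjoints and signs carefully should suffice; no analytic estimate is needed.

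First, I would rewrite $\Delta_I$ in terms of $[A_I,H]$. Since $[H,A_I]=-[A_I,H]$, the definition gives
\begin{align}
-\Delta_I=-\Re\tr\!\left(\rho A_I^\dagger[H,A_I]\right)=\Re\tr\!\left(\rho A_I^\dagger[A_I,H]\right).
\end{align}
Substituting $[A_I,H]=B_I+C_I$ from Eq.~\eqref{eq:commutator-identity-short} and using linearity of the trace yields
\begin{align}
-\Delta_I=\Re\tr(\rho A_I^\dagger B_I)+\Re\tr(\rho A_I^\dagger C_I).
\end{align}
The real part is additive over complex sums, so splitting it is legitimate. The only thing to watch is the direction of the commutator. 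The statement defines $\Delta_I$ with $[H,A_I]$, whereas the identity is phrased with $[A_I,H]$, and the minus sign on the left of Eq.~\eqref{eq:key_observation_pull_through} accounts for exactly this difference.

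Next, I would multiply by the real nonnegative weight $e^{2\lambda h(I)}$. Because the weight is real, it commutes with $\Re$. I would then sum over the finitely many increasing lists $I\sim\bq$. This gives Eq.~\eqref{eq:key_observation_pull_through} term by term.

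The hypothesis that $\bq$ is nonzero matters only because it ensures every $I\sim\bq$ is nonempty, which is the condition under which Lemma~\ref{lem:commutator-identity} applies. I expect no real obstacle beyond this sign bookkeeping; the substantive work is in Lemma~\ref{lem:commutator-identity} itself, which is assumed.
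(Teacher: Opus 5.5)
Your proposal is correct and matches the paper's argument exactly. The paper justifies the corollary only by the sentence in its statement, i.e., substituting $[A_I,H]=B_I+C_I$ into $-\Delta_I=\Re\tr\!\left(\rho A_I^\dagger[A_I,H]\right)$, weighting by $e^{2\lambda h(I)}$, and summing over $I\sim\bq$; your sign bookkeeping and your remark that $\bq\neq\mathbf{0}$ forces $I$ to be nonempty are the only details needed.
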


Based on the above corollary, in the next section, we lower bound the contribution of $B_I$ in terms of $F_\bq$  in Lemma~\ref{lem:lower_bath}, using the weighted positivity of the bath from Lemma~\ref{lem:weighted-energy}.
In Lemma~\ref{lem:upper_source}, we upper bound the contribution of $C_I$ in terms of $F_\bq$ and the lower-order quantities $F_{\bq-\bel}$, by two straightforward applications of the Cauchy--Schwarz inequality, with the norm bound in Eq.~(\ref{eq:xi}) controlling one of the resulting factors.
The commutator contribution $\Delta_I$ is the only term whose estimate requires specific properties of the state $\rho$; it will therefore be treated separately for the ground state and the thermal state in Sections~\ref{sec:ground_state_sim} and~\ref{sec:thermal_state_sim}, respectively.

\subsection{A recursion formula for \texorpdfstring{$F_{\bq}$}{Fq} with commutators}\label{sec:Fq_recursion}

In this subsection, we use the commutator identity to derive a recursion
for the target quantity $F_\bq$. To simplify the notation, for any fixed
increasing list $J=(j_1<\cdots<j_k)$, set
\begin{align}
 \Phi_{J,i}\coloneqq
 e^{\lambda(h(J)+h_{d(i)})}a_iA_J,
 \qquad i\notin J.
\end{align}
Thus $\Phi_{J,i}=\pm e^{\lambda h(J\cup\{i\})}A_{J\cup\{i\}}$, where the
sign comes from reordering $a_i a_{j_k}\cdots a_{j_1}$ into the standard
decreasing-index order.

For fixed $J$ and band $\ell$, let $M^{J\ell}$ denote the single-particle correlation matrix on the modes in band $\ell$ that are not contained in $J$, 
\begin{align}
 M_{ij}^{J\ell}\coloneqq\tr\!\left(
 \rho\Phi_{J,i}^\dagger\Phi_{J,j}
 \right), \qquad i,j\notin J, \,\,\ell(i)=\ell(j)=\ell.\label{eq:MJell}
\end{align}

Extend $M^{J\ell}$ to a matrix on the full band-$\ell$ single-particle space by setting the remaining entries to zero. As a correlation matrix, $M^{J\ell}$ is positive semidefinite, which can also be checked directly from the definition. Therefore, the weighted positivity bound in Lemma~\ref{lem:weighted-energy} directly gives the following:

\begin{lemma}[Single-particle bath-energy bound]\label{lem:bath-energy}
For any fixed increasing list $J$ and band $\ell$,
\begin{align}
\tr\!\left[
\Re\!\left(e^{-\lambda D_\ell}E_\ell e^{\lambda D_\ell}\right)M^{J\ell}
\right]
\geq
\frac{\omega_\ell}{2}\tr(M^{J\ell})
=
\frac{\omega_\ell}{2}
\sum_{\substack{j:\ell(j)=\ell,\,\,j\notin J}}
\tr(\rho_{J\cup\{j\}}).
\end{align}
\end{lemma}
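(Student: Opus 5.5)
The plan has two parts: an inequality coming from Lemma~\ref{lem:weighted-energy}, and an identity for $\tr(M^{J\ell})$ that follows directly from the definitions.

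\emph{Step 1: positivity of $M^{J\ell}$.} For any vector $c$ supported on band-$\ell$ modes not in $J$, set $X \coloneqq \sum_j c_j \Phi_{J,j}$. Then
\[
\sum_{i,j}\bar c_i c_j M^{J\ell}_{ij} = \tr(\rho X^\dagger X) \ge 0 .
\]
So $M^{J\ell}$ is positive semidefinite on its support. Extending it by zeros to the full band-$\ell$ single-particle space preserves positivity. One caveat: the band-$\ell$ residual-bath space is $W_\ell\ominus W_{\ell 0}$, which is exactly where $E_\ell$ and $D_\ell$ act, so the extended matrix lives on the correct space.

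\emph{Step 2: the inequality.} Lemma~\ref{lem:weighted-energy} gives $B_\ell \coloneqq \Re(e^{-\lambda D_\ell}E_\ell e^{\lambda D_\ell}) \ge \frac{\omega_\ell}{2}\I$ on this space. Hence $B_\ell-\frac{\omega_\ell}{2}\I \succeq 0$. The trace of a product of two positive semidefinite matrices is nonnegative, since
\[
\tr(PQ)=\tr\bigl(Q^{1/2}PQ^{1/2}\bigr)\ge 0 .
\]
Applying this with $P=B_\ell-\frac{\omega_\ell}{2}\I$ and $Q=M^{J\ell}$ yields $\tr(B_\ell M^{J\ell})\ge \frac{\omega_\ell}{2}\tr(M^{J\ell})$.

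\emph{Step 3: the equality.} Compute the diagonal entries of $M^{J\ell}$:
\[
M^{J\ell}_{jj} = e^{2\lambda(h(J)+h_{d(j)})}\tr\bigl(\rho A_J^\dagger a_j^\dagger a_j A_J\bigr).
\]
Because $\Phi_{J,j}=\pm e^{\lambda h(J\cup\{j\})}A_{J\cup\{j\}}$ and the sign squares to one, this equals
\[
e^{2\lambda h(J\cup\{j\})}\tr\bigl(\rho A_{J\cup\{j\}}^\dagger A_{J\cup\{j\}}\bigr)=\tr\bigl(\rho_{J\cup\{j\}}\bigr),
\]
using cyclicity of the trace and the definition of $\rho_I$ in Eq.~\eqref{eq:list-data}. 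Summing over band-$\ell$ modes $j\notin J$ gives the stated identity.

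I do not expect a real obstacle here. The one point needing care is the bookkeeping in Step 1, namely that the zero extension stays within the band-$\ell$ residual-bath space on which Lemma~\ref{lem:weighted-energy} holds.
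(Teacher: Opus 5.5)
Your proposal is correct and follows essentially the same route as the paper. The paper likewise notes that $M^{J\ell}$ is positive semidefinite as a correlation matrix, pairs it with the operator bound of Lemma~\ref{lem:weighted-energy} to get the trace inequality, and reads the equality off the diagonal entries $M^{J\ell}_{jj}=\tr(\rho_{J\cup\{j\}})$; you have simply written out the details the paper leaves implicit.
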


We can now use this bound to lower bound the bath-replacement contribution from $B_I$ in terms of $F_\bq$.

\begin{lemma}[Lower bound on the bath term]
\label{lem:lower_bath}
Fix nonzero $\bq=(q_0,\ldots,q_L)$. Then
\begin{align}
\sum_{I\sim\bq}e^{2\lambda h(I)}
\Re\tr(\rho A_I^\dagger B_I)
\geq
\left(\frac12\sum_\ell q_\ell\omega_\ell\right)F_\bq.
\label{eq:lower-bath}
\end{align}
\end{lemma}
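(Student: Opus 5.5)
The plan is to reorganize the left-hand side of \eqref{eq:lower-bath} as a sum over ``parent'' lists $J$ with one mode removed, and then apply Lemma~\ref{lem:bath-energy} band by band. Expanding $B_I$ from \eqref{eq:BI}, each term of $e^{2\lambda h(I)}\tr(\rho A_I^\dagger B_I)$ has the form
\[
e^{2\lambda h(I)}\varepsilon_I(i,j)\langle e_j,E_{\ell}e_i\rangle\tr\bigl(\rho A_I^\dagger A_{I^{(i\to j)}}\bigr),
\]
where $\ell=\ell(i)=\ell(j)$. Set $J=I\setminus\{i\}$, so that $I=J\cup\{i\}$ and $I^{(i\to j)}=J\cup\{j\}$. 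Using $\Phi_{J,i}=\pm e^{\lambda h(I)}A_I$, the operator $\pm A_I^\dagger A_{J\cup\{j\}}$ equals $e^{-\lambda(h(J)+h_{d(i)})}e^{-\lambda(h(J)+h_{d(j)})}\Phi_{J,i}^\dagger\Phi_{J,j}$. The point is that the reordering signs $\varepsilon_I(i,j)$ cancel exactly against the signs in $\Phi$. Both come from moving $a_i$ (respectively $a_j$) to the front of $A_J$, so the sign of $A_{J\cup\{j\}}$ relative to $a_jA_J$ times that of $A_I$ relative to $a_iA_J$ is precisely $\varepsilon_I(i,j)$. This sign bookkeeping is the step I expect to need the most care. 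I would verify it directly from the definition of $\varepsilon_I$: replace $a_i$ by $a_j$ inside $a_iA_J$, then reorder.

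After the cancellation, the term becomes
\[
e^{\lambda(h_{d(i)}-h_{d(j)})}\langle e_j,E_\ell e_i\rangle\,M^{J\ell}_{ij}
= \bigl(e^{-\lambda D_\ell}E_\ell e^{\lambda D_\ell}\bigr)_{ji}M^{J\ell}_{ij}.
\]
The prefactor is $e^{2\lambda h(I)}e^{-2\lambda h(J)-\lambda h_{d(i)}-\lambda h_{d(j)}}=e^{\lambda(h_{d(i)}-h_{d(j)})}$, and $D_\ell$ is diagonal in the mode basis with entries $h_{d(\cdot)}$. The constraint $j\notin I\setminus\{i\}=J$ is exactly the index range of $M^{J\ell}$, and the diagonal term $j=i$ is included.

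Next I would exchange the order of summation. The sum over $I\sim\bq$, $i\in I$ with $\ell(i)=\ell$, and admissible $j$ becomes a sum over $J\sim\bq-\bel$ together with $i,j\notin J$ in band $\ell$. This map is a bijection: the pair $(I,i)$ corresponds to $(J,i)$. For fixed $J$ and $\ell$, the inner sum is $\tr\bigl[(e^{-\lambda D_\ell}E_\ell e^{\lambda D_\ell})^{T}M^{J\ell}\bigr]$ up to index conventions, that is, $\tr[XM^{J\ell}]$ with $X=e^{-\lambda D_\ell}E_\ell e^{\lambda D_\ell}$. Because $M^{J\ell}$ is Hermitian, taking the real part gives $\tr[\Re(X)M^{J\ell}]$. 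Lemma~\ref{lem:bath-energy} then bounds this from below by $\frac{\omega_\ell}{2}\sum_{j\notin J,\ell(j)=\ell}\tr(\rho_{J\cup\{j\}})$.

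Finally, summing over $J\sim\bq-\bel$ and $j$, each $I\sim\bq$ appears exactly $q_\ell$ times, once for each choice of the band-$\ell$ element removed. This gives $\frac{\omega_\ell}{2}q_\ell F_\bq$. Summing over $\ell$ with $q_\ell\ge1$, where bands with $q_\ell=0$ contribute nothing, yields \eqref{eq:lower-bath}.
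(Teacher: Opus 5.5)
Your proposal is correct and follows essentially the same route as the paper. You set $J=I\setminus\{i\}$, rewrite each term as $\langle e_j, e^{-\lambda D_\ell}E_\ell e^{\lambda D_\ell}e_i\rangle M^{J\ell}_{ij}$, regroup by $(J,\ell)$ into $\tr[\Re(e^{-\lambda D_\ell}E_\ell e^{\lambda D_\ell})M^{J\ell}]$, apply Lemma~\ref{lem:bath-energy}, and count each $I$ exactly $q_\ell$ times. Your explicit check that $\varepsilon_I(i,j)$ cancels the reordering signs in $\Phi_{J,i}^\dagger\Phi_{J,j}$ fills in the one step the paper states without detail.
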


\begin{proof}
For every $I\sim\bq$ and $i\in I$, set
$J=I\setminus\{i\}$. Consider another mode $j\notin I\setminus\{i\}$  in the same band as $i$. 
Since 
 $h(I)-h\bigl(I^{(i\to j)}\bigr)=h_{d(i)}-h_{d(j)}$,
by the definition of $\Phi_{J,i}$ and the fermionic sign
$\varepsilon_I(i,j)$, we have
\begin{align}
e^{2\lambda h(I)}\varepsilon_I(i,j)
\langle e_j,E_{\ell(i)}e_i\rangle
\tr\!\left(\rho A_I^\dagger A_{I^{(i\to j)}}\right)
=
\left\langle e_j,
e^{-\lambda D_{\ell(i)}}E_{\ell(i)}e^{\lambda D_{\ell(i)}}e_i
\right\rangle
\tr\!\left(\rho\Phi_{J,i}^\dagger\Phi_{J,j}\right).
\label{eq:weight-transfer-bath}
\end{align}

Then
expanding $B_I$, using Eq.~(\ref{eq:weight-transfer-bath}) and regrouping the
terms according to $J,\ell$, we obtain \begin{align}
&\sum_{I\sim\bq}e^{2\lambda h(I)}
\Re\tr(\rho A_I^\dagger B_I)=
\sum_{\ell:q_\ell>0}
\sum_{J\sim\bq-\bel}
\tr\!\left[
\Re\!\left(
e^{-\lambda D_\ell}E_\ell e^{\lambda D_\ell}
\right)M^{J\ell}
\right],
\label{eq:bath-regroup}
\end{align}
where  $M^{J\ell}$ is the correlation matrix defined in Eq.~(\ref{eq:MJell}).
Applying Lemma~\ref{lem:bath-energy}, we get 
\begin{align}
\sum_{I\sim\bq}e^{2\lambda h(I)}
\Re\tr(\rho A_I^\dagger B_I)
\geq
\sum_{\ell:q_\ell>0}
\sum_{J\sim\bq-\bel}
\frac{\omega_\ell}{2}
\sum_{\substack{j:\ell(j)=\ell\\j\notin J}}
\tr(\rho_{J\cup\{j\}})\label{eq:lower_bath_term_1}
\end{align}
Reindexing the sum by $I=J\cup\{j\}$, we can rewrite the right-hand side of the above equation as 
$\sum_{I\sim\bq}
\tr(\rho_I)
\sum_\ell \sum_{j\in I, \ell(j)=\ell}
\omega_\ell/2.$
Then we prove the lemma by noticing that  $\sum_{j\in I, \ell(j)=\ell} 1=q_\ell$ and $\sum_{I\sim\bq} \tr(\rho_I)=F_\bq$.
\end{proof}

Then we upper bound the contribution of $C_I$ in terms of $F_\bq$ and $F_{\bq-\bel}$. This requires just two applications of the Cauchy--Schwarz inequality. The quantity $\xi_\ell$ defined in Eq.~(\ref{eq:xi}) controls one of the resulting factors.
\begin{lemma}[Upper bound on the source term]
\label{lem:upper_source}
Assume $\omega>0$, so that $\omega_\ell=2^\ell\omega>0$. Fix nonzero
$\bq=(q_0,\ldots,q_L)$. Then
\begin{align}
\left|
\sum_{I\sim\bq}e^{2\lambda h(I)}
\tr(\rho A_I^\dagger C_I)
\right|
\leq
\alpha_\bq^{1/2}
\sqrt{F_\bq}
\left(
\sum_{\ell:q_\ell>0}
\frac{\xi_\ell}{\omega_\ell}F_{\bq-\bel}
\right)^{1/2},
\qquad
\alpha_\bq\coloneqq\sum_\ell q_\ell\omega_\ell>0.
\label{eq:upper-source}\\
\end{align}
\end{lemma}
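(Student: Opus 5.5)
The plan is to expand $C_I$ from Lemma~\ref{lem:commutator-identity} and regroup the sum by $J=I\setminus\{i\}$ and the band $\ell=\ell(i)$, exactly as in the proof of Lemma~\ref{lem:lower_bath}. Then two Cauchy--Schwarz steps split the result into a factor involving $F_\bq$ and a factor involving $F_{\bq-\bel}$.

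\textbf{Regrouping.} Writing $e^{2\lambda h(I)}=e^{\lambda h(I)}\cdot e^{\lambda h(J)}e^{\lambda h_{d(i)}}$, each term becomes
\begin{align}
\eta_I(i)\,e^{2\lambda h(I)}\langle u_{\ell s},e_i\rangle\tr\!\left(\rho A_I^\dagger G_{\ell s}A_J\right)
=\pm\,\langle u_{\ell s},e_i\rangle\, e^{\lambda h_{d(i)}}\,\tr\!\left(\rho\,\Phi_{J,i}^\dagger\, G_{\ell s}A_J\right)e^{\lambda h(J)} .
\end{align}
Since $u_{\ell s}\in W_{\ell 1}$ and $h_1=0$, the factor $\langle u_{\ell s},e_i\rangle e^{\lambda h_{d(i)}}$ equals $\langle u_{\ell s},e_i\rangle$. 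It is nonzero only for depth-$1$ modes $i$. The signs $\eta_I(i)$ should combine with the reordering sign in $\Phi_{J,i}$ into a sign depending only on the parity of the position of $i$ relative to $J$. This lets us pull the $i$-sum inside as
\[
\Phi_{J,u_{\ell s}}\coloneqq e^{\lambda h(J)}a(\tilde u_{\ell s})A_J ,
\]
where $\tilde u_{\ell s}$ is $u_{\ell s}$ restricted to modes not in $J$. Note $\|\tilde u_{\ell s}\|\le1$. The left-hand side is then bounded by
\[
\sum_{\ell:q_\ell>0}\sum_{J\sim\bq-\bel}\sum_{s}\Big|\tr\!\big(\rho\,\Phi_{J,u_{\ell s}}^\dagger\, G_{\ell s}\,e^{\lambda h(J)}A_J\big)\Big| .
\]

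\textbf{First Cauchy--Schwarz.} For each $(J,\ell)$, apply Cauchy--Schwarz in the Hilbert--Schmidt inner product weighted by $\rho$, jointly over $s$. This gives the bound
\[
\Big(\sum_s\tr\!\big(\rho\,\Phi^\dagger_{J,u_{\ell s}}\Phi_{J,u_{\ell s}}\big)\Big)^{1/2}\Big(\sum_s\tr\!\big(\rho_J^{\,}\,G_{\ell s}^\dagger G_{\ell s}\big)\Big)^{1/2}
\]
(with $\rho_J$ conjugated appropriately). The second factor is at most $\sqrt{\xi_\ell\tr(\rho_J)}$ by Eq.~\eqref{eq:xi}. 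For the first factor, the vectors $\tilde u_{\ell s}$ are orthogonal-ish projections of an orthonormal set. It therefore equals
\[
\tr\!\left(P_\ell M^{J\ell}\right)\le\tr(M^{J\ell})=\sum_{j:\ell(j)=\ell,\,j\notin J}\tr(\rho_{J\cup\{j\}}),
\]
where $P_\ell$ is the projector onto the span of the $u_{\ell s}$, and the last equality is Lemma~\ref{lem:bath-energy}.

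\textbf{Second Cauchy--Schwarz.} Insert weights $\sqrt{\omega_\ell/\omega_\ell}$ and apply Cauchy--Schwarz over $(\ell,J)$. This yields
\[
\Big(\sum_\ell\omega_\ell\sum_{J}\tr(M^{J\ell})\Big)^{1/2}\Big(\sum_\ell\frac{\xi_\ell}{\omega_\ell}F_{\bq-\bel}\Big)^{1/2}.
\]
By the reindexing used in Lemma~\ref{lem:lower_bath}, $\sum_J\tr(M^{J\ell})=q_\ell F_\bq$. The first factor is therefore $\sqrt{\alpha_\bq F_\bq}$, which is the claim.

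\textbf{Main obstacle.} The delicate step is the sign bookkeeping that lets the $i$-sum be absorbed into a single operator $a(\tilde u_{\ell s})$ placed in front of $A_J$. I would verify that $\eta_I(i)$ times the reordering sign of $\Phi_{J,i}$ is independent of $i$, by placing $a_i$ leftmost before commuting. If it is not, I would instead apply Cauchy--Schwarz termwise over $i$ and use $\sum_i|\langle u,e_i\rangle|^2\le1$, which gives the same bound.
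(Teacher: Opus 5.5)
Your argument is correct, but it arranges the two Cauchy--Schwarz steps differently from the paper. The paper does not regroup before estimating. It first applies Cauchy--Schwarz over $I$, pairing $e^{\lambda h(I)}A_I$ with $e^{\lambda h(I)}C_I$ to split off $F_\bq$. It then applies a weighted Cauchy--Schwarz over $i\in I$ inside $\tr(\rho\, C_I^\dagger C_I)$ with weights $\omega_{\ell(i)}$, which is where $\alpha_\bq$ appears. Finally it handles the $s$-sum through the operator identity $\sum_{i:\ell(i)=\ell}X_{\ell i}^\dagger X_{\ell i}=\sum_s G_{\ell s}^\dagger G_{\ell s}\le\xi_\ell\I$, with $X_{\ell i}=\sum_s\langle u_{\ell s},e_i\rangle G_{\ell s}$. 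Since it only ever uses $|\eta_I(i)|=1$, it never has to track fermionic signs.

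You instead regroup by $(\ell,J)$ first and collapse the $i$-sum into the single mode $a(u_{\ell s})$, which does need exact sign control. That control holds, and more cleanly than you state. If $i$ occupies position $r$ in $I$ with $|I|=q$, then $\eta_I(i)=(-1)^{q-r}$ and $A_I=(-1)^{q-r}a_iA_J$, so $\eta_I(i)A_I^\dagger=A_J^\dagger a_i^\dagger$ with no residual sign at all. Also, $a(\tilde u_{\ell s})A_J=a(u_{\ell s})A_J$ because $a_jA_J=0$ for $j\in J$.

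What your version buys is a direct parallel with Lemma~\ref{lem:lower_bath}. The same matrix $M^{J\ell}$ and the same reindexing $\sum_{J\sim\bq-\bel}\tr(M^{J\ell})=q_\ell F_\bq$ drive both bounds. The intermediate quantity $\tr(P_\ell M^{J\ell})$ also discards only directions orthogonal to the $u_{\ell s}$.

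One caution about your fallback. A termwise Cauchy--Schwarz over $i$ using only the scalar bound $\sum_i|\langle u_{\ell s},e_i\rangle|^2\le 1$, followed by a separate sum over $s$, can lose a factor $\sqrt{r_\ell}$. To keep the stated constant you would need one of two things:
\begin{itemize}
\item the paper's coherent operator identity above, or
\item the observation that any $s$-independent signs can be absorbed into a diagonal unitary acting on the orthonormal vectors $u_{\ell s}$.
\end{itemize}
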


\begin{proof}
The proof consists of two applications of the Cauchy--Schwarz inequality. For
convenience, write
$\|X\|_\rho^2\coloneqq\tr(\rho X^\dagger X)$.
The first application, to the sum over $I$, gives
\begin{align}
\left|
\sum_{I\sim\bq}e^{2\lambda h(I)}
\tr(\rho A_I^\dagger C_I)
\right|^2
&\leq
\left(
\sum_{I\sim\bq}e^{2\lambda h(I)}
\|A_I\|_\rho^2
\right)
\left(
\sum_{I\sim\bq}e^{2\lambda h(I)}
\|C_I\|_\rho^2
\right)\label{eq:source-first-cauchy}
\end{align}
where the first factor on the right-hand side of Eq.~(\ref{eq:source-first-cauchy}) equals $F_\bq$.
 To bound the second factor, we derive a  bound for $\|C_I\|_\rho^2$. To simplify the notation, for a mode $i$ in band $\ell$, define
\begin{align}
	X_{\ell i}\coloneqq
\sum_s\langle u_{\ell s},e_i\rangle G_{\ell s}, \quad\text{ then}\quad C_I
=
\sum_{i\in I}\eta_I(i)
X_{\ell(i)i}A_{I\setminus\{i\}}. \label{eq:Xelli}
\end{align}

Write $|\eta_I(i)|=\sqrt{\omega_{\ell(i)}}\sqrt{\frac{|\eta_I(i)|}{\omega_{\ell(i)}}}$. A second application of
Cauchy--Schwarz gives
\begin{align}
\|C_I\|_\rho^2
\leq
\alpha_\bq
\sum_{i\in I}
\frac{1}{\omega_{\ell(i)}}
\left\|
X_{\ell(i)i}A_{I\setminus\{i\}}
\right\|_\rho^2 
\label{eq:source-second-cauchy}
\end{align}
where we use $\alpha_\bq=\sum_{i\in I}\omega_{\ell(i)}$. 
Note that since $\{u_{\ell s}\}$ are orthonormal vectors in $W_{\ell 1}$, $X_{\ell i}\neq 0$ only if $\langle u_{\ell s},e_i\rangle\neq 0$ and thus $d(i)=1$.  Since $h_1=0$, removing such a mode does not change the weight, 
$$h(I\setminus\{i\})=h(I).$$

Summing Eq.~\eqref{eq:source-second-cauchy} over $I\sim\bq$, writing
$J=I\setminus\{i\}$, and relaxing the condition $i\notin J$, we obtain

\begin{align}
\frac{1}{\alpha_\bq}
\sum_{I\sim\bq}e^{2\lambda h(I)}
\|C_I\|_\rho^2
\leq
\sum_{\ell:q_\ell>0}
\frac{1}{\omega_\ell}
\sum_{J\sim\bq-\bel}
e^{2\lambda h(J)}
\sum_{i:\ell(i)=\ell}
\left\|X_{\ell i}A_J\right\|_\rho^2.
\label{eq:source-regroup}
\end{align}

Further notice that, since $\{u_{\ell s}\}_s$ are orthonormal vectors in $W_{\ell 1}$ and $\{e_i\}_{\ell(i)=\ell}$ is the basis for $\bigoplus_{d\geq 1}W_{\ell d}$, we have 
\begin{align}
	\sum_{i:\ell(i)=\ell}X_{\ell i}^\dagger X_{\ell i}
=
\sum_sG_{\ell s}^\dagger G_{\ell s}
\leq
\xi_\ell \I, \quad \Longrightarrow\quad  e^{2\lambda h(J)}
\sum_{i:\ell(i)=\ell}
\left\|X_{\ell i}A_J\right\|_\rho^2\leq
\xi_\ell\tr(\rho_J). \label{eq:souce_Xli_bound}
\end{align}

Then, combining Eqs.~(\ref{eq:source-first-cauchy}), (\ref{eq:source-regroup}), and (\ref{eq:souce_Xli_bound}), we prove the lemma by checking
\begin{align*}
\left|
\sum_{I\sim\bq}e^{2\lambda h(I)} 
\tr(\rho A_I^\dagger C_I)
\right|^2
&\leq
F_\bq \alpha_\bq \left(\sum_{\ell:q_\ell>0} \frac{\xi_\ell}{\omega_\ell} \sum_{J\sim \bq-\bel}  \tr(\rho_J)\right)=
F_\bq \alpha_\bq \left(\sum_{\ell:q_\ell>0} \frac{\xi_\ell}{\omega_\ell} F_{\bq-\bel}\right). \qedhere
\end{align*}
\end{proof}

Based on Corollary \ref{cor:commutator-identity}, Lemma \ref{lem:lower_bath} and Lemma \ref{lem:upper_source}, we have

\begin{corollary}[Recursion formula with commutator contribution]
\label{cor:common-recursion}
Assume $\omega>0$ and fix a nonzero
$\bq=(q_0,\ldots,q_L)$. 
Then
\begin{align}
&\frac{\alpha_\bq}{2}F_\bq
+
\sum_{I\sim\bq}e^{2\lambda h(I)}\Delta_I
\leq
\sqrt{\alpha_\bq F_\bq}
\left(
\sum_{\ell:q_\ell>0}
\frac{\xi_\ell}{\omega_\ell}F_{\bq-\bel}
\right)^{1/2}.
\end{align}
where
\begin{align}
\alpha_\bq=\sum_{\ell=0}^Lq_\ell\omega_\ell, \qquad \Delta_I= \Re\tr\!\left(
 \rho A_I^\dagger[H,A_I]
 \right)
\label{eq:common-recursion}
\end{align}
\end{corollary}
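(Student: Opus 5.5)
The plan is to combine the three ingredients already established, with attention only to sign conventions and real parts. First, I rewrite Corollary~\ref{cor:commutator-identity} in terms of $[H,A_I]$. Since $[A_I,H]=-[H,A_I]$, we have $\Re\tr(\rho A_I^\dagger[A_I,H])=-\Delta_I$. Multiplying the commutator identity by $e^{2\lambda h(I)}\rho A_I^\dagger$, taking the real part of the trace, and summing over $I\sim\bq$ gives
\begin{align*}
-\sum_{I\sim\bq}e^{2\lambda h(I)}\Delta_I
=\sum_{I\sim\bq}e^{2\lambda h(I)}\Re\tr(\rho A_I^\dagger B_I)
+\sum_{I\sim\bq}e^{2\lambda h(I)}\Re\tr(\rho A_I^\dagger C_I).
\end{align*}
This is exactly Eq.~\eqref{eq:key_observation_pull_through}, so no new computation is needed.

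Second, I bound the two terms on the right. For the $B_I$ term, Lemma~\ref{lem:lower_bath} gives the lower bound $\frac{\alpha_\bq}{2}F_\bq$, since $\frac12\sum_\ell q_\ell\omega_\ell=\alpha_\bq/2$. For the $C_I$ term, I use $\Re z\geq -|z|$ together with Lemma~\ref{lem:upper_source}. Because the weights $e^{2\lambda h(I)}$ are positive, the sum of real parts equals the real part of the sum. Hence the $C_I$ term is at least
\begin{align*}
-\alpha_\bq^{1/2}\sqrt{F_\bq}\Bigl(\sum_{\ell:q_\ell>0}\frac{\xi_\ell}{\omega_\ell}F_{\bq-\bel}\Bigr)^{1/2}.
\end{align*}

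Third, I substitute both bounds into the identity and move the $\Delta$ sum to the left-hand side. This yields
\begin{align*}
\frac{\alpha_\bq}{2}F_\bq+\sum_{I\sim\bq}e^{2\lambda h(I)}\Delta_I\leq\sqrt{\alpha_\bq F_\bq}\Bigl(\sum_{\ell:q_\ell>0}\frac{\xi_\ell}{\omega_\ell}F_{\bq-\bel}\Bigr)^{1/2},
\end{align*}
which is the claim.

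There is no real obstacle. The only point needing care is the sign bookkeeping between $[A_I,H]$ and $[H,A_I]$, so that $\Delta_I$ lands on the left with a plus sign. A second small check is that $\alpha_\bq>0$ because $\bq\neq 0$ and $\omega>0$, which Lemma~\ref{lem:upper_source} requires.
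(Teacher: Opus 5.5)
Your proposal is correct and matches the paper's argument, which derives this corollary by inserting Lemma~\ref{lem:lower_bath} and Lemma~\ref{lem:upper_source} (via $\Re z\geq -|z|$) into the identity of Corollary~\ref{cor:commutator-identity} and rearranging. The sign step $\Re\tr(\rho A_I^\dagger[A_I,H])=-\Delta_I$, which the paper leaves implicit, is handled correctly in your proposal.
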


The term $\Delta_I$ will be estimated separately for the ground state and the relevant part of the thermal state in Section \ref{sec:ground_state_sim} and Section \ref{sec:thermal_state_sim}, which will lead to a final bound for $F_\bq$.

\section{Efficient simulation for ground states}\label{sec:ground_state_sim}

In this section, we apply the compression method developed in Section~\ref{sec:compress} to the ground state. In particular, in Section \ref{sec:compress_ground}, we show that a ground state of the impurity model can be compressed to a subspace of dimension $\poly(n, \delta^{-1}, \es, \omega^{-1})$, as stated in Corollary~\ref{cor:ground_compress}. In Section \ref{sec:ground_energy_classical} we then use this compression to obtain an efficient classical algorithm for estimating the ground-state energy, as stated in Theorem~\ref{thm:classcial_ground_energy}.

\subsection{Compression of the  ground state}\label{sec:compress_ground}

Recall that we use the superscript $\infty$ to specialize the notation in Section \ref{sec:compress} for the ground state. 
In particular,  we use $\rho^{(\infty)}$ for the ground state and specify   
\begin{align}
  h_d^{(\infty)}=d-1,\qquad  D_\ell^{(\infty)} = \sum_{d\geq 1} (d-1) \Pi_{\ell,d}, \qquad  \mathsf{D}^{(\infty)} = \sum_{i} (d(i)-1)n_i.
\end{align}

Note that for the ground state, the commutator contribution is always nonnegative. 
Indeed, let $\alpha_{\mathrm{gs}}$ denote the ground energy of $H$. Since
$H-\alpha_{\mathrm{gs}}\I\geq 0$, for any nonempty increasing list $I$,
\begin{align}
    \Delta_I^{(\infty)} \coloneqq \Re\tr\!\left(
 \rho^{(\infty)} A_I^\dagger[H,A_I]\right) = \tr\left( H A_I \rho^{(\infty)} A_I^\dagger
 \right) - \alpha_{\mathrm{gs}} \tr\left( A_I \rho^{(\infty)} A_I^\dagger
 \right) \geq 0.\label{eq:DeltaI_ground_state}
\end{align}

\begin{theorem}[Exponential localization of residual-bath occupation in Krylov depth] 
\label{thm:main_ground_state} Let $\rho^{(\infty)}$ be a ground state of the impurity Hamiltonian in the form given in Eq.~(\ref{eq:hyb_decom}). Then we have 

\begin{equation}
\tr\left(\rho^{(\infty)}\, e^{ \frac{1}{2}\mathsf D^{(\infty)}}\right) \leq
 \exp\left(
 4\kappa
 \right), \qquad \kappa=\cO\l(m\log_2\frac{\es}{\omega}\r) \quad\text{as in Eq.~(\ref{eq:kappa}).}
 \label{eq:moment}
\end{equation}
In particular, residual-bath occupation decays exponentially with Krylov depth:
\begin{align}
\tr\!\left(\rho^{(\infty)}n_i\right)
\leq
32 \sqrt{e} m \,e^{-\frac{d(i)}{2}}.
\end{align}
\end{theorem}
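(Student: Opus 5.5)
From Corollary~\ref{cor:common-recursion} together with the nonnegativity $\Delta_I^{(\infty)}\geq 0$ in Eq.~\eqref{eq:DeltaI_ground_state}, I would first drop the commutator term. This gives, for every nonzero $\bq$,
\[
\frac{\alpha_\bq}{2}F_\bq\leq \sqrt{\alpha_\bq F_\bq}\Big(\sum_{\ell:q_\ell>0}\frac{\xi_\ell}{\omega_\ell}F_{\bq-\bel}\Big)^{1/2}
\quad\Longrightarrow\quad
F_\bq\leq \frac{4}{\alpha_\bq}\sum_{\ell:q_\ell>0}\frac{\xi_\ell}{\omega_\ell}F_{\bq-\bel},
\]
with $F_{\mathbf 0}=\tr(\rho^{(\infty)})=1$. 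The plan is to solve this linear recursion by an ansatz of product form. I would guess
\[
F_\bq\leq \prod_\ell \frac{(4\xi_\ell/\omega_\ell^2)^{q_\ell}}{q_\ell!}
\]
and verify it by induction on $|\bq|$. Substituting the ansatz into the right-hand side, the $\ell$-th term equals the target product times $q_\ell\omega_\ell$. Summing over $\ell$ gives $\alpha_\bq$ times the target, which cancels the prefactor $1/\alpha_\bq$ exactly. This cancellation is the step I regard as the crux; the weights $\omega_\ell$ are precisely what make the multinomial-style ansatz close.

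Next, Eq.~\eqref{eq:exp_Fq} with $\lambda=1/4$ gives
\[
\tr\big(\rho^{(\infty)}e^{\mathsf D/2}\big)\leq \sum_\bq F_\bq\leq \prod_\ell \exp(4\xi_\ell/\omega_\ell^2)=e^{4\kappa}.
\]
Combining this with Eq.~\eqref{eq:kappa} yields $\kappa=\cO(m\log_2(\es/\omega))$, which is the first claim.

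For the per-mode decay I would not use the global moment bound, since that only gives a $e^{4\kappa}$ prefactor. Instead I would use the one-particle case $|\bq|=1$ directly. Take $\bq=\bel$ with $\ell=\ell(i)$; then $F_{\bel}\leq 4\xi_\ell/\omega_\ell^2\leq 32m$ by Eq.~\eqref{eq:xi}. Since $F_{\bel}\geq e^{2\lambda h_{d(i)}}\tr(\rho n_i)=e^{(d(i)-1)/2}\tr(\rho n_i)$, this gives $\tr(\rho n_i)\leq 32m\,e^{1/2}e^{-d(i)/2}$, as stated.

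The main care points are checking that the ground-state nonnegativity survives the weighting, which it does because $e^{2\lambda h(I)}>0$, and handling the base case.
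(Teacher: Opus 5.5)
Your proposal is correct and follows the paper's proof essentially step for step: you drop $\Delta_I^{(\infty)}\ge 0$ in Corollary~\ref{cor:common-recursion} to get the linear recursion \eqref{eq:ground_state_recur}, verify the product bound $\prod_\ell (4\xi_\ell/\omega_\ell^2)^{q_\ell}/q_\ell!$ by induction using the $\sum_\ell q_\ell\omega_\ell=\alpha_\bq$ cancellation, and sum over $\bq$ to obtain $e^{4\kappa}$. You then read off the per-mode decay from $F_{\mathbf{e}_{\ell(i)}}\le 32m$, exactly as the paper does.
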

\begin{proof} 
Applying Corollary~\ref{cor:common-recursion} with Eq.~(\ref{eq:DeltaI_ground_state}), we obtain a recursion for $F_\bq^{(\infty)}$, 
\begin{align}
\alpha_\bq F^{(\infty)}_\bq\leq
4 
\left(
\sum_{\ell:q_\ell>0}
\frac{\xi_\ell}{\omega_\ell}F^{(\infty)}_{\bq-\bel} 
\right). \label{eq:ground_state_recur}
\end{align}
	Using $F^{(\infty)}_{\mathbf{0}}=1$, we obtain the upper bound from the recursion
$F^{(\infty)}_\bq
\leq
\prod_{\ell=0}^L
\frac{1}{q_\ell!}
\left(\frac{4\xi_\ell}{\omega_\ell^2}\right)^{q_\ell}.$ Recall that from Eq.~(\ref{eq:xi}) we have $\xi_\ell \omega_\ell^{-2} \leq 8m$.
Setting $\bq=\mathbf{e}_{\ell(i)}$, we get the bound for  $\tr\left(\rho^{(\infty)} n_i \right)$.
 The bound in Eq.~(\ref{eq:moment})  follows from taking $\lambda=1/4$ and using
$\tr\!\left(\rho^{(\infty)}e^{2\lambda\mathsf D^{(\infty)}}\right)
\leq \sum_{\bq}F_\bq^{(\infty)}$ in Eq.~(\ref{eq:exp_Fq}). 
For completeness, we give the detailed calculations in Appendix~\ref{sec:cal_recursion}.	
\end{proof}

\begin{corollary}[Compression of the ground state]\label{cor:ground_compress}
Let $\rho^{(\infty)}$ be any ground state of the impurity Hamiltonian
in the form given in Eq.~(\ref{eq:hyb_decom}). Then for every $0<\delta<1$, there exists an efficiently enumerable subset $\bS$ of Fock configurations such that 
\begin{align}
    \cH_\bS\coloneqq\operatorname{span}\{\,|x\rangle:x\in\bS\,\},\quad
    \dim(\cH_{\bS})
    \leq\poly(n, \delta^{-1}, \es, \omega^{-1}), \quad
    \tr\left((\I-P_\bS)\rho^{(\infty)}\right)\leq \frac{\delta^2}{4}. \label{eq:compress_ground_state}
\end{align}
where $P_\bS$ is the projector onto $\cH_\bS$. Moreover, the ground energy of the projected Hamiltonian $P_\bS H P_\bS$, when restricted to $\cH_\bS$, differs from the ground energy of the original Hamiltonian $H$ by at most $\|H\|\delta$.
\end{corollary}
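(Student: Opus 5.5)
The plan is to combine Theorem~\ref{thm:main_ground_state} with Theorem~\ref{thm:exp_moment_poly_support}, and then to run a variational argument for the energy claim.

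\textbf{Step 1: the subspace.} Take $h_d=h_d^{(\infty)}=d-1$. This choice satisfies the hypothesis $h_d\ge\min\{d-1,\cut\}$ of Theorem~\ref{thm:exp_moment_poly_support} for any cutoff $\cut$. By Theorem~\ref{thm:main_ground_state},
\[
\log\tr\bigl(\rho^{(\infty)}e^{\mathsf D^{(\infty)}/2}\bigr)\le 4\kappa=\cO\bigl(m\log(\es/\omega)\bigr).
\]
Substituting this into Theorem~\ref{thm:exp_moment_poly_support} gives the set $\bS=\{x:D(x)\le T\}$ with $T=4\log(2\delta^{-1})+8\kappa$. It also gives the weight bound $\tr((\I-P_\bS)\rho^{(\infty)})\le\delta^2/4$ and the dimension bound
\[
\dim\cH_\bS\le\exp\bigl[\cO(m\log(2\es/\omega)+\log\delta^{-1})\bigr].
\]
Since $m=\cO(1)$, this is $\poly(\es,\omega^{-1},\delta^{-1})$. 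The remaining polynomial factor in $n$ comes from the counting in Lemma~\ref{lem:count_set}, and the enumeration time $\cO(n|\bS|)$ also comes from that lemma. One point needs care. The Fock basis is taken in the bandwise Krylov basis, so $\bS$ really lives in the rotated frame. This is harmless, because the Krylov basis can be computed by block-Lanczos in $\poly(n)$ time.

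\textbf{Step 2: the energy claim.} Write $\rho^{(\infty)}=\op{\psi}{\psi}$ and $\phi=P_\bS\psi$, so that $\|\phi\|^2\ge1-\delta^2/4$ and $\|\psi-\phi\|\le\delta/2$. The restriction of $P_\bS HP_\bS$ to $\cH_\bS$ is a compression of $H$. By Cauchy interlacing its ground energy $\tilde\alpha$ satisfies $\tilde\alpha\ge\alpha_{\mathrm{gs}}$. For the upper bound, use the Rayleigh quotient of $\phi$, after shifting so that $H'=H-\alpha_{\mathrm{gs}}\I\succeq0$:
\[
\langle\phi|H'|\phi\rangle=\langle\psi|H'|\psi\rangle-2\Re\langle\psi-\phi|H'|\psi\rangle+\langle\psi-\phi|H'|\psi-\phi\rangle .
\]
Since $H'\psi=0$, the first two terms vanish. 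The last term is at most $\|H'\|\,\delta^2/4\le2\|H\|\delta^2/4$. Dividing by $\|\phi\|^2\ge3/4$ gives $\tilde\alpha-\alpha_{\mathrm{gs}}\le\tfrac23\|H\|\delta^2\le\|H\|\delta$ for $\delta<1$.

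\textbf{Main obstacle.} The analytic content is already in Theorem~\ref{thm:main_ground_state}, so the only delicate point is bookkeeping. We must check that $e^{\cO(\kappa)}$ is polynomial, which holds because $\kappa\le 8m(L+1)$ with $L=\lceil\log_2(\es/\omega)\rceil$. We must also check that the combinatorial count in Lemma~\ref{lem:count_set} really yields $\poly(n,\ldots)$. If the count is stated as $\exp[\cO(\chi\, T)]$-type or $n^{\cO(1)}\cdot\exp[\cO(T\log\chi)]$, I would verify that $\chi=\cO(mL)$ enters only logarithmically. That keeps the bound polynomial rather than quasipolynomial, and I would check it against the statement of Theorem~\ref{thm:exp_moment_poly_support}.
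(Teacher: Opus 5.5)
Your proposal is correct and follows the paper's proof: Theorem~\ref{thm:main_ground_state} is plugged into Theorem~\ref{thm:exp_moment_poly_support}, and the energy claim then follows from the variational principle. Your state-independent choice $T=4\log(2\delta^{-1})+8\kappa$ is exactly what makes $\bS$ enumerable without knowing the exact moment.

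Your Rayleigh-quotient step, which uses $H'\psi=0$, is a slightly sharper version of the paper's trace-distance argument $\|\tilde\rho-\rho^{(\infty)}\|_1\le\delta$: it gives $\tfrac23\|H\|\delta^2$ instead of $\|H\|\delta$. Your worry about the count is unfounded, since Lemma~\ref{lem:count_set} gives $\exp(8T+3\chi)$ with $\chi=\cO(mL)$ entering additively and no $n$-dependence at all.
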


\begin{proof}
Eq.~(\ref{eq:compress_ground_state}) follows directly from Theorem~\ref{thm:main_ground_state}
and Theorem~\ref{thm:exp_moment_poly_support}. In particular, define $\tilde{\rho}$ as the normalized version of $P_\bS \rho^{(\infty)} P_\bS$. Then
$\|\tilde{\rho}-\rho^{(\infty)}\|_1\leq \delta$. 
The claim on the ground energy of the projected Hamiltonian follows from the variational principle: the ground energy of a principal submatrix is no less than the ground energy of the original Hermitian matrix, while $\tilde{\rho}$ lies in $\cH_\bS$ and satisfies
$\|\tilde{\rho}-\rho^{(\infty)}\|_1\leq \delta$.
\end{proof}

\subsection{Efficient classical algorithm for ground-energy estimation}\label{sec:ground_energy_classical}

For estimating the ground energy to precision $\delta$, we may without loss of generality assume that $\omega\geq \delta/(2n)$. To see this, recall that after the preprocessing step in Section \ref{sec:hyb_decom}, the bath matrix is diagonal. When restricted to the bath single-particle space, it can be written as $E_0=\operatorname{diag}(\lambda_1,\ldots,\lambda_{n-m})$ with $\lambda_j>0$. Let $\eta\coloneqq\delta/(2n)$ and define $\widehat E_0=\operatorname{diag}(\widehat\lambda_1,\ldots,\widehat\lambda_{n-m})$, where $\widehat\lambda_j\coloneqq\max\{\lambda_j,\eta\}$. Let $\widehat H$ be obtained from $H$ by replacing $\dE(E_0)$ with $\dE(\widehat E_0)$ in Section \ref{sec:hyb_decom}, Eq.~(\ref{eq:impurity_model_hyb}) and leaving all other terms unchanged. Then

$$
\|H-\widehat H\|
=\|\dE(E_0-\widehat E_0)\|
\leq \sum_j|\lambda_j-\widehat\lambda_j|
\leq n\eta=\delta/2.
$$

Hence the ground energy of $\widehat{H}$ differs by at most $\delta/2$ from the ground energy of $H$.  Thus it suffices to estimate the ground energy of $\widehat H$ to precision $\delta/2$. By construction, the bath single-particle energies satisfy $\widehat E_0\geq \eta \I$, and therefore we may assume $\omega\geq\delta/(2n)$; in particular, for $\delta=1/{\poly(n)}$, we may assume $\omega\geq1/{\poly(n)}$.

\begin{theorem}
    [Efficient classical ground energy estimation]\label{thm:classcial_ground_energy}
 Let $H=H_0+H_{\mathrm{imp}}$ be a quantum impurity model as defined in Eq.~(\ref{eq:impurity_model}). Then, for any precision parameter $\delta$, there is a classical algorithm that estimates the ground energy of $H$ to additive error $\delta$ with runtime $\poly(n, \delta^{-1})$.
\end{theorem}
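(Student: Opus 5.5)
The plan is to assemble the pieces already established. First, apply the preprocessing of Section~\ref{sec:hyb_decom} and Appendix~\ref{app:cano_bath}. This means a Bogoliubov/canonical transformation of the bath, computable in $\poly(n)$ time by diagonalizing a $2n\times 2n$ real antisymmetric matrix, which puts $H$ into the hybridization form Eq.~\eqref{eq:impurity_model_hyb} with diagonal $E_0>0$. Then replace $E_0$ by $\widehat E_0$ with entries $\max\{\lambda_j,\delta/(2n)\}$, as described just before the theorem. This shifts the ground energy by at most $\delta/2$ and guarantees $\omega\ge \delta/(2n)$. Since $\es=\poly(n)$, we get $L=\cO(\log(n/\delta))$ bands, so by Eq.~\eqref{eq:kappa} $\kappa=\cO(m\log(n/\delta))$.

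Next, construct the bandwise Krylov basis of Section~\ref{sec:band_krylov}. Each spectral projector $\Pi_\ell$ is immediate because $\widehat E_0$ is diagonal. Block Lanczos within each band costs $\poly(n)$. Rewrite $\widehat H$ in the bandwise hybridization form with enlarged impurity Eq.~\eqref{eq:hyb_decom}; all coefficients ($K$, $E_\ell$, $G_{\ell j}$, $u_{\ell j}$) are explicitly computable, with $K$ acting on $\cO(m(L+1))$ modes.

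Now invoke Corollary~\ref{cor:ground_compress} with precision parameter $\delta'=\delta/(2\|\widehat H\|)$. We may assume $\|\widehat H\|\le \|H\|+\delta=\poly(n)$; otherwise we first rescale by a known $\poly(n)$ norm bound computed from the coefficients. This yields an enumerable set $\bS$ with $|\bS|\le\poly(n,\delta^{-1})$, since $\es,\omega^{-1},\delta'^{-1}$ are all polynomial. The smallest eigenvalue of $P_\bS\widehat H P_\bS$ restricted to $\cH_\bS$ is within $\delta/2$ of the ground energy of $\widehat H$. The algorithm enumerates $\bS$ in time $\cO(n|\bS|)$ (Theorem~\ref{thm:exp_moment_poly_support}), builds the $|\bS|\times|\bS|$ matrix $\bra{x}\widehat H\ket{y}$, and diagonalizes it in $\poly(|\bS|)$ time.

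The step needing care is computing these matrix elements efficiently. In the new Fock basis, $\widehat H$ is $K$ plus quadratic terms, where $K$ is an even operator on the $\cO(m\log(n/\delta))$ enlarged-impurity modes. $K$ could in principle have $2^{\cO(m\log(n/\delta))}=\poly(n/\delta)$ monomials, so its dense matrix on the enlarged-impurity Fock space has dimension $\poly(n/\delta)$ (as $m=\cO(1)$). That matrix can be obtained by conjugating the original constant-size $H_{\mathrm{imp}}$, written in the original impurity modes, together with the rotated quadratic terms, and tracking Jordan--Wigner signs. For each $x\in\bS$, $\widehat H\ket{x}$ then has at most $\poly(n,\delta^{-1})$ nonzero Fock components: quadratic terms contribute $\cO(n^2)$ single hops with computable signs, and $K$ acts on the impurity factor via its dense matrix. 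Entries are computed exactly (or to precision $\delta/\poly$ with floating point), so the total runtime is $\poly(n,\delta^{-1})$. The errors add to $\delta/2+\delta/2=\delta$, which completes the plan.
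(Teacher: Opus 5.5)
Your proposal is correct and follows the paper's proof. You regularize the bath to $\omega\ge\delta/(2n)$ at a cost of $\delta/2$, invoke Corollary~\ref{cor:ground_compress} with precision $\delta/(2\|H\|)$ (under the standing assumption $\|H\|=\poly(n)$), and diagonalize the restriction of the Hamiltonian to the polynomial-dimensional subspace $\cH_\bS$. Your extra account of computing the entries $\langle x|\widehat H|y\rangle$ fills in a step the paper only asserts, and it is simpler than you suggest: the bath rotation leaves the original impurity modes untouched, so $K$ is just the original constant-size interaction plus quadratic terms.
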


\begin{proof}
By the discussion above Theorem \ref{thm:classcial_ground_energy}, for ground-energy estimation, without loss of generality we may take $\omega=\delta/(2n)$ and reduce the target precision from $\delta$ to $\delta/2$. 

Using $\es=\poly(n)$, $\omega=\delta/(2n)$, $\|H\|=\poly(n)$, and $m=\cO(1)$, 
Corollary \ref{cor:ground_compress} (set the precision parameter to be $\frac{\delta}{2\|H\|}$)
then gives an efficiently
enumerable subspace
$\cH_\bS\coloneqq\operatorname{span}\{\,|x\rangle:x\in\bS\,\}$ of dimension
$\poly(n, \delta^{-1})$, such that the projected Hamiltonian $P_\bS H P_\bS$ has a ground energy  $\delta/2$-close to the ground energy of $H$. 
Let $H_\bS$ be the restriction of $H$ to $\cH_\bS$ with entries $(H_\bS)_{x,y}\coloneqq\langle x|H|y\rangle$ for $x,y\in\bS$.   Constructing and diagonalizing $H_\bS$ takes time $\poly(n, \delta^{-1})$; thus we prove the corollary.
\end{proof}

\section{Efficient simulation for thermal states}\label{sec:thermal_state_sim}

In this section, we apply the compression method developed in Section~\ref{sec:compress} to the thermal state. In particular, in Section~\ref{sec:compress_thermal}, we show that the thermally gapped part of the thermal state can be compressed to a subspace of dimension $\poly(n, \beta, \es, \delta^{-1})$, as stated in Corollary~\ref{cor:compression_thermal_extension}. Then, in Sections~\ref{sec:soft_hard} and~\ref{sec:thermal_classical}, we combine this compression with the soft--hard decomposition and the continuous-time quantum Monte Carlo method to obtain an efficient classical algorithm for estimating thermal expectation values and the partition function, as stated in Theorem~\ref{thm:classical-thermal}.

\subsection{Compression of the thermal state}\label{sec:compress_thermal}

\paragraph{Notation.}
Recall that the superscript $(\beta)$ denotes the specialization of the
notation in Section~\ref{sec:compress} to the thermal state. In particular, 
for a cutoff
$\cut\geq1$ (specified later), we define 
\begin{equation}
    h_d^{(\beta)}\coloneqq\min\{d-1,\cut\},
    \quad
    D_\ell^{(\beta)}
    \coloneqq
    \sum_{d\geq1} \min\{d-1,\cut\} \Pi_{\ell,d},\quad
    \mathsf D^{(\beta)}
    \coloneqq
    \sum_i\min\{d(i)-1,\cut\}n_i.
    \label{eq:thermal-depth-operators}
\end{equation}
In the ground-state case, the residual-bath occupation decays exponentially with Krylov depth, and we use the true depth  $h_d^{(\infty)}=d-1$. For a thermal state, however, thermal occupation is controlled by energy and need not decay with Krylov depth. We thus replace the true Krylov depth by a capped depth to obtain a meaningful bound on the exponential moment
$\tr\left[
\rho^{(\beta)}
e^{\frac{1}{2}\mathsf D^{(\beta)}}
\right]$.

We start with a lemma that estimates the commutator contribution for the thermal state. Note that since $\rho_{\beta}$ commutes with $H$, we have
\begin{align}
\Delta_I^{(\beta)}
    &=
    \Re\tr\!\left(
    \rho^{(\beta)}A_I^\dagger[H,A_I]
    \right) = \tr\!\left(
    \rho^{(\beta)}A_I^\dagger[H,A_I]
    \right).
\end{align}

\begin{lemma}[Thermal commutator estimate]
\label{lem:thermal-commutator}
For every nonempty increasing list $I$, we have
\begin{align}
    -\Delta_I^{(\beta)}
    \leq
    \frac{x_I}{\beta}\log\frac{1}{x_I},\qquad x_I
    &\coloneqq
    \tr\!\left(
    \rho^{(\beta)}A_I^\dagger A_I
    \right).
    \label{eq:thermal-commutator-bound}
\end{align}
where the right-hand side is understood to be zero when $x_I=0$.
\end{lemma}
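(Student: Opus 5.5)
The plan is to compute $\Delta_I^{(\beta)}$ exactly in an eigenbasis of $H$ and then apply Jensen's inequality for the concave function $\log$. Write $\rho^{(\beta)}=\sum_a p_a\op{a}{a}$ with $H\ket{a}=E_a\ket{a}$ and $p_a=e^{-\beta E_a}/Z>0$, and abbreviate $A\coloneqq A_I$.

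First I expand the commutator. Since $A^\dagger[H,A]=A^\dagger HA-A^\dagger AH$ and $\rho^{(\beta)}$ commutes with $H$, inserting resolutions of the identity gives
\begin{align}
\Delta_I^{(\beta)}=\sum_{a,b}p_a\,|\langle b|A|a\rangle|^2\,(E_b-E_a).
\end{align}
Set $w_{ab}\coloneqq p_a|\langle b|A|a\rangle|^2\geq 0$. Then $\sum_{a,b}w_{ab}=\tr(\rho^{(\beta)}A^\dagger A)=x_I$. If $x_I=0$, every $w_{ab}$ vanishes, so $\Delta_I^{(\beta)}=0$ and the claim holds trivially. Assume $x_I>0$ from now on.

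The key step is the Gibbs relation $E_a-E_b=\beta^{-1}\log(p_b/p_a)$. It gives
\begin{align}
-\Delta_I^{(\beta)}=\frac{1}{\beta}\sum_{a,b}w_{ab}\log\frac{p_b}{p_a}
=\frac{x_I}{\beta}\sum_{a,b}\frac{w_{ab}}{x_I}\log\frac{p_b}{p_a}.
\end{align}
The weights $w_{ab}/x_I$ form a probability distribution, so concavity of $\log$ yields
\begin{align}
-\Delta_I^{(\beta)}\leq\frac{x_I}{\beta}\log\Bigl(\frac{1}{x_I}\sum_{a,b}w_{ab}\frac{p_b}{p_a}\Bigr).
\end{align}
The argument simplifies because
\begin{align}
\sum_{a,b}w_{ab}\frac{p_b}{p_a}=\sum_{a,b}p_b|\langle b|A|a\rangle|^2=\tr(\rho^{(\beta)}AA^\dagger).
\end{align}

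It remains to show $\tr(\rho^{(\beta)}A_IA_I^\dagger)\leq 1$. Each $a_i$ is a partial isometry of norm at most $1$, so $\|A_I\|\leq 1$ and $A_IA_I^\dagger\leq\I$. In fact, by the CAR, $A_IA_I^\dagger$ is exactly the projector onto configurations in which all modes of $I$ are occupied. Hence the logarithm is at most $\log(1/x_I)$, which gives the stated bound. The same argument also shows $x_I\leq 1$, so the right-hand side is nonnegative.

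I do not expect a genuine obstacle. The only points needing care are the $x_I=0$ case, and the positivity $p_a>0$ that makes $\log(p_b/p_a)$ well defined; this holds for any finite $\beta$. Terms with $w_{ab}=0$ simply drop out of the Jensen step.
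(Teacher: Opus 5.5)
Your proof is correct and is essentially the paper's argument: expand $\Delta_I^{(\beta)}$ in an eigenbasis of $H$, apply Jensen's inequality to the normalized weights $p_a|\langle b|A_I|a\rangle|^2/x_I$ to get $-\Delta_I^{(\beta)}\leq \frac{x_I}{\beta}\log\frac{\tr(\rho^{(\beta)}A_IA_I^\dagger)}{x_I}$, and then use $\tr(\rho^{(\beta)}A_IA_I^\dagger)\leq\|A_I\|^2\leq 1$. One harmless slip: $A_IA_I^\dagger=\prod_{i\in I}(\I-n_i)$ projects onto configurations with all modes of $I$ \emph{empty}, while $A_I^\dagger A_I$ is the projector onto all of them being occupied; since either is a projector, the bound is unaffected.
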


\begin{proof}
Let
    $y_I\coloneqq\tr(\rho^{(\beta)}A_I A_I^\dagger)$. Consider the spectral decomposition  $H=\sum_a \lambda_a \op{a}{a}$. To ease notation, we abbreviate $x_I,y_I,\Delta_I^{(\beta)}$ as $x,y,\Delta$ here.

 For $x>0$, define
\begin{align}
    \mu_{ab}
    \coloneqq
    \frac{e^{-\beta \lambda_a}|\langle b|A_I|a\rangle|^2}
    {\tr(e^{-\beta H})x},
    \qquad
    (\delta \lambda)_{ab}\coloneqq\lambda_b-\lambda_a.
\end{align}
One can check that $\mu_{ab}\geq 0$ and $\sum_{ab}\mu_{ab}=1$, so $\{\mu_{ab}\}_{a,b}$ is a probability distribution. Besides, \begin{align}
    \frac{\Delta}{x}
    =
    \sum_{a,b}\mu_{ab}(\delta \lambda)_{ab},
    \qquad
    \frac{y}{x}
    =
    \sum_{a,b}\mu_{ab}e^{-\beta(\delta \lambda)_{ab}}.\label{eq:yx}
\end{align}
Applying Jensen's inequality to $y/x$ in Eq.~(\ref{eq:yx}) with respect to the distribution $\{\mu_{ab}\}_{a,b}$ , we have 
\begin{align}
    \log\frac{y}{x}
    \geq \sum_{a,b}\mu_{ab}\,\, (-\beta)\,\, (\delta\lambda)_{ab}=
    -\beta\frac{\Delta}{x},
\end{align}
Since $x>0$ and
$0\leq y\leq\|A_I\|^2\leq1$, we get $-\Delta\leq \frac{x}{\beta} \log \frac{y}{x}\leq\frac{x}{\beta} \log \frac{1}{x},$ thus proving the lemma.
\end{proof}

Then we derive the recursion formula.
For every nonzero $\bq$, define the thermal error term
\begin{align}
    \operatorname{Err}_\bq^{(\beta)}
    \coloneqq
    \frac{e^{-1}}{\beta}
    e^{-\beta\alpha_\bq/8}
    \sum_{I\sim\bq}e^{2\lambda h^{(\beta)}(I)},
    \qquad
    \alpha_\bq\coloneqq\sum_{\ell=0}^Lq_\ell\omega_\ell.
    \label{eq:thermal-error-term}
\end{align}

\begin{lemma}[Thermal recursion]
\label{lem:thermal-recursion} Assume $\omega>0$, so $\omega_\ell=2^\ell \omega>0$ for all $\ell$.
For every nonzero $\bq=(q_0,\ldots,q_L)$,
\begin{align}
    \alpha_\bq F_\bq^{(\beta)}
    \leq
    8\sum_{\ell:q_\ell>0}
    \frac{\xi_\ell}{\omega_\ell}
    F_{\bq-\bel}^{(\beta)}
    +4\operatorname{Err}_\bq^{(\beta)}.
    \label{eq:thermal-recursion}
\end{align}
\end{lemma}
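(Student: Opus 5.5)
The plan is to start from the general recursion in Corollary~\ref{cor:common-recursion}, specialized to $\rho=\rho^{(\beta)}$ and $h=h^{(\beta)}$, and to control the commutator contribution with Lemma~\ref{lem:thermal-commutator}. Write $S_\bq\coloneqq\sum_{\ell:q_\ell>0}\frac{\xi_\ell}{\omega_\ell}F^{(\beta)}_{\bq-\bel}$. Corollary~\ref{cor:common-recursion} gives
\begin{align}
\frac{\alpha_\bq}{2}F^{(\beta)}_\bq \leq \sqrt{\alpha_\bq F^{(\beta)}_\bq S_\bq}+\sum_{I\sim\bq}e^{2\lambda h^{(\beta)}(I)}\bigl(-\Delta_I^{(\beta)}\bigr).
\end{align}
The task is to bound the last sum by a small multiple of $\alpha_\bq F^{(\beta)}_\bq$ plus $\operatorname{Err}^{(\beta)}_\bq$, and then absorb the square-root term.

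For the commutator sum, Lemma~\ref{lem:thermal-commutator} gives $-\Delta_I^{(\beta)}\leq \beta^{-1}x_I\log(1/x_I)$ with $x_I=\tr(\rho^{(\beta)}A_I^\dagger A_I)$. By the definition of $\rho_I$, we have $e^{2\lambda h(I)}x_I=\tr(\rho_I)$, so summing $e^{2\lambda h(I)}x_I$ over $I\sim\bq$ gives exactly $F^{(\beta)}_\bq$. I would linearize the entropy-type term with the elementary inequality, valid for $x\in[0,1]$ and any $c\geq0$,
\begin{align}
x\log\frac1x\leq c\,x+e^{-1-c}.
\end{align}
This holds because the maximum of $x\log(1/x)-cx$ is attained at $x=e^{-1-c}$, where it equals $e^{-1-c}$. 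Choosing $c=\beta\alpha_\bq/8$ yields
\begin{align}
\beta^{-1}x_I\log(1/x_I)\leq \frac{\alpha_\bq}{8}x_I+\frac{e^{-1}}{\beta}e^{-\beta\alpha_\bq/8}.
\end{align}
Multiplying by $e^{2\lambda h^{(\beta)}(I)}$ and summing over $I\sim\bq$, the second piece reproduces $\operatorname{Err}^{(\beta)}_\bq$ exactly, and the first piece gives $\frac{\alpha_\bq}{8}F^{(\beta)}_\bq$.

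Substituting this bound leaves
\begin{align}
\frac{3\alpha_\bq}{8}F^{(\beta)}_\bq\leq\sqrt{\alpha_\bq F^{(\beta)}_\bq S_\bq}+\operatorname{Err}^{(\beta)}_\bq.
\end{align}
I would then apply AM--GM in the form $\sqrt{\alpha_\bq F_\bq S_\bq}\leq \frac{\alpha_\bq}{8}F_\bq+2S_\bq$, which gives $\frac{\alpha_\bq}{4}F^{(\beta)}_\bq\leq 2S_\bq+\operatorname{Err}^{(\beta)}_\bq$. Multiplying by $4$ yields exactly \eqref{eq:thermal-recursion}.

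The only delicate point is the choice of constants. The parameter $c$ in the linearization must be chosen so that the additive error matches $\operatorname{Err}^{(\beta)}_\bq$, which is why $c=\beta\alpha_\bq/8$ is forced. The AM--GM split must be chosen so that the constants $8$ and $4$ come out; taking $\varepsilon=1/8$ in $\sqrt{ab}\leq\varepsilon a+b/(4\varepsilon)$ does this. I also need to check that the case $x_I=0$ is harmless, which holds because both sides of the linearized bound are nonnegative. Finally, the step uses $\omega>0$, so that $\alpha_\bq>0$ for nonzero $\bq$.
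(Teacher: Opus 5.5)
Your proposal is correct and follows the paper's proof essentially step for step. You specialize Corollary~\ref{cor:common-recursion}, bound $-\Delta_I^{(\beta)}$ via Lemma~\ref{lem:thermal-commutator}, linearize with $x\log(1/x)\le tx+e^{-t-1}$ at $t=\beta\alpha_\bq/8$ (which reproduces $\operatorname{Err}_\bq^{(\beta)}$ exactly, using $e^{2\lambda h(I)}x_I=\tr(\rho_I)$), and absorb the square-root term via $\sqrt{xy}\le x/8+2y$, and all the constants check out.
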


\begin{proof}
Applying Corollary~\ref{cor:common-recursion} with
$\rho=\rho^{(\beta)}$ and $h_d=h_d^{(\beta)}$, we have
\begin{align}
\frac{\alpha_\bq}{2}F_\bq^{(\beta)}
&\leq
\sqrt{\alpha_\bq F_\bq^{(\beta)}}
\left(
\sum_{\ell:q_\ell>0}
\frac{\xi_\ell}{\omega_\ell}
F_{\bq-\bel}^{(\beta)}
\right)^{1/2}
-
\sum_{I\sim\bq}
e^{2\lambda h^{(\beta)}(I)}\Delta_I^{(\beta)}.
\label{eq:thermal-common-recursion}
\end{align}
By Lemma~\ref{lem:thermal-commutator},
\begin{align}
-\sum_{I\sim\bq}
e^{2\lambda h^{(\beta)}(I)}\Delta_I^{(\beta)}
\leq
\frac{1}{\beta}
\sum_{I\sim\bq}
e^{2\lambda h^{(\beta)}(I)}
x_I\log\frac{1}{x_I}, \qquad x_I
    =
    \tr\!\left(
    \rho^{(\beta)}A_I^\dagger A_I
    \right).
\label{eq:summed-thermal-commutator}
\end{align}
For $0\leq x\leq1$ and $t\geq0$, we use the elementary inequality
    $x\log\frac{1}{x}
    \leq
    tx+e^{-t-1}.$
Taking $t=\beta\alpha_\bq/8$, summing over $I\sim\bq$, and using
the definitions of $F_\bq^{(\beta)}$ and
$\operatorname{Err}_\bq^{(\beta)}$, we obtain
\begin{align}
-\sum_{I\sim\bq}
e^{2\lambda h^{(\beta)}(I)}\Delta_I^{(\beta)}
\leq
\frac{\alpha_\bq}{8}F_\bq^{(\beta)}
+\operatorname{Err}_\bq^{(\beta)}.
\label{eq:thermal-error-bound}
\end{align}
Substituting Eq.~\eqref{eq:thermal-error-bound} into
Eq.~\eqref{eq:thermal-common-recursion} gives
\begin{align}
\frac{3\alpha_\bq}{8}F_\bq^{(\beta)}
\leq
\sqrt{\alpha_\bq F_\bq^{(\beta)}}
\left(
\sum_{\ell:q_\ell>0}
\frac{\xi_\ell}{\omega_\ell}
F_{\bq-\bel}^{(\beta)}
\right)^{1/2}
+\operatorname{Err}_\bq^{(\beta)}.
\end{align}
Finally, using
    $\sqrt{xy}\leq\frac{x}{8}+2y,$ for  $x,y\geq0$,
with
    $x=\alpha_\bq F_\bq^{(\beta)}$ and    $y=
    \sum_{\ell:q_\ell>0}
    \frac{\xi_\ell}{\omega_\ell}
    F_{\bq-\bel}^{(\beta)},$ we prove the lemma.
\end{proof}

\begin{theorem}[Thermal exponential-moment bound for capped Krylov depth]
\label{thm:thermal-exponential-bound}  Let $\rho^{(\beta)}$ be the thermal state of the impurity Hamiltonian in the form given in Eq.~(\ref{eq:hyb_decom}) at inverse temperature $\beta$.  Then
\begin{align}
\tr\!\left(
\rho^{(\beta)}e^{\frac{1}{2}\mathsf D^{(\beta)}}
\right)
\leq
e^{8\kappa}
\left(
1+\frac{4\eta_\cut e^{\eta_\cut }}{e\beta\omega}
\right),
\label{eq:thermal-exponential-bound}
\end{align}
where $\kappa=\cO(m\log_2\frac{\es}{\omega})$ as defined in Eq.~(\ref{eq:kappa}), and
\begin{align}
\eta_\cut \coloneqq\sum_i
\exp\!\left(
-\frac{\beta\omega_{\ell(i)}}{8}
+\frac{1}{2}\min\{d(i)-1,\cut\}
\right)
\leq
n\exp\!\left(-\frac{\beta\omega}{8}+\frac{\cut}{2} \right).
\label{eq:thermal-eta}
\end{align}
\end{theorem}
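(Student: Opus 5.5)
Starting from the thermal recursion of Lemma~\ref{lem:thermal-recursion}, I will solve it by induction on $|\bq|$ and then sum over $\bq$ using Eq.~\eqref{eq:exp_Fq}. Dividing by $\alpha_\bq$, the recursion reads
\begin{align}
F_\bq^{(\beta)}\leq \sum_{\ell:q_\ell>0}\frac{8\xi_\ell}{\alpha_\bq\omega_\ell}F^{(\beta)}_{\bq-\bel}+\frac{4}{\alpha_\bq}\operatorname{Err}_\bq^{(\beta)}.
\end{align}
Since the recursion is linear with nonnegative coefficients, $F_\bq$ is bounded by a sum over ``paths'' from $\mathbf 0$ (with $F_{\mathbf 0}=1$) or from a source $\operatorname{Err}_{\mathbf p}$ at some intermediate $\mathbf p\leq\bq$ up to $\bq$. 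I will compare with the ground-state computation: using $\alpha_\bq\geq q_\ell\omega_\ell$ in each term as in Theorem~\ref{thm:main_ground_state} (Appendix~\ref{sec:cal_recursion}), the homogeneous part gives the product bound $\prod_\ell\frac{1}{q_\ell!}(8\xi_\ell/\omega_\ell^2)^{q_\ell}$, which sums over $\bq$ to at most $e^{8\kappa}$. More precisely, I expect the propagator from $\mathbf p$ to $\bq$ to be bounded by $\prod_\ell \frac{1}{(q_\ell-p_\ell)!}(8\xi_\ell/\omega_\ell^2)^{q_\ell-p_\ell}$ (using $\alpha_{\mathbf r}\geq \sum_\ell (r_\ell-p_\ell)\omega_\ell$ for the intermediate $\mathbf r$, which suffices for the factorial cancellation). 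Summing over $\bq\geq\mathbf p$ then contributes at most $e^{8\kappa}$ for each source, so
\begin{align}
\sum_\bq F_\bq^{(\beta)}\leq e^{8\kappa}\Bigl(1+\sum_{\mathbf p\neq 0}\frac{4}{\alpha_{\mathbf p}}\operatorname{Err}^{(\beta)}_{\mathbf p}\Bigr).
\end{align}

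It then remains to bound the source sum. Using $\alpha_{\mathbf p}\geq |\mathbf p|\,\omega$ and $e^{-\beta\alpha_{\mathbf p}/8}\prod e^{2\lambda h(I)}=\prod_{i\in I}\exp(-\beta\omega_{\ell(i)}/8+\tfrac12\min\{d(i)-1,\cut\})$ (with $2\lambda=1/2$ and $\alpha_{\mathbf p}=\sum_{i\in I}\omega_{\ell(i)}$ for $I\sim\mathbf p$), the sum over $\mathbf p$ with $|\mathbf p|=k$ and $I\sim \mathbf p$ collapses to a sum over all $k$-subsets $I$ of bath modes of the product of the single-mode weights, which is at most $\eta_\cut^k/k!$. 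Hence
\begin{align}
\sum_{\mathbf p\neq0}\frac{4}{\alpha_{\mathbf p}}\operatorname{Err}_{\mathbf p}\leq \frac{4}{e\beta\omega}\sum_{k\geq1}\frac{\eta_\cut^k}{k\cdot k!}\leq\frac{4\eta_\cut e^{\eta_\cut}}{e\beta\omega},
\end{align}
which yields Eq.~\eqref{eq:thermal-exponential-bound}. The stated bound on $\eta_\cut$ follows from $\omega_{\ell}\geq\omega$ and $\min\{d-1,\cut\}\leq\cut$.

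The main obstacle is the propagator bound in the first step: when a source enters at $\mathbf p$ rather than $\mathbf 0$, the denominators $\alpha_{\mathbf r}$ along the path are larger than the $\sum_\ell(r_\ell-p_\ell)\omega_\ell$ needed, so monotonicity works in our favor. The factorial identity must nevertheless be redone carefully: I would prove by induction on $|\bq|$ that $F_\bq\leq \sum_{\mathbf p\leq\bq} c_{\mathbf p}\prod_\ell\frac{(8\xi_\ell/\omega_\ell^2)^{q_\ell-p_\ell}}{(q_\ell-p_\ell)!}$, where $c_{\mathbf 0}=1$ and $c_{\mathbf p}=4\operatorname{Err}_{\mathbf p}/\alpha_{\mathbf p}$. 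The inductive step uses $\sum_\ell \frac{q_\ell-p_\ell}{\alpha_\bq}\omega_\ell\leq1$ after writing $\frac{8\xi_\ell}{\omega_\ell}=\omega_\ell\cdot\frac{8\xi_\ell}{\omega_\ell^2}$, with the $\mathbf p=\bq$ term supplied directly by the error term.
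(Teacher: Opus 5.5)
Your proposal is correct and follows the paper's proof essentially step for step: the paper proves exactly your inductive bound $F_\bq^{(\beta)}\leq\sum_{\mathbf p\leq\bq}c_{\mathbf p}\prod_\ell\frac{(8\xi_\ell/\omega_\ell^2)^{q_\ell-p_\ell}}{(q_\ell-p_\ell)!}$ (Lemma~\ref{lem:thermal-recursion-solution}) using $\alpha_{\bq-\br}=\alpha_\bq-\alpha_\br\leq\alpha_\bq$, then sums the convolution to $e^{8\kappa}\bigl(1+\sum_{\br>\mathbf 0}4\operatorname{Err}^{(\beta)}_{\br}/\alpha_{\br}\bigr)$ and bounds the source sum via $\alpha_\br\geq\omega|\br|$ and the $\eta_\cut^q/q!$ estimate, just as you do. One caution: the heuristic in your first paragraph of applying $\alpha_\bq\geq q_\ell\omega_\ell$ term by term would lose a factor equal to the number of occupied bands at each step and would not close the induction (and it is not what the ground-state calculation does). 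What makes the induction work is the combined identity $\sum_\ell(q_\ell-p_\ell)\omega_\ell=\alpha_{\bq-\mathbf p}\leq\alpha_\bq$ from your final paragraph.
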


 Note that there is a factor of $n$ in the bound for $\eta_\cut $. To ensure $\eta_\cut\leq 1 $, it suffices that $\log n- \beta\omega/8+ \cut/2<0$. The capped depth is used to get the estimate $\cut/2$ instead of $\max_{i}( d(i)-1)/2$ which could scale as $n$.
 
 The proof of Theorem \ref{thm:thermal-exponential-bound} is similar to that for the ground-state case: We first solve the recursion in Lemma \ref{lem:thermal-recursion} to obtain a bound on $F_\bq^{(\beta)}$, and then use Eq.~(\ref{eq:exp_Fq}) to derive the desired exponential bound. For completeness, we provide the detailed calculation in Appendix \ref{sec:thermal_recursion_solution}.

We now prove a compression result for the thermal state in the thermally gapped regime, i.e., $\beta\omega\geq 9\log n$. Since we will later apply this result to only part of a general impurity Hamiltonian, we also include a version allowing additional fermionic modes.

\begin{corollary}[Compression of the thermal state in the thermally gapped region]
\label{cor:compression_thermal_extension}
Let $H$ be an impurity Hamiltonian in the form given in Eq.~(\ref{eq:hyb_decom}) satisfying $\beta\omega\geq 9\log n$, and let $\rho^{(\beta)}$ be its thermal state. Then, for every $0<\delta<1$, there exists an efficiently enumerable subset  $\bS$ of Fock configurations such that  
\begin{align}
    \cH_\bS\coloneqq \spn\{\,|x\rangle:x\in\bS\,\},\quad
    \dim(\cH_{\bS})
    =
    \poly(n, \beta, \es, \delta^{-1}),
    \quad  \tr\!\left((\I-P_{\bS})\rho^{(\beta)}\right)
    \leq\delta,\label{eq:compress_thermal_1}
\end{align}
where $P_{\bS}$ is the projector onto $\cH_{\bS}$.
Moreover, the thermal state as well as the partition function of the projected Hamiltonian are close to the original Hamiltonian 
\begin{align}
\left\|
\rho^{(\beta)}
-
\frac{
P_{\bS}e^{-\beta P_{\bS}HP_{\bS}}P_{\bS}
}{
\tr\!\left(
P_{\bS}e^{-\beta P_{\bS}HP_{\bS}}P_{\bS}
\right)
}
\right\|_1
\leq\delta, \qquad \left|\tr\!\left(e^{-\beta H}\right)-\tr\!\left(
P_{\bS}e^{-\beta P_{\bS}HP_{\bS}}P_{\bS}
\right)   \right|\leq \delta \tr\!\left(e^{-\beta H}\right).
\label{eq:thermal_projected_compression}
\end{align}
The same conclusions hold when $\rho^{(\beta)}$ is replaced by the thermal state of $H+M$, where $M$ is any  Hermitian operator with even parity, possibly involving additional fermionic modes, such that $[M,a_i]=0$ for every residual-bath mode $a_i$. In this case, $P_{\bS}$ is understood to act as the identity on the additional modes.
\end{corollary}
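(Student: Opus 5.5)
The plan is to combine Theorem~\ref{thm:thermal-exponential-bound} with Theorem~\ref{thm:exp_moment_poly_support}, and then convert the resulting weight bound into bounds on states and partition functions. First I fix the cutoff $\cut=\max\{1,\frac18\log n\}$, as Theorem~\ref{thm:exp_moment_poly_support} requires. Under $\beta\omega\ge 9\log n$, Eq.~\eqref{eq:thermal-eta} gives
\[
\eta_\cut\le n\exp\!\left(-\tfrac98\log n+\tfrac1{16}\log n\right)\le 1
\]
for $n$ large; small $n$ is handled by $\cut=1$ and a constant. Since $\beta\omega\ge 9\log n\ge$ a constant, the right-hand side of Eq.~\eqref{eq:thermal-exponential-bound} is then $e^{8\kappa}\cdot\cO(1)$. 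Because $\kappa=\cO(m\log(\es/\omega))$, we get $e^{8\kappa}=\poly(\es/\omega)$, and $\omega^{-1}\le\beta/(9\log n)$ makes this $\poly(\beta,\es)$. Theorem~\ref{thm:exp_moment_poly_support}, applied with a precision parameter $\delta'$ chosen so that $\delta'^2/4\le\delta/C$, then yields the set $\bS$. It has $\dim\cH_\bS=\poly(n,\beta,\es,\delta^{-1})$, it is enumerable in time $\cO(n|\bS|)$, and it satisfies $\tr((\I-P_\bS)\rho^{(\beta)})\le\delta/C$.

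The main step is passing from small weight outside $\cH_\bS$ to closeness of the projected Gibbs state and partition function. A variational approach works cleanly: I will use the Gibbs variational principle $-\log Z=\min_\sigma\{\beta\tr(\sigma H)-S(\sigma)\}$ on both the full space and $\cH_\bS$.

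For the lower bound, take $\sigma=P_\bS$, restricted to $\cH_\bS$. Pinching is a compression and the map $X\mapsto\tr e^{X}$ is monotone under compressions (Peierls/Golden--Thompson type), which gives $\tr(P_\bS e^{-\beta P_\bS HP_\bS}P_\bS)\le Z$. Equivalently, the eigenvalues of a compression interlace those of $H$.

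For the upper bound, $Z_\bS\ge Z(1-\delta)$ is the hard direction. I would use the dual (Peierls--Bogoliubov) form
\[
\log Z_\bS\ge -\beta\tr(\tilde\rho H)+S(\tilde\rho),
\]
where $\tilde\rho=P_\bS\rho^{(\beta)}P_\bS/p$ and $p=\tr(P_\bS\rho^{(\beta)})\ge 1-\delta/C$. The task is then to control the energy and entropy differences between $\tilde\rho$ and $\rho^{(\beta)}$. Energy is bounded by $\cO(\sqrt{\delta}\,\beta\|H\|)$ through gentle measurement, which unfortunately involves $\|H\|$.

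To avoid the $\|H\|$ and entropy losses, I would instead first try a cleaner operator argument. Write $\rho^{(\beta)}=e^{-\beta H}/Z$ and note that $P_\bS$ commutes with the enlarged-impurity part. The sharper route is to apply the pinching inequality to $\rho^{1/2}$:
\[
Z_\bS\ge\tr\!\left(P_\bS e^{-\beta H}P_\bS\right)\cdot(\text{correction})\ \text{?}
\]
This is false in general, so I would fall back on the variational bound above. There I take $\delta'$ polynomially small in $\beta\|H\|$ and $n$ (allowed, since $\|H\|=\poly(n)$) and use Fannes' inequality for the entropy. This gives $|\log Z-\log Z_\bS|\le\delta/2$, hence relative error $\delta$. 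The same relative-entropy identity
\[
D\!\left(\tilde\rho\,\big\|\,\rho_\bS\right)=\log Z_\bS-\left(-\beta\tr\tilde\rho H+S(\tilde\rho)\right)\le\delta/2,
\]
combined with Pinsker, gives the trace-distance statement for the projected Gibbs state $\rho_\bS$ after the triangle inequality with $\|\tilde\rho-\rho^{(\beta)}\|_1\le\cO(\sqrt{\delta'})$.

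For the extension to $H+M$, I note that the whole recursion uses only two facts. The first is the commutator $[H+M,a_i]=[H,a_i]$, which holds since $[M,a_i]=0$. The second is Lemma~\ref{lem:thermal-commutator}, which uses only that the state is a Gibbs state of $H+M$. Hence Lemma~\ref{lem:commutator-identity}, Corollary~\ref{cor:common-recursion}, and Theorem~\ref{thm:thermal-exponential-bound} hold verbatim. Likewise $\mathsf D^{(\beta)}$ and $\bS$ involve only residual-bath occupations, so $P_\bS$ acts as the identity on the extra modes. I expect the main obstacle to be the partition-function lower bound with only polynomial losses; the variational/Fannes bookkeeping with polynomially small $\delta'$ should suffice.
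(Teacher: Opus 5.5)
Your first step (the choice of $\cut$, $\eta_\cut=\cO(1)$ under $\beta\omega\ge 9\log n$, $e^{8\kappa}=\poly(\beta,\es)$, then Theorem~\ref{thm:exp_moment_poly_support}) matches the paper. So does your argument that the recursion is unchanged for $H+M$.

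You diverge when passing from small outside weight to the Gibbs-state and partition-function bounds. Your variational route is sound: interlacing for $Z_\bS\le Z$, the trial state $\tilde\rho=P_\bS\rho P_\bS/p$, gentle measurement for the energy, and Fannes for the entropy. But its error is of order $\beta\|H\|\sqrt{\delta'}+\sqrt{\delta'}\log\dim$ plus a binary-entropy term. The main claim therefore follows only by leaning on the standing assumption $\|H\|=\poly(n)$. Also, $D(\tilde\rho\|\rho_\bS)\le\delta/2$ plus Pinsker yields only $\sqrt{\delta}$. You need that error to be $\cO(\delta^2)$, which is harmless since $\delta'$ is yours to choose.

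The genuine gap is the extension. There $M$ is an arbitrary even Hermitian operator, possibly on many extra modes, and the claim is the same bound $\dim\cH_\bS=\poly(n,\beta,\es,\delta^{-1})$. In your argument, $\delta'$ must be polynomially small in $\beta\|H+M\|$ and in the total number of modes (the latter through Fannes). Hence so must $\dim\cH_\bS$ depend on them, and the hypotheses control neither. The same defect also gives up the independence from $\|K\|$, i.e., from the impurity interaction and the original coupling strength, which the compression is meant to have.

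The missing idea is structural. Since $h_1=0$, $P_\bS$ is a function of occupations at depth $d\ge2$ only. So it commutes with $K$, with every coupling term (because $u_{\ell j}\in W_{\ell,1}$), and with $M$. Hence $\theta\coloneqq\|P_\bS(H+M)(\I-P_\bS)\|=\|P_\bS\dE(E)(\I-P_\bS)\|\le n\es$.

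You then need a perturbation bound in terms of $\theta$ rather than $\|H\|$; this is Lemma~\ref{lem:thermal_approx_proj}.
\begin{itemize}
\item Compare $\rho$ with the Gibbs state $\rho_{\mathrm{bd}}$ of the block-diagonal $H_{\mathrm{bd}}=P_\bS HP_\bS+(\I-P_\bS)H(\I-P_\bS)$.
\item Peierls gives $Z_{\mathrm{bd}}\le Z$, so $D(\rho\|\rho_{\mathrm{bd}})\le\beta|\tr\rho(H_{\mathrm{bd}}-H)|\le 2\beta\theta\sqrt{\epsilon}$.
\item Pinsker then controls $\tr((\I-P_\bS)\rho_{\mathrm{bd}})$.
\item Block-diagonality gives $P_\bS\rho_{\mathrm{bd}}P_\bS=\tr(P_\bS\rho_{\mathrm{bd}})\rho_\bS$ and $Z_\bS=Z_{\mathrm{bd}}\tr(P_\bS\rho_{\mathrm{bd}})$.
\end{itemize}
These bounds are free of $\|H\|$, $\|M\|$ and the Hilbert-space dimension. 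So the weight parameter $\epsilon=(\delta/(4(1+\beta n\es)))^4$ works uniformly in $M$.

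In the paper's own use ($M=H_{\mathrm{dark}}+V_{\mathrm{res}}$, with $\|H+M\|=\poly(n)$), your weaker version would still suffice. It does not, however, prove the corollary as stated.
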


\begin{proof}
Set
    $\cut=\max\left\{1,\frac{1}{8}\log n\right\},
    q\coloneqq\left(
    \frac{\delta}{4(1+\beta n\es)}
    \right)^4\leq \delta.$
Under the assumption $\beta\omega\geq 9\log n$, the parameter $\eta_\cut$ defined in Eq.~(\ref{eq:thermal-eta}) satisfies $\eta_\cut\leq1$. Applying Theorem~\ref{thm:thermal-exponential-bound} and Theorem~\ref{thm:exp_moment_poly_support} with error parameter $q$ gives an efficiently enumerable set $\bS$ such that Eq.~(\ref{eq:compress_thermal_1}) holds.

Moreover, since $h_1=0$, the value of $D(x)$ depends only on residual-bath occupations at depths
$d\geq 2$. Since $K$ and $G_{\ell j}$ act on the enlarged impurity and
$u_{\ell j}\in W_{\ell,1}$, we have that $P_{\bS}$ commutes with $K$ and commutes with each
coupling term
$a^\dagger(u_{\ell j})G_{\ell j}+G_{\ell j}^\dagger a(u_{\ell j})$. Thus only the quadratic residual-bath term contributes to \(P_{\bS}H(\I-P_{\bS})\), giving
$\|P_{\bS}H(\I-P_{\bS})\|\leq n\es$.
Appendix Lemma~\ref{lem:thermal_approx_proj} therefore gives
\begin{align}
\left\|
\rho^{(\beta)}
-
\frac{
P_{\bS}e^{-\beta P_{\bS}HP_{\bS}}P_{\bS}
}{
\tr\!\left(
P_{\bS}e^{-\beta P_{\bS}HP_{\bS}}P_{\bS}
\right)
}
\right\|_1
&\leq
2q+4\sqrt{\beta n\es}\,q^{1/4}
\leq\delta.
\end{align}

Moreover, with  $\theta_{P_{\bS}} \coloneqq \left\|P_{\bS}H(\I-P_{\bS})\right\| \leq n\es$,  we have $|Z-Z_{\bS}|\leq 4(1+\beta\theta_{P_{\bS}})q^{1/4}Z\leq \delta Z.$

For the extension, since $[M,a_i]=0$ for every residual-bath mode $a_i$, replacing $H$ by $H+M$ does not change the commutator recursion in the proof of Theorem~\ref{thm:thermal-exponential-bound}. Thus, the same exponential-moment bound holds with the same value of $\kappa$. Moreover, $P_{\bS}$ depends only on the residual-bath occupation numbers, so $[P_{\bS},M]=0$ and
    $P_{\bS}(H+M)(\I-P_{\bS})
    =
    P_{\bS}H(\I-P_{\bS}).$
The extended claim now follows from the same argument.
\end{proof}

\subsection{The soft--hard decomposition}\label{sec:soft_hard}

In this section, we consider the general impurity model
$H=H_0+H_{\mathrm{imp}}$ as defined in Eq.~(\ref{eq:impurity_model}), and rewrite it in a form that separates out the thermally gapped part. We set the cutoff value $\omega_*>0$ by
\begin{align}
    \beta \omega_*=9\log n.
\end{align}

We first split $H_0$ according to the cutoff value $\omega_*$. As explained in Appendix \ref{app:cano_bath}, one can construct annihilation operators $c_1,\ldots,c_n$ such that $H_0$ takes the diagonal form
\begin{align}
H_0=\sum_{j=1}^n\epsilon_j c_j^\dagger c_j,
    \qquad 0\leq\epsilon_j.
\end{align}
Here each $c_j$ is a linear combination of the original Majorana operators $\{\gamma_k\}_k$. We then define the \textit{soft part} (low-energy scale) and \textit{hard part} (high-energy scale) of $H_0$ as
\begin{align}
H_{0<}\coloneqq\sum_{j:\,0\leq\epsilon_j<\omega_*}\epsilon_j c_j^\dagger c_j,\qquad
H_{0>}\coloneqq\sum_{j:\,\omega_*\leq\epsilon_j}\epsilon_j c_j^\dagger c_j.
\end{align}
Correspondingly, we partition the single-particle space $\C^n$ according to the cutoff value $\omega_*$ and define the soft and hard single-particle spaces as
\begin{align}
    \cK_<\coloneqq\operatorname{span}\{e_j:0\leq\epsilon_j<\omega_*\},\qquad
    \cK_>\coloneqq\operatorname{span}\{e_j:\omega_*\leq\epsilon_j\}.
\end{align}

We will group $H_{0>}$ and $H_{\mathrm{imp}}$. For the remaining part $H_{0>}$,
by definition, it is decoupled from the soft part $H_{0<}$, while $H_{\mathrm{imp}}$ may still couple to the modes in $H_{0<}$. To facilitate the simulation, we further separate the modes in $H_{0<}$ that may couple to the impurity.

More specifically, recall that we write $c(v)=\sum_j v_j c_j$ for $v\in \C^n$. Define $\cB\subseteq\C^n$
to be the subspace of vectors $v$ for which the expansion of $c(v)$ in
terms of $\gamma_1,\ldots,\gamma_{2n}$ does not contain the impurity modes
$\gamma_1,\ldots,\gamma_{2m}$. We further decompose the soft single-particle space $\cK_<$ as
\begin{align}
    \mathcal S_1\coloneqq\cK_<\cap\cB,\qquad
    \mathcal S_2\coloneqq\cK_<\ominus\mathcal S_1.
\end{align}
Let $s_1\coloneqq\dim(\mathcal S_1)$, $s_2\coloneqq\dim(\mathcal S_2)$, and $d=s_1+s_2$. Note that since $\dim(\cB)\geq n-2m$, we have
$\dim(\mathcal S_1)\geq \dim(\cK_<)-2m$, and hence
$\dim(\mathcal S_2)\leq 2m$.

Choose an orthonormal basis $v_1,\ldots,v_d$ of $\cK_<$ such that the first $s_1$ vectors span
$\mathcal S_1$ and the remaining $s_2$ vectors span $\mathcal S_2$. In this basis, there is a Hermitian matrix $A$ such that\footnote{More specifically, define the operator on $\cK_<$ by
$E_<\coloneqq\sum_{j:\,0\leq\epsilon_j<\omega_*}\epsilon_j e_j e_j^\dagger$.
Let $V:\C^d\to\C^n$ be the isometry whose columns are $\{v_j\}_{j=1}^d$. Then
$A\coloneqq V^\dagger E_<V$. Since $0\leq E_<\leq\omega_*\I$, we also have
$0\leq A\leq\omega_*\I$.}

\begin{align}
    H_{0<}
    =\sum_{j,k=1}^d A_{jk}c(v_k)^\dagger c(v_j),\qquad
    A=
    \begin{bmatrix}
        A^{(11)}&A^{(12)}\\
        A^{(21)}&A^{(22)}
    \end{bmatrix},  \qquad  0\leq A \leq \omega_* \I,
\end{align}
where the block decomposition corresponds to
$\cK_<=\mathcal S_1\oplus\mathcal S_2$. Define the \textit{dark part} of $A$, corresponding to the modes decoupled from the impurity, and the residual part as
\begin{align}
    A^{(\mathrm{dark})}
    \coloneqq
    \begin{bmatrix}
        A^{(11)}&0\\
        0&0
    \end{bmatrix},\qquad A^{(\mathrm{act})}
    \coloneqq
    \begin{bmatrix}
        0&0\\
        0&A^{(22)}
    \end{bmatrix},\qquad 
    A^{(\mathrm{res})}
    \coloneqq
    \begin{bmatrix}
        0&A^{(12)}\\
        A^{(21)}& 0
    \end{bmatrix}.
\end{align}
We then decompose $H$ as
 $H=H_{\mathrm{act}}+H_{\mathrm{dark}}+V_{\mathrm{res}}$,
where
\begin{align}
    &\text{active Hamiltonian: } H_{\mathrm{act}}\coloneqq H_{0>}+H_{\mathrm{imp}} + \sum_{j,k=1}^d A^{(\mathrm{act})}_{jk}
    c(v_k)^\dagger c(v_j),\\
    &\text{free dark Hamiltonian: } H_{\mathrm{dark}}\coloneqq\sum_{j,k=1}^dA^{(\mathrm{dark})}_{jk}
    c(v_k)^\dagger c(v_j),\\
    &\text{residual term: } V_{\mathrm{res}}\coloneqq\sum_{j,k=1}^dA^{(\mathrm{res})}_{jk}
    c(v_k)^\dagger c(v_j).
\end{align}
To express $H$ in a form convenient for later use, we choose a new
single-particle basis. By the definition of $\cB$, the impurity Majorana
operators $\gamma_1,\ldots,\gamma_{2m}$ can be expressed in terms of
$c(u)$ and $c(u)^\dagger$ with $u\in\cB^\perp$. Let
$q\coloneqq\dim(\cB^\perp)\leq2m$. Since $\mathcal S_1\subseteq\cB$, we have
$$\cB^\perp\subseteq\mathcal S_1^\perp
=\cK_>\oplus\mathcal S_2.$$ We therefore choose an orthonormal basis
$f_1,\ldots,f_{n-s_1}$ of $\cK_>\oplus\mathcal S_2$ such that
$f_1,\ldots,f_q$ span $\cB^\perp$. Together,
$f_1,\ldots,f_{n-s_1},v_1,\ldots,v_{s_1}$ form an orthonormal basis
of $\C^n$.

We call  $c(f_1),\ldots,c(f_{n-s_1})$ the \textit{active modes} and $c(v_1),\ldots,c(v_{s_1})$ the \textit{dark modes}.
Then 
 $H_{\mathrm{act}}$ acts only on the active modes,
while $H_{\mathrm{dark}}$ acts only on the dark modes. Moreover,
$H_{\mathrm{imp}}$ acts only on $c(f_1),\ldots,c(f_q)$ and their
adjoints. Thus, $H_{\mathrm{act}}$ is the sum of a quadratic Hamiltonian and an
even impurity term supported on at most $2m$ fermionic modes.

\begin{lemma}[Soft--hard decomposition]\label{lem:soft-hard}
Define
    $H_{\mathrm{ref}}\coloneqq H_{\mathrm{act}}+H_{\mathrm{dark}}$.
Then
\begin{align}
    H=H_{\mathrm{ref}}+V_{\mathrm{res}}, \qquad
    V_{\mathrm{res}}
    =
    \sum_{a=1}^R v_a U_a,
    \,\,\,
    \|U_a\|\leq1,  \,\,\,
     \sum_{a=1}^R|v_a|
    \leq6m\omega_*.
\label{eq:soft_hard}
\end{align}
Moreover, since $H_{\mathrm{act}}$ and $H_{\mathrm{dark}}$ have even parity, with $H_{\mathrm{act}}$ acting trivially on $\mathcal S_1$ and $H_{\mathrm{dark}}$ acting only on $\mathcal S_1$, we have 
\begin{align}
    \rho_{\mathrm{ref}}^{(\beta)}
    =\rho_{\mathrm{act}}^{(\beta)}\otimes\rho_{\mathrm{dark}}^{(\beta)}.\label{eq:tensor_product}
\end{align}
\end{lemma}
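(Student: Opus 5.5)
The plan has three parts: verify the decomposition algebraically, bound the residual term by writing it as a quadratic form in the soft modes, and read off the tensor-product structure from commuting even-parity Hamiltonians on disjoint mode sets.

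\textbf{The identity $H=H_{\mathrm{ref}}+V_{\mathrm{res}}$.} This is bookkeeping. Since $A=A^{(\mathrm{dark})}+A^{(\mathrm{act})}+A^{(\mathrm{res})}$ and $H_{0<}=\sum_{jk}A_{jk}c(v_k)^\dagger c(v_j)$, we have $H_0=H_{0>}+H_{0<}$. Splitting $H_{0<}$ along the three blocks and grouping the pieces with $H_{\mathrm{imp}}$ as in the definitions of $H_{\mathrm{act}}$, $H_{\mathrm{dark}}$ and $V_{\mathrm{res}}$ gives the identity. (The shift $\alpha\I$ is absorbed into $H_{\mathrm{act}}$.)

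\textbf{The norm bound on $V_{\mathrm{res}}$.} The residual term is the off-diagonal part
\[
V_{\mathrm{res}}=\sum_{j\le s_1<k}\bigl(A_{jk}c(v_k)^\dagger c(v_j)+\overline{A_{jk}}c(v_j)^\dagger c(v_k)\bigr).
\]
\begin{itemize}
\item Take an SVD of the block $A^{(12)}=\sum_{a=1}^{r}\sigma_a x_a y_a^\dagger$. Here $x_a\in\mathcal S_1$ and $y_a\in\mathcal S_2$ are orthonormal families, and $r\le s_2\le 2m$.
\item In this basis, $V_{\mathrm{res}}=\sum_a\sigma_a\bigl(c(x_a)^\dagger c(y_a)+c(y_a)^\dagger c(x_a)\bigr)$, up to ordering and conjugation conventions.
\item Each $\sigma_a\le\|A^{(12)}\|\le\|A\|\le\omega_*$, because $A^{(12)}$ is a block of $A$ and $0\le A\le\omega_*\I$.
\item Each bracket is a Hermitian operator. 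Its norm is at most $\|c(x_a)^\dagger c(y_a)\|+\|c(y_a)^\dagger c(x_a)\|\le 2$, since $\|c(u)\|=\|u\|=1$ for unit $u$.
\item Set $U_a$ to be half of the bracket and $v_a=2\sigma_a$. Then $\|U_a\|\le1$ and $\sum_a|v_a|\le 2\cdot 2m\,\omega_*=4m\omega_*\le 6m\omega_*$.
\end{itemize}
The stated constant $6m$ leaves slack. For instance, splitting each bracket into two terms gives $R\le 4m$ with $|v_a|\le\omega_*$, which stays under the bound either way.

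\textbf{The tensor-product structure.} This step needs the most care, and the main concern is fermionic signs.
\begin{itemize}
\item Order the modes with active modes $c(f_1),\dots,c(f_{n-s_1})$ first and dark modes $c(v_1),\dots,c(v_{s_1})$ afterwards, and Jordan--Wigner encode in this order.
\item $H_{\mathrm{dark}}$ is quadratic in the dark modes only, so it acts as $\I\otimes h_{\mathrm{dark}}$. A dark-mode bilinear $c(v_k)^\dagger c(v_j)$ picks up no string from the active modes placed before it.
\item $H_{\mathrm{act}}$ is an even-parity polynomial in the active modes only. Every even monomial on the first block of modes acts as $h_{\mathrm{act}}\otimes\I$, because the Jordan--Wigner strings of an even product of active operators cancel on the dark block. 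This is the one point that requires even parity, and it holds since $H_{\mathrm{imp}}$, $H_{0>}$ and the $A^{(\mathrm{act})}$ term are all even.
\item Consequently $[H_{\mathrm{act}},H_{\mathrm{dark}}]=0$ and $e^{-\beta H_{\mathrm{ref}}}=e^{-\beta h_{\mathrm{act}}}\otimes e^{-\beta h_{\mathrm{dark}}}$.
\end{itemize}
Normalizing each factor by its own trace gives $\rho^{(\beta)}_{\mathrm{ref}}=\rho^{(\beta)}_{\mathrm{act}}\otimes\rho^{(\beta)}_{\mathrm{dark}}$.
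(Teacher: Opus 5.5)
Your proposal is correct and follows the paper's proof. The identity is bookkeeping, the SVD of $A^{(12)}$ gives at most $2m$ terms with $\sigma_a\le\omega_*$ so that $\sum_a|v_a|\le 4m\omega_*$, and the factorization comes from even-parity Hamiltonians on disjoint mode sets; the only variation is that you group each Hermitian pair into one $U_a$ with $v_a=2\sigma_a$, whereas the paper splits it into two terms with $v_a=\sigma_a$.

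One small correction to your Jordan--Wigner bullet. With active modes ordered first, active-mode operators carry no string on the dark block at all, so nothing cancels there; it is the dark-mode bilinears whose active parity strings cancel pairwise, so evenness is used on the $H_{\mathrm{dark}}$ side (automatically, since it is quadratic) rather than for $H_{\mathrm{act}}$.
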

\begin{proof}
The decomposition $H=H_{\mathrm{ref}}+V_{\mathrm{res}}$ follows directly from the definitions. 
For $V_{\mathrm{res}}$, take the singular-value decomposition $A^{(12)}=\sum_{a=1}^{R_0}\sigma_a x_a y_a^\dagger$, where $R_0\leq\dim(\mathcal S_2)\leq2m$, $x_a\in\mathcal S_1$, $y_a\in\mathcal S_2$, and $\sigma_a\leq\|A^{(12)}\|\leq\omega_*$. Then $$V_{\mathrm{res}}=\sum_{a=1}^{R_0}\sigma_a \left(c(y_a)^\dagger c(x_a)+ c(x_a)^\dagger c(y_a)\right).$$ Viewing the two terms in the parentheses separately, each corresponding $U_a$ has norm at most one with coefficient $v_a\coloneqq\sigma_a$, so $\sum_a|v_a|\leq 2R_0 \omega_* \leq 4m\omega_*\leq6m\omega_*$.

Finally, since $H_{\mathrm{act}}$ and $H_{\mathrm{dark}}$ have even parity and act on disjoint sets of modes,
the thermal state of $H_{\mathrm{ref}}=H_{\mathrm{act}}+H_{\mathrm{dark}}$ factorizes as
    $\rho_{\mathrm{ref}}^{(\beta)}
    =\rho_{\mathrm{act}}^{(\beta)}\otimes\rho_{\mathrm{dark}}^{(\beta)}.$
\end{proof}

\paragraph{Partially compress the Hamiltonian.}  To apply Corollary~\ref{cor:compression_thermal_extension}, we include
$\mathcal S_2$ in the impurity single-particle space. Since
$\dim(\mathcal S_2)\leq 2m$, the impurity size remains constant.
Thus the corresponding $\omega$ of $H_{\mathrm{act}}$ satisfies 
\begin{align}
    \beta\omega\geq \beta \omega_*\geq  9\log n, \qquad \omega\leq \es.
\end{align}
Here we use the fact that the remaining bath single-particle space is $\cK_>\cap\cB$, so its single-particle energies lie in $[\omega_*,\es]$: the lower bound follows from $\cK_>$, while the upper bound follows from its support on the original bath Majoranas.

We then
preprocess $H_{\mathrm{act}}$ using the construction in 
Section~\ref{sec:band_krylov_rep}, which will apply the bandwise Krylov construction and construct the residual bath modes and enlarged impurity modes from the active modes $c(f_1),\ldots,c(f_{n-s_1})$. 
Let
$\{|x\rangle_{\mathrm{act}}:x\in\{0,1\}^{n-s_1}\}$ denote the Fock basis
defined by these modes, with the ordering fixed in
Section~\ref{sec:compress}.

Let $\rho^{(\beta)}$ denote the thermal state of $H$ at inverse
temperature $\beta$. By
Corollary~\ref{cor:compression_thermal_extension}, for every
$0<\delta<1$, there exists an efficiently enumerable set
$\bS\subseteq\{0,1\}^{n-s_1}$ and a corresponding subspace
\begin{align}
    \cH_{\bS}
    \coloneqq\operatorname{span}\{\,|x\rangle_{\mathrm{act}}:x\in\bS\,\},\qquad \dim(\cH_{\bS})
    \leq\poly(n, \beta, \es, \delta^{-1}),
\end{align}
such that
\begin{align}
\left\|
\rho^{(\beta)}
-
\frac{
P_{\bS}e^{-\beta P_{\bS}HP_{\bS}}P_{\bS}
}{
\tr\!\left(
P_{\bS}e^{-\beta P_{\bS}HP_{\bS}}P_{\bS}
\right)
}
\right\|_1
\leq\delta.
\end{align}
Here $P_{\bS}$ denotes the projector onto $\cH_{\bS}$. Thus, it suffices to work with
$P_{\bS}HP_{\bS}$ instead of $H$.

Regard $\widetilde H\coloneqq P_{\bS}HP_{\bS}$ as a Hamiltonian on
$\cH_{\bS}$ tensored with the Hilbert space defined by the dark modes.
Since $P_{\bS}$ acts trivially on the modes occurring in $H_{\mathrm{dark}}$ and
$V_{\mathrm{res}}$, these two terms remain unchanged under this restriction.
Therefore,
\begin{align}
    \widetilde H
    =
    P_{\bS}H_{\mathrm{act}}P_{\bS}
    +H_{\mathrm{dark}}+V_{\mathrm{res}},
    \qquad
    \beta\|V_{\mathrm{res}}\|\leq54m\log n.\label{eq:partial_compress}
\end{align}
Moreover, the matrix representation of $P_{\bS}H_{\mathrm{act}}P_{\bS}$ can be
constructed in time $\poly(n, |\bS|)$.

\subsection{Efficient classical simulation via continuous-time quantum Monte Carlo}\label{sec:thermal_classical}

To complete the classical simulation algorithm, we use continuous-time quantum Monte Carlo (CT-QMC), a standard numerical simulation technique~\cite{gull2011continuous,werner2006continuous,rubtsov2005continuous}. While its rigorous runtime generally has exponential dependence on $\beta$, our compression and soft--hard decomposition allow us to apply CT-QMC with an effective perturbation satisfying $\beta W=\cO(\log n)$, leading to a runtime polynomial in $\beta$.

\paragraph{Continuous-time quantum Monte Carlo.}
  Below, we recall only the basic idea and the complexity bound of CT-QMC; the detailed algorithm and its analysis are given in Appendix~\ref{sec:CT-QMC}.

Consider a decomposition
\begin{align}
    H=H_{\mathrm{ref}}+V,
    \qquad
    V=\sum_{a=1}^R v_a U_a,
    \qquad
    \|U_a\|\leq 1,\qquad  W\coloneqq\sum_{a=1}^R |v_a|.
\end{align}
where the coefficients $v_a\in\mathbb{R}$.  
The imaginary-time interaction-picture expansion expands the operator $e^{-\beta H}$ around $H_{\mathrm{ref}}$, which gives
\begin{align}
e^{-\beta H}
={}&
\sum_{r=0}^{\infty}(-1)^r
\sum_{a_1,\ldots,a_r}
\left(\prod_{j=1}^r v_{a_j}\right)
\int_{\Omega_r(\beta)}
e^{-(\beta-\tau_1)H_{\mathrm{ref}}}
U_{a_1}
e^{-(\tau_1-\tau_2)H_{\mathrm{ref}}}
\cdots
U_{a_r}
e^{-\tau_rH_{\mathrm{ref}}}
\,d\tau_1\cdots d\tau_r,
\label{eq:ctqmc-expansion}
\end{align}
where
$\Omega_r(\beta)
\coloneqq
\left\{
(\tau_1,\ldots,\tau_r):
\beta\geq\tau_1\geq\cdots\geq\tau_r\geq0
\right\}.$
Thus, each term in the expansion is specified by an expansion order $r$, indices
$a_1,\ldots,a_r$, and imaginary times
$\tau_1,\ldots,\tau_r$. For an observable $O$, define the corresponding configuration value
\begin{align}
\mathcal C_O(r,\boldsymbol a,\boldsymbol\tau)
\coloneqq
\tr\!\left(
O
e^{-(\beta-\tau_1)H_{\mathrm{ref}}}
U_{a_1}
e^{-(\tau_1-\tau_2)H_{\mathrm{ref}}}
\cdots
U_{a_r}
e^{-\tau_rH_{\mathrm{ref}}}
\right),
\label{eq:ctqmc-configuration}
\end{align}
where
$\boldsymbol a=(a_1,\ldots,a_r)$ and
$\boldsymbol\tau=(\tau_1,\ldots,\tau_r)$.
CT-QMC truncates the expansion at a finite order and estimates the numerator and denominator of the thermal expectation value by sampling these configurations. The following theorem summarizes the complexity bound that we need.

\begin{theorem}[Continuous-time quantum Monte Carlo]
\label{thm:ctqmc}
Let $\epsilon,\delta$ be two precision parameters.
Let $H$ be the Hamiltonian in Eq.~(\ref{eq:H_CT_QMC}).
 Let $O$ be a Hermitian observable with $\|O\|\leq1$.
Suppose that the reference partition function
$\tr(e^{-\beta H_{\mathrm{ref}}})$
and the configuration values
$\mathcal C_O(r,\boldsymbol a,\boldsymbol\tau)$ and $\mathcal C_\I(r,\boldsymbol a,\boldsymbol\tau)$
can be evaluated in time $T_{\mathrm{cfg}}(K)$ for every $r\leq K$, where
$
K=\cO\!\left(\beta W+\log\epsilon^{-1}\right).
$
Then there is a randomized classical algorithm that outputs  estimates $\widehat\mu$ and $\widehat Z$ such that
\begin{align*}
\Pr\!\left[
\left|
\widehat\mu
-
\tr\!\left(
O\frac{e^{-\beta H}}{\tr(e^{-\beta H})}
\right)
\right|
\leq\epsilon
\right]
\geq1-\delta,\qquad \Pr\!\left[ \left| \widehat Z - \tr\!\left( e^{-\beta H} \right) \right| \leq \epsilon\tr\!\left( e^{-\beta H} \right) \right] \geq1-\delta,
\end{align*}
with runtime
\[e^{\cO(\beta W)}
\operatorname{poly}\!\left(
K,\epsilon^{-1},\log\delta^{-1}
\right)
T_{\mathrm{cfg}}(K).\]
\end{theorem}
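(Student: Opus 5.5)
The plan is to truncate the Dyson series in Eq.~\eqref{eq:ctqmc-expansion} at order $K$ and estimate the truncated numerator $N_O\coloneqq\tr(Oe^{-\beta H})$ and denominator $Z\coloneqq\tr(e^{-\beta H})$ by importance sampling of configurations $(r,\boldsymbol a,\boldsymbol\tau)$. The ratio of the two estimates then gives $\widehat\mu$, and the denominator estimate gives $\widehat Z$. All quantities are normalized by the reference partition function $Z_{\mathrm{ref}}=\tr(e^{-\beta H_{\mathrm{ref}}})$, which we can compute by assumption.

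\emph{Step 1 (truncation).} For a single order-$r$ term, bound the trace of the integrand using H\"older's inequality: the product of the imaginary-time propagators has trace norm at most $\tr(e^{-\beta H_{\mathrm{ref}}})$ when $H_{\mathrm{ref}}$ is bounded below, after shifting if needed. Together with $\|O\|\le 1$ and $\|U_a\|\le 1$, this gives
\[
|\mathcal C_O(r,\boldsymbol a,\boldsymbol\tau)|\le Z_{\mathrm{ref}}.
\]
The order-$r$ contribution is therefore at most $Z_{\mathrm{ref}}(\beta W)^r/r!$. Summing the tail over $r>K$ gives an error of at most $Z_{\mathrm{ref}}e^{\beta W}(e\beta W/K)^{K}$, which is $\le \epsilon e^{-2\beta W}Z_{\mathrm{ref}}$ once $K=\cO(\beta W+\log\epsilon^{-1})$ with a large enough constant.

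\emph{Step 2 (relating $Z$ to $Z_{\mathrm{ref}}$).} This step is the main obstacle, because the sign problem means the Monte Carlo estimator only controls error relative to $Z_{\mathrm{ref}}$, not relative to $Z$. I will use the Golden--Thompson or Peierls--Bogoliubov bound $Z\ge e^{-\beta\|V\|}Z_{\mathrm{ref}}\ge e^{-\beta W}Z_{\mathrm{ref}}$, or more simply $e^{-\beta(H_{\mathrm{ref}}+V)}$ compared with $e^{-\beta H_{\mathrm{ref}}}e^{-\beta W}$ via $H\le H_{\mathrm{ref}}+W$. Hence an additive error of $\epsilon e^{-\beta W}Z_{\mathrm{ref}}$ is a relative error of $\epsilon$ for $Z$. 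The same bound converts additive error in $N_O$ into additive error $\cO(\epsilon)$ in the ratio.

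\emph{Step 3 (sampling).} Sample a configuration from the normalized positive measure $p(r,\boldsymbol a,\boldsymbol\tau)\propto\prod_j|v_{a_j}|\,d\boldsymbol\tau$ on $r\le K$. Concretely: draw $r$ from a truncated Poisson distribution with mean $\beta W$; draw each $a_j$ independently with probability $|v_{a_j}|/W$; draw the times $\boldsymbol\tau$ uniformly from the simplex $\Omega_r(\beta)$. The total mass of $p$ is $\sum_{r\le K}(\beta W)^r/r!\le e^{\beta W}$. The estimator is
\[
X=\Big(\sum_{r\le K}\frac{(\beta W)^r}{r!}\Big)\,(-1)^r\prod_j\sgn(v_{a_j})\,\frac{\mathcal C_O}{Z_{\mathrm{ref}}}.
\]
It is unbiased for the truncated $N_O/Z_{\mathrm{ref}}$ and satisfies $|X|\le e^{\beta W}$. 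By Hoeffding's inequality, using $e^{\cO(\beta W)}\epsilon^{-2}\log\delta^{-1}$ samples achieves additive error $\epsilon e^{-\beta W}$ with probability $1-\delta/2$. Doing the same for $O=\I$, taking a union bound, and combining with Steps 1--2 gives both claims. The total runtime is the number of samples times $T_{\mathrm{cfg}}(K)$ plus $\poly(K)$ sampling overhead, which matches the stated bound.
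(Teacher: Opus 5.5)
Your proposal is correct and follows essentially the same route as the paper: truncation at $K=\cO(\beta W+\log\epsilon^{-1})$ via the H\"older bound $|\mathcal C_O|\le Z_{\mathrm{ref}}$, the comparison $Z\ge e^{-\beta W}Z_{\mathrm{ref}}$, importance sampling (truncated-Poisson order, indices with probability $|v_a|/W$, uniform ordered times, sign weights), and Hoeffding; the paper only drops your factor $\sum_{r\le K}(\beta W)^r/r!$ from the estimator and thresholds the denominator estimate so that the ratio is always defined. Two small fixes: the $U_a$ need not be Hermitian (in the application they are $c(y_a)^\dagger c(x_a)$), so the configuration values can be complex and you should take real parts before applying Hoeffding, which is harmless because the truncated numerator $\tr(O\,\cdot)$ of a Hermitian truncated series is real; and Golden--Thompson only bounds $Z$ from above, so the lower bound must come from the operator inequality $H\le H_{\mathrm{ref}}+W\I$, or from Peierls--Bogoliubov.
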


Theorem~\ref{thm:ctqmc} follows by truncating the interaction-picture expansion in Eq.~\eqref{eq:ctqmc-expansion} at order
$K=\cO(\beta W+\log\epsilon^{-1})$
and applying Monte Carlo sampling to estimate the resulting numerator and partition function. We give the detailed algorithm and proof in Appendix~\ref{sec:CT-QMC}.

We are now ready to prove that thermal states can be simulated efficiently on a classical computer.

\begin{definition}[Gaussian observable]
	We say that an operator $O$ on fermionic modes is \emph{Gaussian} if it can be written as
$O=\alpha\exp\left(\frac14\sum_{j,k}A_{jk}\gamma_j\gamma_k\right)$
for some scalar $\alpha$ and antisymmetric matrix $A$. \end{definition}

For simplicity, we state the following theorem for Gaussian observables. The same algorithm also applies when $O$ can be written as a polynomial-size sum $O=\sum_{j=1}^{\poly(n)} O^{(j)}_{\mathrm{act}}\otimes O^{(j)}_{\mathrm{dark}},$ where $O^{(j)}_{\mathrm{act}}$ can be any operator that acts on the active modes and $O^{(j)}_{\mathrm{dark}}$ is Gaussian on the dark modes.

\begin{theorem}[Efficient classical simulation of thermal states]
\label{thm:classical-thermal}
Let $H=H_0+H_{\mathrm{imp}}$ be a quantum impurity model as defined in Eq.~(\ref{eq:impurity_model}). Let $O$ be a Hermitian Gaussian observable with $\|O\|\leq1$. Then, for any $\beta>0$ and $0<\epsilon<1$, there is a randomized classical algorithm that outputs estimates $\widehat\mu$ and $\widehat Z$ such that
\begin{align}
\Pr\!\left[
\left|
\widehat\mu
-
\tr\!\left(
O\rho^{(\beta)}
\right)
\right|
\leq\epsilon
\right]
\geq\frac{2}{3},\qquad \Pr\!\left[ \left| \widehat Z - \tr\!\left( e^{-\beta H} \right) \right| \leq \epsilon\tr\!\left( e^{-\beta H} \right) \right] \geq \frac{2}{3},
\end{align}
with runtime
    $\poly(n, \beta, \es, \epsilon^{-1}).$
\end{theorem}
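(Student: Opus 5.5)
The plan chains three earlier results: the soft--hard decomposition (Lemma~\ref{lem:soft-hard}), the partial compression Eq.~\eqref{eq:partial_compress}, and the CT-QMC bound (Theorem~\ref{thm:ctqmc}). First I set $\omega_*=9\log n/\beta$ and split $H=H_{\mathrm{act}}+H_{\mathrm{dark}}+V_{\mathrm{res}}$. I then preprocess $H_{\mathrm{act}}$ with the bandwise Krylov construction, putting $\mathcal S_2$ into the impurity so that $\beta\omega\ge 9\log n$. Next I apply Corollary~\ref{cor:compression_thermal_extension} with $M=H_{\mathrm{dark}}+V_{\mathrm{res}}$. This $M$ is even and commutes with every residual-bath mode, because it only involves dark modes and $\mathcal S_2$ modes, and $\mathcal S_2$ now lies in the enlarged impurity. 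The corollary is applied at precision $\epsilon/4$, which yields an enumerable set $\bS$ with $|\bS|=\poly(n,\beta,\es,\epsilon^{-1})$. It also guarantees that the thermal state and partition function of $\widetilde H=P_\bS H_{\mathrm{act}}P_\bS+H_{\mathrm{dark}}+V_{\mathrm{res}}$ approximate those of $H$ to trace distance and relative error $\epsilon/4$. Because $\|O\|\le1$, it therefore suffices to estimate $\tr(O\widetilde\rho)$ and $\widetilde Z$ to accuracy $\epsilon/2$.

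Second, I run CT-QMC (Theorem~\ref{thm:ctqmc}) on $\widetilde H$, taking $H_{\mathrm{ref}}=P_\bS H_{\mathrm{act}}P_\bS+H_{\mathrm{dark}}$ and $V=V_{\mathrm{res}}$. By Eq.~\eqref{eq:partial_compress}, $\beta W\le 54m\log n$, so $e^{\cO(\beta W)}=\poly(n)$ and the truncation order is $K=\cO(\log n+\log\epsilon^{-1})$. Since $m=\cO(1)$, the runtime is $\poly(n,\epsilon^{-1})\,T_{\mathrm{cfg}}(K)$, and boosting the success probability to $2/3$ is immediate. It remains to show that $T_{\mathrm{cfg}}(K)=\poly(n,|\bS|)$.

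Third, and this is the main obstacle, I must evaluate the configuration values. Each $U_a$ has the form $c(y)^\dagger c(x)$ or its adjoint, with $x\in\mathcal S_1$ (dark) and $y\in\mathcal S_2$ (active), so each $U_a$ factors as a product of an odd active operator and an odd dark operator. I will handle the fermionic signs with a Jordan--Wigner ordering that places the active modes before the dark modes, so that each $U_a$ becomes an ordinary tensor product $\pm\, X_a\otimes Y_a$, with $Y_a$ carrying a parity string. Then
\[
\mathcal C_O=\sum_j \tr\bigl(O^{(j)}_{\mathrm{act}}\,e^{-(\beta-\tau_1)\widetilde H_{\mathrm{act}}}X_{a_1}\cdots\bigr)\,\tr\bigl(O^{(j)}_{\mathrm{dark}}\,e^{-(\beta-\tau_1)H_{\mathrm{dark}}}Y_{a_1}\cdots\bigr).
\]
Here I need $O$ to decompose into active and dark factors. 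A Gaussian $O$ is the exponential of a quadratic form that mixes active and dark Majoranas, so I first rewrite it in the active/dark Majorana basis. I then write the exponential as a sum over a small number of terms, or treat it directly as a Gaussian operator on all modes. The cleaner route is to keep $O$ Gaussian globally and to pass $P_\bS$ through $O$ by expanding $O$ over the active Fock basis on $\bS$.

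The active factor is a product of $\poly(|\bS|)$-dimensional matrices, with $X_a$ restricted to $P_\bS$. This restriction is harmless because $X_a$ acts on the enlarged impurity, which commutes with $P_\bS$, so it can be computed by dense linear algebra. The dark factor is a trace of a product of Gaussian operators (the $e^{-\tau H_{\mathrm{dark}}}$ factors and the Gaussian part of $O$) with at most $K$ linear Majorana insertions. I will compute it by Wick's theorem as a Pfaffian, using the standard efficient formulas for traces of products of Gaussians in $\poly(n,K)$ time. Finally, $\tr(e^{-\beta H_{\mathrm{ref}}})$ factorizes into a dense trace and a free-fermion product $\prod(1+e^{-\beta\lambda})$, and the polynomial sizes combine to give the stated runtime $\poly(n,\beta,\es,\epsilon^{-1})$.
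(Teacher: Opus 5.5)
Your architecture matches the paper's: the soft--hard split, partial compression through Corollary~\ref{cor:compression_thermal_extension} with $M=H_{\mathrm{dark}}+V_{\mathrm{res}}$ (your argument that $[M,a_i]=0$ is right), CT-QMC with $\beta W=\cO(\log n)$, and the factorized reference partition function. Your error bookkeeping also works. The gap is in the step you flagged yourself, the dark-sector factor. After you expand $P_\bS O P_\bS=\sum_{x,y\in\bS}\op{y}{x}_{\mathrm{act}}\otimes O^{\mathrm{dark}}_{xy}$, the coefficient $O^{\mathrm{dark}}_{xy}=(\bra{x}_{\mathrm{act}}\otimes\I)\,O\,(\ket{y}_{\mathrm{act}}\otimes\I)$ is a partial matrix element of a Gaussian that couples active and dark Majoranas. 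It is a Gaussian with up to $|x|+|y|$ ladder insertions contracted against the active vacuum, and it is in general \emph{not} Gaussian. Three consequences follow:
\begin{itemize}
\item there is no ``Gaussian part of $O$'' to put into the dark trace;
\item your count of ``at most $K$ linear insertions'' is too low;
\item constructing $O^{\mathrm{dark}}_{xy}$ explicitly would cost $2^{s_1}$.
\end{itemize}
Your other option, splitting $O$ into a few active--dark product terms, is not available in general either.

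The missing step, which is what the paper does, is never to form $O^{\mathrm{dark}}_{xy}$ at all. Take each dark-sector operator $X_{\mathrm{dark}}$ produced by a configuration, i.e. the factors $e^{-\tau H_{\mathrm{dark}}}$ interleaved with the at most $K$ dark insertions, and use
\[
\tr_{\mathrm{dark}}\bigl(O^{\mathrm{dark}}_{xy}X_{\mathrm{dark}}\bigr)=\tr\bigl[\bigl(\op{y}{x}_{\mathrm{act}}\otimes X_{\mathrm{dark}}\bigr)O\bigr].
\]
Then note that $\op{y}{x}_{\mathrm{act}}$ is itself a product of $\cO(n)$ active ladder operators, up to sign:
\begin{itemize}
\item the creation operators for the occupied modes of $y$;
\item the active vacuum projector $\prod_i a_ia_i^\dagger\prod_k b_kb_k^\dagger$;
\item the annihilation operators for the occupied modes of $x$.
\end{itemize}
The right-hand side is therefore a correlation function of $\cO(n+K)$ linear fermionic operators inside a product of Gaussian operators ($O$ and the imaginary-time evolutions). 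The generalized Wick theorem evaluates it as Pfaffians of polynomial-size matrices. Summing over the $|\bS|^2$ pairs $(x,y)$, against active-sector traces computed by dense linear algebra on $\cH_\bS$, gives $T_{\mathrm{cfg}}(K)=\poly(n,K,|\bS|)$, and the rest of your argument goes through.

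A minor point: with active modes first in the Jordan--Wigner order, the parity string that a dark mode picks up sits on the active qubits. It therefore belongs in $X_a$, not $Y_a$. This is harmless, since it commutes with both $P_\bS$ and $P_\bS H_{\mathrm{act}}P_\bS$.
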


\begin{proof}
Apply the partial compression in Section~\ref{sec:soft_hard} with error parameter $\epsilon/2$ and denote the corresponding partially compressed Hamiltonian by
\begin{align}
    \widetilde H
    =
    P_{\bS}H_{\mathrm{act}}P_{\bS}
    +H_{\mathrm{dark}}+V_{\mathrm{res}},
\end{align}
where
$\dim(\cH_{\bS})=\poly(n, \beta, \es, \epsilon^{-1})$.
It suffices to estimate the thermal expectation value of $\widetilde H$ to precision $\epsilon/2$.
We apply Theorem~\ref{thm:ctqmc} with
\begin{align}
    \widetilde H_{\mathrm{ref}}
    \coloneqq
    P_{\bS}H_{\mathrm{act}}P_{\bS}+H_{\mathrm{dark}},
    \qquad
    V=V_{\mathrm{res}}.
\end{align}
By Lemma~\ref{lem:soft-hard},
    $W\coloneqq\sum_{a=1}^R|v_a|
    \leq6m\omega_*,$
and hence
$\beta W=\cO(\log n)$.

Moreover, by the tensor-product structure in Eq.~(\ref{eq:tensor_product}),
the reference partition function factorizes into the partition function of
$P_{\bS}H_{\mathrm{act}}P_{\bS}$ and that of $H_{\mathrm{dark}}$. Since
$P_{\bS}H_{\mathrm{act}}P_{\bS}$ acts on a
$\poly(n, \beta, \es, \epsilon^{-1})$-dimensional space and $H_{\mathrm{dark}}$ is
quadratic, the reference partition function can be evaluated in polynomial
time.

For the CT-QMC configuration values
$\mathcal C_O(r,\boldsymbol a,\boldsymbol\tau)$, we expand the observable
with respect to the active Fock basis as
\begin{align}
    P_{\bS} O P_{\bS}
    =
    \sum_{x,y\in\bS}
    \op{y}{x}_{\mathrm{act}}
    \otimes O_{xy}^{\mathrm{dark}},
    \qquad
    O_{xy}^{\mathrm{dark}}
    \coloneqq
    (\langle x|_{\mathrm{act}}\otimes I)
    O
    (|y\rangle_{\mathrm{act}}\otimes I).
\end{align}
Since $|\bS|=\poly(n,\beta,\es,\epsilon^{-1})$, this contains only
polynomially many terms. Moreover, by the SVD construction in the proof of
Lemma~\ref{lem:soft-hard}, each $U_a$ can be written, up to a fermionic
parity factor that can be absorbed into the active operator, as a product
of an operator acting only on the active modes and an operator acting only
on the dark modes. Therefore, using the tensor-product structure of
$\widetilde H_{\mathrm{ref}}$ as in Lemma~\ref{lem:soft-hard},
Eq.~(\ref{eq:tensor_product}), each configuration value reduces to a
polynomial number of products of an active-sector trace and a dark-sector
correlation function.

The active-sector trace can be evaluated directly on the polynomial-dimensional
space $\cH_{\bS}$. For the dark-sector factor, although
$O_{xy}^{\mathrm{dark}}$ need not itself be Gaussian, we do not have to construct it explicitly. Indeed, for any dark-sector operator
$X_{\mathrm{dark}}$ arising in a configuration, write
\begin{align}\label{eq:tracing_out_dark_Oxy}
    \tr_{\mathrm{dark}}
    \left(
        O_{xy}^{\mathrm{dark}} X_{\mathrm{dark}}
    \right)
    =
    \tr\left[
        \left(
            \op{y}{x}_{\mathrm{act}}
            \otimes X_{\mathrm{dark}}
        \right) O
    \right].
\end{align}
The outer product
$\op{y}{x}_{\mathrm{act}}$
can be written as a product of $\cO(n)$ fermionic creation and annihilation
operators. Since $X_{\mathrm{dark}}$ consists of Gaussian imaginary-time
evolutions together with at most $r\leq K$ fermionic insertions, and since $O$
is Gaussian, the right-hand side of Eq.~\eqref{eq:tracing_out_dark_Oxy} is a correlation function of
$\cO(n+K)$ fermionic operators within a product of fermionic Gaussian operators.
By the generalized Wick theorem, such correlation functions can be computed by
Pfaffians of matrices of polynomial size~\cite{lieb1968theorem,balian1969nonunitary}, while the remaining
Gaussian trace is evaluated by the standard Pfaffian formula
for fermionic Gaussian operators~\cite{klich2014note,bravyi2005lagrangian}.
Thus each dark-sector factor is computable in polynomial time, and hence
\begin{align}
    T_{\mathrm{cfg}}(K)
    =
    \poly(K,n,\beta,\es,\epsilon^{-1}).
\end{align}

The same argument applies to $\mathcal C_\I(r,\boldsymbol a,\boldsymbol\tau)$ by taking $O=\I$, for which $P_{\bS}\I P_{\bS}$ is simply the identity operator on the compressed Hilbert space.
Thus, by Corollary~\ref{cor:compression_thermal_extension} and Theorem~\ref{thm:ctqmc} with accuracy $\epsilon/2$ and failure probability $1/3$, we obtain the claimed estimates for both the thermal expectation value and the partition function.
\end{proof}

\subsection{Efficient thermal-state preparation from quantum belief propagation}
\label{sec:thermal_quantum}

In \cref{sec:thermal_classical}, we combined the soft--hard decomposition and partial compression with CT-QMC to obtain an efficient classical algorithm for estimating thermal expectation values and the partition function. In this section, we show that the same structural results, when combined with quantum belief propagation (QBP)~\cite{hastings2007quantum}, lead to an efficient quantum algorithm for preparing the thermal state itself.

Suppose we wish to prepare the Gibbs state of a Hamiltonian $H_{\mathrm{ref}} + V$. Assuming an efficient circuit to prepare (the purification of) the Gibbs state of $H_{\mathrm{ref}}$, our algorithm achieves this with a complexity that scales exponentially only in $\beta\|V\|$ and polynomially in all other parameters. We are not aware of any prior generically applicable thermal-state preparation algorithm featuring this $e^{\O(\beta\|V\|)}$ scaling, although some are close (for example,~\cite{holmes2022quantum} achieves a scaling of $e^{\O(\beta\|V\|/\delta)}$ for trace-distance error $\delta$). Although QBP has existed for almost two decades~\cite{hastings2007quantum}, to our knowledge it had not previously been applied to the general $H_{\mathrm{ref}} + V$ scenario. Instead, most applications assumed some physical conditions such as bounded correlation lengths~\cite{kato2019quantum,brandao2019finite}.

QBP describes how a thermal state changes under a perturbation to the Hamiltonian. Consider
\begin{align}
H_s\coloneqq H_{\mathrm{ref}}+sV,\qquad s\in[0,1].
\end{align}
Define
\begin{align}
\Phi_s
&\coloneqq
\int_{-\infty}^\infty
f_\beta(t)e^{-itH_s}Ve^{itH_s}\,dt,
\qquad
\eta_s
\coloneqq
\TO\exp\left(
-\frac{\beta}{2}\int_0^s\Phi_r\,dr
\right),
\end{align}
where $f_\beta(t)$ is some probability distribution (see \cref{eq:Phi_s_defn} of \cref{sec:QBP} for the definition). The QBP operator $\eta_s$ satisfies the key identity:
\begin{align}
\eta_s e^{-\beta H_{\mathrm{ref}}} \eta_s^\dagger
=
e^{-\beta H_s}.
\end{align}
Thus, starting from the thermal state of $H_{\mathrm{ref}}$, the QBP operator transforms it to the thermal state of $H_{\mathrm{ref}} + V$. We give a quantum algorithm that block-encodes the QBP operator based on the linear combination of Hamiltonian simulation technique~\cite{an2023linear,low2025optimal}. We provide a detailed analysis of its construction and complexity in \cref{sec:QBP}, and we summarize the resulting guarantee below.

\begin{theorem}[Quantum thermal-state preparation via QBP]
\label{thm:QBP_prep_alg_main}
Let $\delta > 0$ and $\beta > 0$. Suppose that we have query access to block encodings of Hamiltonians $H_{\mathrm{ref}}$ and $V$ with normalizations $\lambda_{\mathrm{ref}}, \lambda_V$ respectively. Also assume that we have access to a circuit $U_{\mathrm{ref}}$ preparing the purification of the
thermal state of $H_{\mathrm{ref}}$ at inverse temperature $\beta$.
Then, there is a quantum algorithm that prepares a purification of the
thermal state of $H_{\mathrm{ref}}+V$ at inverse temperature $\beta$, with trace-distance error at most $\delta$, and with query complexity (to the block encodings and $U_{\mathrm{ref}}$) of
\[
e^{\O(\beta \|V\|)} \poly(\lambda_{\mathrm{ref}},  \lambda_V, \beta, \delta^{-1}).
\]
\end{theorem}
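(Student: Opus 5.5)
The approach is to implement $\eta_1$ as a block-encoded operator and apply it to the purified reference Gibbs state, following the QBP identity $\eta_1 e^{-\beta H_{\mathrm{ref}}}\eta_1^\dagger = e^{-\beta H}$. Concretely, let $\ket{\Psi_{\mathrm{ref}}}=U_{\mathrm{ref}}\ket{0}$ be the purification with $\tr_{\mathrm{anc}}\op{\Psi_{\mathrm{ref}}}{\Psi_{\mathrm{ref}}}=e^{-\beta H_{\mathrm{ref}}}/Z_{\mathrm{ref}}$. Then
\[
(\eta_1\otimes \I)\ket{\Psi_{\mathrm{ref}}}
\]
is, up to normalization, a purification of $e^{-\beta H}/Z$, with squared norm $Z/Z_{\mathrm{ref}}$. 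By Golden--Thompson-type or Peierls bounds, $Z/Z_{\mathrm{ref}}\ge e^{-\beta\|V\|}$. Amplitude amplification on a block encoding of $\eta_1/\Lambda$ therefore costs $\O(\Lambda e^{\beta\|V\|/2})$ repetitions. It remains to build a block encoding of $\eta_1$ with normalization $\Lambda=e^{\O(\beta\|V\|)}$ and operator-norm error $\epsilon$. A purified-state error $\epsilon'$ then translates into trace-distance error $\O(\epsilon' e^{\O(\beta\|V\|)})$, so choosing $\epsilon'=\delta e^{-\O(\beta\|V\|)}$ suffices.

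The first key step is a block encoding of $\Phi_s$. I will truncate $f_\beta$ to $|t|\le T$ with $T=\poly(\beta,\log\delta^{-1})$, using its rapid decay in $t$ (the filter is chosen so its Fourier transform is $\tanh(\beta\omega/2)/(\beta\omega/2)$, giving exponential tails). I then discretize the integral by quadrature into $\sum_k w_k e^{-it_kH_s}Ve^{it_kH_s}$ with $\sum_k|w_k|\le 1+\epsilon$, since $f_\beta$ is a probability density. This is a linear combination of Hamiltonian simulations (LCHS), and each $e^{\pm it_kH_s}$ is implemented by qubitization from the block encodings of $H_{\mathrm{ref}}$ and $V$ in $\O(\lambda_{\mathrm{ref}}+\lambda_V)T+\log\epsilon^{-1}$ queries. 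LCU then gives a block encoding of $\Phi_s$ with normalization about $\lambda_V$ and $\|\Phi_s\|\le\|V\|$.

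The second step is the time-ordered exponential. I will discretize $[0,1]$ into $N$ steps and approximate $\eta_1\approx\prod_{j}\exp(-\tfrac{\beta}{2N}\Phi_{s_j})$, with each factor in turn approximated by $\I-\tfrac{\beta}{2N}\Phi_{s_j}$. The latter is a two-term LCU with normalization $1+\tfrac{\beta\lambda_V}{2N}$. The product of $N$ such block encodings has normalization $(1+\beta\lambda_V/2N)^N\le e^{\beta\lambda_V/2}$. This is where I expect the main obstacle: the natural normalization is $e^{\O(\beta\lambda_V)}$, not $e^{\O(\beta\|V\|)}$. To fix it, I would first rescale via a preliminary step, e.g. use QSVT to obtain a block encoding of $V$ with normalization $\O(\|V\|)$ (given a known upper bound on $\|V\|$), at a cost $\poly(\lambda_V/\|V\|)$ absorbed into the polynomial factor. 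Alternatively, I would apply the product of factors with intermittent renormalization through oblivious amplitude amplification, which keeps the success probability controlled by the true norm $\|\eta_1\|\le e^{\beta\|V\|/2}$.

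The error budget then combines three sources: the Trotter-like error, of order $\beta^2\|V\|^2 L_\Phi/N$, where the derivative $\partial_s\Phi_s$ is bounded by $\poly(\beta,T,\|V\|)$ through a Duhamel expansion; the truncation and quadrature errors; and the simulation errors. Each is made polynomially small with polynomial overhead. Multiplying through, the query complexity comes out as $e^{\O(\beta\|V\|)}\poly(\lambda_{\mathrm{ref}},\lambda_V,\beta,\delta^{-1})$, as claimed.
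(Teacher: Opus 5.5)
Your argument is sound, but it reaches the result by a different route from the paper at two points.

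First, you apply $\eta_1\otimes\I$ to an arbitrary purification of $\rho_{\mathrm{ref}}$. This is legitimate, since the system marginal is $\eta_1\rho_{\mathrm{ref}}\eta_1^\dagger=e^{-\beta H}/Z_{\mathrm{ref}}$, and it is simpler than the paper's construction. The paper instead applies $\eta_1\otimes\eta_1^*$ (for the halved path $(H_{\mathrm{ref}}+s\hat V)/2$) to the canonical purification $\ket{\sqrt{\rho_0}}$, so as to land exactly on $c\ket{\sqrt{\rho_1}}$. It thereby obtains the canonical purification, which the stated theorem does not require, at the price of needing the entrywise-conjugated circuit, which you avoid.

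Second, the paper removes the normalization obstacle you flag by shifting rather than amplifying. It sets $\hat V=V+\|V\|\I\succeq 0$, so that $\Phi_s\succeq 0$ and $\|\eta_s\|\le 1$. It then block encodes $\eta_1$ with $\Theta(1)$ normalization via the Low--Somma LCHS formula. The real-time pieces of that formula are time-dependent Hamiltonian simulations of $R\tfrac{\beta}{2}\Phi_r$, built from a Gaussian-quadrature LCU for $\Phi_r$. The only exponential factors then come from amplitude amplification with $c\ge e^{-\beta\|V\|}$ and from the correspondingly finer target precision.

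Your Euler product of two-term LCUs, rescued by uniform singular-value amplification of the $V$ block encoding, uses only elementary block-encoding tools and does give $e^{\O(\beta\|V\|)}\poly(\cdot)$. It is less efficient, for three reasons:
\begin{itemize}
\item the product normalization $e^{\O(\beta\|V\|)}$ compounds with the amplification cost;
\item by Gr\"onwall the Euler error is of order $e^{\beta\|V\|}\beta^2\|V\|^2/N$ (using $L_\Phi=\O(\beta\|V\|^2)$), so you need $N=e^{\O(\beta\|V\|)}\poly(\beta,\|V\|,\delta^{-1})$ factors;
\item the ancilla count grows linearly in $N$, whereas the paper keeps logarithmic ancilla overhead.
\end{itemize}
Both routes implicitly assume a known $\O(\|V\|)$ bound: the paper for its shift, and you for the amplification.

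Two small repairs are needed. Amplify to normalization $\Theta(\|V\|+\beta^{-1})$ rather than $\Theta(\|V\|)$, so the amplification costs $\O(\beta\lambda_V\log(\cdot))$ instead of the possibly unbounded $\poly(\lambda_V/\|V\|)$. And drop the oblivious-amplitude-amplification alternative, since it does not apply to the non-unitary Euler factors.
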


The gate complexity is also efficient; see \cref{sec:QBP} for details. The key feature of our structural result for impurity models is that $\beta \|V\|=\cO(\log n)$, while $H_{\mathrm{ref}}$ can be handled efficiently due to the compression.

\begin{corollary}[Efficient quantum thermal-state preparation]
\label{cor:quantum-thermal}
Let $H=H_0+H_{\mathrm{imp}}$ be a quantum impurity model as
defined in \cref{eq:impurity_model}. Then, for any $\beta>0$ and
$0<\delta<1$, there is a quantum algorithm that prepares a state
$\sigma$ satisfying
$\|\sigma-\rho^{(\beta)}\|_1\leq\delta$ with runtime
$\poly(n,\beta,\es,\delta^{-1})$.
\end{corollary}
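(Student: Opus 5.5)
We combine the structural reduction of Section~\ref{sec:soft_hard} with Theorem~\ref{thm:QBP_prep_alg_main}. First, we apply the soft--hard decomposition (Lemma~\ref{lem:soft-hard}) with $\beta\omega_*=9\log n$. Then we apply the partial compression of Corollary~\ref{cor:compression_thermal_extension} with error parameter $\delta/2$. This gives
\[
\widetilde H=P_{\bS}H_{\mathrm{act}}P_{\bS}+H_{\mathrm{dark}}+V_{\mathrm{res}},
\]
where the Gibbs state of $\widetilde H$ (on $\cH_\bS$ tensored with the dark modes) is $\delta/2$-close in trace distance to $\rho^{(\beta)}$. Here $\dim\cH_\bS=\poly(n,\beta,\es,\delta^{-1})$ and $\beta\|V_{\mathrm{res}}\|\le 54m\log n$. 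It therefore suffices to prepare the Gibbs state of $\widetilde H$ to error $\delta/2$. We embed $\cH_\bS$ into $\lceil\log_2|\bS|\rceil$ qubits, padding the unused basis states so that they carry zero Gibbs weight, and we implement the dark modes by Jordan--Wigner on $s_1$ qubits.

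Next we verify the hypotheses of Theorem~\ref{thm:QBP_prep_alg_main}, taking $\widetilde H_{\mathrm{ref}}=P_\bS H_{\mathrm{act}}P_\bS+H_{\mathrm{dark}}$ and $V=V_{\mathrm{res}}$.

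\emph{Block encodings.} $P_\bS H_{\mathrm{act}}P_\bS$ is an explicit $|\bS|\times|\bS|$ matrix whose entries are computed classically in $\poly(n,|\bS|)$ time. A generic dense-matrix block encoding of it therefore costs $\poly(|\bS|)$ gates, with normalization at most $|\bS|\,\|H\|=\poly$. $H_{\mathrm{dark}}$ is quadratic, with a block encoding of normalization $\cO(n\,\omega_*)$ obtained from an LCU over its Majorana terms. $V_{\mathrm{res}}=\sum_a v_aU_a$ has $\cO(m)$ terms. Each $U_a$ is a product of an active-mode operator $c(y_a)$ with $y_a\in\mathcal S_2$ (expressed in the compressed active basis, which again needs a block encoding restricted to $\cH_\bS$) and a dark operator $c(x_a)$. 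We must check that $P_\bS$ commutes with these active operators. This holds because $\mathcal S_2$ belongs to the enlarged impurity, on which $P_\bS$ acts trivially (Theorem~\ref{thm:exp_moment_poly_support}). Hence $V_{\mathrm{res}}$ is unchanged by the restriction, as already noted in Eq.~\eqref{eq:partial_compress}.

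\emph{Reference-state preparation.} By Eq.~\eqref{eq:tensor_product}, the Gibbs state of $\widetilde H_{\mathrm{ref}}$ factorizes into an active factor and a dark factor. For the active factor, we diagonalize the $|\bS|$-dimensional matrix classically and prepare its purification $\sum_x\sqrt{p_x}\ket{x}\ket{x}$ by generic state preparation followed by the diagonalizing unitary, at $\poly(|\bS|)$ cost. For the dark factor, a free-fermion Gibbs state, we diagonalize $A^{(11)}$, prepare independent purified qubits in the rotated modes, and apply a Gaussian (Givens-rotation) circuit of $\poly(n)$ gates.

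With these ingredients, Theorem~\ref{thm:QBP_prep_alg_main} gives query complexity $e^{\cO(\beta\|V\|)}\poly(\lambda_{\mathrm{ref}},\lambda_V,\beta,\delta^{-1})=n^{\cO(m)}\poly(n,\beta,\es,\delta^{-1})$. Multiplying by the polynomial per-query gate costs and tracing out the purifying register yields $\sigma$. The triangle inequality then gives total error at most $\delta$. Finally, undoing the single-particle basis rotations (Bogoliubov transformation, bandwise Krylov basis, and active/dark basis) maps $\sigma$ back to the original Jordan--Wigner encoding, if needed, at $\poly(n)$ Gaussian-circuit cost.

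\textbf{Main obstacle.} The delicate point is bookkeeping rather than analysis. We must ensure that the compressed active space and the dark modes coexist consistently as a tensor product under the fermionic parity conventions, so that the $U_a$ and $P_\bS H_{\mathrm{act}}P_\bS$ are genuinely block-encodable on qubits with the stated normalizations. The parity-factor absorption noted in the proof of Theorem~\ref{thm:classical-thermal} handles this. Everything else is direct from the earlier results.
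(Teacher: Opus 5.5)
Your proposal is correct and follows the paper's proof. The paper likewise uses the soft--hard decomposition and partial compression to reduce to $\widetilde H_{\mathrm{ref}}+V_{\mathrm{res}}$ with $\beta\|V_{\mathrm{res}}\|\leq 54m\log n$, then invokes Theorem~\ref{thm:QBP_prep_alg_main} together with the compression error. Your extra bookkeeping fills in details the paper leaves implicit: the block encodings, the factorized reference purification, and undoing the basis rotations. One small correction: to obtain the canonical purification that the QBP routine assumes, the diagonalizing unitary $U$ must be applied as $U\otimes U^*$ on system and purifier, not to the system alone.
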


\begin{proof}
By the partial compression in Section~\ref{sec:soft_hard}, it suffices to prepare the thermal state of
\begin{align}
\widetilde H
=
\widetilde H_{\mathrm{ref}}+V_{\mathrm{res}},
\qquad
\widetilde H_{\mathrm{ref}}
\coloneqq
P_{\bS}H_{\mathrm{act}}P_{\bS}+H_{\mathrm{dark}},
\end{align}
where the reference thermal state can be prepared efficiently and
$\beta\|V_{\mathrm{res}}\|\leq54m\log n=\cO(\log n)$.
The result therefore follows directly from
Theorem~\ref{thm:QBP_prep_alg_main} together with the compression error.
\end{proof}

\section{Hardness of simulating dynamical properties}

Here we show that two-point correlation functions of time-dependent impurity Hamiltonians are hard for classical computers to compute, but are easy for quantum computers. Specifically, we show that the problem is $\DQC_1$-complete if the initial state is at infinite temperature, and is $\BQP$-complete for any inverse temperature $\beta \in [\Omega(1), \poly(n)]$. This encompasses the Green's function in nonequilibrium DMFT, which is a central object of that method.

\subsection{Complexity classes}

We work with the definition of $\DQC_1$ from Brandão's thesis~\cite{brandao2008entanglement}, which automatically allows any efficient classical sideprocessor ($\BPP$) alongside the restricted one-clean-qubit machine. This version more naturally captures the power of the quantum computation allowed within this model.

\begin{definition}[$\DQC_1$]
    Let $L = (L_{\mathrm{yes}},L_{\mathrm{no}})$ be a promise problem. We say that $L \in \DQC_1$ if there is a polynomial $q(n)$, functions $a ,b : \N \to [0,1]$ satisfying $a(n) - b(n) \geq 1/{\poly(n)}$, and a family of poly-size quantum circuits $C_x$ acting on $q(n)$ qubits, generated in polynomial time, such that, defining
    \begin{equation}
        p_x \coloneqq
        \tr\l[
            \l(\op{1}{1} \otimes \I\r)
            C_x
            \l(
                \op{0}{0}
                \otimes \frac{\I}{2^{q(n)-1}}
            \r)
            C_x^\dagger
        \r],
    \end{equation}
    we have $p_x \geq a(n)$ for $x \in L_{\mathrm{yes}}$ and $p_x \leq b(n)$ for $x \in L_{\mathrm{no}}$. The one-clean-qubit computation may be repeated polynomially many times, with the outcomes processed by a probabilistic polynomial-time classical computer.
\end{definition}

We will also use the standard circuit definition of (promise) $\BQP$.

\begin{definition}[$\BQP$]
    Let $L = (L_{\mathrm{yes}},L_{\mathrm{no}})$ be a promise problem. We say that $L \in \BQP$ if there is a polynomial $q(n)$ and a family of poly-size quantum circuits $C_x$ acting on $q(n)$ qubits, generated in polynomial time, such that, defining
    \begin{equation}
        p_x \coloneqq
        \tr\l[
            \l(\op{1}{1} \otimes \I\r)
            C_x
            \op{0^{q(n)}}{0^{q(n)}}
            C_x^\dagger
        \r],
    \end{equation}
    we have $p_x \geq 2/3$ for $x \in L_{\mathrm{yes}}$ and $p_x \leq 1/3$ for $x \in L_{\mathrm{no}}$.
\end{definition}

From these definitions it is clear that $\BPP \subseteq \DQC_1 \subseteq \BQP$. It is also conjectured that both inclusions are strict, since $\DQC_1$ can solve problems believed to be hard for classical computers~\cite{knill1998power,shor2008estimating}. Thus, $\DQC_1$ serves as an intermediate class likely to be beyond classical simulation, yet not fully capturing the power of quantum computation. Note also that $\DQC_k$, where we generalize to $k$ clean qubits, remains equivalent to $\DQC_1$ for any $k = \O(\log n)$~\cite{shepherd2006computation}.

The canonical $\DQC_1$-complete problem is unitary trace estimation~\cite{shepherd2006computation,shor2008estimating}.

\begin{problem}[Unitary trace estimation]\label{prob:trace_estimation}
    Let $C$ be a poly-size quantum circuit on $n$ qubits and $\epsilon > 0$ an error parameter. The goal is to output a number $\mu \in \C$ such that $|\mu - \frac{1}{2^n} \tr(C)| \leq \epsilon$, with probability at least $2/3$.
\end{problem}

\begin{proposition}
    Trace estimation is $\DQC_1$-complete for $\epsilon = 1/{\poly(n)}$.
\end{proposition}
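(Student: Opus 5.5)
The proof has two directions: showing trace estimation lies in $\DQC_1$, and showing every $\DQC_1$ problem reduces to it. Both rest on the Hadamard test with a maximally mixed input register; this is the standard argument of Knill--Laflamme and Shor--Jordan, specialized to the definition of $\DQC_1$ given above.

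\emph{Membership.} Given $C$ on $n$ qubits, I use one clean qubit as a control and $n$ maximally mixed qubits as the target register. The circuit applies a Hadamard to the control, then controlled-$C$, then either $H$ or $S^\dagger H$ (for the real and imaginary parts, respectively), and measures the control. Since controlled-$C$ has a poly-size circuit, this is a legitimate $C_x$. For the real-part circuit, the probability of outcome $0$ is $\tfrac12(1+\Re\,\tr(C)/2^n)$, and the imaginary-part circuit gives the analogous formula with $\operatorname{Im}$. The definition measures $\op{1}{1}$, so I either append an $X$ on the control or use $1-p$.

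Because the definition allows polynomially many repetitions followed by $\BPP$ postprocessing, I repeat each circuit $\cO(\epsilon^{-2})$ times. By Hoeffding's inequality, the empirical frequencies estimate both real and imaginary parts to within $\epsilon/2$ with probability at least $2/3$. Hence $\epsilon = 1/\poly(n)$ suffices. Strictly speaking, the output is a function problem rather than a promise problem; I will remark that the decision version (is $\Re\,\tr(C)/2^n \ge a$ or $\le b$) fits the definition verbatim.

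\emph{Hardness.} Take $L\in\DQC_1$ with circuit $C_x$ on $q$ qubits. I need to express $p_x$ as a normalized trace. Write
\[
p_x = \tfrac{1}{2^{q-1}}\tr\!\left[(\op{1}{1}\otimes\I)\,C_x(\op{0}{0}\otimes\I)\,C_x^\dagger\right].
\]
Both projectors can be replaced by reflections: $\op{1}{1}=\tfrac12(\I-Z)$ and $\op{0}{0}=\tfrac12(\I+Z)$. Expanding the product, the term $\tr(C_xC_x^\dagger)$ is trivial, and the terms containing a single $Z$ are traces of $Z_1$ or of $C_x Z_1 C_x^\dagger$, which vanish. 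This leaves
\[
p_x=\tfrac12-\tfrac{1}{2}\cdot\tfrac{1}{2^q}\tr\!\left(Z_1C_xZ_1C_x^\dagger\right).
\]
The operator $U\coloneqq Z_1C_xZ_1C_x^\dagger$ is a poly-size unitary circuit on $q$ qubits, computable from $x$. Estimating $\tr(U)/2^q$ to accuracy $\epsilon < (a-b)/4 = 1/\poly$ therefore distinguishes $p_x\ge a$ from $p_x\le b$. The reduction is a deterministic polynomial-time map followed by one call to the trace-estimation oracle, so it is compatible with the $\BPP$ sideprocessing.

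The main obstacle is bookkeeping rather than mathematics. First, the outputs of the two problems have different shapes, so I have to be explicit that completeness is meant for the decision or promise version with a $1/\poly$ gap. Second, I must check that the vanishing cross terms are genuinely traceless, which holds because $\tr(Z_1)=0$ and the trace is cyclic.
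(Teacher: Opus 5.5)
Your proof is correct, and it is the standard argument. The Hadamard test on a maximally mixed register gives membership, and the identity $p_x=\tfrac12-\tfrac12\,2^{-q}\tr(Z_1C_xZ_1C_x^\dagger)$ reduces any $\DQC_1$ instance to estimating the normalized trace of a poly-size unitary. The paper does not prove this proposition itself; it cites Shepherd and Shor--Jordan, whose proofs take this same route, so your treatment of completeness via the promise (threshold) version is all that is needed.
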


The canonical $\BQP$-complete problem is quantum circuit acceptance~\cite{bernstein1993quantum,kitaev2002classical}.

\begin{problem}[Quantum circuit acceptance]\label{prob:circuit_accept}
    Let $C$ be a poly-size quantum circuit on $n$ qubits and define
    \begin{equation}
        p_C \coloneqq
        \tr\l[
            \l(\op{1}{1} \otimes \I\r) \cdot
            C
            \op{0^n}{0^n}
            C^\dagger
        \r].
    \end{equation}
    We are promised that either $p_C \geq 2/3$ or $p_C \leq 1/3$, and the goal is to decide which is the case.
\end{problem}

\begin{proposition}
    Quantum circuit acceptance is $\BQP$-complete.
\end{proposition}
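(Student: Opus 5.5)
The proposition is the standard statement that quantum circuit acceptance is $\BQP$-complete. I will prove it by verifying membership and hardness directly from the circuit definition of $\BQP$ given above. No fermionic machinery is needed.

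\emph{Membership.} Given an instance $C$, the quantum algorithm prepares $\ket{0^n}$, applies the poly-size circuit $C$, and measures the first qubit in the computational basis. It outputs ``yes'' iff the outcome is $1$. This outcome occurs with probability exactly $p_C$. Hence the acceptance probability is at least $2/3$ on yes-instances and at most $1/3$ on no-instances, which matches the definition of $\BQP$ with $C_x\coloneqq C$. The circuit is generated from the input description in polynomial time, since the description is the input itself.

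\emph{Hardness.} Let $L=(L_{\mathrm{yes}},L_{\mathrm{no}})\in\BQP$ with uniform family $\{C_x\}$ on $q(n)$ qubits. The reduction maps $x\mapsto C_x$, which is computable in polynomial time by the uniformity assumption. By definition, $p_{C_x}=p_x$, so $x\in L_{\mathrm{yes}}$ implies $p_{C_x}\geq 2/3$ and $x\in L_{\mathrm{no}}$ implies $p_{C_x}\leq 1/3$. The promise of the acceptance problem is therefore satisfied on the image of the reduction, and the answers agree. The size parameter of the instance is $q(n)=\poly(n)$, so polynomial-size quantities in the new instance remain polynomial in $n$.

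The only point needing care is conventions: whether the accepting outcome is $1$ on the first qubit, and whether the error thresholds are $2/3$ and $1/3$. If a different convention for $\BQP$ is used, for example constant gap $a-b\geq 1/\poly$, I would first amplify by running polynomially many parallel copies. I would then compute the majority reversibly into a fresh output qubit and swap it to the first position. A Chernoff bound gives error $\leq 1/3$, and the circuit remains poly-size. This amplification step is the only potential obstacle, and it is routine.
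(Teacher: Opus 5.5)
Your proof is correct. With the paper's definitions, membership and hardness are essentially tautological: the measured qubit and the $2/3$, $1/3$ thresholds coincide, the reduction is $x\mapsto C_x$, and the only extra step is trivially padding the instance circuit with idle qubits to $q(|x|)$ qubits. The paper gives no proof and simply cites this as the standard fact from Bernstein--Vazirani and Kitaev--Shen--Vyalyi, so your direct check from the definitions is the standard argument behind that citation. Your amplification remark is not needed under these conventions.
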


\subsection{Statement of results}

Now we introduce the problem for impurity models. Throughout, we tacitly assume that the time-dependent impurity Hamiltonian $H : t \mapsto H(t)$ is given in an efficiently computable representation for any $t$, accurate up to inverse-polynomial precision. We also assume that all coefficients in $H(t)$ are bounded by $1$ for all time (i.e., we do not need polynomially large interaction strengths).

\begin{problem}[Time-dependent correlation function estimation]\label{prob:corr_func}
    We are given as inputs:~a time-dependent impurity Hamiltonian $H(t)$, two Majorana operators $A, B$, an inverse temperature $\beta \geq 0$, two real times $t, t' \geq 0$, and an error parameter $\epsilon \geq 0$. Let $\sigma_\beta$ be the Gibbs state of $H(0)$ at inverse temperature $\beta$ and $U(t) = \TO \exp\l(-i \int_0^t H(s) \, ds\r)$ be the time-evolution operator. The goal is to output a number $\mu \in \C$ such that
    \begin{equation}
    \begin{split}
        | \mu - \tr(\sigma_\beta &\cdot A(t) B(t')) | \leq \epsilon,\\
        A(t) \equiv U(t)^\dagger A \, U(t), &\quad B(t') \equiv U(t')^\dagger B \, U(t')
    \end{split}
    \end{equation}
    with probability at least $2/3$.
\end{problem}

For example, solving this problem four times with $A, B \in \{\gamma_{2j-1}, \gamma_{2j}, \gamma_{2k-1}, \gamma_{2k}\}$ allows us to compute Green's functions such as
\begin{equation}
    G_{jk}(t, t') = -i\langle a_j(t) \, a_k^\dagger(t') \rangle_\beta,
\end{equation}
which are central to the study of nonequilibrium physics~\cite{aoki2014nonequilibrium}. We show that this problem is hard for classical computers, even at infinite temperature, under the assumption that $\BPP \neq \DQC_1$.

\begin{theorem}\label{thm:DQC1-complete}
    Problem \ref{prob:corr_func} is $\DQC_1$-complete for $\epsilon = 1/{\poly(n)}$, $t, t' = \poly(n)$, and $\beta = 0$.
\end{theorem}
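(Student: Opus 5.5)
Membership and hardness are handled separately. For membership in $\DQC_1$: at $\beta=0$ we have $\sigma_0=\I/2^n$. Writing $V\coloneqq U(t)U(t')^\dagger$, cyclicity gives
\begin{equation}
\tr(\sigma_0 A(t)B(t')) = 2^{-n}\tr\l(A\,V\,B\,V^\dagger\r).
\end{equation}
Under Jordan--Wigner, $A$ and $B$ are Pauli strings, so $AVBV^\dagger$ is a unitary. The plan is as follows.
\begin{itemize}
\item Implement $U(t)$ and $U(t')^\dagger$ to $1/\poly(n)$ accuracy with $\poly(n)$ gates, using a standard time-dependent Trotter/Dyson simulation. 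This works because $H(s)$ has $\poly(n)$ terms with coefficients bounded by $1$ and $t,t'=\poly(n)$.
\item Apply the one-clean-qubit Hadamard test (unitary trace estimation, \cref{prob:trace_estimation}) to estimate the real and imaginary parts of the normalized trace. The classical side-processing needed for the estimate is allowed by the definition of $\DQC_1$.
\end{itemize}

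For hardness, we reduce unitary trace estimation for a $q$-qubit circuit $C$ to \cref{prob:corr_func} on $n=q+2$ modes (a constant number of extra modes suffices, rather than $\poly(q)$). This keeps the attenuation factor $2^{q}/2^{n}$ constant.

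First we build a universal gate set from impurity dynamics. Time-dependent quadratic Hamiltonians generate all matchgates, in particular nearest-neighbor $\fSWAP$ and single-mode phases. An even quartic term $n_1n_2$ on the impurity modes generates $\CZ$ on modes $1,2$. To act with $\CZ$ on arbitrary modes $j,k$, we move them to positions $1,2$ with Gaussian $\fSWAP$ networks, apply $\CZ$, and move them back. We then recover $\SWAP=\fSWAP\cdot\CZ$. Matchgates together with $\SWAP$ are universal, up to the constraint that every gate preserves fermionic parity. Each such gate is realized by a constant-time piece of $H(s)$ with coefficients bounded by $1$, so a poly-size circuit becomes evolution for time $\poly(n)$.

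Next we handle parity and produce $\tr(C)$. Use one ancilla mode $p$ to compensate parity: any $q$-qubit gate $g$ can be realized as an even operator on $q+1$ modes that acts as $g$ on the logical qubits, with $p$ absorbing the parity, for example via $x\mapsto$ the appropriate parity-correcting flip of $p$. Use a second mode $c$ as a control, and compile the even unitary
\begin{equation}
V=\op{0}{0}_c\otimes\I+\op{1}{1}_c\otimes \widetilde C,
\end{equation}
where $\widetilde C$ is the parity-corrected encoding of $C$. Take $t'=0$, $U(t)=V$, and $A=B=\gamma_c$, the Majorana on mode $c$, which flips $c$. Then $V^\dagger\gamma_c V\gamma_c$ exchanges the two control branches, up to $Z$-strings. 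Evaluating the normalized trace should give a fixed constant times $\Re\tr(C)/2^q$, plus known corrections. Inserting a phase gate on the control branch then yields $\Im\tr(C)/2^q$. Since $n-q=O(1)$, an estimate to precision $1/\poly(n)$ of the correlation function gives $\tr(C)/2^q$ to precision $1/\poly(q)$, which proves $\DQC_1$-hardness.

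The main obstacle is the last step. We must track the Jordan--Wigner strings and the parity sectors precisely, so that the sum over the odd and even sectors neither cancels nor duplicates the $\tr(C)$ contribution. If the control-flip computation produces sector-dependent signs, the first fix is to place $c$ and $p$ at the front of the mode ordering, where the strings attached to $\gamma_c$ are trivial. If that is not enough, the fallback is to choose $\widetilde C$ as $C\otimes\I_p$ on the even sector and its conjugate partner on the odd sector, chosen so that the two sectors add coherently.
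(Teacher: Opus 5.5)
Your containment argument matches the paper. The architecture of your hardness reduction also matches: one parity ancilla $p$, one control qubit $c$, $n=q+2$ modes, a Majorana readout on $c$, and a phase $\theta\in\{0,\pi/2\}$ for the imaginary part. The gap is exactly the step you flag as the main obstacle, and neither of your fixes closes it.

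With $\gamma_c=X_c$ at the front, the readout is $2^{-(q+2)}\tr(V^\dagger\gamma_cV\gamma_c)=2^{-(q+1)}\Re\tr(\widetilde C)$. This trace runs over \emph{both} parity sectors of the $(p,\mathrm{data})$ modes, so everything hinges on how the compiled $\widetilde C$ acts on the odd sector $\mathcal{E}_1$. Your fallback treats that action as something you can choose, but you cannot. It is dictated by the same gate sequence that realizes $C$ on $\mathcal{E}_0$, and your gate set does not force the two actions to agree.

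For example, the single-mode phase $e^{i\pi n_p}=Z_p$, which your matchgate set allows, satisfies $Z_p\mathcal{E}_b=\mathcal{E}_b\,(-1)^bZ_1\cdots Z_q$. If a logical $Z_1\cdots Z_q$ inside $C$ is compiled this way, you get $\widetilde C=C\oplus(-C)$, and the correlator vanishes identically. Moving $c$ to the front only removes the Jordan--Wigner string from $\gamma_c$; it does nothing about this sector mismatch.

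The missing idea is a symmetry that forces the two sectors to agree. In the paper, the ancilla enters the gate set only through $\gamma_{\anc,2}\gamma_1\propto X_\anc X_1$ and $\gamma_{\anc,2}\gamma_2\propto X_\anc Y_1$. These realize arbitrary rotations on logical qubit $1$, with $\overline{X}_j=X_\anc X_j$, $\overline{Y}_j=X_\anc Y_j$, $\overline{Z}_j=Z_j$. The gates $\fSWAP$, the impurity $\CZ$, and $\SWAP=\fSWAP\cdot\CZ$ (used to route any logical qubit to position $1$) never touch the ancilla. Global phases are cancelled by identity terms, which are sector-independent.

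Hence $[U,X_\anc]=0$. Since $\mathcal{E}_1=X_\anc\mathcal{E}_0$, the relation $U\mathcal{E}_0=\mathcal{E}_0V$ implies $U\mathcal{E}_1=\mathcal{E}_1V$, i.e.\ $U=e^{i\phi}(V\oplus V)$ (Theorem~\ref{thm:td_impurity_encoding}).

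The readout is then exact (Lemma~\ref{lem:majorana_readout}). The paper puts $c$ last and uses $\Gamma=\gamma_{2(q+1)}$. Its string $Z_\anc Z_1\cdots Z_q$ evaluates on $\mathcal{E}_b\ket{x}$ to $(-1)^{b+x_c}$, so $\Gamma\mathcal{E}_b\ket{x}=i(-1)^b\mathcal{E}_{1-b}X_c\ket{x}$. The factors $i(-1)^b$ cancel between the two insertions, and the correlator equals $\Re\bigl(e^{i\theta}\tr(C)/2^q\bigr)$ with no extra corrections. Once the $X_\anc$ symmetry is in place, your front placement of $c$ would work equally well.
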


\begin{theorem}\label{thm:BQP-complete}
    Problem \ref{prob:corr_func} is $\BQP$-complete for $\epsilon = 1/{\poly(n)}$, $t, t' = \poly(n)$, and $\beta \in [\Omega(1), \poly(n)]$.
\end{theorem}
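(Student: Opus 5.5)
We prove the theorem in two directions: containment in $\BQP$, and $\BQP$-hardness.

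\emph{Containment.} The correlator $\tr(\sigma_\beta A(t)B(t'))$ is not Hermitian, so we first write it as a combination of two Hermitian quantities. We estimate its real and imaginary parts separately with a Hadamard-test circuit. Concretely, we prepare a purification of $\sigma_\beta$ and attach an ancilla in $\ket{+}$. We then apply controlled-$B$, evolve by $U(t')$, apply $U(t)^\dagger$, apply controlled-$A$, and finish with the inverse evolutions. Here $A$ and $B$ are Majorana operators, hence unitary Pauli strings under Jordan--Wigner. The time-ordered evolution with bounded, efficiently computable coefficients and $t,t'=\poly(n)$ can be simulated to inverse-polynomial accuracy by standard Trotterization.

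The ancilla's $X$ and $Y$ expectations give the real and imaginary parts. Polynomially many repetitions then give $1/\poly(n)$ precision.

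The only nontrivial input is preparing $\sigma_\beta$ for $\beta\le\poly(n)$. For this we invoke Corollary~\ref{cor:quantum-thermal}, since $H(0)$ is itself a static impurity model. It prepares $\sigma_\beta$ to trace distance $\delta$ in time $\poly(n,\beta,\delta^{-1})$, and this error perturbs the correlator by at most $\delta$ because $\|A(t)B(t')\|\le1$. So the problem lies in $\BQP$.

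\emph{Hardness.} We reduce from quantum circuit acceptance, Problem~\ref{prob:circuit_accept}. The plan has three ingredients.

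\begin{enumerate}
\item \emph{Encoding.} Take an $n$-qubit circuit $C$, padded with $N=\cO(n)$ extra work qubits for cooling. Use the new $(q+1)$-mode universal encoding advertised in the introduction. One distinguished mode acts as a movable ``impurity'' site, and a time-dependent impurity Hamiltonian (quadratic hoppings plus a constant-size interacting term, swept along the chain with fermionic SWAPs) implements any $2$-qubit gate. This gives a time-dependent impurity Hamiltonian $H(s)$ on $n'=q+1$ modes whose evolution $U(T)$ realizes $C$ on the logical Fock space, up to a fixed parity or $Z$-string correction. The step to verify is that each gate acts correctly on the entire Fock space, not just a logical subspace; I plan to do this by building gates from matchgates plus one $\CZ$-type quartic term applied when both target modes are moved into the first $m$ modes.

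\item \emph{Initial state.} Choose $H(0)=\sum_j n_j$, a free, noninteracting bath with unit gap. Its Gibbs state is a product of qubits with bias $\tanh(\beta/2)=\Omega(1)$ when $\beta=\Omega(1)$. Prepend to $C$ a reversible Schulman--Vazirani algorithmic-cooling circuit. It maps $\cO(n)$ such biased qubits into $n$ qubits that are $2^{-\Omega(n)}$-close to $\ket{0^n}$, with the garbage kept on the remaining qubits. The full circuit is $C'=C\circ\mathrm{Cool}$, encoded as in the first ingredient.

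\item \emph{Reading out $p_C$.} Take $A=B=\gamma$, where $\gamma$ is the Majorana operator on the output mode, and let $t'=0$ and $t=T$. The measurement $\op11$ equals $(\I - i\gamma_{2j-1}\gamma_{2j})/2$, which is a product of two Majoranas, not one. I therefore use a product of two Majoranas at different times. Alternatively, with $t'$ chosen after the computation, I append a gate that swaps the output onto a mode in a way that turns $Z_{\mathrm{out}}$ into a single-Majorana two-point function. Concretely, $\langle \gamma_{2j-1}(T)\gamma_{2j}(T)\rangle = i\langle Z_j\rangle$ after evolution, up to the Jordan--Wigner string handled by the encoding. This gives $p_{C}$ up to error $2^{-\Omega(n)}$, and a $1/6$-precision estimate decides the instance.
\end{enumerate}

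The time $T=\poly(n)$ and the coefficients are bounded by $1$.

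\emph{Main obstacle.} I expect the main obstacle to be the first ingredient: a faithful full-Fock-space universal encoding using only constant-size interacting terms at each time. It must also avoid the exponential attenuation that a large logical-subspace encoding would cause. It is also essential that the cooled thermal state, rather than a pure input, be compatible with this encoding, so that garbage modes do not corrupt the Jordan--Wigner strings. I would first try brickwork fermionic-SWAP networks that transport each pair of modes to the impurity site and apply the gate there, tracking parity signs explicitly.
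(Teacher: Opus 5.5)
Your containment argument matches the paper's: the Gibbs state comes from Corollary~\ref{cor:quantum-thermal}, followed by a Hadamard test on $W=U(t)^\dagger A\,U(t)\,U(t')^\dagger B\,U(t')$. Your hardness ingredients are also the paper's: the single-ancilla encoding of Theorem~\ref{thm:td_impurity_encoding}, a free-bath $H(0)$ with constant bias, and Schulman--Vazirani cooling. The real difference is the readout.

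Your concrete choice $A=\gamma_{2j-1}$, $B=\gamma_{2j}$, $t=t'=T$ is allowed by Problem~\ref{prob:corr_func}, and it works. The Jordan--Wigner strings cancel, so $\gamma_{2j-1}\gamma_{2j}=iZ_j=i\overline{Z}_j$. Moreover $U=e^{i\phi}(V\oplus V)$ and $\overline{Z}_j\mathcal{E}_b=\mathcal{E}_bZ_j$. Finally, the Fock-diagonal Gibbs state equals $\sum_b\mathcal{E}_b\rho_b\mathcal{E}_b^\dagger$, where $\sum_b\rho_b$ is the product thermal state on the logical modes. Together these give a correlator equal to $i\tr(\rho_RC^\dagger Z_1C)$, which is within $e^{-N^{\Theta(1)}}$ of $i(1-2p_C)$. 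The same computation disposes of your worry about garbage modes corrupting Jordan--Wigner strings.

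The paper instead takes $A=B=\Gamma$ at $t=T$, $t'=0$. Lemma~\ref{lem:majorana_readout} converts this to $\tr(\tau^{\otimes(N+1)}V_x^\dagger X_cV_xX_c)$. The controlled gadget $W_x=\I\otimes\op{0}{0}_c+C_x^\dagger Z_1C_x\otimes\op{1}{1}_c$, on an extra logical qubit, then turns that into $\tr(\rho_RC_x^\dagger Z_1C_x)$.

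Your route is shorter, but it shows hardness only for an equal-time correlator, i.e.\ a post-quench expectation value. The paper's covers a genuinely two-time, single-Majorana autocorrelation, the Green's-function-type object. It also reuses a gadget that is indispensable for $\DQC_1$-hardness, where an equal-time Hermitian readout at $\beta=0$ is identically $\tr(Z_j)/2^{n}=0$.

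Two clean-ups are needed.
\begin{itemize}
\item The option you state first in step~3, $A=B=\gamma$ at times $T$ and $0$ with no gadget, gives $\tr(\tau V^\dagger X_qVX_q)$ rather than $p_C$, so discard it.
\item The ``main obstacle'' you flag is exactly Theorem~\ref{thm:td_impurity_encoding}. Simply cite it in its precise form, $U=e^{i\phi}(V\oplus V)$ on the two parity sectors. The extra mode and the parity encoding $\mathcal{E}_b$ are unavoidable, since even-parity gates on $q$ modes cannot realize a parity-changing $C$.
\end{itemize}
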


Both \cref{thm:DQC1-complete,thm:BQP-complete} hold even for the minimal impurity size $m = 4$ and when $A$ and $B$ are single-site operators.

\subsection{Universality of time-dependent impurity Hamiltonians}

In~\cite{bravyi2017complexity} it was shown that time-dependent impurity Hamiltonians are universal for quantum computation in the traditional $\BQP$ circuit model. The precise reduction there maps the impurity Hamiltonian to a 1D chain of XY interactions, plus a triangle at the end which encodes the impurity. Universality then follows from the construction of Brod and Childs~\cite{brod2014computational}, which encodes $\Omega(\sqrt{n})$ qubits into a logical code space of the physical $n$ qubits. However, this is an issue for $\DQC_1$ because the maximally mixed state on the physical space is $\I/2^n$, whereas we want to encode a calculation with the maximally mixed state only in the code space. With the quadratic space overhead, the target signal would be exponentially suppressed within the physically measured quantity $\tr(U)/2^n$. Hence this encoding would demand exponentially small error to solve Problem~\ref{prob:trace_estimation} in the $\DQC_1$ model.

To handle this, we prove a new universality result for time-dependent impurity models that uses only a single ancilla qubit. With this, the unitary trace is suppressed only by a constant factor. Our construction encodes an arbitrary $n$-qubit quantum circuit $V$ into an $(n+1)$-mode impurity unitary $U = e^{i\phi} (V \oplus V)$, where all parts of this construction (including computing the phase $\phi$) are efficient and incur at most polynomial overhead in gate complexity. Thus $\frac{1}{2^n} \tr(V) = \frac{1}{2^{n+1}} \tr(U)$ is recovered exactly.

\begin{theorem}[Single-ancilla universality encoding]\label{thm:td_impurity_encoding}
    Let $V = g_L \cdots g_2 g_1$ be a quantum circuit on $n$ qubits, where each $g_\ell$ is a gate acting on at most two adjacent qubits in a line. There exists a 1D time-dependent impurity Hamiltonian $H(t)$ on $n + 1$ modes, with impurity size $m = 4$, such that
    \begin{equation}
        U(t) \equiv \TO \exp\l[ -i \int_0^t ds \, H(s) \r] = e^{i\phi} (V \oplus V),
    \end{equation}
    where the orthogonal decomposition above is with respect to even- and odd-parity sectors of $(\C^2)^{\otimes(n+1)}$ and $t = \O(n^2 L)$. The coefficients of $H(t)$ are piecewise constant and can be classically computed from the description of $V$ in $\poly(n, L)$ time, as well as the phase $\phi \in \R$.
\end{theorem}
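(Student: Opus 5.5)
The plan is to simulate each gate $g_\ell$ by a short sequence of evolutions generated by allowed terms: quadratic Majorana terms $i\gamma_j\gamma_k$ between any modes, plus even-degree terms supported on the impurity modes $\gamma_1,\ldots,\gamma_8$ ($m=4$). I will use $n+1$ modes. Under Jordan--Wigner, mode $1$ serves as the ancilla, modes $2,\ldots,n+1$ carry the logical qubits, and the parity sector of the total state labels the two copies in $V\oplus V$.

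The encoding rests on the standard identification of logical Paulis with parity-preserving fermionic operators. For a logical qubit on mode $j\ge 2$, take $\tilde Z_j = -i\gamma_{2j-1}\gamma_{2j}$. Take $\tilde X_j$ to be a product of $\gamma_{2j}$ with one ancilla Majorana, say $i\gamma_1\gamma_{2j}$. Up to phases and a factor depending on total parity $P$, these generate the Pauli algebra on modes $2,\ldots,n+1$ tensored with the parity label. Concretely, the even-parity products of Majoranas on $n+1$ modes form the algebra $\mathrm{End}(\C^{2^n})\otimes\mathrm{diag}(1,\pm1)$ over the two parity sectors. A Hamiltonian built from such products acts as $A\oplus A'$, where $A'$ is related to $A$ by a sign on the odd monomials involving $\gamma_1$, $\gamma_2$. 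I will choose logical operators, for instance those commuting with the central element $P$ and containing an even number of ancilla factors, so that they act identically on both sectors up to a global phase. This is where $e^{i\phi}$ enters, since quadratic pieces $\gamma_j\gamma_k$ can contribute sector-dependent phases that must be tracked.

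Next, universality. Single-qubit rotations on mode $j$ come from $e^{i\theta\gamma_{2j-1}\gamma_{2j}}$ (logical $Z$) and from quadratic terms coupling to the ancilla (logical $X$-type). Nearest-neighbour matchgates come from quadratic terms $\gamma_{2j}\gamma_{2j+1}$. Quadratic evolutions alone give matchgates, which are not universal. To reach a non-Gaussian entangling gate, I will use a SWAP network: fermionic swaps $\fSWAP$, realized by quadratic evolution, move logical modes $j$ and $j+1$ adjacent to the impurity region. There the impurity quartic term $\gamma_a\gamma_b\gamma_c\gamma_d$ on the four impurity modes (ancilla plus three slots) implements a $\CZ$ or $e^{i\theta ZZ}$ between the two logical qubits. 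The pair is then swapped back. Each gate costs $O(n)$ fSWAPs, and each fSWAP costs $O(n)$ time under bounded coefficients (or $O(1)$ each, giving $O(n)$ per gate). The total budget $t=O(n^2L)$ is comfortable.

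The main obstacle is the sign bookkeeping. Moving modes with fSWAP introduces parity-dependent signs $(-1)^{n_in_j}$, and the ancilla $X$-type terms introduce a factor $P$. I must verify that every gate acts as $g_\ell$ on the logical register identically in both parity sectors, with only a global phase left over. I would try two things first: compensating each fSWAP by undoing it with the same sequence, so that signs cancel pairwise; and using the ancilla to absorb the Jordan--Wigner string. I would then compute the accumulated $\phi$ classically as a sum of the traceful parts of the generators, $\tfrac{i}{4}\sum h_{jj}$-type shifts and the $\alpha\I$ terms.

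Finally, piecewise constancy of $H(t)$ and $\poly(n,L)$ computability of the coefficients and of $\phi$ follow directly from the explicit gate-by-gate schedule.
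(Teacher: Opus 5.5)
Your architecture is essentially the paper's: an ancilla mode whose bit records the logical parity, fSWAP networks that bring a pair to the impurity, a quartic $\CZ$ there, and cancellation of fSWAP signs by conjugation. However, two load-bearing steps fail as written.

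\textbf{The logical $X$ operators.} $\tilde X_j=i\gamma_1\gamma_{2j}$ and $\tilde X_k=i\gamma_1\gamma_{2k}$ share $\gamma_1$, so they \emph{anticommute} for $j\neq k$. They are logical Majorana operators, not single-qubit Paulis. In the parity encoding $\ket{x}\mapsto\ket{b\oplus\parity(x)}\ket{x}$, $i\gamma_1\gamma_{2j}$ acts, up to sign, as $X_jZ_{j+1}\cdots Z_{n+1}$. So $e^{-i\theta\tilde X_j/2}$ is a single-qubit rotation at only one position, and coupling the ancilla to every mode is not 1D anyway.

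Moving the qubit with fSWAPs and conjugating does not rescue this. fSWAP transports fermionic operators, so a conjugated $X$-rotation reacquires a Jordan--Wigner string. Your pairwise sign cancellation is valid only for gates that are functions of number operators, such as $\CZ_{p,q}=e^{i\pi n_pn_q}$.

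The paper therefore proceeds as follows:
\begin{itemize}
\item It implements $X$/$Y$ rotations only on the mode next to the ancilla, via $\gamma_{\anc,2}\gamma_1=iX_\anc X_1$ and $\gamma_{\anc,2}\gamma_2=iX_\anc Y_1$.
\item It transports $\CZ$ to every adjacent pair by fSWAP conjugation.
\item It forms a genuine $\overline{\SWAP}=\overline{\fSWAP}\cdot\overline{\CZ}$ on each adjacent pair.
\item It uses these genuine SWAPs to bring each target qubit to the rotation position.
\end{itemize}
This is why the cost is $O(n^2)$ per gate rather than your $O(n)$.

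\textbf{Why both parity sectors carry the same $V$.} You leave this unresolved, and your proposed criterion does not work. Every even operator commutes with $P$, so that half is vacuous. Requiring an even number of ancilla factors would exclude your own $\tilde X_j$, and indeed any operator that flips a single logical bit, so universality would be lost. Sector-dependent phases also could not be absorbed into one global $e^{i\phi}$.

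The missing observation is that all generators should avoid one fixed ancilla Majorana; the paper never uses $\gamma_{\anc,1}=X_\anc$. Then $U$ commutes exactly with it. Since it is odd, it maps the even sector onto the odd one, $\mathcal{E}_1=X_\anc\mathcal{E}_0$. Hence $U\mathcal{E}_0=\mathcal{E}_0V$ immediately gives $U\mathcal{E}_1=\mathcal{E}_1V$. The phase $\phi$ then comes solely from the identity shifts that cancel the gates' global phases; the quadratic terms are traceless and contribute nothing to it.
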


\begin{proof}
    By taking the coefficients to be piecewise constant, $U(t)$ can be decomposed into gates generated by its local terms. We define the $(n+1)$-mode fermionic system on a 1D line, where we place the ancilla mode $\anc$ behind mode $1$. To distinguish the ancilla, we label its Majorana modes as $\gamma_{\anc,1}$ and $\gamma_{\anc,2}$. We designate the first two non-ancillary modes to hold the impurity (i.e., Majorana modes $\gamma_1, \gamma_2, \gamma_3, \gamma_4$). We only need the following fermionic gate set (i.e., terms in $H(t)$):
    \begin{itemize}
        \item On modes $j \geq 1$, $\displaystyle \fSWAP_{j,j+1} \coloneqq \exp\l[ -\frac{\pi}{4} \l( \gamma_{2j} \gamma_{2j+1} - \gamma_{2j} \gamma_{2j+2} + \gamma_{2j-1} \gamma_{2j} + \gamma_{2j+1} \gamma_{2j+2} \r) \r]$;
        \item On modes $\anc$ and $1$, $R_{X,1}(\theta) \coloneqq e^{-(\theta/2) \gamma_{\anc,2} \gamma_1}$, $R_{Y,1}(\theta) \coloneqq e^{-(\theta/2) \gamma_{\anc,2} \gamma_2}$, and $R_{Z,1}(\theta) \coloneqq e^{-(\theta/2) \gamma_1 \gamma_2}$;
        \item On impurity modes $1$ and $2$, $\displaystyle \CZ_{1,2} \coloneqq \exp\l[ \frac{i\pi}{4} \l( \gamma_1 \gamma_2 \gamma_3 \gamma_4 - i\gamma_1 \gamma_2 - i\gamma_3 \gamma_4 \r) \r]$.
    \end{itemize}
    This has the impurity structure as claimed. We will map this to a universal $n$-qubit gate set through two steps:~first we apply the Jordan--Wigner transformation on the physical level. Then, we identify a logical code space corresponding to the even-parity subspace. Note that all gates above commute with the parity operator so any unitary built from them must block diagonalize as such. Afterwards, we will also show that the odd-parity subspace contains the same encoding.

    Taking into account the ancilla mode, the Jordan--Wigner mapping is
    \begin{align}
        \gamma_{\anc,1} = X_\anc, &\quad \gamma_{\anc,2} = Y_\anc\\
        \gamma_{2j-1} = \l( Z_\anc \prod_{k=1}^{j-1} Z_k \r) X_j, &\quad \gamma_{2j} = \l( Z_\anc \prod_{k=1}^{j-1} Z_k \r) Y_j.
    \end{align}
    Under this mapping we see the reason for the gate names above:
    \begin{equation}\label{eq:fSWAP+CZ}
        \fSWAP = -i \begin{pmatrix}
            1 & 0 & 0 & 0\\
            0 & 0 & 1 & 0\\
            0 & 1 & 0 & 0\\
            0 & 0 & 0 & -1
        \end{pmatrix}, \quad \CZ = e^{-i\pi/4} \begin{pmatrix}
            1 & 0 & 0 & 0\\
            0 & 1 & 0 & 0\\
            0 & 0 & 1 & 0\\
            0 & 0 & 0 & -1
        \end{pmatrix}.
    \end{equation}
    The rotation gates are understood after mapping to the code space. To be explicit, this is the image of the following encoding map:~for $b \in \{0, 1\}$, define
    \begin{equation}\label{eq:parity_encoding}
    \begin{split}
        \mathcal{E}_b& : (\C^2)^{\otimes n} \to (\C^2)^{\otimes (n + 1)}\\
        \mathcal{E}_b& \ket{x} = \ket{b\oplus\parity(x)} \ket{x}
    \end{split}
    \end{equation}
    where $\parity(x) \in \{0,1\}$ is the parity of $x \in \{0,1\}^n$. The bit $b$ simply labels which parity sector we are in. Next, let us define logical single-qubit Pauli operators on an encoded $n$-qubit space:
    \begin{equation}
        \overline{X}_j \coloneqq X_\anc X_j, \quad \overline{Y}_j \coloneqq X_\anc Y_j, \quad \overline{Z}_j \coloneqq Z_j.
    \end{equation}
    It is easily checked that these obey the necessary Pauli algebra. Thus the rotation gates $R_{X,1}(\theta) = e^{-i \frac{\theta}{2} \overline{X}_1}$, etc., precisely simulate single-qubit rotations $\overline{R_{X,1}}(\theta)$ on the first logical qubit. Furthermore, the logical versions $\overline{\fSWAP}, \overline{\CZ}$ coincide with their physical gates $\fSWAP, \CZ$ (up to global phase), since their generators are of the form $\overline{X}_j \overline{X}_{j+1} = X_j X_{j+1}$, $\overline{Y}_j \overline{Y}_{j+1} = Y_j Y_{j+1}$, $\overline{Z}_j = Z_j$, and $\overline{Z}_1 \overline{Z}_2 = Z_1 Z_2$.

    It remains to show how to encode the circuit $V = g_L \cdots g_2 g_1$ into this impurity gate set. We first show that the logical gates above allow us to address arbitrary qubits along the line. We will assume that all global phases appearing from \cref{eq:fSWAP+CZ} are compensated by adding the appropriate identity terms to the time-dependent Hamiltonian. Clearly, this does not affect the impurity structure and can be efficiently determined with $\O(1)$ effort per gate.
    
    The key identity we use is
    \begin{equation}
        \overline{\SWAP} = \overline{\fSWAP} \cdot \overline{\CZ}.
    \end{equation}
    Although $\overline{\CZ}_{j,j+1}$ is only directly available for $j = 1$, it can be generated on any neighboring pair. This is because we have access to the \emph{fermionic} swap between any pair, which can bring any pair to positions $(1, 2)$. Let
    \begin{equation}
        P_j \coloneqq \prod_{(p,q)} \overline{\fSWAP}_{p,q}
    \end{equation}
    where $\{(p, q)\}$ is any sequence of nearest-neighbor transpositions that maps $(j, j+1) \leftrightarrow (1, 2)$. For any $j \in [n-1]$ this takes at most $\O(n)$ swaps. Then in position $(1, 2)$, we apply $\overline{\CZ}_{1,2}$. Finally we uncompute the fermionic swaps using $P_j^\dagger$, bringing the qubits back to their original positions. Despite using fermionic swaps, the accumulated signs cancel:~write
    \begin{equation}
        \overline{\CZ}_{p,q} = e^{i \pi \, n_p n_q}, \quad \text{where } n_p \coloneqq \frac{\I - \overline{Z}_p}{2}
    \end{equation}
    and observe that
    \begin{equation}
        \overline{\fSWAP}_{p,p+1}^\dagger n_p \overline{\fSWAP}_{p,p+1} = n_{p+1}, \quad \overline{\fSWAP}_{p,p+1}^\dagger n_{p+1} \overline{\fSWAP}_{p,p+1} = n_p.
    \end{equation}
    Thus, each conjugation of $\overline{\CZ}_{p',q'}$ by $\overline{\fSWAP}_{p,q}$ transports the CZ gate without any sign, and so
    \begin{equation}
        P_j^\dagger \overline{\CZ}_{1,2} P_j = \overline{\CZ}_{j,j+1}.
    \end{equation}
    Thus we have $\overline{\CZ}$ on any adjacent pair, so we can implement $\overline{\SWAP}$ on any adjacent pair too, with $\O(n)$ gate overhead.

    We can therefore implement ordinary swap gates to permute any logical qubit to the first (logical) position, again with $\O(n)$ overhead in the swaps. Note however that each SWAP is generated itself by $\O(n)$ fSWAPs, so a coarse upper bound for the physical gate overhead is $\O(n^2)$. Since $\overline{R_{X,1}}(\theta), \overline{R_{Y,1}}(\theta), \overline{R_{Z,1}}(\theta)$ generate arbitrary single-qubit gates there, we also have arbitrary single-qubit gates on every logical qubit. We also have arbitrary two-qubit gates on every pair of logical qubits, because a constant number of $\overline{\CZ}$ gates and single-qubit rotations generate $\U(4)$.

    Finally, we verify that the same circuit is implemented in both parity sectors. Recall for $b \in \{0, 1\}$, that the image of $\mathcal{E}_b$ is precisely the $(-1)^b$-parity sector. Moreover, $\mathcal{E}_1 = X_\anc \mathcal{E}_0$. But every gate in our physical gate set commutes with $X_\anc$, since the only gates with nontrivial generators on the ancilla are $X_\anc X_1$ and $X_\anc Y_1$. Therefore the physical unitary $U \equiv U(t)$ obeys $[U, X_\anc] = 0$. This implies that, if $U \mathcal{E}_0 = \mathcal{E}_0 V$, then
    \begin{equation}
        U \mathcal{E}_1 = U X_\anc \mathcal{E}_0 = X_\anc U \mathcal{E}_0 = X_\anc \mathcal{E}_0 V = \mathcal{E}_1 V.
    \end{equation}
    But $U \mathcal{E}_0 = \mathcal{E}_0 V$ is precisely the representation that gives \cref{eq:fSWAP+CZ}, so $U$ is equal to $V \oplus V$, up to global phase.
\end{proof}

Before proceeding to the proofs of \cref{thm:DQC1-complete,thm:BQP-complete}, we record a helpful lemma about how logical expectation values are encoded.

\begin{lemma}\label{lem:majorana_readout}
    Let $V$ be a circuit on $q$ logical qubits and let $U$ be the corresponding physical unitary on $q+1$ modes constructed above, so that
    \begin{equation}
        U \mathcal{E}_b = \mathcal{E}_b V, \quad b \in \{0,1\}.
    \end{equation}
    Then $U^\dagger \gamma_{2q} U \gamma_{2q} \mathcal{E}_b = \mathcal{E}_b V^\dagger X_q V X_q$, and in particular for any $q$-qubit density matrix $\rho$, we have
    \begin{equation}\label{eq:majorana_readout_trace}
        \tr\l(
            \frac{1}{2} \sum_{b=0}^1
            \mathcal{E}_b \rho \mathcal{E}_b^\dagger
            U^\dagger \gamma_{2q} U \gamma_{2q}
        \r)
        =
        \tr(
            \rho V^\dagger X_q V X_q
        ).
    \end{equation}
\end{lemma}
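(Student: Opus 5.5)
The plan is to compute the action of the Majorana readout operator $\gamma_{2q}$ directly on the code spaces $\mathcal{E}_b$ via the Jordan--Wigner representation. The point to check is that $\gamma_{2q}$ does not preserve a parity sector: it maps the image of $\mathcal{E}_b$ onto the image of $\mathcal{E}_{b\oplus1}$ and acts there as a logical $X_q$ up to a $b$-dependent phase. Once this is established, the intertwining relation $U\mathcal{E}_b=\mathcal{E}_bV$ from Theorem~\ref{thm:td_impurity_encoding} lets us pass $U$ and $U^\dagger$ through the encoding, and the two phases from the two applications of $\gamma_{2q}$ cancel.

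\textbf{Step 1 (action of $\gamma_{2q}$).} With the ancilla placed before mode $1$, we have $\gamma_{2q}=Z_\anc Z_1\cdots Z_{q-1}Y_q$. Apply it to $\mathcal{E}_b\ket{x}=\ket{b\oplus\parity(x)}\ket{x}$ for $x\in\{0,1\}^q$. The factor $Y_q$ sends $\ket{x}$ to $i(-1)^{x_q}\ket{x\oplus e_q}$. The factor $Z_1\cdots Z_{q-1}$ contributes $(-1)^{x_1+\cdots+x_{q-1}}$, and $Z_\anc$ contributes $(-1)^{b\oplus\parity(x)}$. Because mode $q$ is the last logical mode, the exponents combine to $(-1)^{b}(-1)^{\parity(x)+\parity(x)}=(-1)^b$, so the total phase is $i(-1)^b$ and does not depend on $x$. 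The ancilla bit $b\oplus\parity(x)$ equals $(b\oplus1)\oplus\parity(x\oplus e_q)$. Hence
\begin{equation*}
\gamma_{2q}\mathcal{E}_b = i(-1)^b\,\mathcal{E}_{b\oplus1}X_q .
\end{equation*}
This is the only step requiring care: one must track the sign conventions and use that $q$ is the final logical mode, so that no parity string of trailing modes survives.

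\textbf{Step 2 (compose).} Using Step~1, then $U\mathcal{E}_{b\oplus1}=\mathcal{E}_{b\oplus1}V$, then Step~1 again for the sector $b\oplus1$, we obtain
\begin{equation*}
U^\dagger\gamma_{2q}U\gamma_{2q}\mathcal{E}_b
= i(-1)^b\cdot i(-1)^{b\oplus1}\,U^\dagger\mathcal{E}_bX_qVX_q .
\end{equation*}
The scalar prefactor is $i^2(-1)^{2b+1}=1$. Since $U$ is unitary and preserves each sector, the relation $U\mathcal{E}_b=\mathcal{E}_bV$ gives $U^\dagger\mathcal{E}_b=\mathcal{E}_bV^\dagger$. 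This yields $U^\dagger\gamma_{2q}U\gamma_{2q}\mathcal{E}_b=\mathcal{E}_bV^\dagger X_qVX_q$, which is the first claim.

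\textbf{Step 3 (trace identity).} Use cyclicity of the trace to move $\mathcal{E}_b^\dagger$ to the right of $U^\dagger\gamma_{2q}U\gamma_{2q}$, and apply Step~2. Since $\mathcal{E}_b$ is an isometry, $\mathcal{E}_b^\dagger\mathcal{E}_b=\I$. Each term in the sum over $b$ therefore equals $\tr(\rho V^\dagger X_qVX_q)$, and averaging over $b\in\{0,1\}$ gives Eq.~\eqref{eq:majorana_readout_trace}.
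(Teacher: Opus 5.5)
Your proposal is correct and follows the paper's proof essentially line for line. Like the paper, you compute $\gamma_{2q}\mathcal{E}_b\ket{x}=i(-1)^b\mathcal{E}_{1-b}X_q\ket{x}$ from the Jordan--Wigner string, apply $U\mathcal{E}_b=\mathcal{E}_bV$ twice so that the phases $i(-1)^b\cdot i(-1)^{1-b}$ cancel, and then take the trace in the two parity sectors. One small wording slip: in Step 3 it is $\mathcal{E}_b$ (not $\mathcal{E}_b^\dagger$) that you cycle to the end, giving $\tr(\rho\,\mathcal{E}_b^\dagger U^\dagger\gamma_{2q}U\gamma_{2q}\mathcal{E}_b)$; your subsequent use of $\mathcal{E}_b^\dagger\mathcal{E}_b=\I$ shows this is what you meant.
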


\begin{proof}
    From the Jordan--Wigner transformation, we have $\gamma_{2q} = Z_\anc Z_1 \cdots Z_{q-1} Y_q$. Applying this to the encoding of a computational basis state $\ket{x}$ gives $\gamma_{2q} \mathcal{E}_b \ket{x} = i (-1)^b \mathcal{E}_{1-b} X_q \ket{x}$. Then using $U\mathcal{E}_b = \mathcal{E}_bV$ twice gives
    \begin{equation}
    \begin{split}
        U^\dagger \gamma_{2q} U \gamma_{2q} \mathcal{E}_b
        &=
        i(-1)^b U^\dagger \gamma_{2q}
        \mathcal{E}_{1-b} V X_q \\
        &=
        U^\dagger \mathcal{E}_b X_q V X_q \\
        &=
        \mathcal{E}_b V^\dagger X_q V X_q
    \end{split}
    \end{equation}
    as claimed. Taking the trace in the two orthogonal parity sectors proves \cref{eq:majorana_readout_trace}.
\end{proof}

\subsection{$\DQC_1$-completeness at infinite temperature}

\begin{proof}[Proof of \cref{thm:DQC1-complete}]
    We first show containment. At infinite temperature,
    \begin{equation}
        \sigma_0 = \frac{\I}{2^n},
    \end{equation}
    where $n$ is the number of fermionic modes. It is a standard fact that the real-time evolution $U(t)$ can be implemented to inverse-polynomial accuracy by a poly-size quantum circuit without additional clean ancillas~\cite{poulin2011quantum}, since it is simply a poly-size sum of Pauli operators after Jordan--Wigner with bounded coefficients. Since the Majorana operators $A$ and $B$ are mapped to Pauli operators,
    \begin{equation}
        W = U(t)^\dagger A \, U(t) \cdot U(t')^\dagger B \, U(t')
    \end{equation}
    is itself unitary, and so
    \begin{equation}
        \tr(
            \sigma_0 \cdot A(t) B(t')
        )
        =
        \frac{1}{2^n}\tr(W).
    \end{equation}
    Thus the desired correlation function is precisely a normalized unitary trace and can be estimated in $\DQC_1$. Taking the circuit approximation to $U(t)$ to be $1/{\poly(n)}$ accurate only changes this by $1/{\poly(n)}$ additive error.

    Now we show hardness by reducing trace estimation. It suffices to take $t' = 0$. Let $C$ be any poly-size circuit on $n$ qubits. Introduce one additional logical qubit $c$, which we place at the \textit{end} of the logical line, and for $\theta \in \{0,\pi/2\}$ define
    \begin{equation}
        V_\theta
        \coloneqq
        \I \otimes \op{0}{0}_c
        +
        e^{i\theta}
        C \otimes \op{1}{1}_c.
    \end{equation}
    This is also a poly-size circuit, and by \cref{thm:td_impurity_encoding} it can be encoded into a time-dependent impurity evolution operator on only one more mode. Let $U_\theta$ be that unitary on $n+2$ fermionic modes and denote
    \begin{equation}
        \Gamma \coloneqq \gamma_{2(n+1)}
    \end{equation}
    for the second Majorana operator of the last non-ancillary mode. Following \cref{lem:majorana_readout} and a direct calculation, we get
    \begin{equation}\label{eq:DQC_trace_reduction}
    \begin{split}
        \tr\l(
            \frac{\I}{2^{n+2}}
            U_\theta^\dagger \Gamma U_\theta \Gamma
        \r)
        &=
        \frac{1}{2^{n+1}}
        \tr(
            V_\theta^\dagger X_c V_\theta X_c
        )\\
        &=
        \frac{1}{2^{n+1}}
        \tr(
            e^{i\theta}\op{0}{0}_c \otimes C
            +
            e^{-i\theta}\op{1}{1}_c \otimes C^\dagger
        )\\
        &=
        \Re\l(
            e^{i\theta}\frac{\tr(C)}{2^n}
        \r).
    \end{split}
    \end{equation}
    For $\theta = 0$, \cref{eq:DQC_trace_reduction} gives the real part while $\theta = \pi/2$ gives the (negative) imaginary part. Thus two calls to Problem~\ref{prob:corr_func} solve trace estimation to $\epsilon = 1/{\poly(n)}$ accuracy. Since trace estimation is $\DQC_1$-complete, Problem~\ref{prob:corr_func} with $\beta = 0$ is $\DQC_1$-hard.
\end{proof}

\subsection{$\BQP$-completeness at finite temperature}

Because of our efficient quantum algorithm for preparing thermal states (\cref{cor:quantum-thermal}), $\BQP$ containment is fairly straightforward. However, in order to exhibit hardness from an initial thermal state, we need a result of Schulman and Vazirani~\cite{schulman1999molecular} that concentrates many thermal qubits into a nearly pure state with only mild space overhead. Then we can map the expectation with such a mixed state to the form required of Problem~\ref{prob:circuit_accept}, with only inverse-polynomial error.

For $\eta \in [0,1]$, define the binary entropy function
\begin{equation}
    h_2\l(\frac{1+\eta}{2}\r)
    \coloneqq
    -\frac{1+\eta}{2}\log_2\l(\frac{1+\eta}{2}\r)
    -\frac{1-\eta}{2}\log_2\l(\frac{1-\eta}{2}\r).
\end{equation}
\cite{schulman1999molecular} proved the following.

\begin{proposition}\label{prop:SV99}
    Consider $N$ independent random bits of bias $\eta > 0$, i.e.,
    \begin{equation}
        \Pr(0) = \frac{1+\eta}{2},
        \quad
        \Pr(1) = \frac{1-\eta}{2}.
    \end{equation}
    For $N > \eta^{-2}$, there is a reversible algorithm using no additional bits and running in time $\O(N\log N)$ which, except with probability $\exp(-N^{\Theta(1)})$, extracts
    \begin{equation}
        M
        =
        N\l(
            1-h_2\l(\frac{1+\eta}{2}\r)-o(1)
        \r)
    \end{equation}
    bits, each having bias $1-\exp(-N^{\Theta(1)})$.
\end{proposition}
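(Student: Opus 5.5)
This is the Schulman--Vazirani extraction result, so the plan is to reconstruct their argument in two stages: first a deterministic compression of the bit string toward its low-entropy typical structure, then a boosting stage that drives the extracted bits to bias $1-\exp(-N^{\Theta(1)})$. All operations are permutations of $\{0,1\}^N$, which keeps the procedure reversible and avoids extra bits. Since the statement is quoted from \cite{schulman1999molecular}, in the paper we would simply invoke it; the sketch below is how I would reprove it if needed.

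\emph{Step 1: typical-set compression.} The $N$ bits are i.i.d.\ with $\Pr(1)=(1-\eta)/2$. The Hamming weight $w$ concentrates in a window of width $N^{1/2+o(1)}$ around $N(1-\eta)/2$. By Chernoff, it falls outside this window only with probability $\exp(-N^{\Theta(1)})$; this is where the hypothesis $N>\eta^{-2}$ makes the window nontrivial. Conditioned on $w$, the string is uniform over $\binom{N}{w}$ strings.

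We then apply a reversible ranking map. Write $w$ in $O(\log N)$ bits, and write the rank of the string among weight-$w$ strings in $\lceil\log_2\binom{N}{w}\rceil \le N h_2\!\left(\frac{1+\eta}{2}\right)+o(N)$ bits. All remaining bit positions are then deterministically $0$, except when $w$ is atypical or the rank exceeds a fixed power-of-two boundary. Because the map is a bijection on $\{0,1\}^N$, the $M=N\bigl(1-h_2\!\left(\frac{1+\eta}{2}\right)-o(1)\bigr)$ leading bits equal $0$ except with the Chernoff failure probability.

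\emph{Step 2: efficiency.} Computing the rank of a weight-$w$ string and the inverse is standard combinatorial (un)ranking, using cumulative binomial sums. Done naively this costs polynomial time. To reach $\O(N\log N)$, I would follow Schulman--Vazirani and process blocks of size $N^{\Theta(1)}$ recursively, so the arithmetic stays on short numbers. This introduces per-block failure probabilities that are still $\exp(-N^{\Theta(1)})$ after a union bound.

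\emph{Main obstacle.} The delicate point is obtaining per-bit bias $1-\exp(-N^{\Theta(1)})$ simultaneously with $\O(N\log N)$ time, without ancillas. A direct ranking achieves the rate but not the runtime. Blocking restores the runtime, but at the cost of $o(1)$ rate loss and careful handling of boundary effects in the ranks. I expect balancing the block size against the tail bounds to be the hardest part of the argument.
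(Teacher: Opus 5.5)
The paper gives no proof of this proposition. It is quoted from \cite{schulman1999molecular} and used as a black box, so your decision to invoke it matches the paper exactly. Your self-contained sketch, however, would not establish the statement as written: it proves the counting half and only asserts the algorithmic half.

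\textbf{What Step~1 actually proves.} After two small repairs, Step~1 shows that some permutation $\pi$ of $\{0,1\}^N$ sends every typical string into $\{0^M\}\times\{0,1\}^{N-M}$. The repairs are these.
\begin{itemize}
\item The window must have half-width $N^{1/2+c}$ for a fixed constant $c\in(0,1/2)$. A width of $N^{1/2+o(1)}$, such as $\sqrt N\log N$, only gives tails $e^{-\Theta(\log^2 N)}$.
\item $\pi$ should be a global ranking that lists the typical strings first. Writing $w$ and the within-weight rank in fixed fields is not a bijection onto $\{0,1\}^N$: for $w\approx N/2$ the rank alone needs $N-\tfrac12\log_2 N+O(1)$ bits, so together with the $\approx\log_2 N$ bits for $w$ the layout overflows.
\end{itemize}
This yields the rate $M$ and the $\exp(-N^{\Theta(1)})$ failure probability, but it is pure counting.

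\textbf{The missing ancilla-free implementation.} The substance of the proposition is that such a $\pi$ can be \emph{implemented} by few reversible gates acting on the $N$ bits themselves. Your sentence ``all operations are permutations of $\{0,1\}^N$, which keeps the procedure reversible and avoids extra bits'' conflates existence with implementation.
\begin{itemize}
\item Ranking and unranking manipulate binomial coefficients and partial sums of $\Theta(N)$ bits (or $\Theta(k)$ bits per block), and these must be stored somewhere.
\item The standard garbage-free reversible construction (compute $\pi(x)$ into a fresh register, then erase $x$ using $\pi^{-1}$) needs that many \emph{clean} ancillas.
\item Where the paper applies this result there is no clean workspace at all. $S_N$ in Lemma~\ref{lem:SV_trace_distance} acts on $\tau_\eta^{\otimes N}$ alone, and in the $\BQP$-hardness instance every qubit is thermal.
\end{itemize}
You would need either a compression assembled from in-place reversible primitives, or an explicit account of how workspace is extracted from the thermal bits themselves (tracking that such bits are only zero with high probability). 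The sketch supplies neither.

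\textbf{The runtime.} Step~2 does not deliver $\O(N\log N)$.
\begin{itemize}
\item Keeping the per-block failure probability at $\exp(-N^{\Theta(1)})$ forces blocks of size $k=N^{\Omega(1)}$.
\item Straightforward ranking of a $k$-bit block (about $k$ additions of $\Theta(k)$-bit binomial coefficients) costs order $k^2$ bit operations, so the total is $N^{1+\Omega(1)}$.
\item Polylogarithmic blocks fix the runtime but only give failure probability $e^{-\polylog N}$.
\end{itemize}
You flag this tension as the main obstacle but leave it unresolved. For the paper's purposes this is the less serious gap, since Lemma~\ref{lem:SV_trace_distance} only needs poly-size circuits. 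The ancilla-free implementation is the part that is indispensable and missing.
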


In other words, there is an efficient reversible circuit which converts $N$ thermal bits into $\O(N)$ nearly pure bits, exponentially close to $\ket{0}$, provided that the bias $\eta$ is sufficiently far from $0$. For our application, we need a joint trace-distance statement rather than a per-qubit bound; this is an immediate consequence of \cref{prop:SV99}.

\begin{lemma}\label{lem:SV_trace_distance}
    Fix $\eta > 0$ and let
    \begin{equation}
        \tau_{\eta}
        \coloneqq
        \frac{\I+\eta Z}{2}.
    \end{equation}
    There is a family of poly-size reversible quantum circuits $S_N$ and a designated output register $R$ of
    \begin{equation}
        M
        =
        N\l(
            1-h_2\l(\frac{1+\eta}{2}\r)-o(1)
        \r)
    \end{equation}
    qubits such that
    \begin{equation}\label{eq:SV_trace_distance}
        \frac{1}{2}
        \left\|
            \tr_{\neg R}
            \l[
                S_N \tau_{\eta}^{\otimes N} S_N^\dagger
            \r]
            -
            \op{0^M}{0^M}
        \right\|_1
        \leq
        \exp(-N^{\Theta(1)}).
    \end{equation}
    In particular, for every number $q$ of desired nearly pure qubits, there is a sufficiently large $N = \O(q)$ such that $M \geq q$, with the implicit constant depending only on $\eta$.
\end{lemma}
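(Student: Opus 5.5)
The plan is to realize the Schulman--Vazirani procedure of \cref{prop:SV99} as a unitary acting diagonally in the computational basis, and then turn its classical per-bit guarantee into the joint trace-distance bound \eqref{eq:SV_trace_distance}.

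\textbf{Step 1: from the classical algorithm to a circuit.} The state $\tau_\eta^{\otimes N}$ is diagonal in the computational basis. Concretely, it is the classical distribution of $N$ independent bits, each equal to $0$ with probability $(1+\eta)/2$. The algorithm of \cref{prop:SV99} is reversible, uses no extra bits, and runs in time $\O(N\log N)$. Hence it is a permutation $\pi$ of $\{0,1\}^N$ computable by a reversible circuit of size $\poly(N)$, for instance from Toffoli and NOT gates. We take $S_N$ to be the corresponding permutation unitary, $S_N\ket{x}=\ket{\pi(x)}$. Then $S_N\tau_\eta^{\otimes N}S_N^\dagger$ is again diagonal, and it encodes exactly the output distribution of the classical algorithm. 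Tracing out the complement of $R$ gives a diagonal state whose diagonal is the marginal distribution $p_R$ of the $M$ designated output bits.

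\textbf{Step 2: per-bit bias to joint trace distance.} Both $\tr_{\neg R}[\cdots]$ and $\op{0^M}{0^M}$ are diagonal. The trace distance between them therefore equals the total variation distance between $p_R$ and the point mass at $0^M$, which is $\Pr[\text{some output bit}\neq 0]$.

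\cref{prop:SV99} says that, except on a failure event of probability $\exp(-N^{\Theta(1)})$, each extracted bit has bias $1-\exp(-N^{\Theta(1)})$. So each bit is $1$ with probability at most $\exp(-N^{\Theta(1)})$. A union bound over the $M\le N$ bits, together with the failure event, gives a total error of at most $(N+1)\exp(-N^{\Theta(1)})$. This is still of the form $\exp(-N^{\Theta(1)})$ after adjusting the exponent constant.

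The one subtlety is how to read "except with probability": the per-bit bias may be stated conditioned on success. Either way the failure probability is simply added to the union bound, so the bound is unaffected.

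\textbf{Step 3: sizing.} Because $\eta>0$ is fixed, $c_\eta\coloneqq 1-h_2((1+\eta)/2)>0$. Thus $M=N(c_\eta-o(1))\ge c_\eta N/2$ for all $N\ge N_0(\eta)$. Given $q$, choose $N=\max\{N_0(\eta),\lceil \eta^{-2}\rceil+1,\lceil 2q/c_\eta\rceil\}=\O(q)$. This satisfies the hypothesis $N>\eta^{-2}$ and guarantees $M\ge q$. If we want exactly $q$ qubits, we keep any $q$ of the $M$ output bits; discarding the others is a further partial trace, which can only decrease trace distance.

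\textbf{Main obstacle.} The main obstacle is essentially bookkeeping in Step 2: making sure the per-bit guarantee of \cref{prop:SV99} really yields a \emph{joint} statement. The union bound handles this without needing any independence among the output bits, since $M\le N$ only costs a polynomial factor against the stretched-exponential error.
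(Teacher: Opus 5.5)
Your proposal is correct and matches the paper's proof essentially step for step. The paper also notes that the reversible Schulman--Vazirani map sends the diagonal state $\tau_\eta^{\otimes N}$ to another diagonal state, and reduces the trace distance to $1-\langle 0^M|\rho_R|0^M\rangle$, the probability that some output bit is $1$. It bounds this by $\Pr(\neg G)$ plus a union bound over the $M$ bits conditioned on the good event $G$, and then sizes $N\ge 2q/r_\eta$ using $r_\eta=1-h_2((1+\eta)/2)>0$. Your extra bookkeeping (enforcing $N>\eta^{-2}$ and discarding surplus output bits by a further partial trace) is harmless and slightly more explicit than the paper.
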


\begin{proof}
    The algorithm of \cref{prop:SV99} is reversible and acts classically in the computational basis, so its action on $\tau_{\eta}^{\otimes N}$ produces another classical state. Let $G$ denote the good event in \cref{prop:SV99}. There are constants $c_1,c_2>0$ such that
    \begin{equation}
        \Pr(\neg G) \leq e^{-N^{c_1}},
    \end{equation}
    while conditioned on $G$ every one of the $M$ extracted qubits has bias at least $1-e^{-N^{c_2}}$. Therefore, if $X_1,\ldots,X_M$ denote the measurement outcomes of these qubits,
    \begin{equation}
        \Pr(X_j = 1 \mid G)
        \leq
        \frac{1}{2}e^{-N^{c_2}}.
    \end{equation}
    A union bound gives
    \begin{equation}
    \begin{split}
        \Pr\l[
            (X_1,\ldots,X_M)\neq 0^M
        \r]
        &\leq
        \Pr(\neg G)
        +
        \Pr\l[
            (X_1,\ldots,X_M)\neq 0^M
            \,\middle|\,
            G
        \r] \\
        &\leq
        e^{-N^{c_1}}
        +
        \frac{M}{2}e^{-N^{c_2}} = e^{-N^{\Theta(1)}}.
    \end{split}
    \end{equation}
    Since everything is diagonal, tracing out the complement of $R$ gives a state satisfying
    \begin{equation}
        \frac{1}{2}
        \left\|
            \rho_R-\op{0^M}{0^M}
        \right\|_1
        =
        1-\bra{0^M}\rho_R\ket{0^M},
    \end{equation}
    which proves \cref{eq:SV_trace_distance}. Finally, for constant $\eta > 0$, define
    \begin{equation}
        r_{\eta}
        \coloneqq
        1-h_2\l(\frac{1+\eta}{2}\r)
        > 0.
    \end{equation}
    Hence, for all sufficiently large $N$, $M \geq r_{\eta}N/2$. Choosing $N \geq 2q/r_{\eta}$ therefore gives at least $q$ purified qubits.
\end{proof}

We now prove $\BQP$-completeness of Problem~\ref{prob:corr_func} for $\beta \in [\Omega(1), \poly(n)]$.

\begin{proof}[Proof of \cref{thm:BQP-complete}]
    We first show containment. By \cref{cor:quantum-thermal},
    the Gibbs state $\sigma_\beta$ of any impurity Hamiltonian $H(0)$ can be prepared to $1/{\poly(n)}$ trace distance in time $\poly(n, \beta)$, and just as in the proof of \cref{thm:DQC1-complete}, $U(t)$ can be implemented to $1/{\poly(n)}$ error in polynomial time~\cite{poulin2011quantum}. Thus up to $1/{\poly(n)}$ approximation errors, we can prepare $\sigma_\beta$ and perform a Hadamard test for the unitary
    \begin{equation}
        W = U(t)^\dagger A \, U(t) \cdot U(t')^\dagger B \, U(t').
    \end{equation}
    Taking all implementation errors sufficiently smaller than $\epsilon$ and estimating the real and imaginary parts by the Hadamard test proves that Problem~\ref{prob:corr_func} is in $\BQP$.

    Next we prove hardness. It suffices to fix $t' = 0$ and take $\beta = \Theta(1)$ to be any nonzero constant for this hard instance. Let $C_x$ be the $q$-qubit circuit of an arbitrary instance of Problem~\ref{prob:circuit_accept}, with acceptance probability
    \begin{equation}
        p_x
        =
        \tr\l(
            \frac{\I-Z_1}{2} \cdot
            C_x
            \op{0^q}{0^q}
            C_x^\dagger
        \r).
    \end{equation}
    Introduce an energy scale $\Delta > 0$. We will use $N$ logical thermal qubits, one additional logical ancilla qubit $c$, and one ancilla qubit $\anc$ for the fermionic parity encoding. Set the initial physical Hamiltonian to be a simple diagonal free fermion bath:
    \begin{equation}\label{eq:thermal_hardness_H0}
        H(0)
        =
        \Delta\sum_{j=1}^{N+1} n_j,
        \quad
        \text{where } n_j = \frac{\I-Z_j}{2},
    \end{equation}
    Its Gibbs state is
    \begin{equation}\label{eq:thermal_product_state}
        \sigma_\beta
        =
        \frac{\I_\anc}{2}
        \otimes
        \tau_{\eta_\beta}^{\otimes(N+1)},
        \quad \text{where }
        \eta_\beta
        =
        \tanh\l(\frac{\beta\Delta}{2}\r)
        \quad
        \text{and}
        \quad
        \tau_{\eta_\beta}
        =
        \frac{\I+\eta_\beta Z}{2}.
    \end{equation}
    Since $\beta, \Delta > 0$ are fixed constants, $\eta_\beta > 0$ is also a constant bounded away from $0$. Moreover, from the definition of the parity encoding (\cref{eq:parity_encoding}), it holds that
    \begin{equation}\label{eq:thermal_parity_decomposition}
        \sigma_\beta
        =
        \frac{1}{2}
        \sum_{b=0}^1
        \mathcal{E}_b
        \tau_{\eta_\beta}^{\otimes(N+1)}
        \mathcal{E}_b^\dagger.
    \end{equation}

    By \cref{lem:SV_trace_distance}, we may choose $N = \O(q)$ and a reversible circuit $S_N$ on the first $N$ logical qubits such that, for a designated $q$-qubit register $R$,
    \begin{equation}\label{eq:cold_register}
        \frac{1}{2}
        \left\|
            \tr_{\neg R}
            \l[
                S_N\tau_{\eta_\beta}^{\otimes N}S_N^\dagger
            \r]
            -
            \op{0^q}{0^q}
        \right\|_1
        \leq
        e^{-N^{\Theta(1)}}.
    \end{equation}
    This register $R$ contains the nearly clean qubits on which $C_x$ will act. Without loss of generality we can arrange $R = \{1, \ldots, q\}$. Define the unitary
    \begin{equation}
        M_x
        \coloneqq
        C_x^\dagger Z_1 C_x.
    \end{equation}
    and its controlled form on the ancilla $c = N+1$:
    \begin{equation}
        W_x
        \coloneqq
        \I \otimes \op{0}{0}_c
        +
        M_x \otimes \op{1}{1}_c.
    \end{equation}
    Finally, define
    \begin{equation}
        V_x
        \coloneqq
        W_x(S_N\otimes\I_c)
    \end{equation}
    which includes the reversible nearly purifying circuit $S_N$. Observe that all of these circuits have polynomial size. Since $M_x = M_x^\dagger$, we get
    \begin{equation}
        W_x^\dagger X_c W_x X_c
        =
        \I_c\otimes M_x.
    \end{equation}
    Furthermore, $S_N$ acts trivially on $c$, so we also have
    \begin{equation}\label{eq:BQP_readout_logical}
        V_x^\dagger X_c V_x X_c
        =
        S_N^\dagger M_x S_N \otimes \I_c.
    \end{equation}

    Compile $V_x$ into an impurity evolution operator $U_x$ using the universality construction of \cref{thm:td_impurity_encoding}, and put
    \begin{equation}
        \Gamma
        \coloneqq
        \gamma_{2(N+1)}.
    \end{equation}
    We take the value of $H(s)$ at $s = 0$ to be \cref{eq:thermal_hardness_H0} and use the compiled piecewise-constant Hamiltonian for $s > 0$; this value $H(0)$ at a single time does not affect the time-ordered exponential. By \cref{lem:majorana_readout,eq:thermal_parity_decomposition,eq:BQP_readout_logical}, we get
    \begin{equation}
        \tr\l(
            \sigma_\beta
            U_x^\dagger \Gamma U_x \Gamma
        \r) =
        \tr\l(
            \tau_{\eta_\beta}^{\otimes(N+1)}
            V_x^\dagger X_c V_x X_c
        \r) =
        \tr\l(
            \tau_{\eta_\beta}^{\otimes N}
            S_N^\dagger M_x S_N
        \r).
    \end{equation}
    Following \cref{eq:cold_register}, denote the $R$-register state by
    \begin{equation}
        \rho_R
        =
        \tr_{\neg R}
        \l(
            S_N\tau_{\eta_\beta}^{\otimes N}S_N^\dagger
        \r).
    \end{equation}
    Then
    \begin{equation}
        \tr\l(
            \sigma_\beta
            U_x^\dagger \Gamma U_x \Gamma
        \r)
        =
        \tr\l(
            \rho_R C_x^\dagger Z_1 C_x
        \r),
    \end{equation}
    and hence by \cref{eq:cold_register} we have
    \begin{equation}
    \begin{split}
        \l|
            \tr\l(
                \rho_R C_x^\dagger Z_1 C_x
            \r)
            -
            \ev{C_x^\dagger Z_1 C_x}{0^q}
        \r|
        &\leq
        \|
            \rho_R - \op{0^q}{0^q}
        \|_1 \\
        &=
        e^{-N^{\Theta(1)}}.
    \end{split}
    \end{equation}
    Note that we have absorbed irrelevant constants into the $\Theta(1)$. Finally, using the fact that
    \begin{equation}
        \ev{C_x^\dagger Z_1 C_x}{0^q}
        =
        1-2p_x,
    \end{equation}
    we have for $x \in \{0,1\}^n$ for sufficiently large $n$,
    \begin{align}
        x \in L_{\mathrm{yes}}
        &\implies
        \tr\l(
            \sigma_\beta
            U_x^\dagger \Gamma U_x \Gamma
        \r)
        \leq -\frac{1}{4}, \\
        x \in L_{\mathrm{no}}
        &\implies
        \tr\l(
            \sigma_\beta
            U_x^\dagger \Gamma U_x \Gamma
        \r)
        \geq \frac{1}{4}.
    \end{align}
    Thus an inverse-polynomial additive approximation to the correlation function decides every instance of the quantum circuit acceptance problem. Since $N = \O(q)$ and $q = \poly(n)$, all circuits above are poly-size on $\poly(N)$ fermionic modes. This proves $\BQP$-hardness.
\end{proof}

\section*{Acknowledgments}\addcontentsline{toc}{section}{Acknowledgments}

We thank Andrew Baczewski, Garnet Chan, David Gosset, Zoë Holmes, Alina Kononov, Joonho Lee, Jake Nelson, Shivesh Pathak, Yu Tong, and Huang Zhen for helpful and illuminating discussions.
J.J.\ is supported by the Simons Quantum Postdoctoral Fellowship, by a Simons Investigator Award in Mathematics through Grant No. 825053.
O.P.\ was supported by the U.S.\ Department of Energy, Office of Science, National Quantum Information Science Research Centers, Quantum Systems Accelerator (Award No. DE-SCL0000121).
C.R.\ acknowledges the funding support by UK Research and Innovation (UKRI) under the UK government's Horizon Europe funding guarantee EP/X032051/1, and U.S. Department of Energy, Office of Science, Accelerated Research in Quantum
Computing, Fundamental Algorithmic Research toward Quantum Utility (FAR-Qu).
A.Z.\ was supported by the National Nuclear Security Administration’s Advanced Simulation and Computing program and the U.S.\ Department of Energy Office
of Fusion Energy Sciences ``Foundations for quantum simulation of warm dense
matter'' project.

This article has been authored by an employee of National Technology \& Engineering Solutions of Sandia, LLC under Contract No.\ DE-NA0003525 with the U.S.\ Department of Energy (DOE). The employee owns all right, title and interest in and to the article and is solely responsible for its contents. The United States Government retains and the publisher, by accepting the article for publication, acknowledges that the United States Government retains a non-exclusive, paid-up, irrevocable, world-wide license to publish or reproduce the published form of this article or allow others to do so, for United States Government purposes. The DOE will provide public access to these results of federally sponsored research in accordance with the DOE Public Access Plan \url{https://www.energy.gov/downloads/doe-public-access-plan}.

\paragraph{Concurrent work.}
Near the completion of this manuscript we became aware of concurrent and independent work by Arunachalam \emph{et al.}~\cite{arunachalam2026quantum}, which also studies the computational complexity of quantum impurity models for both statics and dynamics. They also give polynomial-time classical algorithms for the ground- and thermal-state problems that we consider, although their approaches (especially for thermal states) differ conceptually from ours. The scope of our dynamical results is also distinct:~we study the complexity of thermal Green's functions out of equilibrium, whereas they consider the universality of time-independent impurity Hamiltonians.

\paragraph{AI methodology.}
The authors began studying efficient simulation of impurity models by attempting to refine the analysis of~\cite{bravyi2017complexity}, in particular by seeking a better rational approximation to the square-root function. ChatGPT 5.5 and Claude Opus 4.8 helped identify a bug in an early proof and later produced a counterexample showing that the canonical bath basis could not yield an efficient algorithm. The authors then decided to pursue a change-of-basis approach and formulated a sufficient condition via semidefinite programming (SDP) for constructing a guiding state for quantum phase estimation.  GPT 5.6 suggested Krylov-space techniques used in Wilson's NRG method as a candidate basis and, through subsequent interactions, helped establish that a Krylov basis indeed satisfies the human-formulated SDP condition. This led to our early result of efficient quantum algorithms for estimating the ground-state energy~\cite{jiang2026impuritytalk}.

The authors subsequently attempted to dequantize this quantum algorithm. After several initially unsuccessful approaches with GPT 5.6 Sol Ultra, it eventually proposed, based on the Krylov-basis idea, a candidate proof of an efficient classical algorithm for the ground-state energy. The authors refined and verified this proof and observed that the same strategy could compress thermal states under a thermal-gap condition. The authors then abstracted the proof strategy to apply uniformly to both ground and thermal states, resulting in the framework presented in Section~\ref{sec:compress}.

The results on nonequilibrium Green's function estimation, including the low-space-overhead encoding for universal time-dependent impurity Hamiltonians, were conceived of by the authors, although initially only considering the setting of $\DQC_1$. The authors later recognized that $\BQP$ containment was immediate from their quantum algorithms and so investigated whether $\BQP$-hardness was also possible, even at high temperatures. GPT 5.6 identified the result of Schulman and Vazirani~\cite{schulman1999molecular} that became the key ingredient for bridging that gap.

Lighter models of GPT 5.6 and Opus 5 were used in sharpening calculations and refining proofs overall. The authors independently checked all technical details and take full responsibility for the correctness of the results.

\bibliographystyle{alphaurl}
\bibliography{reference}

@article{bravyi2017complexity,
  author  = {Bravyi, Sergey and Gosset, David},
  title   = {Complexity of Quantum Impurity Problems},
  journal = {Communications in Mathematical Physics},
  volume  = {356},
  number  = {2},
  pages   = {451--500},
  year    = {2017},
  doi     = {10.1007/s00220-017-2976-9},
  url     = {https://doi.org/10.1007/s00220-017-2976-9}
}

@article{arunachalam2026quantum,
  title={Quantum impurity models: easy at equilibrium, universal in motion},
  author={Arunachalam, Srinivasan and Bravyi, Sergey and Chowdhury, Anirban and Dutt, Arkopal and Gheorghiu, Alexandru and Li, Zhi},
  journal={arXiv:2610.02130},
  year={2026},
  url={https://arxiv.org/abs/2610.02130}
}

@article{molnar2014approximating,
  title = {Approximating {G}ibbs states of local {H}amiltonians efficiently with projected entangled pair states},
  author = {Molnar, Andras and Schuch, Norbert and Verstraete, Frank and Cirac, J. Ignacio},
  journal = {Physical Review B},
  volume = {91},
  issue = {4},
  pages = {045138},
  numpages = {11},
  year = {2015},
  month = {Jan},
  publisher = {American Physical Society},
  doi = {10.1103/PhysRevB.91.045138},
  url = {https://link.aps.org/doi/10.1103/PhysRevB.91.045138}
}

@misc{jiang2026impuritytalk,
  author       = {Jiaqing Jiang},
  title        = {Ground Energy Estimation of Quantum Impurity Model Is in {BQP}},
  howpublished = {Talk at the Quantum Summer Cluster Final Workshop, Simons Institute for the Theory of Computing},
  year         = {2026},
  month        = {July},
  url          = {https://www.youtube.com/watch?v=krqOy27VIu0},
  note = {Video recording; joint work with Nathan Ju, Ojas Parekh, Chaithanya Rayudu, and Andrew Zhao},
}

@article{aharonov2009power,
  title={The power of quantum systems on a line},
  author={Aharonov, Dorit and Gottesman, Daniel and Irani, Sandy and Kempe, Julia},
  journal={Communications in Mathematical Physics},
  volume={287},
  number={1},
  pages={41--65},
  year={2009},
  publisher={Springer},
  doi={10.1007/s00220-008-0710-3}
}

@article{rubin2016hybrid,
  title={A hybrid classical/quantum approach for large-scale studies of quantum systems with density matrix embedding theory},
  author={Rubin, Nicholas C},
  journal={arXiv:1610.06910},
  year={2016},
  url={https://arxiv.org/abs/1610.06910}
}

@article{bauer2016hybrid,
  title={Hybrid quantum-classical approach to correlated materials},
  author={Bauer, Bela and Wecker, Dave and Millis, Andrew J and Hastings, Matthew B and Troyer, Matthias},
  journal={Physical Review X},
  volume={6},
  number={3},
  pages={031045},
  year={2016},
  publisher={APS},
  doi={10.1103/PhysRevX.6.031045}
}

@inproceedings{gottesman2009quantum,
  title={The quantum and classical complexity of translationally invariant tiling and {Hamiltonian} problems},
  author={Gottesman, Daniel and Irani, Sandy},
  booktitle={2009 50th Annual IEEE Symposium on Foundations of Computer Science},
  pages={95--104},
  year={2009},
  organization={IEEE},
  doi={10.1109/FOCS.2009.22}
}

@article{landau2015polynomial,
  title={A polynomial time algorithm for the ground state of one-dimensional gapped local {Hamiltonians}},
  author={Landau, Zeph and Vazirani, Umesh and Vidick, Thomas},
  journal={Nature Physics},
  volume={11},
  number={7},
  pages={566--569},
  year={2015},
  publisher={Nature Publishing Group UK London},
  doi={10.1038/nphys3345}
}

@article{brod2014computational,
  title={The computational power of matchgates and the {XY} interaction on arbitrary graphs},
  author={Brod, Daniel J and Childs, Andrew M},
  journal={Quantum Information and Computation},
  volume={14},
  number={11\&12},
  pages={0901--0916},
  year={2014},
  doi={10.26421/QIC14.11-12-1},
  url={https://arxiv.org/abs/1308.1463}
}

@article{tong2025fast,
  title={Fast mixing of weakly interacting fermionic systems at any temperature},
  author={Tong, Yu and Zhan, Yongtao},
  journal={PRX Quantum},
  volume={6},
  number={3},
  pages={030301},
  year={2025},
  publisher={APS},
  doi={10.1103/h1dx-ps5p}
}

@article{smid2025polynomial,
  title={Polynomial-time quantum {G}ibbs sampling for the weak and strong coupling regime of the {Fermi-Hubbard} model at any temperature},
  author={{\v{S}}m{\'\i}d, {\v{S}}t{\v{e}}p{\'a}n and Meister, Richard and Berta, Mario and Bondesan, Roberto},
  journal={Nature Communications},
  volume={16},
  number={1},
  pages={10736},
  year={2025},
  publisher={Nature Publishing Group UK London},
  doi={10.1038/s41467-025-65765-1}
}

@article{lieb1968theorem,
  title={A theorem on {P}faffians},
  author={Lieb, Elliott H},
  journal={Journal of Combinatorial Theory},
  volume={5},
  number={3},
  pages={313--319},
  year={1968},
  publisher={Elsevier},
  doi={10.1016/S0021-9800(68)80078-X}
}

@article{balian1969nonunitary,
  title={Nonunitary {B}ogoliubov transformations and extension of {W}ick’s theorem},
  author={Balian, R and Brezin, E},
  journal={Il Nuovo Cimento B (1965-1970)},
  volume={64},
  number={1},
  pages={37--55},
  year={1969},
  publisher={Springer},
  doi={10.1007/BF02710281}
}

@article{klich2014note,
  title={A note on the full counting statistics of paired fermions},
  author={Klich, Israel},
  journal={Journal of Statistical Mechanics: Theory and Experiment},
  volume={2014},
  number={11},
  pages={P11006},
  year={2014},
  doi={10.1088/1742-5468/2014/11/P11006}
}

@article{erakovic2025high,
  title = {High Ground State Overlap via Quantum Embedding Methods},
  author = {Erakovic, Mihael and Witteveen, Freek and Harley, Dylan and G\"unther, Jakob and Bensberg, Moritz and Meitei, Oinam Romesh and Cho, Minsik and Van Voorhis, Troy and Reiher, Markus and Christandl, Matthias},
  journal = {PRX Life},
  volume = {3},
  pages = {013003},
  numpages = {23},
  year = {2025},
  publisher = {American Physical Society},
  doi = {10.1103/PRXLife.3.013003},
  url = {https://link.aps.org/doi/10.1103/PRXLife.3.013003}
}

@article{holmes2022quantum,
  doi = {10.22331/q-2022-10-06-825},
  url = {https://doi.org/10.22331/q-2022-10-06-825},
  title = {Quantum algorithms from fluctuation theorems: {T}hermal-state preparation},
  author = {Holmes, Zoe and Muraleedharan, Gopikrishnan and Somma, Rolando D. and Subasi, Yigit and {\c{S}}ahino{\u{g}}lu, Burak},
  journal = {{Quantum}},
  issn = {2521-327X},
  publisher = {{Verein zur F{\"{o}}rderung des Open Access Publizierens in den Quantenwissenschaften}},
  volume = {6},
  pages = {825},
  year = {2022}
}

@article{shinaoka2015negative,
  title = {Negative sign problem in continuous-time quantum {Monte Carlo}: Optimal choice of single-particle basis for impurity problems},
  author = {Shinaoka, Hiroshi and Nomura, Yusuke and Biermann, Silke and Troyer, Matthias and Werner, Philipp},
  journal = {Physical Review B},
  volume = {92},
  number = {19},
  pages = {195126},
  year = {2015},
  publisher = {American Physical Society},
  doi = {10.1103/PhysRevB.92.195126}
}

@inproceedings{bergamaschi2025fast,
  title={Fast Mixing of Quantum Spin Chains at All Temperatures},
  author={Bergamaschi, Thiago and Chen, Chi-Fang},
  booktitle={Proceedings of the 58th Annual ACM Symposium on Theory of Computing},
  pages={835--844},
  year={2026},
  doi={10.1145/3798129.3800798}
}

@article{wilson1975renormalization,
  title={The renormalization group: Critical phenomena and the {Kondo} problem},
  author={Wilson, Kenneth G},
  journal={Reviews of Modern Physics},
  volume={47},
  number={4},
  pages={773--840},
  year={1975},
  publisher={APS},
  doi={10.1103/RevModPhys.47.773}
}

@article{anderson1961localized,
  title={Localized magnetic states in metals},
  author={Anderson, Philip Warren},
  journal={Physical Review},
  volume={124},
  number={1},
  pages={41--53},
  year={1961},
  publisher={APS},
  doi={10.1103/PhysRev.124.41}
}

@article{kondo1964resistance,
  title={Resistance minimum in dilute magnetic alloys},
  author={Kondo, Jun},
  journal={Progress of Theoretical Physics},
  volume={32},
  number={1},
  pages={37--49},
  year={1964},
  publisher={Oxford University Press},
  doi={10.1143/PTP.32.37}
}

@article{goldhaber1998kondo,
  title={{Kondo} effect in a single-electron transistor},
  author={Goldhaber-Gordon, David and Shtrikman, Hadas and Mahalu, Diana and Abusch-Magder, David and Meirav, U and Kastner, MA},
  journal={Nature},
  volume={391},
  number={6663},
  pages={156--159},
  year={1998},
  publisher={Nature Publishing Group UK London},
  doi={10.1038/34373}
}

@article{PhysRevB.21.1003,
  title = {Renormalization-group approach to the {Anderson} model of dilute magnetic alloys. {I}. Static properties for the symmetric case},
  author = {Krishna-Murthy, HR and Wilkins, JW and Wilson, KG},
  journal = {Physical Review B},
  volume = {21},
  issue = {3},
  pages = {1003--1043},
  numpages = {0},
  year = {1980},
  month = {Feb},
  publisher = {American Physical Society},
  doi = {10.1103/PhysRevB.21.1003},
  url = {https://link.aps.org/doi/10.1103/PhysRevB.21.1003}
}

@article{bulla2008numerical,
  title={Numerical renormalization group method for quantum impurity systems},
  author={Bulla, Ralf and Costi, Theo A and Pruschke, Thomas},
  journal={Reviews of Modern Physics},
  volume={80},
  number={2},
  pages={395--450},
  year={2008},
  publisher={APS},
  doi={10.1103/RevModPhys.80.395}
}

@article{nielsen2005fermionic,
  title={The Fermionic canonical commutation relations and the {Jordan-Wigner} transform},
  author={Nielsen, Michael A},
  journal={School of Physical Sciences, The University of Queensland},
  year={2005},
  url={https://futureofmatter.com/assets/fermions_and_jordan_wigner.pdf}
}

@article{werner2006continuous,
  title={Continuous-time solver for quantum impurity models},
  author={Werner, Philipp and Comanac, Armin and de' Medici, Luca and Troyer, Matthias and Millis, Andrew J},
  journal={Physical Review Letters},
  volume={97},
  number={7},
  pages={076405},
  year={2006},
  publisher={APS},
  doi={10.1103/PhysRevLett.97.076405}
}

@article{rubtsov2005continuous,
  title={Continuous-time quantum {Monte Carlo} method for fermions},
  author={Rubtsov, Alexey N and Savkin, Vladimir V and Lichtenstein, Alexander I},
  journal={Physical Review B—Condensed Matter and Materials Physics},
  volume={72},
  number={3},
  pages={035122},
  year={2005},
  publisher={APS},
  doi={10.1103/PhysRevB.72.035122}
}

@article{gull2011continuous,
  title={Continuous-time {Monte Carlo} methods for quantum impurity models},
  author={Gull, Emanuel and Millis, Andrew J and Lichtenstein, Alexander I and Rubtsov, Alexey N and Troyer, Matthias and Werner, Philipp},
  journal={Reviews of Modern Physics},
  volume={83},
  number={2},
  pages={349--404},
  year={2011},
  publisher={APS},
  doi={10.1103/RevModPhys.83.349}
}

@article{kuwahara2021improved,
  title={Improved thermal area law and quasilinear time algorithm for quantum {Gibbs} states},
  author={Kuwahara, Tomotaka and Alhambra, {\'A}lvaro M and Anshu, Anurag},
  journal={Physical Review X},
  volume={11},
  number={1},
  pages={011047},
  year={2021},
  publisher={APS},
  doi={10.1103/PhysRevX.11.011047}
}

@article{hastings2007quantum,
  title = {Quantum belief propagation: An algorithm for thermal quantum systems},
  author = {Hastings, M. B.},
  journal = {Physical Review B},
  volume = {76},
  issue = {20},
  pages = {201102(R)},
  numpages = {4},
  year = {2007},
  publisher = {American Physical Society},
  doi = {10.1103/PhysRevB.76.201102},
  url = {https://link.aps.org/doi/10.1103/PhysRevB.76.201102}
}

@article{kato2019quantum,
  title={Quantum Approximate {M}arkov Chains are Thermal},
  author={Kato, Kohtaro and Brand{\~a}o, Fernando GSL},
  journal={Communications in Mathematical Physics},
  volume={370},
  number={1},
  pages={117--149},
  year={2019},
  publisher={Springer},
  doi={10.1007/s00220-019-03485-6}
}

@article{brandao2019finite,
  title={Finite Correlation Length Implies Efficient Preparation of Quantum Thermal States},
  author={Brand{\~a}o, Fernando GSL and Kastoryano, Michael J},
  journal={Communications in Mathematical Physics},
  volume={365},
  number={1},
  pages={1--16},
  year={2019},
  publisher={Springer},
  doi={10.1007/s00220-018-3150-8}
}

@article{an2023linear,
  title = {Linear Combination of {Hamiltonian} Simulation for Nonunitary Dynamics with Optimal State Preparation Cost},
  author = {An, Dong and Liu, Jin-Peng and Lin, Lin},
  journal = {Physical Review Letters},
  volume = {131},
  issue = {15},
  pages = {150603},
  numpages = {6},
  year = {2023},
  publisher = {American Physical Society},
  doi = {10.1103/PhysRevLett.131.150603},
  url = {https://link.aps.org/doi/10.1103/PhysRevLett.131.150603}
}

@article{low2025optimal,
  title={Optimal quantum simulation of linear non-unitary dynamics},
  author={Low, Guang Hao and Somma, Rolando D},
  journal={arXiv:2508.19238},
  year={2025},
  url={https://arxiv.org/abs/2508.19238}
}

@article{haah2021quantum,
  title={Quantum algorithm for simulating real time evolution of lattice {H}amiltonians},
  author={Haah, Jeongwan and Hastings, Matthew B and Kothari, Robin and Low, Guang Hao},
  journal={SIAM Journal on Computing},
  volume={52},
  number={6},
  pages={FOCS18--250},
  year={2021},
  publisher={SIAM},
  doi={10.1137/18M1231511}
}

@article{low2018hamiltonian,
  title={{Hamiltonian} simulation in the interaction picture},
  author={Low, Guang Hao and Wiebe, Nathan},
  journal={arXiv:1805.00675},
  year={2018},
  url={https://arxiv.org/abs/1805.00675v2}
}

@article{chen2026time,
  title={Time-dependent {H}amiltonian simulation with optimal query complexity},
  author={Chen, Boyang and Gao, Minbo and Wang, Xinzhao and Zhou, Shuo},
  journal={arXiv:2608.06094},
  year={2026},
  url={https://arxiv.org/abs/2608.06094}
}

@article{chen2026gate,
  title={Gate-Efficient Implementation of the Query-Optimal Time-Dependent {H}amiltonian Simulation},
  author={Chen, Boyang and Gao, Minbo and Ji, Zhengfeng and Li, Tongyang and Wang, Xinzhao and Zhou, Shuo},
  journal={arXiv:2608.30629},
  year={2026},
  url={https://arxiv.org/abs/2608.30629}
}

@article{carlen2014remainder,
  title={Remainder terms for some quantum entropy inequalities},
  author={Carlen, Eric A and Lieb, Elliott H},
  journal={Journal of Mathematical Physics},
  volume={55},
  number={4},
  pages={042201},
  year={2014},
  publisher={AIP Publishing},
  doi={10.1063/1.4871575}
}

@article{yoder2014fixed,
  title = {Fixed-Point Quantum Search with an Optimal Number of Queries},
  author = {Yoder, Theodore J. and Low, Guang Hao and Chuang, Isaac L.},
  journal = {Physical Review Letters},
  volume = {113},
  issue = {21},
  pages = {210501},
  numpages = {5},
  year = {2014},
  month = {Nov},
  publisher = {American Physical Society},
  doi = {10.1103/PhysRevLett.113.210501},
  url = {https://link.aps.org/doi/10.1103/PhysRevLett.113.210501}
}

@inproceedings{gilyen2019quantum,
  title={Quantum singular value transformation and beyond: exponential improvements for quantum matrix arithmetics},
  author={Gily{\'e}n, Andr{\'a}s and Su, Yuan and Low, Guang Hao and Wiebe, Nathan},
  booktitle={Proceedings of the 51st Annual ACM SIGACT Symposium on Theory of Computing},
  pages={193--204},
  year={2019},
  doi={10.1145/3313276.3316366}
}

@book{gautschi2004orthogonal,
  title={Orthogonal polynomials: Computation and Approximation},
  author={Gautschi, Walter},
  year={2004},
  publisher={OUP Oxford}
}

@article{golub1969calculation,
  title={Calculation of {G}auss quadrature rules},
  author={Golub, Gene H and Welsch, John H},
  journal={Mathematics of Computation},
  volume={23},
  number={106},
  pages={221--230},
  year={1969},
  doi={10.2307/2004418}
}

@article{liang2002kondo,
  title = {{Kondo} resonance in a single-molecule transistor},
  author = {Liang, Wenjie and Shores, Matthew P. and Bockrath, Marc and Long, Jeffrey R. and Park, Hongkun},
  journal = {Nature},
  year = {2002},
  month = jun,
  volume = {417},
  number = {6890},
  pages = {725-729},
  publisher = {Springer Nature},
  doi = {10.1038/nature00790},
  pmid = {12066180}
}

@article{georges1992hubbard,
  title = {{Hubbard} model in infinite dimensions},
  author = {Georges, Antoine and Kotliar, Gabriel},
  journal = {Physical Review B},
  year = {1992},
  month = mar,
  volume = {45},
  number = {12},
  pages = {6479-6483},
  publisher = {American Physical Society (APS)},
  doi = {10.1103/physrevb.45.6479},
  pmid = {10000408}
}

@article{georges1996dynamical,
  title = {Dynamical mean-field theory of strongly correlated fermion systems and the limit of infinite dimensions},
  author = {Georges, Antoine and Kotliar, Gabriel and Krauth, Werner and Rozenberg, Marcelo J.},
  journal = {Reviews of Modern Physics},
  year = {1996},
  month = jan,
  volume = {68},
  number = {1},
  pages = {13-125},
  publisher = {American Physical Society (APS)},
  doi = {10.1103/revmodphys.68.13}
}

@article{RevModPhys.78.865,
  title = {Electronic structure calculations with dynamical mean-field theory},
  author = {Kotliar, G. and Savrasov, S. Y. and Haule, K. and Oudovenko, V. S. and Parcollet, O. and Marianetti, C. A.},
  journal = {Reviews of Modern Physics},
  volume = {78},
  issue = {3},
  pages = {865--951},
  numpages = {0},
  year = {2006},
  month = {Aug},
  publisher = {American Physical Society},
  doi = {10.1103/RevModPhys.78.865},
  url = {https://link.aps.org/doi/10.1103/RevModPhys.78.865}
}

@article{knizia2012density,
  title = {Density Matrix Embedding: A Simple Alternative to Dynamical Mean-Field Theory},
  author = {Knizia, Gerald and Chan, Garnet Kin-Lic},
  journal = {Physical Review Letters},
  year = {2012},
  month = nov,
  volume = {109},
  number = {18},
  pages = {186404},
  publisher = {American Physical Society (APS)},
  doi = {10.1103/physrevlett.109.186404},
  pmid = {23215304},
  url = {https://arxiv.org/pdf/1204.5783}
}

@article{PhysRevB.100.125165,
  title = {Dynamical mean field theory simulations with the adaptive sampling configuration interaction method},
  author = {Mejuto-Zaera, Carlos and Tubman, Norm M. and Whaley, K. Birgitta},
  journal = {Physical Review B},
  volume = {100},
  issue = {12},
  pages = {125165},
  numpages = {10},
  year = {2019},
  month = {Sep},
  publisher = {American Physical Society},
  doi = {10.1103/PhysRevB.100.125165},
  url = {https://link.aps.org/doi/10.1103/PhysRevB.100.125165}
}

@article{PhysRevB.100.115154,
  title = {Coupled-cluster impurity solvers for dynamical mean-field theory},
  author = {Zhu, Tianyu and Jim{\'e}nez-Hoyos, Carlos A. and McClain, James and Berkelbach, Timothy C. and Chan, Garnet Kin-Lic},
  journal = {Physical Review B},
  volume = {100},
  issue = {11},
  pages = {115154},
  numpages = {9},
  year = {2019},
  month = {Sep},
  publisher = {American Physical Society},
  doi = {10.1103/PhysRevB.100.115154},
  url = {https://link.aps.org/doi/10.1103/PhysRevB.100.115154}
}

@article{wouters2016practical,
  title={A practical guide to density matrix embedding theory in quantum chemistry},
  author={Wouters, Sebastian and Jim{\'e}nez-Hoyos, Carlos A and Sun, Qiming and Chan, Garnet K-L},
  journal={Journal of Chemical Theory and Computation},
  volume={12},
  number={6},
  pages={2706--2719},
  year={2016},
  publisher={ACS Publications},
  doi={10.1021/acs.jctc.6b00316}
}

@article{PhysRevLett.93.246403,
  title = {Dynamical Mean Field Theory with the Density Matrix Renormalization Group},
  author = {Garc\'{\i}a, Daniel J. and Hallberg, Karen and Rozenberg, Marcelo J.},
  journal = {Physical Review Letters},
  volume = {93},
  issue = {24},
  pages = {246403},
  numpages = {4},
  year = {2004},
  month = {Dec},
  publisher = {American Physical Society},
  doi = {10.1103/PhysRevLett.93.246403},
  url = {https://link.aps.org/doi/10.1103/PhysRevLett.93.246403}
}

@article{PhysRevB.90.115124,
  title = {{Chebyshev} matrix product state impurity solver for dynamical mean-field theory},
  author = {Wolf, F. Alexander and McCulloch, Ian P. and Parcollet, Olivier and Schollw\"ock, Ulrich},
  journal = {Physical Review B},
  volume = {90},
  issue = {11},
  pages = {115124},
  numpages = {17},
  year = {2014},
  month = {Sep},
  publisher = {American Physical Society},
  doi = {10.1103/PhysRevB.90.115124},
  url = {https://link.aps.org/doi/10.1103/PhysRevB.90.115124}
}

@article{PhysRevB.86.165128,
  title = {Truncated configuration interaction expansions as solvers for correlated quantum impurity models and dynamical mean-field theory},
  author = {Zgid, Dominika and Gull, Emanuel and Chan, Garnet Kin-Lic},
  journal = {Physical Review B},
  volume = {86},
  issue = {16},
  pages = {165128},
  numpages = {12},
  year = {2012},
  month = {Oct},
  publisher = {American Physical Society},
  doi = {10.1103/PhysRevB.86.165128},
  url = {https://link.aps.org/doi/10.1103/PhysRevB.86.165128}
}

@article{keldysh1965diagram,
  author  = {Keldysh, L. V.},
  title   = {Diagram Technique for Nonequilibrium Processes},
  journal = {Soviet Physics--Journal of Experimental and Theoretical Physics},
  volume  = {20},
  pages   = {1018--1026},
  year    = {1965},
  url     = {https://www.jetp.ras.ru/cgi-bin/dn/e_020_04_1018.pdf}
}

@article{aoki2014nonequilibrium,
  title={Nonequilibrium dynamical mean-field theory and its applications},
  author={Aoki, Hideo and Tsuji, Naoto and Eckstein, Martin and Kollar, Marcus and Oka, Takashi and Werner, Philipp},
  journal={Reviews of Modern Physics},
  volume={86},
  number={2},
  pages={779--837},
  year={2014},
  publisher={APS},
  doi={10.1103/RevModPhys.86.779}
}

@article{bravyi2005lagrangian,
  author  = {Bravyi, Sergey},
  title   = {{Lagrangian} representation for fermionic linear optics},
  journal = {Quantum Information \& Computation},
  volume  = {5},
  number  = {3},
  pages   = {216--238},
  year    = {2005},
  doi     = {10.26421/QIC5.3-3},
  url     = {https://arxiv.org/abs/quant-ph/0404180}
}

@article{schuch2009computational,
  title = {Computational complexity of interacting electrons and fundamental limitations of density functional theory},
  author = {Schuch, Norbert and Verstraete, Frank},
  journal = {Nature Physics},
  year = {2009},
  month = {Aug},
  volume = {5},
  number = {10},
  pages = {732-735},
  publisher = {Springer Nature},
  doi = {10.1038/nphys1370},
  url = {https://www.nature.com/articles/nphys1370.pdf}
}

@article{ogorman2022intractability,
  title = {Intractability of Electronic Structure in a Fixed Basis},
  author = {O'Gorman, Bryan and Irani, Sandy and Whitfield, James and Fefferman, Bill},
  journal = {PRX Quantum},
  volume = {3},
  issue = {2},
  pages = {020322},
  numpages = {39},
  year = {2022},
  month = {May},
  publisher = {American Physical Society},
  doi = {10.1103/PRXQuantum.3.020322},
  url = {https://link.aps.org/doi/10.1103/PRXQuantum.3.020322}
}

@article{anders2005real,
  title = {Real-Time Dynamics in Quantum-Impurity Systems: A Time-Dependent Numerical Renormalization-Group Approach},
  author = {Anders, Frithjof B. and Schiller, Avraham},
  journal = {Physical Review Letters},
  volume = {95},
  issue = {19},
  pages = {196801},
  numpages = {4},
  year = {2005},
  month = {Oct},
  publisher = {American Physical Society},
  doi = {10.1103/PhysRevLett.95.196801},
  url = {https://link.aps.org/doi/10.1103/PhysRevLett.95.196801}
}

@article{qin2022hubbard,
  title={The {H}ubbard model: A computational perspective},
  author={Qin, Mingpu and Sch{\"a}fer, Thomas and Andergassen, Sabine and Corboz, Philippe and Gull, Emanuel},
  journal={Annual Review of Condensed Matter Physics},
  volume={13},
  number={1},
  pages={275--302},
  year={2022},
  publisher={Annual Reviews},
  doi={10.1146/annurev-conmatphys-090921-033948}
}

@article{madhavan1998tunneling,
  title={Tunneling into a Single Magnetic Atom: Spectroscopic Evidence of the {K}ondo Resonance},
  author={Madhavan, V and Chen, W and Jamneala, T and Crommie, MF and Wingreen, NS},
  journal={Science},
  volume={280},
  number={5363},
  pages={567--569},
  year={1998},
  publisher={American Association for the Advancement of Science},
  doi={10.1126/science.280.5363.567}
}

@article{lin2020nearoptimal,
  doi = {10.22331/q-2020-12-14-372},
  url = {https://doi.org/10.22331/q-2020-12-14-372},
  title = {Near-optimal ground state preparation},
  author = {Lin, Lin and Tong, Yu},
  journal = {{Quantum}},
  issn = {2521-327X},
  publisher = {{Verein zur F{\"{o}}rderung des Open Access Publizierens in den Quantenwissenschaften}},
  volume = {4},
  pages = {372},
  month = dec,
  year = {2020}
}

@inproceedings{bernstein1993quantum,
  title={Quantum complexity theory},
  author={Bernstein, Ethan and Vazirani, Umesh},
  booktitle={Proceedings of the Twenty-Fifth Annual ACM Symposium on Theory of Computing},
  pages={11--20},
  year={1993},
  doi={10.1145/167088.167097}
}

@article{shepherd2006computation,
  title={Computation with unitaries and one pure qubit},
  author={Shepherd, Dan},
  journal={arXiv:quant-ph/0608132},
  year={2006},
  url={https://arxiv.org/abs/quant-ph/0608132}
}

@article{brandao2008entanglement,
  title={Entanglement theory and the quantum simulation of many-body physics},
  author={Brand{\~a}o, Fernando GSL},
  journal={arXiv:0810.0026},
  year={2008},
  url={https://arxiv.org/abs/0810.0026}
}

@article{knill1998power,
  title = {Power of One Bit of Quantum Information},
  author = {Knill, E. and Laflamme, R.},
  journal = {Physical Review Letters},
  volume = {81},
  issue = {25},
  pages = {5672--5675},
  numpages = {0},
  year = {1998},
  month = {Dec},
  publisher = {American Physical Society},
  doi = {10.1103/PhysRevLett.81.5672},
  url = {https://link.aps.org/doi/10.1103/PhysRevLett.81.5672}
}

@article{shor2008estimating,
author = {Shor, Peter W. and Jordan, Stephen P.},
title = {Estimating {J}ones polynomials is a complete problem for one clean qubit},
year = {2008},
publisher = {Rinton Press, Incorporated},
volume = {8},
number = {8},
journal = {Quantum Information \& Computation},
pages = {681–714},
numpages = {34},
doi={10.26421/QIC8.8-9-1},
url={https://arxiv.org/abs/0707.2831}
}

@book{kitaev2002classical,
  title={Classical and Quantum Computation},
  author={Kitaev, Alexei Yu and Shen, Alexander and Vyalyi, Mikhail N},
  volume={47},
  year={2002},
  publisher={American Mathematical Society Providence, RI}
}

@article{poulin2011quantum,
  title = {Quantum Simulation of Time-Dependent {H}amiltonians and the Convenient Illusion of {H}ilbert Space},
  author = {Poulin, David and Qarry, Angie and Somma, Rolando and Verstraete, Frank},
  journal = {Physical Review Letters},
  volume = {106},
  issue = {17},
  pages = {170501},
  numpages = {4},
  year = {2011},
  month = {Apr},
  publisher = {American Physical Society},
  doi = {10.1103/PhysRevLett.106.170501},
  url = {https://link.aps.org/doi/10.1103/PhysRevLett.106.170501}
}

@inproceedings{schulman1999molecular,
  title={Molecular scale heat engines and scalable quantum computation},
  author={Schulman, Leonard J and Vazirani, Umesh V},
  booktitle={Proceedings of the Thirty-First Annual ACM Symposium on Theory of Computing},
  pages={322--329},
  year={1999},
  doi={10.1145/301250.301332}
}
\addcontentsline{toc}{section}{References}
\addtocontents{toc}{\protect\setcounter{tocdepth}{1}}

\appendix

\section{Preprocessing the Hamiltonian in the hybridization form}\label{app:cano_bath}

\subsection{Canonical form of a quadratic Hamiltonian}
We first review the standard canonical transformation used to diagonalize quadratic fermionic Hamiltonians.

\begin{lemma}[Canonical form of a quadratic Hamiltonian]\label{lem:canonical_quadratic}
Let $r\leq n$ be an integer. For any quadratic Hamiltonian
    $Q=\frac{i}{4}\sum_{j,k=1}^{2r}h_{jk}\gamma_j\gamma_k,$
where $h\in\R^{2r\times 2r}$ is real antisymmetric, one can choose annihilation operators $a_1,\ldots,a_r$ such that
\begin{align}
    Q=-\alpha\I+\sum_{q=1}^r\lambda_q a_q^\dagger a_q,
    \qquad
    0\leq\lambda_q\leq\|h\|,
    \qquad
    \alpha=\sum_{q=1}^r\lambda_q/2.\label{eq:canonical_tranform}
\end{align}
\end{lemma}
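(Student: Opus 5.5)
The plan is to reduce the statement to the spectral theory of real antisymmetric matrices and then translate back into ladder operators. Since $h$ is real antisymmetric, there is a real orthogonal $O\in\mathrm{SO}(2r)$ (after possibly flipping one column to land in $\mathrm{SO}$, or simply using $\mathrm{O}(2r)$) bringing $h$ to the canonical block form
\begin{align}
O^{\T} h O=\bigoplus_{q=1}^r\begin{pmatrix}0&\lambda_q\\-\lambda_q&0\end{pmatrix},\qquad \lambda_q\geq 0,
\end{align}
where $\pm i\lambda_q$ are the eigenvalues of $h$, so $0\leq\lambda_q\leq\|h\|$ immediately. Signs of each $\lambda_q$ can be made nonnegative by swapping the two columns of the corresponding $2\times2$ block (an orthogonal operation).

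Next define new Majoranas $\tilde\gamma_k\coloneqq\sum_j O_{jk}\gamma_j$. Because $O$ is real orthogonal, these are Hermitian and satisfy $\{\tilde\gamma_j,\tilde\gamma_k\}=2\delta_{jk}\I$, and substituting gives
\begin{align}
Q=\frac{i}{4}\sum_{q=1}^r\lambda_q\left(\tilde\gamma_{2q-1}\tilde\gamma_{2q}-\tilde\gamma_{2q}\tilde\gamma_{2q-1}\right)=\frac{i}{2}\sum_{q=1}^r\lambda_q\tilde\gamma_{2q-1}\tilde\gamma_{2q}.
\end{align}
Then set $a_q\coloneqq(\tilde\gamma_{2q-1}+i\tilde\gamma_{2q})/2$, which satisfy the CAR in Eq.~\eqref{eq:annihilation-car} as a consequence of the Majorana relations. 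A direct computation from Eq.~\eqref{eq:CAR} gives $i\tilde\gamma_{2q-1}\tilde\gamma_{2q}=\I-2a_q^\dagger a_q$ up to a sign convention; choosing the sign of each block (equivalently, the column ordering in $O$) so that $\frac{i}{2}\lambda_q\tilde\gamma_{2q-1}\tilde\gamma_{2q}=\lambda_q a_q^\dagger a_q-\frac{\lambda_q}{2}\I$ yields $Q=-\alpha\I+\sum_q\lambda_q a_q^\dagger a_q$ with $\alpha=\sum_q\lambda_q/2$.

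The only step requiring care is the sign bookkeeping: the convention $\gamma_{2j}=-i(a_j-a_j^\dagger)$ fixes whether $i\gamma_{2q-1}\gamma_{2q}$ equals $\I-2n_q$ or $2n_q-\I$, and one must orient each $2\times2$ block accordingly so that all $\lambda_q$ come out nonnegative with the constant $-\sum_q\lambda_q/2$. I expect this to be routine once the convention is checked on a single mode via the Jordan--Wigner form $\gamma_1=X$, $\gamma_2=Y$, where $iXY=-Z=2\op{1}{1}-\I$, so that $\frac{i}{2}\lambda\gamma_1\gamma_2=\lambda n-\frac{\lambda}{2}\I$ and the block orientation above is the correct one.
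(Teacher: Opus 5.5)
Your proposal is correct and follows essentially the same route as the paper: block-diagonalize $h$ by a real orthogonal matrix, rotate the Majoranas, and pair $\tilde\gamma_{2q-1},\tilde\gamma_{2q}$ into $a_q=(\tilde\gamma_{2q-1}+i\tilde\gamma_{2q})/2$. One small correction: this definition together with the CAR gives $i\tilde\gamma_{2q-1}\tilde\gamma_{2q}=2a_q^\dagger a_q-\I$ directly, not $\I-2a_q^\dagger a_q$ as you first wrote, so no reorientation of the blocks is needed, as your single-mode Jordan--Wigner check confirms.
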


\begin{proof}
Since $h$ is real antisymmetric, there exists an orthogonal matrix
$G\in\R^{2r\times 2r}$ such that
\begin{align}
    G^\T h G
    =
    \bigoplus_{q=1}^r
    \begin{bmatrix}
        0 & \lambda_q\\
        -\lambda_q & 0
    \end{bmatrix},
    \qquad
    0\leq\lambda_q\leq\|h\|.
\end{align}
Define
    $\widetilde{\gamma}_j
    \coloneqq
    \sum_{s=1}^{2r}G_{sj}\gamma_s,$ and
    $a_q
    \coloneqq
    \frac{\widetilde{\gamma}_{2q-1}+i\widetilde{\gamma}_{2q}}{2}.$
One can check that $\{a_q\}_q$ satisfy the canonical anticommutation relation, and 
\begin{align}
    Q= \sum_{q=1}^r \frac{i}{4}(2\lambda_q) \widetilde{\gamma}_{2q-1}\widetilde{\gamma}_{2q}= -\alpha\I+\sum_{q=1}^r\lambda_q a_q^\dagger a_q,
  \end{align}
\end{proof}

We now use Lemma~\ref{lem:canonical_quadratic} to derive Eq.~(\ref{eq:impurity_model_hyb}). We separate the quadratic Hamiltonian $H_0$ in Eq.~(\ref{eq:impurity_model}) into the impurity part $H_{0,\mathrm{imp}}$, the bath part $H_{0,\mathrm{bath}}$, and the hybridization part $H_{0,\mathrm{hyb}}$ connecting the impurity and bath operators,
\begin{align*}
    H_{0,\mathrm{imp}}
    \coloneqq
    \frac{i}{4}\sum_{j,k=1}^{2m}h_{jk}\gamma_j\gamma_k,\quad
    H_{0,\mathrm{bath}}
    \coloneqq
    \alpha\I+
    \frac{i}{4}\sum_{j,k=2m+1}^{2n}h_{jk}\gamma_j\gamma_k,\quad
    H_{0,\mathrm{hyb}}
    \coloneqq
    \frac{i}{2}\sum_{1\leq j\leq 2m<k\leq 2n}
    h_{jk}\gamma_j\gamma_k.
\end{align*}
For convenience we denote the principal submatrix $(h_{j,k})_{j,k=2m+1}^{2n}$ as $h_{\mathrm{bath}}$.
Define
    $K_0\coloneqq H_{0,\mathrm{imp}}+H_{\mathrm{imp}}$ as the part of the Hamiltonian that acts only on the impurity. We can write 
    \begin{align}
    	H = H_{0,\mathrm{bath}} + K_0 + H_{0,\mathrm{hyb}}. 
    \end{align} 
The first $m$ annihilation operators are defined as
$a_j
=
\frac{\gamma_{2j-1}+i\gamma_{2j}}{2},
1\leq j\leq m.$

To define $a_j$ for $j>m$, apply Lemma~\ref{lem:canonical_quadratic} to the quadratic part of $H_{0,\mathrm{bath}}$. Denote the resulting  single-particle energies by $\lambda_1,\ldots,\lambda_{n-m}$, and the annihilation operators by $a_{m+1},\ldots,a_n$, where 
\begin{align}
    H_{0,\mathrm{bath}}
    =
    \left(
    \alpha-\frac12\sum_{q=1}^{n-m}\lambda_q
    \right)\I
    +
    \sum_{q=1}^{n-m}
    \lambda_q a_{m+q}^\dagger a_{m+q}, \qquad  0\leq \lambda_q \leq \|h_{\mathrm{bath}}\|\leq \es.
\end{align}
We set
\begin{align}
    \alpha\coloneqq\frac12\sum_{q=1}^{n-m}\lambda_q,\qquad 
    E_0
    \coloneqq
    \operatorname{diag}(\lambda_1,\ldots,\lambda_{n-m}).
  \end{align}
Then
\begin{align}
    H_{0,\mathrm{bath}}=\dE(E_0),
    \qquad
    0\leq E_0\leq\es\I.
\end{align}

Finally, after the bath canonical transformation, $H_{0,\mathrm{hyb}}$ remains bilinear in the impurity Majorana operators and the bath Majorana operators. Since each transformed bath Majorana operator is a linear combination of the bath creation and annihilation operators, there exist bath vectors $\mu_1,\ldots,\mu_{2m}$ such that
\begin{align}
    H_{0,\mathrm{hyb}}
    =
    \sum_{j=1}^{2m}
    \left(
        a^\dagger(\mu_j)\gamma_j
        +
        \gamma_j a(\mu_j)
    \right).
\end{align}
Thus Eq.~(\ref{eq:impurity_model_hyb}) follows by setting $O_j=\gamma_j$.

\subsection{Handling zero eigenvalues of $E_0$}
Here we explain why without loss of generality we can assume that there exists a strictly positive number $\omega$ such that
    $$0<\omega\I\leq E_0\leq\es\I.$$

Recall that $E_0$ acts on $W_{\mathrm{bath}}$. Let $P$ be the orthogonal projector onto $\ker(E_0)$. The dimension of
    $\operatorname{span}\{P\mu_1,\ldots,P\mu_{2m}\}$
is no more than $2m$. For simplicity, we include this subspace in the impurity single-particle space and redefine
\begin{align}
	W_{\mathrm{imp}}''\coloneqq W_{\mathrm{imp}}\oplus \operatorname{span}\{P\mu_1,\ldots,P\mu_{2m}\}, \qquad  W_{\mathrm{bath}}''\coloneqq W_{\mathrm{imp}}''^\perp, 
\end{align}
Accordingly we 
 absorb the corresponding hybridization terms into $K_0$, that is, we write $a(\mu_j)=a(P\mu_j) + a((\I-P)\mu_j)$, and merge the hybridization terms that involve $a(P\mu_j)$ into $K_0$.

Note that for any vector    $v\in
    \ker(E_0)\ominus
    \operatorname{span}\{P\mu_1,\ldots,P\mu_{2m}\},$
the annihilation operator $a(v)$ appears neither in $\dE(E_0)$ nor in the hybridization term, nor in the impurity term. Since the Hamiltonian has even parity, under the Fock-space decomposition into the remaining modes and these zero modes, the Hamiltonian takes the form
$H=H_{\mathrm{active}}\otimes \I.$ Hence those modes form a decoupled free-fermion sector that can be
treated separately. After removing these modes, replacing each $\mu_j$ by $(\I-P)\mu_j$, and relabeling the remaining modes, $E_0$ is strictly positive on the bath single-particle space as needed.

\section{Detailed calculation of the occupation recursion}\label{sec:cal_recursion}

\subsection{Counting and enumerating the set $\{x:D(x)\leq T\}$}
\label{app:count_bound}

\begin{lemma}[Counting bound]
\label{lem:count_set}
Assume that
$h_d\geq\min\{d-1,\cut\}$ for every $d\geq1$, where 
$\cut\geq\max\{1,\frac18\log n\}$.
Then
\begin{align}
\left|
\left\{
x:D(x)\leq T
\right\}
\right|
\leq
\exp\!\left(
8T+3\chi
\right).
\end{align}
Here $\chi$, defined in Eq.~(\ref{eq:chi}), is a bound on the maximum number of bath modes at a given depth, summed over all bands, plus  the enlarged-impurity size.
\end{lemma}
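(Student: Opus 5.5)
The plan is to bound the number of configurations $x\in\{0,1\}^n$ by splitting each $x$ into its enlarged-impurity bits and its residual-bath bits, and then handling the bath bits depth by depth. The enlarged-impurity bits are unconstrained by $D(x)$. There are at most $2^{\chi}$ choices for them, which is a factor $e^{\chi}$ at most. For the residual-bath bits I will group modes by depth $d$. Let $N_d\coloneqq\dim\bigl(\bigoplus_\ell W_{\ell,d}\bigr)\le\chi$ be the number of bath modes at depth $d$, and let $k_d$ be the number of occupied ones in $x$. Then $D(x)=\sum_d h_d k_d\ge\sum_d \min\{d-1,\cut\}k_d$. The count is therefore bounded by
\begin{align*}
2^{\chi}\sum_{(k_d):\,\sum_d \min\{d-1,\cut\}k_d\le T}\ \prod_d\binom{N_d}{k_d}.
\end{align*}

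I will bound this weighted sum by a generating-function (Chernoff-type) argument. For any $s>0$ it is at most $e^{sT}\prod_d(1+e^{-s\min\{d-1,\cut\}})^{N_d}$. I take $s=8$, so the prefactor is $e^{8T}$. The factor for $d=1$ has weight $h_1=0$. It contributes $2^{N_1}\le e^{\chi}$, since the depth-one modes are unpenalized. For $2\le d\le \cut+1$ the factor is at most $\exp\bigl(\chi\sum_{j\ge1}e^{-8j}\bigr)\le e^{\chi/1000}$.

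The main obstacle is the tail of deep modes, where the weight saturates at $\cut$. There can be up to $n$ such modes, each contributing a factor $(1+e^{-8\cut})$. Here I use $\cut\ge\frac18\log n$, so $e^{-8\cut}\le 1/n$. The product of these factors over all deep modes is then at most $(1+1/n)^n\le e$, a constant that can be absorbed into $e^{\chi}$ since $\chi\ge1$. Collecting the terms gives at most $e^{8T}\cdot e^{\chi}\cdot e^{\chi}\cdot e^{\chi}=e^{8T+3\chi}$. If the $\log$ in $\cut$ is natural this closes; otherwise I would adjust $s$ slightly within the slack.

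For the enumeration claim in Theorem~\ref{thm:exp_moment_poly_support}, I will use a depth-first search over the bath modes ordered by depth, adding bits one at a time. A branch is pruned as soon as the partial weight exceeds $T$. Because weights are nonnegative, every surviving partial assignment extends to a member of the set. Each output then costs $\cO(n)$ work, giving total time $\cO(n|\bS|)$.
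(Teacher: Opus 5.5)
Your proposal is correct and is essentially the paper's argument: replace the indicator by the exponential weight $e^{8T-8\sum_i \min\{d(i)-1,\cut\}x_i}$ and factor the sum over modes. The unpenalized (enlarged-impurity and depth-one) modes give a $2^{\chi}$-type factor, the shallow depths give at most $\exp(\chi\sum_{k\ge1}e^{-8k})$, and the saturated deep modes give $(1+e^{-8\cut})^n\le e$ by $\cut\ge\frac18\log n$. The only difference is bookkeeping: the paper bounds the enlarged-impurity and depth-one modes jointly by $2^{\chi}$, since together they number at most $\chi$, whereas you use two separate $e^{\chi}$ factors, which the slack in $e^{3\chi}$ easily absorbs.
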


\begin{proof}
Write $\widetilde{h}_d=\min\{d-1,\cut\}$.
For every $x$ and every $\theta>0$, one can check that
$\mathbf 1_{\{D(x)\leq T\}}
\leq
\exp\!\left(
\theta T-\theta\sum_i\widetilde{h}_dx_i
\right).$
We replace the hard cutoff by an exponential weight:
 \begin{equation}
\left|
\left\{
x:D(x)\leq T
\right\}
\right|
=
\sum_x\mathbf 1_{\{D(x)\leq T\}}
\leq
e^{\theta T}
\sum_x
e^{-\theta\sum_i\widetilde{h}_dx_i} \leq  e^{\theta T} 2^{\chi}\prod_{i\in \bS_{\mathrm{bath}}, d(i)\geq 2}  (1+ e^{-\theta \widetilde{h}_{d(i)}}).
\label{eq:222}
\end{equation}
Here the last inequality follows because the weighted sum factorizes over the modes, and the enlarged-impurity modes and depth-1 modes ($\tilde{h}_{d(i)}=0$) each contribute a factor $1+1=2$, giving a total factor of at most $2^\chi$.

We divide the modes with $i\in \bS_{\mathrm{bath}}, d(i)\geq 2$ into two groups:  for a mode $i$ where $k\coloneqq d(i)-1< \cut$, it contributes a factor of $(1+e^{-\theta k})$, where for each $k$ there are at most $\chi$ such modes; for a mode $i$ where $d(i)-1\geq\cut$, it contributes a factor of $(1+e^{-\theta \cut})$, and there
are at most $n$ such modes.  
Hence we have
\begin{align}
\prod_{i\in \bS_{\mathrm{bath}}, d(i)\geq 2}  (1+ e^{-\theta \widetilde{h}_{d(i)}})&\leq
\prod_{\substack{1\leq k< \cut}}
\left(
1+e^{-\theta k}
\right)^\chi
\left(
1+e^{-\theta\cut}
\right)^n
\leq
\exp\!\left(
\chi\sum_{k\geq1}e^{-\theta k}
+
ne^{-\theta \cut}
\right).
\end{align}
where in the last inequality we use $\log(1+e^x)\leq e^x$ and relax $1\leq k< \cut$ to $1\leq k$. 
Taking $\theta=8$, since $\cut\geq\frac18\log n$, we have 
$ne^{-8\cut}\leq 1$. Therefore,
\begin{align}
\left|
\left\{
x:D(x)\leq T
\right\}
\right|
\leq
\exp\!\left(
8T+3\chi
\right).
\end{align}
\end{proof}

\paragraph{Algorithm to enumerate $\{x: D(x)\leq T\}$.}
We enumerate $x$ one bit at a time while keeping track of the current value of $D(x)$. For each bit, we consider $x_i=0$ and $x_i=1$, and discard a partial string if its weighted occupation number exceeds $T$. Since all the weights are nonnegative, this enumerates exactly the set $\{x:D(x)\leq T\}$. Every partial string considered by the algorithm is a prefix of some $x\in\bS$, so there are at most $\cO(n|\bS|)$ such partial strings.  The runtime is $\cO(n|\bS|)$.

\subsection{Proof of the commutator identity}\label{app:commutator-identity}

\begin{proof}[Proof of Lemma~\ref{lem:commutator-identity}]
Recall that $A_I=a_{i_q}\cdots a_{i_1}$. The product rule gives
\begin{align}
 [A_I,H]
 =\sum_{r=1}^q
 a_{i_q}\cdots a_{i_{r+1}}
 [a_{i_r},H]
 a_{i_{r-1}}\cdots a_{i_1}.
 \label{eq:commutator-product}
\end{align}
Recall that Lemma~\ref{lem:elementary-commutators} gives
\begin{align}
 [a_{i_r},H]
 =&\sum_{j:\ell(j)=\ell(i_r)}
 \langle e_j,E_{\ell(i_r)}e_{i_r}\rangle a_j
 +\sum_s
 \langle u_{\ell(i_r)s},e_{i_r}\rangle G_{\ell(i_r)s}.
 \label{eq:ai-H}
\end{align}
We treat the two sums on the right-hand side of the equation separately.

\paragraph{Bath-replacement term.}
Fix $r$ and a mode $a_j$ in the same band as $a_{i_r}$. The corresponding term
obtained by inserting the first sum of \eqref{eq:ai-H} into
\eqref{eq:commutator-product} is
\begin{align}
 \langle e_j,E_{\ell(i_r)}e_{i_r}\rangle
 a_{i_q}\cdots a_{i_{r+1}}a_j
 a_{i_{r-1}}\cdots a_{i_1}.
 \label{eq:one-bath-term}
\end{align}
The standard convention is that the annihilation operators attached to an
increasing list are written in decreasing-index order. We therefore move
$a_j$ through the other annihilation operators until this order is restored.
Since $a_ja_k=-a_ka_j$ for distinct modes, this produces a sign
$\varepsilon_I(i_r,j)\in\{\pm1\}$.

If $j\in I\setminus\{i_r\}$, the reordered product contains $a_j^2=0$, so
this term vanishes.  Otherwise \eqref{eq:one-bath-term} becomes
\begin{align}
 \varepsilon_I(i_r,j)
 \langle e_j,E_{\ell(i_r)}e_{i_r}\rangle
 A_{I^{(i_r\to j)}}.
 \label{eq:reordered-bath-term}
\end{align}
Summing \eqref{eq:reordered-bath-term} over $r$ and $j$ gives precisely
$B_I$.

\paragraph{Hybridization contribution term.}
Fix $r$ and $s$. The corresponding term obtained by inserting the second
sum of \eqref{eq:ai-H} into \eqref{eq:commutator-product} is
\begin{align}
 \langle u_{\ell(i_r)s},e_{i_r}\rangle
 a_{i_q}\cdots a_{i_{r+1}}G_{\ell(i_r)s}
 a_{i_{r-1}}\cdots a_{i_1}.
 \label{eq:one-source-term}
\end{align}
The operator $G_{\ell(i_r)s}$ is a linear combination of annihilation
operators on the enlarged-impurity modes. Since these modes are disjoint from
the residual-bath modes, the canonical anticommutation relation implies that $G_{\ell(i_r)s}$ anticommutes
with every residual-bath annihilation operator. Moving
$G_{\ell(i_r)s}$ through the annihilation operators to its left produces a
sign $\eta_I(i_r)\in\{\pm1\}$. Thus \eqref{eq:one-source-term} becomes
\begin{align}
 \eta_I(i_r)\langle u_{\ell(i_r)s},e_{i_r}\rangle
 G_{\ell(i_r)s}A_{I\setminus\{i_r\}}.
 \label{eq:reordered-source-term}
\end{align}

Summing
\eqref{eq:reordered-source-term} over $r$ and $s$ gives precisely $C_I$.
\end{proof}

\subsection{Recursion solution for the ground state}

\begin{proof}[of Theorem \ref{thm:main_ground_state}] 
We prove by induction that the solution to the recursion in Eq.~(\ref{eq:ground_state_recur}) is 
\begin{equation}
 F^{(\infty)}_\bq
 \leq
 \prod_{\ell=0}^L
 \frac{1}{q_\ell!}
 \left(\frac{4\xi_\ell}{\omega_\ell^2}\right)^{q_\ell}.
 \label{eq:F-product-bound}
\end{equation}
The induction is with respect to the total occupation number  $q\coloneqq\sum_{\ell=0}^L q_\ell.$ The claim is immediate for $q=0$ where $\bq=(0,\ldots,0)$ and $F^{(\infty)}_{\bq}=1$.
Now fix $\bq\neq\mathbf 0$ and assume that
\eqref{eq:F-product-bound} holds for all cases of total degree
$|\bq|-1$.

Using the recursion formula in Eq.~(\ref{eq:ground_state_recur}) and the induction hypothesis, we have
\begin{align}
 \left(\sum_{\ell}q_\ell\omega_\ell\right)F^{(\infty)}_\bq
 &\leq
4\sum_{\ell:q_\ell>0}\frac{\xi_\ell}{\omega_\ell}
 F^{(\infty)}_{\bq-\bel} \notag\\
 &\leq
4\sum_{\ell:q_\ell>0}\frac{\xi_\ell}{\omega_\ell}
 \frac{1}{(q_\ell-1)!}
 \left(\frac{4\xi_\ell}{\omega_\ell^2}\right)^{q_\ell-1}
 \prod_{r\neq\ell}
 \frac{1}{q_r!}
 \left(\frac{4\xi_r}{\omega_r^2}\right)^{q_r}.
\end{align}
Factoring out the common term gives
\begin{align}
 \left(\sum_{\ell}q_\ell\omega_\ell\right)F^{(\infty)}_\bq
 \leq
 \left[
 \prod_{r=0}^L
 \frac{1}{q_r!}
 \left(\frac{4\xi_r}{\omega_r^2}\right)^{q_r}
 \right]
 \sum_{\ell:q_\ell>0}
 q_\ell\omega_\ell =
 \left(\sum_{\ell}q_\ell\omega_\ell\right)
 \prod_{r=0}^L
 \frac{1}{q_r!}
 \left(\frac{4\xi_r}{\omega_r^2}\right)^{q_r}.
\end{align}
Since $\sum_\ell q_\ell\omega_\ell>0$, canceling this common factor proves
\eqref{eq:F-product-bound}.  
Then recall the definition of $\kappa$ in Eq.~(\ref{eq:kappa}).

For any residual bath mode $i$, we set $\bq = \mathbf{e}_{\ell(i)}$. Recall that from Eq.~(\ref{eq:xi}) we have $\xi_\ell \omega_\ell^{-2} \leq 8m$. Taking $\lambda=1/4$, and using the definition of $F_\bq^{(\infty)}$ and Eq.~(\ref{eq:F-product-bound}), we get 
\begin{align}
    e^{\frac{1}{2}(d(i)-1)}\tr\left(\rho n_i\right) \leq 32m.
\end{align}

  Setting $\lambda=1/4$ and using Eq.~(\ref{eq:exp_Fq}) and Eq.~(\ref{eq:F-product-bound}), we get 

\begin{align}
 \tr\left(\rho^{(\infty)}\, e^{ \frac{1}{2}\mathsf D^{(\infty)}}\right) \leq 
 \sum_{\bq}F_{\bq} 
\leq
 \prod_{\ell=0}^L
 \left(
 \sum_{q_\ell\geq0}
 \frac{1}{q_\ell!}
 \left(\frac{4\xi_\ell}{\omega_\ell^2}\right)^{q_\ell}
 \right) 
 =
 \exp\left(
 4\sum_{\ell=0}^L\frac{\xi_\ell}{\omega_\ell^2}
 \right).
 \label{eq:weighted-all-moments}
\end{align}
\end{proof}

\subsection{Recursion solution for the thermal state}\label{sec:thermal_recursion_solution}

We first give the solution of the thermal recursion in Lemma \ref{lem:thermal-recursion}.

\begin{lemma}[Thermal recursion solution]
\label{lem:thermal-recursion-solution} Assume $\omega>0$, so $\omega_\ell=2^\ell \omega>0$ for all $\ell$.
For every $\bq=(q_0,\ldots,q_L)$,
\begin{align}
F_\bq^{(\beta)}
\leq
\prod_{\ell=0}^L
\frac{1}{q_\ell!}
\left(
\frac{8\xi_\ell}{\omega_\ell^2}
\right)^{q_\ell}+
\sum_{\mathbf 0<\br\leq\bq}
\frac{4\operatorname{Err}_{\br}^{(\beta)}}
{\alpha_{\br}}
\prod_{\ell=0}^L
\frac{1}{(q_\ell-r_\ell)!}
\left(
\frac{8\xi_\ell}{\omega_\ell^2}
\right)^{q_\ell-r_\ell}.
\label{eq:thermal-recursion-solution}
\end{align}
where $\br\leq\bq$ means that $r_\ell\leq q_\ell$ for every $\ell$.
\end{lemma}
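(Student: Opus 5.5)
We prove \eqref{eq:thermal-recursion-solution} by strong induction on the total degree $|\bq|\coloneqq\sum_\ell q_\ell$, mirroring the ground-state proof of \eqref{eq:F-product-bound}. Throughout we abbreviate $c_\ell\coloneqq 8\xi_\ell/\omega_\ell^2$ and $P_\bq\coloneqq\prod_\ell c_\ell^{q_\ell}/q_\ell!$, so the claim reads
\[
F_\bq^{(\beta)}\leq P_\bq+\sum_{\mathbf 0<\br\leq\bq}\frac{4\operatorname{Err}_\br^{(\beta)}}{\alpha_\br}P_{\bq-\br}.
\]
The base case $\bq=\mathbf 0$ is immediate: $F_{\mathbf 0}^{(\beta)}=\tr(\rho^{(\beta)})=1=P_{\mathbf 0}$, and the sum over $\br$ is empty.

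For the inductive step, fix $\bq\neq\mathbf 0$. Divide the thermal recursion of Lemma~\ref{lem:thermal-recursion} by $\alpha_\bq>0$ and substitute the induction hypothesis for each $F_{\bq-\bel}^{(\beta)}$ with $q_\ell>0$. Writing $8\xi_\ell/\omega_\ell=c_\ell\omega_\ell$, this gives
\[
F_\bq^{(\beta)}\leq\frac{1}{\alpha_\bq}\sum_{\ell:q_\ell>0}c_\ell\omega_\ell\Bigl(P_{\bq-\bel}+\sum_{\mathbf 0<\br\leq\bq-\bel}\frac{4\operatorname{Err}_\br^{(\beta)}}{\alpha_\br}P_{\bq-\bel-\br}\Bigr)+\frac{4\operatorname{Err}_\bq^{(\beta)}}{\alpha_\bq}.
\]
The first group of terms is handled exactly as in the ground-state case. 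Since $c_\ell P_{\bq-\bel}=q_\ell P_\bq$, we get $\sum_\ell c_\ell\omega_\ell P_{\bq-\bel}=\alpha_\bq P_\bq$, which recovers $P_\bq$. The last term is precisely the $\br=\bq$ summand of the target, because $P_{\mathbf 0}=1$.

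For the middle group, exchange the order of summation. Each $\br$ with $\mathbf 0<\br<\bq$ (componentwise $\leq$, not equal) then collects the factor
\[
\frac{4\operatorname{Err}_\br^{(\beta)}}{\alpha_\br}\cdot\frac{1}{\alpha_\bq}\sum_{\ell:\,q_\ell-r_\ell>0}c_\ell\omega_\ell P_{\bq-\br-\bel}.
\]
Applying the same identity to the multi-index $\bq-\br$ gives $c_\ell P_{\bq-\br-\bel}=(q_\ell-r_\ell)P_{\bq-\br}$, so the inner sum equals $\alpha_{\bq-\br}P_{\bq-\br}$. Since $\alpha$ is additive with nonnegative terms, $\alpha_{\bq-\br}\leq\alpha_\bq$, and the prefactor $\alpha_{\bq-\br}/\alpha_\bq$ is at most $1$. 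Every middle term is therefore bounded by the corresponding $\br$ summand of the target.

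The only delicate point is the bookkeeping in this exchange of sums. A pair $(\ell,\br)$ appears only when $\br\leq\bq-\bel$, which is equivalent to $q_\ell-r_\ell\geq1$. This condition is exactly what makes the identity $c_\ell P_{\bq-\br-\bel}=(q_\ell-r_\ell)P_{\bq-\br}$ apply, with the index $\br=\bq$ reserved for the fresh error term. Adding the three groups completes the induction.
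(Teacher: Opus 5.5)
Your proposal is correct and follows the paper's proof. The paper argues the same way: induction on $|\bq|$, substituting the inductive bound into Lemma~\ref{lem:thermal-recursion}, and using the identity $\sum_{\ell} c_\ell\omega_\ell P_{\bq-\br-\bel}=\alpha_{\bq-\br}P_{\bq-\br}$ together with $\alpha_{\bq-\br}\leq\alpha_\bq$, with the fresh $\operatorname{Err}_\bq^{(\beta)}$ term supplying the $\br=\bq$ summand.
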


\begin{proof}
We argue by induction on $|\bq|\coloneqq\sum_{\ell=0}^Lq_\ell$. For
$\bq=\mathbf 0$, the first product on the right-hand side equals one and
the second sum is empty, so the claim follows from
$F_{\mathbf 0}^{(\beta)}=1$. Now fix $\bq\neq\mathbf 0$ and assume that
Eq.~\eqref{eq:thermal-recursion-solution} holds for all cases of
total degree at most $|\bq|-1$.

First, a direct calculation by factoring out the common terms gives
\begin{align}
&8\sum_{\ell:q_\ell>0}
\frac{\xi_\ell}{\omega_\ell}
\frac{1}{(q_\ell-1)!}
\left(
\frac{8\xi_\ell}{\omega_\ell^2}
\right)^{q_\ell-1}
\prod_{k\neq\ell}
\frac{1}{q_k!}
\left(
\frac{8\xi_k}{\omega_k^2}
\right)^{q_k}=
\alpha_\bq
\prod_{k=0}^L
\frac{1}{q_k!}
\left(
\frac{8\xi_k}{\omega_k^2}
\right)^{q_k}.
\label{eq:thermal-homogeneous-recursion}
\end{align}
Similarly, for every
$\mathbf 0<\br\leq\bq$ with $\br\neq\bq$, the above equation holds when replacing $\bq,q_\ell, q_k$ with $\bq-\br$ and $q_\ell-r_\ell, q_k-r_k$.

Applying Lemma~\ref{lem:thermal-recursion}, substituting the induction
hypothesis, and using Eqs.~\eqref{eq:thermal-homogeneous-recursion} w.r.t $\bq$ and
its variants w.r.t. $\bq-\bel$,  $\bq-\br$, we obtain
\begin{align}
\alpha_\bq F_\bq^{(\beta)}
\leq
\alpha_\bq
\prod_{\ell=0}^L
\frac{1}{q_\ell!}
\left(
\frac{8\xi_\ell}{\omega_\ell^2}
\right)^{q_\ell}+
\sum_{\substack{\mathbf 0<\br\leq\bq\\
                  \br\neq\bq}}
\frac{4\operatorname{Err}_{\br}^{(\beta)}}
{\alpha_{\br}}
\alpha_{\bq-\br}
\prod_{\ell=0}^L
\frac{1}{(q_\ell-r_\ell)!}
\left(
\frac{8\xi_\ell}{\omega_\ell^2}
\right)^{q_\ell-r_\ell}
+4\operatorname{Err}_\bq^{(\beta)}.
\label{eq:thermal-recursion-induction}
\end{align}
Since
$\alpha_{\bq-\br}=\alpha_\bq-\alpha_{\br}\leq\alpha_\bq$,
the right-hand side of Eq.~\eqref{eq:thermal-recursion-induction} is at
most $\alpha_\bq$ times the right-hand side of
Eq.~\eqref{eq:thermal-recursion-solution}. Dividing by
$\alpha_\bq>0$ proves the claim.
\end{proof}

\begin{proof}[of Theorem \ref{thm:thermal-exponential-bound}]
Note that 
for any functions $F_\br,G_{\mathbf{s}}$, we have $\sum_{\bq}\sum_{0<\br\leq \bq} F_{\br}G_{\bq-\br}=\left(\sum_{\br>0} F_{\br}\right)\left(\sum_{\mathbf{s}\geq 0} G_{\mathbf{s}}\right)$. Thus we have
\begin{align}
	\sum_\bq
\sum_{\mathbf 0<\br\leq\bq}
\frac{4\operatorname{Err}_{\br}^{(\beta)}}
{\alpha_{\br}}
\prod_{\ell=0}^L
\frac{1}{(q_\ell-r_\ell)!}
\left(
\frac{8\xi_\ell}{\omega_\ell^2}
\right)^{q_\ell-r_\ell} = 
\left(
\sum_{\br> 0}
\frac{4\operatorname{Err}_{\br}^{(\beta)}}
{\alpha_{\br}}
\right)
\prod_{\ell=0}^L\left[
\sum_{q_\ell\geq0}
\frac{1}{q_\ell!} 
\left(
\frac{8\xi_\ell}{\omega_\ell^2}
\right)^{q_\ell}
\right]
\end{align}

Thus by Lemma~\ref{lem:thermal-recursion-solution}, summing over $\bq$ gives
\begin{align}
\sum_\bq F_\bq^{(\beta)}
\leq
\prod_{\ell=0}^L
\left[
\sum_{q_\ell\geq0}
\frac{1}{q_\ell!}
\left(
\frac{8\xi_\ell}{\omega_\ell^2}
\right)^{q_\ell}
\right]
\left(
1+
\sum_{\br>\mathbf 0}
\frac{4\operatorname{Err}_{\br}^{(\beta)}}
{\alpha_{\br}}
\right)=
e^{8\kappa}
\left(
1+
\sum_{\br> 0}
\frac{4\operatorname{Err}_{\br}^{(\beta)}}
{\alpha_{\br}}
\right).
\label{eq:thermal-sum-before-error}
\end{align}

Note that $\alpha_{\br}\geq\omega|\br|$. Regroup $\sum_{\br>0, I\sim \br}$ as $\sum_{q\geq 1, I:|I|=q}$. By  the definition of 
$\operatorname{Err}_{\br}^{(\beta)}$ with $\lambda=1/4$ we have
\begin{align}
\sum_{\br> 0}
\frac{4\operatorname{Err}_{\br}^{(\beta)}}
{\alpha_{\br}}
&\leq
\frac{4}{e\beta\omega}
\sum_{q\geq1}\frac{1}{q}
\sum_{|I|=q}
\prod_{i\in I}
\exp\!\left(
-\frac{\beta\omega_{\ell(i)}}{8}
+\frac{1}{2}\min\{d(i)-1,\cut\}
\right)\notag\\
&\leq
\frac{4}{e\beta\omega}
\sum_{q\geq1}
\frac{\eta_\cut ^q}{q\,q!}
\leq
\frac{4\eta_\cut e^{\eta_\cut }}{e\beta\omega}.
\label{eq:sum-thermal-error}
\end{align}
For the last inequality in Eq.~(\ref{eq:sum-thermal-error}), we use $\sum_{q\geq1}
\frac{\eta_\cut ^q}{q\,q!}
\leq e^{\eta_\cut}-1\leq \eta_\cut e^{\eta_\cut}$.
The second inequality in Eq.~(\ref{eq:sum-thermal-error}) follows from
\begin{align}
\sum_{|I|=q}
\prod_{i\in I}
\exp\!\left(
-\frac{\beta\omega_{\ell(i)}}{8}
+\frac{1}{2}\min\{d(i)-1,\cut\}
\right)\leq
\frac{1}{q!}
\left[
\sum_i
\exp\!\left(
-\frac{\beta\omega_{\ell(i)}}{8}
+\frac{1}{2}\min\{d(i)-1,\cut\}
\right)
\right]^q
=
\frac{\eta_\cut ^q}{q!}.
\end{align}

Combining Eqs.~\eqref{eq:thermal-sum-before-error} and
\eqref{eq:sum-thermal-error} proves the upper bound on
$\sum_\bq F_\bq^{(\beta)}$. Then we complete the proof using Eq.~(\ref{eq:exp_Fq}).
\end{proof}

\subsection{Thermal state approximation by a projected Hamiltonian}

\begin{lemma}[Thermal-state approximation by a projected Hamiltonian]\label{lem:thermal_approx_proj}
Let $H$ be a Hermitian operator on a finite-dimensional Hilbert space. Let 
    $\rho=\frac{e^{-\beta H}}{\tr(e^{-\beta H})}$ denote its thermal state at inverse temperature $\beta$ and $Z=\tr(e^{-\beta H})$ denote its partition function. Let $P$ be an orthogonal projector satisfying
\begin{align}
    \tr((\I-P)\rho)\leq\delta,
    \qquad 0<\delta<1.
\end{align}
Define the thermal state of $PHP$ restricted to the range of $P$ by 
    $\sigma_P
\coloneqq
\frac{P e^{-\beta PHP}P}
{\tr(P e^{-\beta PHP}P)}$ and set $Z_P=\tr(P e^{-\beta PHP}P)$.
Then  
\begin{align}
 \theta_P\coloneqq\lVert PH(\I-P)\rVert,\quad  |Z-Z_P| \leq 4(1+\beta\theta_P)\delta^{1/4}Z,\quad  \lVert\rho-\sigma_P\rVert_1
    \leq
    2\delta+
4\sqrt{\beta\theta_P}\,\delta^{1/4}. \label{eq:thermal_approx_bound}
\end{align}
\end{lemma}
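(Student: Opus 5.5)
The plan is to compare $H$ with its block-diagonal part rather than with $PHP$ directly. Write $Q\coloneqq\I-P$ and $X\coloneqq PHQ$, so that
\[
H=H'+X+X^\dagger,\qquad H'\coloneqq PHP+QHQ,\qquad \|X\|=\theta_P .
\]
Since $H'$ is block diagonal, $e^{-\beta H'}=Pe^{-\beta PHP}P+Qe^{-\beta QHQ}Q$. Hence
\[
Z'\coloneqq\tr(e^{-\beta H'})=Z_P+Z_Q,\qquad Z_Q\coloneqq\tr\bigl(Qe^{-\beta QHQ}Q\bigr).
\]
Let $\sigma'\coloneqq e^{-\beta H'}/Z'$. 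I would first bound $Z'/Z$ from above and below, then bound $D(\rho\|\sigma')$, and finally pass from $\sigma'$ to $\sigma_P$ by discarding the $Q$-block.

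\textbf{Upper bounds via compressions.} By the Berezin--Lieb/Jensen trace inequality for compressions, $\tr_{P}f(PHP)\le\tr(Pf(H)P)$ for convex $f$; apply it with $f(x)=e^{-\beta x}$. This gives
\[
Z_P\le\tr(Pe^{-\beta H})\le Z,\qquad Z_Q\le\tr(Qe^{-\beta H})=Z\tr(Q\rho)\le\delta Z .
\]
In particular $Z'\le Z$.

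\textbf{Lower bound via the Gibbs variational principle.} Use the pinched trial state $\bar\rho\coloneqq P\rho P+Q\rho Q$ for $H'$. Pinching does not decrease entropy, so $S(\bar\rho)\ge S(\rho)$. Moreover $\tr(\bar\rho H')=\tr(\rho H')=\tr(\rho H)-2\Re\tr(\rho X)$. The cross term only involves the off-diagonal block $Q\rho P$. By Cauchy--Schwarz, $\|Q\rho P\|_1\le\sqrt{\tr(Q\rho)\tr(P\rho)}\le\sqrt\delta$, so $|\tr(\rho X)|\le\theta_P\sqrt\delta$. The variational principle $\log Z'\ge S(\bar\rho)-\beta\tr(\bar\rho H')$, compared with $\log Z=S(\rho)-\beta\tr(\rho H)$, then gives
\[
Z'\ge Ze^{-2\beta\theta_P\sqrt\delta}.
\]
Combining, $Z_P=Z'-Z_Q\ge Z\bigl(e^{-2\beta\theta_P\sqrt\delta}-\delta\bigr)$. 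Together with $Z_P\le Z$ this yields $|Z-Z_P|\le(2\beta\theta_P\sqrt\delta+\delta)Z$, which is within $4(1+\beta\theta_P)\delta^{1/4}Z$ since $\delta<1$.

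\textbf{Trace distance.} Expand the relative entropy directly:
\[
D(\rho\|\sigma')=\log Z'-\log Z+\beta\tr\bigl(\rho(H'-H)\bigr)=\log\frac{Z'}{Z}-2\beta\Re\tr(\rho X)\le 2\beta\theta_P\sqrt\delta,
\]
using $Z'\le Z$. Pinsker's inequality then gives $\|\rho-\sigma'\|_1\le\sqrt{2D}\le 2\sqrt{\beta\theta_P}\,\delta^{1/4}$. Next write $\sigma'=(Z_P/Z')\sigma_P+Qe^{-\beta QHQ}Q/Z'$, which gives
\[
\|\sigma'-\sigma_P\|_1\le 2Z_Q/Z'\le 2\delta\,\frac{Z}{Z'}\le2\delta e^{2\beta\theta_P\sqrt\delta}.
\]
The triangle inequality then gives the claimed form, after absorbing constants. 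If $e^{2\beta\theta_P\sqrt\delta}$ is not close to $1$, the right-hand side $4\sqrt{\beta\theta_P}\,\delta^{1/4}$ already exceeds $2$ and the bound is trivial, so I would split into these two cases.

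\textbf{Main obstacle.} I expect the main difficulty to be the lower bound on $Z_P$, i.e.\ showing that cutting the off-diagonal coupling $X$ costs only $e^{O(\beta\theta_P\sqrt\delta)}$. A naive Duhamel expansion in $X$ would give $e^{\beta\theta_P}$ with no smallness. The pinching trick sidesteps this because $\rho$ commutes with $H$ and only the small off-diagonal block $Q\rho P$ sees $X$. The remaining points to check carefully are the correct use of the compression trace inequality for $Z_P\le\tr(Pe^{-\beta H})$ and $Z_Q\le\delta Z$, and the constant bookkeeping in the final case split.
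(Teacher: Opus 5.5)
Your proof is correct and uses the paper's skeleton. You compare with the same block-diagonal Hamiltonian $H'=PHP+QHQ$, use the same relative-entropy identity with the off-diagonal term bounded by Cauchy--Schwarz as $2\beta\theta_P\sqrt\delta$, apply Pinsker, and get the same lower bound $\log(Z'/Z)\ge-2\beta\theta_P\sqrt\delta$. Your Gibbs-variational step is just $D(\rho\|\sigma')\ge 0$. The pinching is unnecessary, since $\rho$ itself is a valid trial state with the same energy $\tr(\rho H')$.

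The one genuine difference is how the $Q$-block is controlled. You apply the Jensen/Berezin compression inequality blockwise to get $Z_Q\le\tr(Qe^{-\beta H})\le\delta Z$. The paper extracts only $Z'\le Z$ from the same Jensen argument. It then controls the $Q$-weight of the block-diagonal state through $\tr(Q\sigma')\le\tr(Q\rho)+\tfrac12\|\rho-\sigma'\|_1$. Your route gives a sharper partition-function bound, $|Z-Z_P|\le(\delta+2\beta\theta_P\sqrt\delta)Z$, with no $\delta^{1/4}$ term. The paper's route gives the stated trace-distance bound directly, with no case analysis.

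Your case analysis needs adjusting. Write $u\coloneqq\sqrt{\beta\theta_P}\,\delta^{1/4}$, so your bound reads $2u+2\delta e^{2u^2}$. Splitting only on $4u\ge 2$ does not suffice. For example, $u=0.49$ and $\delta=0.99$ give $2u+2\delta e^{2u^2}\approx 4.18$, while $2\delta+4u=3.94$ and $4u<2$. Split instead on whether $2\delta+4u\ge 2$, in which case the claim is trivial. In the remaining case $u<1/2$ and $\delta<1-2u$, so $\delta u<1/8$. Hence $\delta(e^{2u^2}-1)\le 2e^{1/2}\delta u^2\le u$, which yields $2\delta+4u$. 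Alternatively, simply use the paper's bound on $\tr(Q\sigma')$ for the trace-distance part and keep your sharper bound for $Z_P$.
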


\begin{proof}
Define
\begin{align}
    H_{\mathrm{bd}}
    \coloneqq
    PHP+(\I-P)H(\I-P),\qquad
    \rho_{\mathrm{bd}}
    \coloneqq
    \frac{e^{-\beta H_{\mathrm{bd}}}}
    {\tr(e^{-\beta H_{\mathrm{bd}}})}.
\end{align}

\paragraph{Estimate $\|\rho-\rho_{\mathrm{bd}}\|_1$.} First
note that
\begin{align}
    \tr(e^{-\beta H_{\mathrm{bd}}})
    \leq
    \tr(e^{-\beta H}). \label{eq:convex}
\end{align}
Indeed, let $\{\psi_j\}_j$ be an eigenbasis of $H_{\mathrm{bd}}$ such that
each $\psi_j$ lies in either $\operatorname{ran}P$ or
$\operatorname{ran}(\I-P)$. Since $H-H_{\mathrm{bd}}$ contains only the
off-diagonal blocks, we have
$\langle\psi_j,H\psi_j\rangle
=\langle\psi_j,H_{\mathrm{bd}}\psi_j\rangle$.
Writing $H=\sum_kE_k\op{k}{k}$ and
$p_{jk}\coloneqq|\langle k|\psi_j\rangle|^2$, Jensen's inequality gives
\begin{align}
    e^{-\beta\langle\psi_j,H\psi_j\rangle}
    =
    e^{-\beta\sum_kp_{jk}E_k}
    \leq
    \sum_kp_{jk}e^{-\beta E_k}
    =
    \langle\psi_j,e^{-\beta H}\psi_j\rangle.
\end{align}
Summing this inequality over $j$ proves Eq.~(\ref{eq:convex}).
Using Eq.~(\ref{eq:convex}), the relative entropy between $\rho$ and $\rho_{\mathrm{bd}}$ satisfies
\begin{align}
D(\rho\Vert\rho_{\mathrm{bd}})
=
\beta\tr\!\left(\rho(H_{\mathrm{bd}}-H)\right)
+
\log\frac{\tr(e^{-\beta H_{\mathrm{bd}}})}
{\tr(e^{-\beta H})}\leq
\beta\left|
\tr\!\left(\rho(H_{\mathrm{bd}}-H)\right)
\right|.\label{eq:relative_entropy}
\end{align}
Besides, since
$H-H_{\mathrm{bd}}
=
PH(\I-P)+(\I-P)HP,$
the trace-norm inequality and the Schatten Cauchy--Schwarz inequality give
\begin{align}
D(\rho\Vert\rho_{\mathrm{bd}})
&\leq
2\beta\theta_P\,
\lVert P\rho(\I-P)\rVert_1\notag\\
&\leq
2\beta\theta_P\,
\sqrt{\tr(P\rho)
\tr((\I-P)\rho)}\notag\\
&\leq
2\beta\theta_P\,\sqrt{\delta}.
\label{eq:relative-entropy-projection}
\end{align}
Then 
Pinsker's inequality therefore gives

\begin{align}
	\lVert\rho-\rho_{\mathrm{bd}}\rVert_1\leq
\sqrt{2D(\rho\Vert\rho_{\mathrm{bd}})}
\leq
2\sqrt{\beta\theta_P}\,\delta^{1/4}.\label{eq:pinsker}
\end{align}

\paragraph{Estimate $\|\rho_{\mathrm{bd}}-\sigma_P\|_1$.}
Since $H_{\mathrm{bd}}$ is block diagonal, we have
$P\rho_{\mathrm{bd}}P=\tr(P\rho_{\mathrm{bd}})\sigma_P$.
Therefore, compared with $\sigma_P$, the $P$ block of
$\rho_{\mathrm{bd}}$ has weight
$\tr((\I-P)\rho_{\mathrm{bd}})$ missing, while its $(\I-P)$
block has the same weight. Hence
\begin{align}
	\lVert\rho_{\mathrm{bd}}-\sigma_P\rVert_1
=2\tr((\I-P)\rho_{\mathrm{bd}}).\label{eq:rho_bd_sigma_p}
\end{align}

Moreover,
\begin{align}
\tr((\I-P)\rho_{\mathrm{bd}}) = \tr((\I-P)\rho)+ \tr((\I-P)(\rho_{\mathrm{bd}}-\rho))
\leq
\delta+\frac12\lVert\rho-\rho_{\mathrm{bd}}\rVert_1.\label{eq:IP_rho_bad}
\end{align}

Combining Eqs.~(\ref{eq:pinsker}), (\ref{eq:rho_bd_sigma_p}), and (\ref{eq:IP_rho_bad}), we prove the trace-distance bound.

\paragraph{Compare the partition function $Z -\widehat{Z}_P$.}
Denote 
 $Z_{\mathrm{bd}} \coloneqq \tr(e^{-\beta H_{\mathrm{bd}}}).$ Since $H_{\mathrm{bd}}$ is block diagonal,  $Z_P=Z_{\mathrm{bd}} \tr(P\rho_{\mathrm{bd}})$.   Hence, by Eqs.~(\ref{eq:pinsker}) and~(\ref{eq:IP_rho_bad}), we have \begin{align} 1-\frac{Z_P}{Z_{\mathrm{bd}}} \leq \delta+\sqrt{\beta\theta_P}\,\delta^{1/4}.\label{eq:one_Zp_Zbd} 
 \end{align}
Moreover, using the first equality in Eq.~(\ref{eq:relative_entropy}), the fact that $D(\rho\Vert\rho_{\mathrm{bd}})\geq0$, and $\left| \tr\!\left(\rho(H_{\mathrm{bd}}-H)\right) \right| \leq 2\theta_P\sqrt{\delta}$, we obtain \begin{align} \log\frac{Z_{\mathrm{bd}}}{Z} \geq -2\beta\theta_P\sqrt{\delta}, \end{align} and therefore $$\frac{Z_{\mathrm{bd}}}{Z} -1\geq -2\beta\theta_P\sqrt{\delta}.$$ Besides, note that $Z_P\leq Z_{\mathrm{bd}}\leq Z$, where the first inequality is direct and the second inequality comes from Eq.~(\ref{eq:convex}).
Combining Eqs.~(\ref{eq:one_Zp_Zbd}) and (\ref{eq:convex}) and using $Z_P\leq Z_{\mathrm{bd}}\leq Z$, we have
\begin{align}
0\leq 1-\frac{Z_P}{Z}
=
\left(1-\frac{Z_{\mathrm{bd}}}{Z}\right)
+
\frac{Z_{\mathrm{bd}}}{Z}
\left(1-\frac{Z_P}{Z_{\mathrm{bd}}}\right)\leq
\delta
+\sqrt{\beta\theta_P}\,\delta^{1/4}
+2\beta\theta_P\sqrt{\delta}.
\end{align}

Since $0<\delta<1$, we have
$\delta\leq\delta^{1/4}$ and $\sqrt{\delta}\leq\delta^{1/4}$, while
$\sqrt{\beta\theta_P}\leq 1+\beta\theta_P$. Therefore,
\begin{align}
\delta+\sqrt{\beta\theta_P}\,\delta^{1/4}
+2\beta\theta_P\sqrt{\delta}
\leq
4(1+\beta\theta_P)\delta^{1/4}.
\end{align}

Thus we prove the bound $ |Z-Z_P| \leq 4(1+\beta\theta_P)\delta^{1/4}Z$. 
\end{proof}

\section{Continuous-time quantum Monte Carlo}\label{sec:CT-QMC}

In this appendix, we give the detailed algorithm and proof of
Theorem~\ref{thm:ctqmc}. Recall that we consider 
a decomposition
\begin{align}
    H=H_{\mathrm{ref}}+V,
    \qquad
    V=\sum_{a=1}^R v_a U_a,
    \qquad
    \|U_a\|\leq 1,\qquad  W\coloneqq\sum_{a=1}^R |v_a|. \label{eq:H_CT_QMC}
\end{align}
where the coefficients $v_a\in\mathbb{R}$.

\subsection{Imaginary-time interaction-picture expansion}\label{sec:interaction_picture_expansion}

For $r\geq 1$, define the ordered simplex
\[
 \Omega_r(\beta)
 \coloneqq\{(\tau_1,\ldots,\tau_r): \,\beta\geq\tau_1\geq\cdots\geq\tau_r\geq0\}.
\]

For a real value $\tau$, define
$V_I(\tau)\coloneqq e^{\tau H_{\mathrm{ref}}}Ve^{-\tau H_{\mathrm{ref}}}$.
The imaginary-time interaction-picture expansion expands the operator
$e^{-\beta H}$ around $H_{\mathrm{ref}}$:
\begin{align}
	e^{-\beta H}
	&=
	\sum_{r=0}^{\infty}(-1)^r e^{-\beta H_{\mathrm{ref}}}
	\int_{\Omega_r(\beta)}
	V_I(\tau_1)\cdots V_I(\tau_r)
	\,d\tau_1\cdots d\tau_r
	\label{eq:interaction-form}\\
	&=
	\sum_{r=0}^{\infty}(-1)^r
	\int_{\Omega_r(\beta)}
	e^{-(\beta-\tau_1)H_{\mathrm{ref}}}V
	e^{-(\tau_1-\tau_2)H_{\mathrm{ref}}}\cdots
	V e^{-\tau_rH_{\mathrm{ref}}}
	\,d\tau_1\cdots d\tau_r.
	\label{eq:duhamel-form}
\end{align}

For convenience, set $D_0\coloneqq e^{-\beta H_{\mathrm{ref}}}$ and, for $r\geq1$, define
\[
D_r
\coloneqq(-1)^re^{-\beta H_{\mathrm{ref}}}
\int_{\Omega_r(\beta)}
V_I(\tau_1)\cdots V_I(\tau_r)\,d\tau_1\cdots d\tau_r,
\qquad
G_K\coloneqq\sum_{r=0}^K D_r.
\]

\begin{theorem}[Truncating the interaction-picture expansion]
\label{thm:truncation}
The expansion
$e^{-\beta H}=\sum_{r=0}^{\infty}D_r$
converges absolutely in trace norm. For any Hermitian observable $O$ with $\|O\|\leq 1$,
abbreviate
\[
 Z_K\coloneqq \operatorname{Re}\tr(G_K),
 \qquad
 N_K\coloneqq \operatorname{Re}\tr(OG_K),
 \qquad
 \mu_K\coloneqq N_K/Z_K, \qquad Z\coloneqq\tr(e^{-\beta H}).
\]
Then for any precision parameter $\epsilon\in(0,1)$, set
$K=\cO(\beta W+\log(1/\epsilon))$. Then we have
\begin{equation}\label{eq:general-truncated-consequences}
 Z_K\geq\frac{15}{16}Z>0,
 \qquad
 |Z_K-Z|\leq\frac{\epsilon}{16}Z,
 \qquad
 \left|\mu_K-\tr\left(O\rho^{(\beta)}\right)\right|
 \leq\frac{2\epsilon}{15},
 \qquad
 |\mu_K|<\frac65,
\end{equation}
\end{theorem}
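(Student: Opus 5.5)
The plan is to bound each term $D_r$ in trace norm by a multiple of $Z$ that decays like $(\beta W)^r/r!$, and then sum the tail beyond $K$. The difficulty is that $V_I(\tau)=e^{\tau H_{\mathrm{ref}}}Ve^{-\tau H_{\mathrm{ref}}}$ is not bounded in operator norm. I will therefore work with the Duhamel form \eqref{eq:duhamel-form}. There each integrand is a product of $r+1$ positive factors $e^{-s_k H_{\mathrm{ref}}}$, with $s_0=\beta-\tau_1$, $s_k=\tau_k-\tau_{k+1}$ and $s_r=\tau_r$, so that $\sum_k s_k=\beta$, interleaved with $r$ copies of $V$. Hölder's inequality for Schatten norms with exponents $p_k=\beta/s_k$ gives
\[
\Bigl\|\prod_k e^{-s_kH_{\mathrm{ref}}}V\cdots\Bigr\|_1\leq \|V\|^r\prod_k\|e^{-s_kH_{\mathrm{ref}}}\|_{\beta/s_k}=\|V\|^r Z_{\mathrm{ref}},
\]
where $Z_{\mathrm{ref}}=\tr(e^{-\beta H_{\mathrm{ref}}})$. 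Integrating over the simplex, whose volume is $\beta^r/r!$, yields $\|D_r\|_1\leq Z_{\mathrm{ref}}(\beta W)^r/r!$, using $\|V\|\leq W$. This gives absolute convergence in trace norm, and summing recovers $e^{-\beta H}$ by the standard Duhamel/Dyson identity in finite dimension.

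Next I need to compare $Z_{\mathrm{ref}}$ with $Z$. By Golden–Thompson, or more simply by the Weyl/min-max bound $\lambda_j(H)\leq\lambda_j(H_{\mathrm{ref}})+\|V\|$, I get $Z\geq e^{-\beta W}Z_{\mathrm{ref}}$. Hence
\[
\|e^{-\beta H}-G_K\|_1\leq Z\,e^{\beta W}\sum_{r>K}\frac{(\beta W)^r}{r!}.
\]
Choose $K\geq e^2\beta W+\log(16\cdot 4/\epsilon)+\beta W/\ln 2$ or similar, so that the tail is at most $e^{-\beta W}(\epsilon/64)$. Then $\|e^{-\beta H}-G_K\|_1\leq \epsilon Z/64$. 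The tail bound uses $\sum_{r>K}x^r/r!\leq (ex/K)^{K}\cdot 2$ for $K\geq 2ex$, so $K=\cO(\beta W+\log\epsilon^{-1})$ suffices to absorb the extra factor $e^{\beta W}$.

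Finally I derive the four consequences. Since $Z$ is real, $|Z_K-Z|\leq|\tr(G_K)-Z|\leq\|e^{-\beta H}-G_K\|_1\leq\epsilon Z/16$, and hence $Z_K\geq(1-\epsilon/16)Z\geq\tfrac{15}{16}Z$. Similarly $|N_K-\tr(Oe^{-\beta H})|\leq\|O\|\,\epsilon Z/16$. Writing $\mu=\tr(O\rho^{(\beta)})$ with $|\mu|\leq1$,
\[
|\mu_K-\mu|\leq\frac{|N_K-\mu Z|+|\mu||Z-Z_K|}{Z_K}\leq\frac{2\epsilon Z/16}{15Z/16}=\frac{2\epsilon}{15},
\]
and therefore $|\mu_K|\leq 1+2/15<6/5$. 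The main obstacle is the first step: I must avoid the unbounded interaction-picture operators and get a bound relative to $Z_{\mathrm{ref}}$ rather than $e^{\beta\|H_{\mathrm{ref}}\|}$. The Hölder argument on the Duhamel form is exactly what handles this; the rest is bookkeeping of constants.
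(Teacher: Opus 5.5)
Your proposal is correct and follows essentially the same route as the paper: Schatten--H\"older on the Duhamel form gives $\|D_r\|_1\leq Z_{\mathrm{ref}}(\beta W)^r/r!$, and the operator bound $H\leq H_{\mathrm{ref}}+W\I$ gives $Z\geq e^{-\beta W}Z_{\mathrm{ref}}$. A tail estimate with $K=\cO(\beta W+\log\epsilon^{-1})$ then absorbs the extra factor $e^{\beta W}$, and the four consequences follow from the same ratio bookkeeping. The only differences are constants (the paper takes $K\geq 4e\beta W+\log_2(16/\epsilon)$) and one small point: Golden--Thompson gives the needed lower bound on $Z$ only when applied to $H_{\mathrm{ref}}=H-V$, so the Weyl/min-max argument you list as the alternative is the cleaner one.
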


\begin{proof}
Let
\[
 U(t)\coloneqq e^{tH_{\mathrm{ref}}}e^{-tH}.
\]
Then $U'(t)=-V_I(t)U(t)$ and $U(0)=\I$. Iterating the corresponding
integral equation and multiplying by $e^{-\beta H_{\mathrm{ref}}}$ gives
Eq.~\eqref{eq:interaction-form}, hence also Eq.~\eqref{eq:duhamel-form}.

Note that $\|V\|\leq W$.
Fix ordered times and let $s_0,\ldots,s_r\geq0$ be the successive time gaps,
so that $\sum_{j=0}^rs_j=\beta$. Generalized Schatten H\"older gives
\begin{align*}
 \norm{e^{-s_0H_{\mathrm{ref}}}V
 e^{-s_1H_{\mathrm{ref}}}\cdots
 V e^{-s_rH_{\mathrm{ref}}}}_1
 &\leq
 W^r\prod_{j:s_j>0}
 \norm{e^{-s_jH_{\mathrm{ref}}}}_{\beta/s_j}=W^rZ_{\mathrm{ref}},
\end{align*}
because
$\norm{e^{-s_jH_{\mathrm{ref}}}}_{\beta/s_j}
=Z_{\mathrm{ref}}^{s_j/\beta}$.
Since the volume of the ordered simplex is
$\operatorname{Vol}(\Omega_r(\beta))=\beta^r/r!$, we have
 $\norm{D_r}_1\leq
 Z_{\mathrm{ref}}\frac{(\beta W)^r}{r!}.$
Thus the expansion converges absolutely in trace norm, and
\begin{equation}\label{eq:general-tail}
 \norm{e^{-\beta H}-G_K}_1
 \leq Z_{\mathrm{ref}}
 \sum_{r=K+1}^{\infty}\frac{(\beta W)^r}{r!}.
\end{equation}

Note that
$H_{\mathrm{ref}}-W\I\leq H\leq H_{\mathrm{ref}}+W\I$.
Therefore
 $e^{-\beta W}Z_{\mathrm{ref}}
 \leq Z\leq e^{\beta W}Z_{\mathrm{ref}}.$
Choose
 $K\geq4e\beta W+\log_2\frac{16}{\epsilon}.$
Then
\[
 \eta\coloneqq
 \frac{\norm{e^{-\beta H}-G_K}_1}{Z}
 \leq e^{\beta W}2^{-K}
 \leq\frac{\epsilon}{16}.
\]
Abbreviate
$N\coloneqq\tr(Oe^{-\beta H})$ and $\mu\coloneqq N/Z$. Then $N,\mu\in\mathbb R$ and
$|\mu|\leq1$ since $O$ is Hermitian and $\|O\|\leq1$. Using
$|\operatorname{Re}\tr(A)|\leq\norm{A}_1$, the above equation implies
\[
 |Z_K-Z|\leq\eta Z,
 \qquad
 |N_K-N|\leq\eta Z.
\]
Hence
$Z_K\geq(1-\eta)Z\geq15Z/16$,
$|Z_K-Z|\leq\epsilon Z/16$, and
 $|\mu_K-\mu|
 \leq\frac{2\eta}{1-\eta}
 \leq\frac{2\epsilon}{15}.$
Similarly,
$|\mu_K|
 \leq\frac{1+\eta}{1-\eta}
 \leq\frac{17}{15}<\frac65.$
\end{proof}

\subsection{Continuous-time quantum Monte Carlo}

Recall that the perturbation term $V$ has the decomposition
\begin{equation}\label{eq:abstract-unitary-decomposition}
 V=\sum_{a=1}^{R}v_aU_a,
 \qquad
 \|U_a\|\leq 1,
 \qquad
 W\coloneqq\sum_{a=1}^{R}|v_a|.
\end{equation}
Then the truncated expansion is
\begin{align}
 G_K
 &=
 e^{-\beta H_{\mathrm{ref}}}
 +\sum_{r=1}^K\sum_{a_1,\ldots,a_r=1}^R
 (-1)^r\left(\prod_{j=1}^rv_{a_j}\right)
 \int_{\Omega_r(\beta)}
 e^{-(\beta-\tau_1)H_{\mathrm{ref}}}U_{a_1}
 e^{-(\tau_1-\tau_2)H_{\mathrm{ref}}}\cdots
 U_{a_r}e^{-\tau_rH_{\mathrm{ref}}}
 \,d\tau_1\cdots d\tau_r.
 \label{eq:Gk}
\end{align}
Theorem~\ref{thm:truncation} reduces the estimation of
$\tr(O\rho^{(\beta)})$
to estimating
\[
\mu_K=\frac{N_K}{Z_K}
=\frac{\operatorname{Re}\tr(OG_K)}
{\operatorname{Re}\tr(G_K)}.
\]
The continuous-time quantum Monte Carlo method (CT-QMC) estimates its
numerator and denominator by sampling configurations in
Eq.~\eqref{eq:Gk}, specified by
\[
 r,\ a_1,\ldots,a_r,\ \tau_1,\ldots,\tau_r.
\]

The decomposition $H=H_{\mathrm{ref}}+V$ and the decomposition
$V=\sum_{a=1}^R v_aU_a$ are chosen such that it is easy to calculate the
configuration values
\begin{align}
\mathcal C_O(r,\boldsymbol a,\boldsymbol\tau)
&=
\tr\left(
Oe^{-(\beta-\tau_1)H_{\mathrm{ref}}}U_{a_1}
e^{-(\tau_1-\tau_2)H_{\mathrm{ref}}}\cdots
U_{a_r}e^{-\tau_rH_{\mathrm{ref}}}
\right),\\
\mathcal C_\I(r,\boldsymbol a,\boldsymbol\tau)
&=
\tr\left(
e^{-(\beta-\tau_1)H_{\mathrm{ref}}}U_{a_1}
e^{-(\tau_1-\tau_2)H_{\mathrm{ref}}}\cdots
U_{a_r}e^{-\tau_rH_{\mathrm{ref}}}
\right).
\end{align}
We use $T_{\mathrm{cfg}}(K)$ to denote an upper bound on the time needed
to evaluate these configuration values for $r\leq K$.
Set
\begin{equation}\label{eq:abstract-parameters}
 K=\cO(\beta W+\log(1/\epsilon)),
 \qquad
 C_K\coloneqq\sum_{r=0}^{K}\frac{(\beta W)^r}{r!},
 \qquad
 d\coloneqq\frac{15}{16}e^{-2\beta W},
 \qquad
 M\coloneqq\left\lceil
 \frac{128}{\epsilon^2d^2}\log\frac4\delta
 \right\rceil.
\end{equation}

The CT-QMC algorithm is as follows:
If $W=0$, then $H=H_{\mathrm{ref}}$ and the reference expectation and
partition function are computed directly. Otherwise, generate one sample as
follows.

\begin{enumerate}
\item Sample $r$ from a Poisson distribution with mean $\beta W$, conditioned
on $r\leq K$. Equivalently,
\[
 \Pr[r]=\frac{(\beta W)^r/r!}{C_K},
 \qquad r=0,\ldots,K.
\]

\item Sample $r$ independent values in $[0,\beta]$ and sort them as
$\beta\geq\tau_1\geq\cdots\geq\tau_r\geq0$.

\item Sample $a_1,\ldots,a_r$ independently with
$\Pr[a_j=a]=|v_a|/W$.

\item For $r=0$, set $w=1$ and
$\cG=e^{-\beta H_{\mathrm{ref}}}$. For $r\geq1$, use
$\sgn(x)\in\{1,-1\}$ as the sign function for $x$, and define
\begin{align}
     w&\coloneqq(-1)^r\prod_{j=1}^r\sgn(v_{a_j}),\\
     \cG&\coloneqq
 e^{-(\beta-\tau_1)H_{\mathrm{ref}}}U_{a_1}
 e^{-(\tau_1-\tau_2)H_{\mathrm{ref}}}\cdots
 U_{a_r}e^{-\tau_rH_{\mathrm{ref}}}.
\end{align}

\item Compute
\begin{equation}\label{eq:abstract-samples}
 X\coloneqq w\frac{\operatorname{Re}\tr(O\cG)}{Z_{\mathrm{ref}}},
 \qquad
 Y\coloneqq w\frac{\operatorname{Re}\tr(\cG)}{Z_{\mathrm{ref}}},
\end{equation}
where
$Z_{\mathrm{ref}}=\tr(e^{-\beta H_{\mathrm{ref}}})$.
\end{enumerate}

For $M$ independent samples, let $\overline X$ and $\overline Y$ be the
sample means. Output
\begin{equation}\label{eq:abstract-output}
 \widehat\mu\coloneqq
 \begin{cases}
 0, & \overline Y\leq d/2,\\[1mm]
 \overline X/\overline Y, & \overline Y>d/2,
 \end{cases}
 \qquad
 \widehat Z\coloneqq C_KZ_{\mathrm{ref}}\overline Y.
\end{equation}
Here $d=\frac{15}{16}e^{-2\beta W}$ is a threshold ensuring that the
denominator is nonzero, so that the ratio $\overline X/\overline Y$ is
well-defined.

\begin{proof}[Proof of Theorem~\ref{thm:ctqmc}]
Note that the density of sampling a particular
$\beta\geq\tau_1\geq\cdots\geq\tau_r\geq0$
is $\frac{r!}{\beta^r}$.
One can check that the sampling procedure gives
\begin{equation}\label{eq:abstract-sampling-identity}
 \E[w\cG]=\frac{G_K}{C_K}.
\end{equation}
Because $\|U_a\|\leq1$, similarly to the H\"older estimate in the proof of
Theorem~\ref{thm:truncation}, we have
$\norm{\cG}_1\leq Z_{\mathrm{ref}}$.
Therefore $X$ and $Y$ are real random variables satisfying
$|X|\leq1$ and $|Y|\leq1$.

Let $x\coloneqq\E[X]$ and $y\coloneqq\E[Y]$. Recall the notation
$N_K,Z_K,\mu_K$ from Theorem~\ref{thm:truncation}.
By Eq.~\eqref{eq:abstract-sampling-identity},
\[
 x=\frac{N_K}{C_KZ_{\mathrm{ref}}},
 \qquad
 y=\frac{Z_K}{C_KZ_{\mathrm{ref}}},
 \qquad
 \frac{x}{y}=\mu_K.
\]
Eq.~\eqref{eq:general-truncated-consequences} gives
$Z_K\geq15Z/16$. Since
$C_K\leq e^{\beta W}$ and
$Z\geq e^{-\beta W}Z_{\mathrm{ref}}$, we have
\[
 y\geq\frac{15}{16}e^{-2\beta W}=d.
\]

Recall that $M$ is the number of independent samples. Set 
$$t\coloneqq\epsilon d/8.$$ Hoeffding's inequality and
the fact that $|X|\leq1$, $|Y|\leq1$ imply
\[
 \Pr\!\left[
 |\overline X-x|>t\ \text{or}\ |\overline Y-y|>t
 \right]
 \leq4e^{-Mt^2/2}\leq\delta.
\]
On the complementary event,
$\overline Y\geq y-t\geq7y/8>d/2$, so the algorithm outputs
$\overline X/\overline Y$ according to Eq.~(\ref{eq:abstract-output}).
Moreover,
Eq.~\eqref{eq:general-truncated-consequences} also gives
$|\mu_K|<6/5$, and hence
\[
 \left|
 \frac{\overline X}{\overline Y}-\mu_K
 \right|
 =
 \left|
 \frac{\overline X}{\overline Y}-\frac{x}{y}
 \right|
 =
 \left|
 \frac{\overline X-x+\mu_K(y-\overline Y)}
 {\overline Y}
 \right|
 \leq
 \frac{t+|\mu_K|t}{7y/8}
 \leq\frac{11}{35}\epsilon.
\]
Combining this with the bias bound in
Eq.~\eqref{eq:general-truncated-consequences} gives total error less
than $\epsilon$.
It remains to estimate the partition function. By
Eq.~\eqref{eq:abstract-output},
\begin{align}
 |\widehat Z-Z|
 &\leq
 C_KZ_{\mathrm{ref}}|\overline Y-y|
 +|Z_K-Z|.
\end{align}
On the same event, using
$C_K\leq e^{\beta W}$,
$Z_{\mathrm{ref}}\leq e^{\beta W}Z$,
$t=\epsilon d/8$, and
$d=\frac{15}{16}e^{-2\beta W}$, we obtain
\begin{align}
 C_KZ_{\mathrm{ref}}|\overline Y-y|
 &\leq
 e^{2\beta W}Z
 \frac{\epsilon d}{8}
 =
 \frac{15\epsilon}{128}Z.
\end{align}
Together with
$|Z_K-Z|\leq\epsilon Z/16$ from
Theorem~\ref{thm:truncation}, this gives
\begin{align}
 |\widehat Z-Z|
 \leq
 \left(
 \frac{15}{128}+\frac{1}{16}
 \right)\epsilon Z
 <\epsilon Z.
\end{align}

Finally, from Eq.~\eqref{eq:abstract-parameters},
\begin{align}
 M
 =
 e^{\cO(\beta W)}
 \operatorname{poly}\!\left(
 \epsilon^{-1},\log\delta^{-1}
 \right).
\end{align}
Since every sampled configuration has order $r\leq K$, one sample can be
evaluated in time $T_{\mathrm{cfg}}(K)$, up to polynomial overhead in $K$.
Therefore, the total runtime is
\begin{align}
 e^{\cO(\beta W)}
 \operatorname{poly}\!\left(
 K,\epsilon^{-1},\log\delta^{-1}
 \right)
 T_{\mathrm{cfg}}(K),
\end{align}
which proves Theorem~\ref{thm:ctqmc}.
\end{proof}

\section{Gibbs-state preparation via quantum belief propagation}\label{sec:QBP}

In this section we show how quantum belief propagation (QBP) can be used to design a quantum algorithm for preparing Gibbs states of Hamiltonians of the form $H = H_0 + V$. We use the form of the statement recorded in~\cite{kato2019quantum}.

\begin{proposition}
    Let $H_s = H_0 + sV$ for $s \in [0, 1]$. For $\beta > 0$, define
    \begin{equation}\label{eq:Phi_s_defn}
        \Phi_s \coloneqq \int_\R dt \, f_\beta(t) e^{-itH_s} V e^{itH_s}, \quad \text{where } f_\beta(t) \coloneqq \frac{2}{\pi\beta} \log\coth\l(\frac{\pi|t|}{2\beta}\r).
    \end{equation}
    The quantum belief propagation (QBP) operator
    \begin{equation}\label{eq:QBP_op}
        \eta_s \coloneqq \TO\exp\l( -\frac{\beta}{2} \int_0^s dr \, \Phi_r \r).
    \end{equation}
    satisfies the identity
    \begin{equation}
        \eta_s e^{-\beta H_0} \eta_s^\dagger = e^{-\beta H_s}.
    \end{equation}
\end{proposition}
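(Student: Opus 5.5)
The plan is to differentiate $e^{-\beta H_s}$ in $s$ and show that it satisfies the same differential equation as $\eta_s e^{-\beta H_0}\eta_s^\dagger$. Uniqueness of solutions to a linear ODE then gives the identity.

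First I would use Duhamel's formula:
\begin{equation}
\frac{d}{ds}e^{-\beta H_s}=-\int_0^\beta e^{-(\beta-u)H_s}\,V\,e^{-uH_s}\,du .
\end{equation}
I then want to rewrite the right-hand side as the symmetric anticommutator $-\frac{\beta}{2}\{\Phi_s,e^{-\beta H_s}\}$.

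To do this, work in the eigenbasis of $H_s$, with eigenvalues $\lambda_a$ and write $\omega=\lambda_b-\lambda_a$. Take matrix elements $\langle a|\cdot|b\rangle$.
\begin{itemize}
\item The Duhamel term gives $-V_{ab}\,\dfrac{e^{-\beta\lambda_a}-e^{-\beta\lambda_b}}{\lambda_b-\lambda_a}$.
\item The anticommutator term gives $-\frac{\beta}{2}\hat f_\beta(\omega)V_{ab}\,(e^{-\beta\lambda_a}+e^{-\beta\lambda_b})$, where $\hat f_\beta(\omega)=\int f_\beta(t)e^{i\omega t}\,dt$. Only the evenness of $\hat f_\beta$ matters here.
\end{itemize}
The two agree exactly when
\begin{equation}
\hat f_\beta(\omega)=\frac{\tanh(\beta\omega/2)}{\beta\omega/2}.
\end{equation}
So the key lemma is the Fourier transform of $f_\beta(t)=\frac{2}{\pi\beta}\log\coth(\pi|t|/2\beta)$. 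After rescaling $t\mapsto \beta t/\pi$, this reduces to the classical integral $\int_{\mathbb R}\log\coth(|x|/2)\,e^{ikx}\,dx=\pi\tanh(\pi k/2)/k$.

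I would prove that integral by integrating by parts. The derivative of $\log\coth(x/2)$ is $-1/\sinh x$, and the Fourier sine transform $\int_0^\infty \sin(kx)/\sinh x\,dx=\frac{\pi}{2}\tanh(\pi k/2)$ is standard, obtainable by expanding $1/\sinh x=2\sum_n e^{-(2n+1)x}$ or by contour integration. The boundary terms vanish because $\log\coth$ has only a logarithmic singularity at $0$ and decays exponentially at infinity. Evaluating at $k=0$ also shows $f_\beta$ is a probability density, and evenness makes $\Phi_s$ Hermitian.

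Next I would check that $\eta_s$ solves $\frac{d}{ds}\eta_s=-\frac{\beta}{2}\Phi_s\eta_s$ with $\eta_0=\I$. This is immediate from the time-ordered-exponential definition in~\eqref{eq:QBP_op}, with later times placed to the left. Then $X_s:=\eta_s e^{-\beta H_0}\eta_s^\dagger$ satisfies
\begin{equation}
\frac{d}{ds}X_s=-\frac{\beta}{2}\bigl(\Phi_sX_s+X_s\Phi_s\bigr),
\end{equation}
using Hermiticity of $\Phi_s$. By the first step, $Y_s:=e^{-\beta H_s}$ satisfies the same linear ODE, and $X_0=Y_0$.

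In finite dimensions $\Phi_s$ is continuous in $s$: it is bounded by $\|V\|$ and built from continuously varying spectral data. The linear ODE $\dot Z=-\frac{\beta}{2}(\Phi_sZ+Z\Phi_s)$ therefore has a unique solution, which gives $X_s=Y_s$ for all $s\in[0,1]$.

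The main obstacle is the Fourier-transform computation for $f_\beta$, specifically getting the constants $\frac{2}{\pi\beta}$ and $\frac{\pi}{2\beta}$ to combine into exactly $\tanh(\beta\omega/2)/(\beta\omega/2)$. I would verify this first through the normalization $\hat f_\beta(0)=1$ and then through the small-$\omega$ expansion. The degenerate case $\omega=0$ is handled by continuity, since both sides equal $-\beta e^{-\beta\lambda_a}V_{ab}$ there.
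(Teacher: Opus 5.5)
Your argument is correct. It is the standard proof (due to Hastings, and to Kato and Brand\~{a}o) of a result the paper does not prove itself but simply quotes from \cite{kato2019quantum}: Duhamel's formula plus the Fourier identity $\hat f_\beta(\omega)=\tanh(\beta\omega/2)/(\beta\omega/2)$ give $\partial_s e^{-\beta H_s}=-\frac{\beta}{2}\{\Phi_s,e^{-\beta H_s}\}$, which $\eta_s e^{-\beta H_0}\eta_s^\dagger$ also satisfies, and your constants check out, including $\hat f_\beta(0)=1$.

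Two small tightenings. First, $\Phi_s$ is Hermitian because $f_\beta$ is real and $V$ is Hermitian; evenness is what makes $\hat f_\beta$ real. Second, continuity of $s\mapsto\Phi_s$ is better argued by dominated convergence than from ``continuously varying spectral data,'' since eigenvectors of $H_s$ can jump at level crossings. The integrand is continuous in $s$ and bounded by $\|V\|f_\beta(t)$, with $f_\beta\in L^1$. Alternatively, use the Lipschitz bound derived in the proof of Lemma~\ref{lem:TD_Phi}.
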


\begin{corollary}\label{cor:QBP_density_matrix}
    Fix $\beta \geq 0$. Let $\rho_s = e^{-\beta H_s} / Z_s$ where $Z_s = \tr(e^{-\beta H_s})$. Then
    \begin{equation}
        \eta_1 \rho_0 \eta_1^\dagger = \frac{Z_1}{Z_0} \rho_1.
    \end{equation}
\end{corollary}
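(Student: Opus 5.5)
This follows directly from the preceding proposition and the definitions of $\rho_s$ and $Z_s$. I will apply the QBP identity at $s=1$ and then rewrite each side in terms of the normalized Gibbs states.

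First, by definition $\rho_0 = e^{-\beta H_0}/Z_0$, so $e^{-\beta H_0} = Z_0\rho_0$. Likewise $\rho_1 = e^{-\beta H_1}/Z_1$, so $e^{-\beta H_1} = Z_1\rho_1$. Both partition functions are strictly positive because they are traces of positive definite operators on a finite-dimensional space, so dividing by them is legitimate.

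Next, I substitute these into $\eta_1 e^{-\beta H_0}\eta_1^\dagger = e^{-\beta H_1}$. The map $X\mapsto \eta_1 X\eta_1^\dagger$ is linear, so the scalar $Z_0$ pulls out. This gives $Z_0\,\eta_1\rho_0\eta_1^\dagger = Z_1\rho_1$.

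Dividing both sides by $Z_0>0$ yields $\eta_1\rho_0\eta_1^\dagger = (Z_1/Z_0)\rho_1$. As a consistency check, taking the trace gives $\tr(\eta_1\rho_0\eta_1^\dagger)=Z_1/Z_0$. This is the normalization factor that later controls the success probability of implementing the non-unitary $\eta_1$.

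Because the argument only rescales the known identity, there is no substantive obstacle. The one point needing care is the degenerate case $\beta=0$, which the statement allows through ``$\beta \geq 0$''. Here $f_\beta$ is undefined, but the generator in Eq.~\eqref{eq:QBP_op} carries a prefactor $\beta/2$. I would interpret $\eta_s$ as the identity in this case, either by that convention or as the limit $\beta\to 0$. Then both sides reduce to $\I/\dim$ and $Z_1/Z_0=1$, so the claim holds.
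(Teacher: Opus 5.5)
Your proof is correct and is the intended argument: the paper gives no separate proof, treating the corollary as an immediate consequence of the QBP identity at $s=1$ after writing $e^{-\beta H_s}=Z_s\rho_s$ and dividing by $Z_0>0$. Your handling of $\beta=0$ fills a small gap the paper leaves implicit, since the proposition is stated only for $\beta>0$. Setting $\eta_1=\I$ there is justified because $\|\Phi_r\|\leq\|V\|$, so the generator is $\O(\beta\|V\|)$ and $\eta_1\to\I$ as $\beta\to 0$.
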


When implementing imaginary-time exponentials, it will be useful to assume that the term $V$ is PSD. If this is not the case, we can always achieve it by an identity shift of $+\|V\|$.

Our algorithm will use block encodings extensively.

\begin{definition}
    Let $A$ be a $q$-qubit operator and $\varepsilon \geq 0$. A $(q+a)$-qubit unitary $\BE[A; \alpha, a, \varepsilon]$ is called an $(\alpha, a, \varepsilon)$-block-encoding of $A$ if it holds that
    \begin{equation}
        \|A - \alpha (\bra{0^a} \otimes \I) \BE[A; \alpha, a, \varepsilon] (\ket{0^a} \otimes \I)\| \leq \varepsilon.
    \end{equation}
\end{definition}

The following lemma shows how to discretize the integral defining $\Phi_s$ with exponentially convergent accuracy.

\begin{lemma}\label{lem:discrete_Phi_s}
    Let $0 < \varepsilon \leq 2\|V\|$ and $\Phi_s$ be as defined in \cref{eq:Phi_s_defn}. Further assume that $V \succeq 0$. There exists a discretized approximation
    \begin{equation}\label{eq:approx_Phi_s}
        \widetilde{\Phi}_s \coloneqq \sum_{k=1}^N w_k e^{-it_k H_s} V e^{it_k H_s}, \quad \text{where } \sum_{k=1}^N w_k \leq 1 \text{ and each } w_k \geq 0,
    \end{equation}
    such that $\|\widetilde{\Phi}_s - \Phi_s\| \leq \varepsilon$ with the choice
    \begin{equation}\label{eq:discrete_Phi_s_num_nodes}
        N = \l\lceil \max\l\{ \frac{2e\beta\|H_s\|}{\pi} \log\l(\frac{4\|V\|}{\varepsilon}\r), \frac{1}{2} \log_2\l(\frac{5\|V\|}{\varepsilon}\r) \r\} \r\rceil.
    \end{equation}
\end{lemma}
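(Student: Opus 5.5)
The plan is to work in the energy eigenbasis of $H_s=\sum_a \lambda_a \op{a}{a}$. There the operator $\Phi_s$ acts on matrix elements as a Fourier multiplier,
\[
\langle a|\Phi_s|b\rangle=\hat f_\beta(\lambda_a-\lambda_b)\langle a|V|b\rangle, \qquad \hat f_\beta(\omega)=\int f_\beta(t)e^{-i\omega t}\,dt .
\]
The filter $f_\beta$ is an even probability density. Its Fourier transform is the known function $\hat f_\beta(\omega)=\tanh(\beta\omega/2)/(\beta\omega/2)$, which is analytic in a strip of half-width $\pi/\beta$.

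I will build $\widetilde\Phi_s$ from a positive quadrature rule for the probability measure $f_\beta(t)\,dt$. That is, I choose nodes $t_k$ and nonnegative weights $w_k$ with $\sum_k w_k\le 1$. The natural candidate is Gauss quadrature with respect to this weight, computed via Golub--Welsch; this guarantees $w_k\ge0$, and the weights sum to exactly $1$. After a possible truncation of the tails of the time integral, the sum becomes $\le 1$.

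The error analysis reduces to a scalar problem. Because $V\succeq0$, I can write $V=\sum_k w_k(\cdot)+$ error in the form
\[
\Phi_s-\widetilde\Phi_s=\int g(t)\,e^{-itH_s}Ve^{itH_s}\,d\nu(t),
\]
where $\nu$ is the signed measure $f_\beta\,dt-\sum_k w_k\delta_{t_k}$. It then suffices to bound the quadrature error uniformly for the functions $t\mapsto e^{-i\omega t}$ with $|\omega|\le 2\|H_s\|$. To keep the operator-norm bound free of dimension factors, I would instead use a polynomial approximation. I approximate $e^{-itH_s}$, as a function of $t$ on the effective support, by a degree-$(2N-1)$ polynomial; Gauss quadrature integrates such polynomials exactly.

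The error then splits into two pieces. The first is the polynomial approximation error of $e^{-i\omega t}$ on $|t|\le T$, controlled by Bessel/Chebyshev tail bounds of the form $(e\omega T/2N)^{N}$. The second is the tail mass $\int_{|t|>T}f_\beta$. This tail decays like $e^{-\pi |t|/\beta}$, since $\log\coth x\approx 2e^{-2x}$. Choosing $T=\frac{\beta}{\pi}\log(4\|V\|/\varepsilon)$ makes the tail contribution at most $\varepsilon/2$, after multiplying by $2\|V\|$ from $\|e^{-itH}Ve^{itH}\|\le\|V\|$ on both sides. Choosing $N\gtrsim e\|H_s\|T=\frac{e\beta\|H_s\|}{\pi}\log(4\|V\|/\varepsilon)$ then makes the polynomial error geometric, $\le 2^{-2N}$. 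This explains the $\frac12\log_2(5\|V\|/\varepsilon)$ term in the stated $N$, which matches the two terms of the maximum.

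The main obstacle is making the quadrature-of-a-polynomial argument rigorous at the operator level, with nonnegative weights and the tail truncation simultaneously. Plain Gauss quadrature on the untruncated measure integrates polynomials exactly, but these polynomials approximate $e^{-i\omega t}$ only on $[-T,T]$, so the error from nodes outside $[-T,T]$ must be controlled. I would first try Gauss quadrature for the measure restricted to $[-T,T]$, renormalized down to total mass $\int_{-T}^{T}f_\beta\le1$. In that case all nodes lie in $[-T,T]$, the weights are nonnegative with $\sum w_k\le1$, and the only errors are the polynomial remainder, which is integrated against a positive measure of mass $\le 2$, and the discarded tail. Both errors appear multiplied by $\|V\|$, giving the claimed $\varepsilon$ after the constants are tracked.
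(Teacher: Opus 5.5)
Your skeleton is the paper's. You truncate at $T=\frac{\beta}{\pi}\log(4\|V\|/\varepsilon)$ using $f_\beta(t)\lesssim e^{-\pi|t|/\beta}$. You take the $N$-point Gauss rule for the unnormalized restricted measure $\chi_{[-T,T]}(t)f_\beta(t)\,dt$, so all nodes lie in $[-T,T]$, $w_k\ge 0$ and $\sum_k w_k\le 1$. You then charge the polynomial remainder twice, once against each positive measure of mass at most $1$. The eigenbasis/Fourier-multiplier picture and the hypothesis $V\succeq 0$ play no role; the paper's proof of this lemma never uses positivity of $V$.

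The step that fails as written is the polynomial approximation, which is exactly the ``operator-level'' obstacle you flagged but did not resolve. If you approximate $e^{-itH_s}$ by a degree-$(2N-1)$ polynomial $P(t)$, the integrand $P(t)VP(t)^\dagger$ has degree $4N-2$, and an $N$-point Gauss rule does not integrate it exactly. A uniform scalar approximation of $e^{-i\omega t}$ over Bohr frequencies $|\omega|\le 2\|H_s\|$ does not help either. It only controls the error as a Schur multiplier in the eigenbasis, and that operator norm is not bounded by the sup norm without dimension factors.

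The paper instead approximates the whole conjugated operator $F_s(t)=e^{-itH_s}Ve^{itH_s}=e^{-it\ad_{H_s}}(V)$ by its Taylor polynomial
\[
\widetilde F_s(t)=\sum_{k=0}^{2N-1}\frac{(-it)^k}{k!}\ad_{H_s}^k(V),\qquad \|\ad_{H_s}^k(V)\|\le (2\|H_s\|)^k\|V\|.
\]
This is a single operator-valued polynomial of degree $2N-1$. Its remainder on $|t|\le T$ is bounded termwise in operator norm by $\|V\|\sum_{k\ge 2N}a^k/k!$, with $a=2\|H_s\|T$.

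Imposing $N\ge ea$ gives $a^{2N}/(2N)!\le 2^{-2N}$ and consecutive-term ratios at most $1/(2e)$. So the remainder is at most $\frac{2e}{2e-1}\|V\|2^{-2N}$. Doubling gives $\frac52\|V\|2^{-2N}\le\varepsilon/2$ once $N\ge\frac12\log_2(5\|V\|/\varepsilon)$.

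The condition $N\ge ea=\frac{2e\beta\|H_s\|}{\pi}\log\frac{4\|V\|}{\varepsilon}$ is exactly the first term of the maximum, with constant $2e$ rather than the $e$ in your heuristic. Getting $e$ would need a Chebyshev/Bessel bound lifted to operator norm through the power series of $J_k(T\ad_{H_s})$. That can be done with extra care, but the stated $N$ does not need it.
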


\begin{proof}
    First we truncate the tails of the integral. Let $q = e^{-\pi t/\beta}$ for some $t > 0$. From the Taylor series expansion
    \begin{equation}
        \log\coth x = 2 \sum_{m=0}^\infty \frac{e^{-2(2m+1)x}}{2m+1}
    \end{equation}
    we have
    \begin{equation}
        f_\beta(t) = \frac{4}{\pi\beta} \sum_{m=0}^\infty \frac{q^{2m+1}}{2m+1}.
    \end{equation}
    Observe that for all $q \leq \frac{1}{2}$,
    \begin{equation}
        f_\beta(t) \leq \frac{4}{\pi\beta} \sum_{m=0}^\infty q^{2m+1} = \frac{4}{\pi\beta} \frac{q}{1 - q^2} \leq \frac{16}{3\pi\beta} e^{-\pi t/\beta}.
    \end{equation}
    Introduce some cutoff $T$ satisfying $T \geq \frac{\beta}{\pi} \log 2$. Then
    \begin{equation}
        \int_{|t| > T} dt \, f_\beta(t) \leq \frac{32}{3\pi^2} e^{-\pi T/\beta} < 2 e^{-\pi T/\beta}.
    \end{equation}
    For notation, define $F_s(t) \coloneqq e^{-itH_s} V e^{itH_s}$ obeying $\|F_s(t)\| = \|V\|$. Hence
    \begin{equation}
        \l\| \int_{|t| > T} dt \, f_\beta(t) F_s(t) \r\| \leq 2 \|V\| e^{-\pi T/\beta}.
    \end{equation}
    Choosing
    \begin{equation}
        T = \frac{\beta}{\pi} \log\l( \frac{4\|V\|}{\varepsilon} \r)
    \end{equation}
    bounds the truncation error by at most $\frac{\varepsilon}{2}$. Note that $T \geq \frac{\beta}{\pi} \log 2$ requires the very mild condition $\varepsilon \leq 2\|V\|$.

    Next, we show how to approximate
    \begin{equation}
        \Phi_s^{(T)} \coloneqq \int_{-T}^T dt \, f_\beta(t) F_s(t)
    \end{equation}
    to error at most $\frac{\varepsilon}{2}$. We accomplish this via a Gaussian quadrature with respect to $f_\beta(t)$. Define the measure
    \begin{equation}
        d\mu_T(t) \coloneqq \chi_{[-T,T]}(t) f_\beta(t) \, dt
    \end{equation}
    where $\chi_A : X \to \{0, 1\}$ is the indicator function on $A \subseteq X$. Note that the singularity at $t = 0$ is integrable, so this measure is legitimate. Because $f_\beta(t)$ integrates over $\R$ to unity, we have
    \begin{equation}
        \int_{-T}^T dt \, f_\beta(t) \leq \int_{-\infty}^\infty dt \, f_\beta(t) = 1.
    \end{equation}
    Now take the $N$-point Gaussian quadrature rule for $\mu_T$~\cite{gautschi2004orthogonal}:~there exist nodes $t_1, \ldots, t_N \in (-T, T)$ and weights $w_1, \ldots, w_N \in (0, 1)$ such that for every polynomial $p \in \C[t]$ of degree at most $2N-1$,
    \begin{equation}
        \int_{-T}^T d\mu_T(t) \, p(t) = \sum_{k=1}^N w_k p(t_k).
    \end{equation}
    An immediate consequence is that for $p = 1$, we have $\sum_{k=1}^N w_k \leq 1$ as desired. These nodes and weights also define $\widetilde{\Phi}_s$ as in \cref{eq:approx_Phi_s}.

    To analyze the quadrature error, we use the adjoint representation for $F_s(t)$. Define
    \begin{equation}
        \ad_X(Y) \coloneqq [X, Y]
    \end{equation}
    which obeys the key identity
    \begin{equation}
        e^{\ad_X}(Y) = e^X Y e^{-X}.
    \end{equation}
    In our context, we take $X = -itH_s$ so that $F_s(t) = e^{-it \ad_{H_s}}(V)$. We therefore approximate $F_s(t)$ by the degree-$(2N-1)$ polynomial
    \begin{equation}
        \widetilde{F}_s(t) \coloneqq \sum_{k=0}^{2N-1} \frac{(-it)^k}{k!} \ad_{H_s}^k(V).
    \end{equation}
    For $|t| \leq T$, the approximation error is
    \begin{equation}
        \|F_s(t) - \widetilde{F}_s(t)\| \leq \|V\| \sum_{k=2N}^\infty \frac{(2\|H_s\|T)^k}{k!},
    \end{equation}
    where we used the estimate
    \begin{equation}
        \|{\ad_{H_s}^k(V)}\| \leq (2\|H_s\|)^k \|V\|.
    \end{equation}
    Condense notation by putting $a \coloneqq 2\|H_s\|T$. Suppose that $N \geq ea$ and use the Stirling bound $m! \geq (m/e)^m$ to estimate the first term of the series:
    \begin{equation}\label{eq:first_term_series_bound}
        \frac{a^{2N}}{(2N)!} \leq \l(\frac{ea}{2N}\r{)^{2N}} \leq 2^{-2N}.
    \end{equation}
    Furthermore, for every $k \geq 2N$, we have
    \begin{equation}\label{eq:ratio_terms_series_bound}
        \frac{a^{k+1}/(k+1)!}{a^k/k!} = \frac{a}{k+1} \leq \frac{a}{2N} \leq \frac{1}{2e}.
    \end{equation}
    Thus
    \begin{equation}
        \sum_{k=2N}^\infty \frac{a^k}{k!} \leq \frac{a^{2N}}{(2N)!} \sum_{j=0}^\infty \l(\frac{1}{2e} \r{)^j} \leq \frac{2e}{2e - 1} 2^{-2N},
    \end{equation}
    implying that
    \begin{equation}
        \|F_s(t) - \widetilde{F}_s(t)\| \leq \l( \frac{2e}{2e - 1} \r) \|V\| 2^{-2N}
    \end{equation}
    for any $t \in [-T, T]$.

    We obtain the final error bound by applying the triangle inequality. Since the Gaussian quadrature guarantees that the discrete sum and continuous integral coincide for degree-$(2N-1)$ polynomials, we have
    \begin{equation}
        \sum_{k=1}^N w_k \widetilde{F}_s(t_k) = \int_{-T}^T d\mu_T(t) \, \widetilde{F}_s(t).
    \end{equation}
    Hence
    \begin{equation}
    \begin{split}
        \l\| \widetilde{\Phi}_s - \Phi_s^{(T)} \r\| &= \l\| \sum_{k=1}^N w_k F_s(t_k) - \int_{-T}^T d\mu_T(t) \, F_s(t) \r\|\\
        &\leq \l\| \sum_{k=1}^N w_k \l[ F_s(t_k) - \widetilde{F}_s(t_k) \r] \r\| + \l\| \int_{-T}^T d\mu_T(t) \l[ \widetilde{F}_s(t) - F_s(t) \r] \r\|\\
        &\leq 2 \sup_{|t| \leq T} \l\| \widetilde{F}_s(t) - F_s(t) \r\|\\
        &\leq \frac{5}{2} \|V\| 2^{-2N}.
    \end{split}
    \end{equation}
    Above, we used $\frac{4e}{2e-1} < \frac{5}{2}$ to simplify the constant. To bound this by at most $\frac{\varepsilon}{2}$, it suffices to take
    \begin{equation}
        N \geq \frac{1}{2} \log_2\l( \frac{5\|V\|}{\varepsilon} \r).
    \end{equation}
    Note however that we also have to guarantee $N \geq ea$ for \cref{eq:first_term_series_bound,eq:ratio_terms_series_bound} to hold. Thus with $a = 2\|H_s\|T$ where $T = \frac{\beta}{\pi} \log\frac{4\|V\|}{\varepsilon}$, the choice of $N$ as in \cref{eq:discrete_Phi_s_num_nodes} suffices to get
    \begin{equation*}
        \l\|\Phi_s - \widetilde{\Phi}_s\r\| \leq \l\| \widetilde{\Phi}_s - \Phi_s^{(T)} \r\| + \l\| \Phi_s^{(T)} - \Phi_s \r\| \leq \varepsilon. \qedhere
    \end{equation*}
\end{proof}

\begin{remark}
    The quadrature nodes and weights $\{(t_k, w_k)\}_{k=1}^N$ for the measure $d\mu_T(t)$ can be computed in $\poly(N)$ arithmetic operations, using the closed form for $f_\beta(t)$. For example, see the classic construction due to Golub and Welsch~\cite{golub1969calculation}.
\end{remark}

We then construct an approximate block encoding of $\Phi_s$ as a linear combination of unitaries (LCU).

\begin{corollary}\label{lem:LCU_Phi_s}
    Assume access to block encodings $B_s(t) = \BE[e^{-itH_s} V e^{itH_s}; \lambda, a, \gamma]$ for some normalization $\lambda \geq \|V\|$ and error $\gamma \geq 0$. Let $0 < \varepsilon \leq 2\|V\|$ and $\beta > 0$. There exists a quantum circuit for
    \[
    \BE[\Phi_s; \lambda, a + \lceil \log_2 N \rceil, \varepsilon + \gamma], \qquad \Phi_s = \int_\R dt \, f_\beta(t) e^{-itH_s} V e^{itH_s},
    \]
    built from $N$ queries to controlled $B_s(t)$ for various $t \in [-T, T]$, where $T = \frac{\beta}{\pi} \log\frac{4\|V\|}{\varepsilon}$ and
    \begin{equation}
        N = \O\l( (1 + \beta H_*) \log\frac{\|V\|}{\varepsilon} \r), 
        \quad H_* \coloneqq \max_{r \in [0, s]} \|H_r\|.
    \end{equation}
\end{corollary}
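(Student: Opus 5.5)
The plan is to combine the discretization of Lemma~\ref{lem:discrete_Phi_s} with a standard linear-combination-of-unitaries construction. Apply Lemma~\ref{lem:discrete_Phi_s} with error parameter $\varepsilon$. This gives nodes $t_1,\ldots,t_N\in(-T,T)$, where $T=\frac{\beta}{\pi}\log\frac{4\|V\|}{\varepsilon}$, and nonnegative weights with $\sum_k w_k\leq 1$, such that $\|\widetilde\Phi_s-\Phi_s\|\leq\varepsilon$. The node count in Eq.~\eqref{eq:discrete_Phi_s_num_nodes} is the maximum of $\frac{2e\beta\|H_s\|}{\pi}\log\frac{4\|V\|}{\varepsilon}$ and $\frac12\log_2\frac{5\|V\|}{\varepsilon}$. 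Both terms are $\O((1+\beta H_*)\log\frac{\|V\|}{\varepsilon})$, using $\|H_s\|\leq H_*$ and $\varepsilon\leq 2\|V\|$ so that the logarithms are bounded below by a constant. This gives the claimed $N$.

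Next, build the LCU. Use a $\lceil\log_2 N\rceil$-qubit index register and a state-preparation unitary $\mathrm{PREP}$. It maps $\ket{0}$ to $\sum_k\sqrt{w_k/w}\,\ket{k}$, where $w\coloneqq\sum_k w_k\leq 1$. If we want normalization exactly $\lambda$ rather than $\lambda w$, we can instead pad with a dummy index carrying weight $1-w$ and a zero block. The select operator $\mathrm{SEL}=\sum_k\op{k}{k}\otimes B_s(t_k)$ uses one controlled query per node, so $N$ queries in total. The circuit $\mathrm{PREP}^\dagger\,\mathrm{SEL}\,\mathrm{PREP}$ then has top-left block equal to $\frac{1}{\lambda w}\sum_k w_k\,(\bra{0^a}\otimes\I)B_s(t_k)(\ket{0^a}\otimes\I)$. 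This is a $(\lambda w, a+\lceil\log_2N\rceil,\cdot)$-block-encoding. Multiplying the normalization up to $\lambda$ only shrinks the encoded operator, and this is harmless: an $(\alpha,\cdot,\cdot)$-encoding with $\alpha\leq\lambda$ can be turned into a $\lambda$-normalized one by adding one flag qubit rotated by $\arccos$-type amplitude $\sqrt{\alpha/\lambda}$. Alternatively, the padding trick avoids the extra qubit, and I would use that to match the stated ancilla count.

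For the error, write $\lambda(\bra{0}\otimes\I)U(\ket{0}\otimes\I)=\sum_k w_k\lambda\,\bra{0^a}B_s(t_k)\ket{0^a}$. Each term differs from $e^{-it_kH_s}Ve^{it_kH_s}$ by at most $\gamma$, so by the triangle inequality the total deviation from $\widetilde\Phi_s$ is at most $\sum_k w_k\gamma\leq\gamma$. Adding $\|\widetilde\Phi_s-\Phi_s\|\leq\varepsilon$ gives total error $\varepsilon+\gamma$.

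The only mildly delicate step is the bookkeeping of normalization and ancilla count, that is, absorbing $w<1$ without extra qubits. Padding the index register's unused basis states, which exist whenever $N$ is not a power of two, or else rounding $N$ up by one node of weight $1-w$ whose select branch is a zero block, handles this. It also keeps $\lceil\log_2N\rceil$ at the same order as stated.
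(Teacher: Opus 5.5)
Your proposal is correct and follows the paper's proof. You apply Lemma~\ref{lem:discrete_Phi_s}, combine the $N$ controlled $B_s(t_k)$ by an LCU with nonnegative weights summing to at most one (normalization $\lambda$, error $\sum_k w_k\gamma\leq\gamma$), and add $\varepsilon$ by the triangle inequality; your padding argument just makes explicit the normalization bookkeeping that the paper leaves to the standard LCU lemma. One small slip: the top-left block of $\mathrm{PREP}^\dagger\,\mathrm{SEL}\,\mathrm{PREP}$ is $\frac{1}{w}\sum_k w_k(\bra{0^a}\otimes\I)B_s(t_k)(\ket{0^a}\otimes\I)$, without the extra factor $1/\lambda$, and it is this expression that gives the $(\lambda w,\cdot,\cdot)$-encoding you conclude.
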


\begin{proof}
    The construction is a linear combination of $N$ block encodings $B_s(t_k)$ with weights $w_k \geq 0$ that sum to at most $1$. This is a $(\lambda, a + \lceil \log_2 N \rceil, \gamma)$-block-encoding of $\widetilde{\Phi}_s$~\cite{gilyen2019quantum}. The final claimed error follows from a triangle inequality, using the fact that $\|\widetilde{\Phi}_s - \Phi_s\| \leq \varepsilon$ with the appropriate choice of $N$. Note that we have chosen the maximum norm $H_*$ over the interval so that all $B_s(t_k)$ can share the same Gaussian quadrature nodes and weights.
\end{proof}

\begin{remark}\label{rem:B(t)-gates}
    The block encoding of $e^{-itH_s} V e^{itH_s}$ is constructed by standard Hamiltonian simulation techniques. For example, given access to $\BE[H; \alpha_H, a_H, 0]$ we can construct $\BE[e^{itH}; 1, a_H + 2, \gamma']$ using $\O(\alpha_H |t| + \log\frac{1}{\gamma'})$ queries~\cite{gilyen2019quantum}. Thus if $V$ is accessed by the block encoding $\BE[V; \lambda, a_V, 0]$, we have $a = a_V + a_H + 2$ and $\gamma = 2 \lambda \gamma'$ in \cref{lem:LCU_Phi_s}.
\end{remark}

Equipped with efficient block encodings of $\Phi_s$, we can use the linear combination of Hamiltonian simulation (LCHS) method~\cite{an2023linear} to block encode \cref{eq:QBP_op}. We specialize to the case when the generator is purely Hermitian and cite the optimal query complexity due to Low and Somma~\cite{low2025optimal}.

\begin{proposition}[{\cite[Theorem 4, Hermitian generator]{low2025optimal}}]\label{prop:optimal_lchs}
    Fix $s > 0$ and let $A(r)$ be a PSD matrix for all $r \in [0, s]$. Define
    \begin{equation}
        U_s \coloneqq \TO\exp\l( -\int_0^s A(r) \, dr \r).
    \end{equation}
    Suppose we have access to quantum circuits for $\BE[A(r); \lambda, q, 0]$ for any $r$. For any $\varepsilon \in (0, 4/5]$, there exists a block encoding $\BE[U_s; \alpha, a, \varepsilon]$ with
    \begin{equation}
        \alpha = \Theta(1) \quad \text{and} \quad a = q + \O( \log(\|A\|_{L^1}) + \log \log \varepsilon^{-1} ),
    \end{equation}
    where $\|A\|_{L^1} \coloneqq \int_0^s dr \, \|A(r)\|$. This block encoding is constructed from $Q$ queries to $\BE[A(r); \lambda, q, 0]$ and $\O(G + Q \log(\|A\|_{L^1}) \polylog \varepsilon^{-1})$ quantum gates, where $Q$ and $G$ are the query and gate complexities, respectively, of simulating $\TO\exp\l(-i R \int_0^s A(r) \, dr\r)$ (to error $\Theta(\varepsilon)$) for scalars $R \in \R$ bounded as $|R| = \O(\log \varepsilon^{-1})$.
\end{proposition}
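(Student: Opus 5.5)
This proposition is quoted from~\cite{low2025optimal}, so we would invoke it rather than reprove it. If we had to reconstruct it, we would follow the linear combination of Hamiltonian simulation (LCHS) template of~\cite{an2023linear} with the improved kernel of Low and Somma.

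\textbf{Step 1: the integral identity.} The starting point is an identity expressing a non-unitary propagator as an average of unitary time-ordered evolutions:
\begin{equation}
    U_s=\int_\R dk\,\frac{\eta(k)}{1-ik}\,\TO\exp\l(-ik\int_0^s A(r)\,dr\r).
\end{equation}
This identity is valid for any PSD, possibly non-commuting, family $A(r)$. For a scalar $x\geq 0$ it reduces to a Fourier representation of $e^{-x}$ along the real axis. The simplest choice is the Cauchy kernel $\int dk\,e^{-ikx}/(\pi(1+k^2))=e^{-x}$. For operators we would prove the identity the An--Liu--Lin way. Both sides satisfy the same differential equation in $s$, namely $\partial_s U_s=-A(s)U_s$ with $U_0=\I$. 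To check the right-hand side, differentiate under the integral and use analyticity of the kernel in the lower half-plane to close the contour. Positivity $A(r)\succeq 0$ is needed exactly here, to justify this contour argument.

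\textbf{Step 2: truncation and discretization.} The Cauchy kernel decays only polynomially, so truncation would cost $|R|=\poly(1/\varepsilon)$. Following Low and Somma, we would replace it by a kernel of the form $\eta(k)\propto e^{-(1+ik)^{\alpha}}$ with $\alpha<1$, or an optimally chosen variant. Such a kernel is still analytic in the right region, decays nearly exponentially on $\R$, and has $L^1$ norm $\Theta(1)$. We would then truncate to $|k|\leq R=\O(\log\varepsilon^{-1})$, up to $\log\log$ factors. Next we would discretize with a Gaussian-type quadrature whose number of nodes is $M=\O(\log(\|A\|_{L^1})\polylog\varepsilon^{-1})$. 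The integrand is entire in $k$ with growth controlled by $|k|\,\|A\|_{L^1}$, so the quadrature error is exponentially small in $M$. The quadrature weights have total absolute value $\Theta(1)$, and this gives the normalization $\alpha=\Theta(1)$. Preparing the weights requires $\lceil\log_2 M\rceil$ ancillas, which accounts for the term $\O(\log\|A\|_{L^1}+\log\log\varepsilon^{-1})$ in $a$.

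\textbf{Step 3: LCU assembly and the main obstacle.} Each term is a time-dependent Hamiltonian simulation $\TO\exp(-ik\int_0^s A(r)\,dr)$ with $|k|\leq R$. We would implement it to error $\Theta(\varepsilon)$ from queries to $\BE[A(r);\lambda,q,0]$ using a query-optimal time-dependent simulation method. A naive LCU would control each of the $M$ simulations separately, costing roughly $MQ$ queries. Low and Somma instead obtain $Q$ queries plus $\O(Q\log\|A\|_{L^1}\polylog\varepsilon^{-1})$ gates. To get this, we would use a single simulation of the largest time, controlled on the $k$-register: run the evolution at rate $k$ by making each block-encoding query controlled-select among rescalings, or by writing $kA(r)$ as a $k$-controlled Hamiltonian. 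The per-query overhead then appears only in gates. The main obstacle is jointly designing the kernel, the truncation, and the controlled simulation so that all three of the following hold at once: the normalization stays constant, the truncation radius is only logarithmic, and the $k$-superposition costs only gates. We would first try the explicit Low--Somma kernel and verify its decay and analyticity estimates directly.
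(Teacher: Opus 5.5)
Your plan is the paper's: the proposition is stated purely as a citation of \cite{low2025optimal}, specialized to a PSD Hermitian generator, with no proof given, so invoking it is exactly right. If you keep the optional reconstruction, two quantitative claims in it need fixing: first, a Gaussian-type quadrature over $|k|\leq R$ only starts converging exponentially once $M\gtrsim R\|A\|_{L^1}$, since the integrand oscillates in $k$ at frequencies up to $\|A\|_{L^1}$, so $M$ is polynomial rather than logarithmic in $\|A\|_{L^1}$ (which is consistent with the stated $\O(\log\|A\|_{L^1}+\log\log\varepsilon^{-1})$ ancilla count); second, a fixed-$\alpha$ kernel $e^{-(1+ik)^{\alpha}}$ decays only like $e^{-c|k|^{\alpha}}$, which forces $R\approx(\log\varepsilon^{-1})^{1/\alpha}$, a power-of-log rather than $\log\log$ loss, so reaching $|R|=\O(\log\varepsilon^{-1})$ needs a different, $\varepsilon$-dependent kernel.
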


In our case, $A(r) = \frac{\beta}{2} \Phi_r$ whose $L^1$-norm on $[0, s]$ can be easily bounded as $\|A\|_{L^1} \leq \frac{\beta s}{2} \|V\|$. Thus the LCHS framework is highly efficient, with only the cost of time-dependent Hamiltonian simulation~\cite{low2018hamiltonian} dominating the entire algorithm. The complexity of this subroutine was recently made optimal by Chen, Gao, Wang, and Zhou using a transduced LCU approach~\cite{chen2026time}. First, we need to define the oracle $\HAMT$, which is at the heart of time-dependent Hamiltonian simulation algorithms.

\begin{definition}
    Let $T > 0$ and $J$ be a positive integer. For a time-dependent Hamiltonian $A(t)$ defined over $t \in [0, T]$, the $\HAMT$ oracle is defined as
    \begin{equation}\label{eq:HAMT_defn}
        \HAMT \coloneqq \sum_{j=0}^{J-1} \op{j}{j} \otimes \BE[A(t_j); \alpha, a, 0]
    \end{equation}
    where $t_j \coloneqq jT/J$, and each block encoding of $A(t_j)$ is promised to be Hermitian and unitary and to have parameters $\alpha, a$ uniformly over all times.
\end{definition}

\begin{proposition}[{\cite[Theorem 9]{chen2026time}}]\label{prop:dyson_series_complexity}
    Let $A(t)$ be a time-dependent Hamiltonian and $T > 0$ an evolution time. Suppose there exist parameters $\alpha, L > 0$ such that
    \begin{equation}
        \|A(t)\| \leq \alpha \quad \text{and} \quad \|A(t) - A(s)\| \leq L |t - s| \quad \text{for all } s, t \in [0, T].
    \end{equation}
    For any $\varepsilon \in (0, \frac{1}{2}]$, there exists a circuit implementing
    \[
        \BE\l[ \TO\exp\l( -i \int_0^T A(t) \, dt \r); 1, \O(a + \log J), \varepsilon \r]
    \]
    which uses
    \begin{equation}
        Q = \O\l( \alpha T + \frac{\log(1/\varepsilon)}{\log\l(e + \frac{1}{\alpha T} \log(1/\varepsilon) \r)} \r)
    \end{equation}
    queries to $\HAMT$. The choice of $J$ is the smallest power of two satisfying
    \begin{equation}\label{eq:TD_J_requirement}
        J \geq \max\l\{ Q, \frac{L T^2}{\varepsilon}, \sqrt{\frac{(\alpha T)^3}{3\varepsilon}} \r\}.
    \end{equation}
\end{proposition}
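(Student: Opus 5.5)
The plan is to follow the truncated-Dyson-series template of Low and Wiebe~\cite{low2018hamiltonian}, replacing the segment-by-segment multiplicative overhead with a single, globally optimized linear combination. The goal is to recover the additive query complexity of~\cite{chen2026time}. There are three steps: discretization, a Dyson series over the discretized evolution, and an implementation of that series using $Q$ queries to $\HAMT$.

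\emph{Step 1 (discretization).} Replace $A(t)$ by the piecewise-constant $A_J(t)\coloneqq A(t_j)$ for $t\in[t_j,t_{j+1})$. The Lipschitz bound gives $\|A(t)-A_J(t)\|\le LT/J$. By Duhamel's formula, the two time-ordered exponentials then differ by at most $\int_0^T\|A-A_J\|\,dt\le LT^2/J$, and this is at most $\varepsilon/3$ when $J\ge LT^2/\varepsilon$ (up to the constant). The integrals in the Dyson series over the simplex must also be replaced by Riemann sums over the grid $\{t_j\}$. For the order-$k$ term, the error of such a sum is governed by the mismatch between the continuous simplex and its discrete ordered counterpart, which has volume of order $T^k/(J\,(k-1)!)$. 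Summed against $\alpha^k$, this mismatch is controlled at the relevant orders by the condition $J\ge\sqrt{(\alpha T)^3/(3\varepsilon)}$. I would verify the constant by bounding the second-order term explicitly, since it dominates. Finally, $J\ge Q$ ensures that each query can address a distinct time index.

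\emph{Step 2 (series and truncation).} Write $U=\sum_{k\ge0}(-i)^k\sum_{j_1>\cdots>j_k}(T/J)^k A(t_{j_1})\cdots A(t_{j_k})$. The norm of the order-$k$ term is at most $(\alpha T)^k/k!$. Truncate at order $K$ with $(\alpha T)^K/K!\lesssim\varepsilon$. The standard estimate shows that $K=O\bigl(\alpha T+\log(1/\varepsilon)/\log(e+\tfrac{1}{\alpha T}\log(1/\varepsilon))\bigr)$ suffices; this matches the stated $Q$ exactly. So the whole difficulty is to implement the truncated series using $O(K)$ queries, rather than the $O(\alpha T\cdot K_{\mathrm{seg}})$ queries of naive segmentation.

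\emph{Step 3 (implementation; the main obstacle).} A direct LCU of the order-$\le K$ terms has $\ell_1$-weight of order $e^{\alpha T}$, so it cannot be used directly. My first attempt would be the following. Split $[0,T]$ into $r=\Theta(\alpha T)$ segments of unit $\alpha$-time. On each segment, prepare the ordered-time register (a superposition of sorted $k$-tuples of indices in $[J]$, built with a quantum sorting network) and apply $\HAMT$ $k$ times controlled on the order register. Then use robust oblivious amplitude amplification to make each segment unitary.

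The key point, which is where the transduction idea of~\cite{chen2026time} enters, is to avoid paying the truncation order separately in every segment. Instead of fixing a per-segment order $\log(r/\varepsilon)/\log\log$, let the per-segment orders $k_1,\dots,k_r$ fluctuate, and constrain only their total $\sum_i k_i\le K$. The resulting joint state over all segments is a product of truncated Poisson-like amplitudes, restricted to total order at most $K$. It can be prepared efficiently, and the total number of $\HAMT$ queries is then $\sum_i k_i\le K$, deterministically. The discarded tail is the probability that a sum of $r$ independent Poisson$(1)$-type variables exceeds $K$, and this is exactly the single-series tail bound from Step 2. The ancilla count is $O(a+\log J)$ per query slot, after reusing registers.

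I expect the delicate part to be making the amplitude-amplification step compatible with a variable-length, globally constrained product. The amplification must act segment-wise while the order budget is shared across segments. I would try to handle this by doing the amplification once, on the global block-encoding. Its success amplitude is bounded below by a constant, because each segment has normalization $e^{\Theta(1)}$ and the product is rescaled by choosing segment length $\Theta(1/\alpha)$ with a small constant.
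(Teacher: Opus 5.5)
The paper gives no proof of this proposition; it is imported from \cite{chen2026time}. So your outline has to stand on its own, and it has a genuine gap at exactly the step that separates the statement from the standard truncated-Dyson-series bound.

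The one concrete mechanism you offer for the additive query count is to amplify once on the global block encoding, and it cannot work. LCU normalizations multiply across segments. With $r$ segments of length $\tau$, the global normalization is $\prod_{i=1}^r e^{\alpha\tau}=e^{\alpha T}$, however $\tau$ is chosen. Imposing the shared budget $\sum_i k_i\le K$ with $K\ge\alpha T$ still leaves it at $\sum_{k\le K}(\alpha T)^k/k!=\Theta(e^{\alpha T})$. Since the target is unitary, the success amplitude is $\Theta(e^{-\alpha T})$, and amplitude amplification costs $e^{\Theta(\alpha T)}$ rounds.

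Segment-wise oblivious amplitude amplification (OAA) fails for a different reason. It needs each segment to be a block encoding of (nearly) its own unitary, with an ancilla prepared from and reflected about its own $\ket{0}$. An order register constrained by a shared budget is not a product across segments. Moreover, OAA returns each segment's order register to $\ket{0}$, erasing $k_i$, so no running budget can be kept. If you fix per-segment query slots instead, you recover only the multiplicative $\mathcal{O}\bigl(\alpha T\log(\alpha T/\varepsilon)/\log\log(\alpha T/\varepsilon)\bigr)$ bound. Getting past this obstacle is the content of the cited ``transduced LCU'' result; your outline names it but supplies no idea that overcomes it.

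Two further steps fail as written.

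First, the discretization. Your strictly ordered series $\sum_{j_1>\cdots>j_k}(T/J)^kA(t_{j_1})\cdots A(t_{j_k})$ equals $\prod_j(\I-i\Delta A(t_j))$ with $\Delta=T/J$. Its distance from $\prod_j e^{-i\Delta A(t_j)}$ is first order in $1/J$. Take constant $A=\alpha Z$, where $L$ may be chosen arbitrarily small. Each eigenvalue of the product has modulus $(1+\alpha^2\Delta^2)^{J/2}\approx e^{(\alpha T)^2/(2J)}$, so the error is about $(\alpha T)^2/(2J)$. At the prescribed $J=\Theta\bigl(\sqrt{(\alpha T)^3/\varepsilon}\bigr)$ this is $\Theta(\sqrt{\alpha T\,\varepsilon})\gg\varepsilon$. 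The square-root condition in the statement is exactly what a third-order per-step error gives: $J\cdot(\alpha T/J)^3/3\le\varepsilon$. So the discretization must be accurate to at least second order in $T/J$ per step. One way is to keep coincident indices with weight $1/\prod_j m_j!$ ($m_j$ the multiplicity of index $j$) rather than discard them. Your own simplex-mismatch estimate scales as $1/J$, so it could never be controlled by that condition.

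Second, the ancilla count. An LCU over sorted $k$-tuples of indices in $[J]$ must keep the whole tuple register, $\Theta(K\log J)$ qubits, coherent until it is unprepared. So ``$\mathcal{O}(a+\log J)$ per query slot'' gives $\mathcal{O}(a+K\log J)$ ancillas in total, not the $\mathcal{O}(a+\log J)$ claimed for the block encoding.
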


\begin{remark}\label{rem:TDHS_gates}
    The number of additional one- and two-qubit gates (alongside the queries to $\HAMT$) can be made to be $\Ot(Qa)$ using a refined version of the algorithm~\cite{chen2026gate}.
\end{remark}

We therefore need to show how to construct $\HAMT$ for our QBP generator $A(r) \to \frac{R\beta}{2} \Phi_r$. Recall that technically we only have an approximation of $\Phi_r$;~therefore we will first show how to implement the exponential of the approximation with controlled error. The error to the exponential of the exact $\Phi_r$ can then be handled straightforwardly. Below, we use $s$ instead of $T$ for the evolution time, because our ultimate object to implement is $\eta_s = \TO\exp\l(-\frac{\beta}{2} \int_0^s dr \, \Phi_r \r)$.

\begin{lemma}\label{lem:TD_Phi}
    Let $\theta > 0$, $s \in (0, 1]$, and $J$ be an integer power of two. Let $\varepsilon, \gamma, \delta > 0$ be error parameters such that
    \begin{equation}
        \varepsilon + \gamma \leq \frac{\delta}{4\theta s}.
    \end{equation}
    Let $W(r) = \BE[\Phi_r; \lambda, a + \lceil \log_2 N \rceil, \varepsilon + \gamma]$ be the block encoding from \cref{lem:LCU_Phi_s} with inverse temperature $\beta > 0$. There exists a circuit for
    \[
    \BE\l[ \TO\exp\l( -i \theta \int_0^s dr \, \Phi_r \r); 1, \O(a + \log(NJ)), 2\delta \r],
    \]
    constructed using $\O(QJ)$ queries to controlled $W(r)$ and $W(r)^\dagger$ each, provided that
    \begin{equation}
        Q = \O\l(|\theta| \lambda s + \log\frac{1}{\delta}\r) \quad \text{and} \quad J = \O\l( \frac{|\theta|\beta(\|V\| s)^2}{\delta} + \frac{(|\theta|\lambda s)^{3/2}}{\sqrt{\delta}} \r).
    \end{equation}
\end{lemma}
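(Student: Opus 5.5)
The plan is to apply Proposition~\ref{prop:dyson_series_complexity} to the time-dependent generator $A(r)\coloneqq\theta\widetilde\Phi_r$ on $r\in[0,s]$, where $\widetilde\Phi_r$ is the operator actually block-encoded by $W(r)$. The resulting circuit is then compared with the exact evolution generated by $\theta\Phi_r$. The work splits into three steps: building the $\HAMT$ oracle, checking the Lipschitz hypothesis, and accounting for errors.

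\textbf{Building $\HAMT$.} First, Proposition~\ref{prop:dyson_series_complexity} requires each block encoding in \cref{eq:HAMT_defn} to be Hermitian and unitary. $W(r)$ is a general block encoding of the Hermitian operator $\widetilde\Phi_r$, so I would apply the standard Hermitization trick. Take one extra ancilla and form the controlled product of $W(r)$ and $W(r)^\dagger$ with a reflection, as in~\cite{gilyen2019quantum}. This yields a Hermitian unitary that block-encodes $\widetilde\Phi_r$ with normalization $\lambda$, using $\O(1)$ queries to controlled $W(r)$ and $W(r)^\dagger$. Then build $\HAMT$ as a multiplexed sum $\sum_j\op{j}{j}\otimes(\cdot)$ over the $J$ grid points $r_j=js/J$. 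Implemented naively, each use of $\HAMT$ costs $\O(J)$ controlled queries, one per grid point. Combined with the $Q$ queries to $\HAMT$, this gives the claimed $\O(QJ)$ total. The ancilla count is $a+\lceil\log_2N\rceil+\O(1)$ plus $\log J$, which is $\O(a+\log(NJ))$. Here the normalization is $\alpha=|\theta|\lambda$, so $\alpha s=|\theta|\lambda s$, giving $Q=\O(|\theta|\lambda s+\log\delta^{-1})$.

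\textbf{Lipschitz constant.} Next I bound the Lipschitz constant of $r\mapsto\theta\Phi_r$. Since $H_r=H_{\mathrm{ref}}+rV$, differentiating gives
\[
\partial_r\bigl(e^{-itH_r}Ve^{itH_r}\bigr)=\int_0^t e^{-iuH_r}\,[-iV,\,e^{-i(t-u)H_r}Ve^{i(t-u)H_r}]\,e^{iuH_r}\,du
\]
(Duhamel), whose norm is at most $2|t|\|V\|^2$. Integrating against $f_\beta$ gives $\|\partial_r\Phi_r\|\le 2\|V\|^2\int|t|f_\beta(t)\,dt=\O(\beta\|V\|^2)$. This uses the fact that $f_\beta$ has first moment $\O(\beta)$, which follows from its exponential tail shown in the proof of Lemma~\ref{lem:discrete_Phi_s}. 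Hence $L=\O(|\theta|\beta\|V\|^2)$. Plugging $L$ and $\alpha$ into \cref{eq:TD_J_requirement} with target error $\delta$ gives
\[
J=\O\Bigl(\tfrac{|\theta|\beta(\|V\|s)^2}{\delta}+\tfrac{(|\theta|\lambda s)^{3/2}}{\sqrt\delta}\Bigr),
\]
and $Q\le J$ is absorbed.

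\textbf{Main obstacle.} The subtlety is that the discretized operator $\widetilde\Phi_r$, and hence the encoded $A(r)$, need not be Lipschitz with the same constant. The quadrature nodes are fixed across $r$ (Corollary~\ref{lem:LCU_Phi_s} uses the uniform $H_*$). So I would bound $\partial_r\widetilde\Phi_r$ directly by the same Duhamel argument, using $\sum_k w_k|t_k|\le\int|t|\,d\mu_T$. The last inequality should follow from exactness of Gaussian quadrature on $t^2$ combined with Cauchy--Schwarz, which is the step I would check most carefully.

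\textbf{Error accounting.} Finally, I compare $\TO\exp(-i\theta\int_0^s\widetilde\Phi^{\mathrm{enc}}_r\,dr)$ with $\TO\exp(-i\theta\int_0^s\Phi_r\,dr)$. For unitary time-ordered exponentials, Duhamel gives a difference of at most
\[
|\theta|\,s\,\sup_r\bigl\|\widetilde\Phi^{\mathrm{enc}}_r-\Phi_r\bigr\|\le|\theta|s(\varepsilon+\gamma)\le\delta/4.
\]
Adding the $\delta$ error from Proposition~\ref{prop:dyson_series_complexity} yields a total error of at most $2\delta$.
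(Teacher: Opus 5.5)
Your architecture matches the paper's, but the step you flagged as the main obstacle does not go through. The shared parts are: Hermitize $W(r)$, multiplex the $J$ grid values into $\HAMT$ at $\O(J)$ queries per use, read off $Q$ from $\alpha s=|\theta|\lambda s$, bound the ideal Lipschitz constant $L_{\mathrm{ideal}}=\O(\beta\|V\|^2)$ of $\Phi_r$ by Duhamel, and close with the stability bound.

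The generator you feed to Proposition~\ref{prop:dyson_series_complexity} is the operator actually encoded by the Hermitized $W(r)$. That operator is $\widetilde\Phi'_r=(B_r+B_r^\dagger)/2$, where $B_r$ is whatever $W(r)$ block-encodes (and $B_r$ need not be Hermitian). It is \emph{not} the quadrature sum $\sum_k w_k e^{-it_kH_r}Ve^{it_kH_r}$: it also carries the $\gamma$-error of the Hamiltonian-simulation block encodings $B_r(t)$ (Remark~\ref{rem:B(t)-gates}). The lemma's hypothesis gives only the pointwise bound $\|\widetilde\Phi'_r-\Phi_r\|\le\varepsilon+\gamma$, with no control over how this error varies in $r$. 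So you cannot differentiate it, and your Duhamel argument certifies the Lipschitz constant of an idealized operator the circuit never implements. A symptom is that you use the hypothesis $\varepsilon+\gamma\le\delta/(4\theta s)$ only in the final error budget.

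Separately, $\sum_k w_k|t_k|\le\int|t|\,d\mu_T$ does not follow from exactness on $t^2$ plus Cauchy--Schwarz, and it is false in general. Write $m_j=\int t^j\,d\mu_T$. For $N=2$ the symmetric $\mu_T$ has nodes $\pm\sqrt{m_2/m_0}$ with weights $m_0/2$, so $\sum_k w_k|t_k|=\sqrt{m_0m_2}\ge\int|t|\,d\mu_T$. What Cauchy--Schwarz actually gives is $\sum_k w_k|t_k|\le m_2^{1/2}=\O(\beta)$. That would suffice for the idealized sum, but it does not address the real issue above.

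The missing idea is that $\HAMT$ only ever exposes the grid values $A(r_j)$. You may therefore run Proposition~\ref{prop:dyson_series_complexity} with \emph{any} continuous generator that agrees with the encoded operators at the grid points. The paper takes the piecewise-linear interpolant $\overline\Phi_r$ with $\overline\Phi_{r_j}=\widetilde\Phi'_{r_j}$, held constant on $[r_{J-1},s]$. This interpolant has norm at most $\lambda$. Since $\|\widetilde\Phi'_{r_{j+1}}-\widetilde\Phi'_{r_j}\|\le L_{\mathrm{ideal}}s/J+2(\varepsilon+\gamma)$, it is Lipschitz with constant $L_{\mathrm{ideal}}+2J(\varepsilon+\gamma)/s$.

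For $A(r)=\theta\overline\Phi_r$, the hypothesis $\varepsilon+\gamma\le\delta/(4\theta s)$ then gives $Ls^2/\delta\le\theta L_{\mathrm{ideal}}s^2/\delta+J/2$. Hence \cref{eq:TD_J_requirement} holds once $J\ge\max\{Q,\,2\theta L_{\mathrm{ideal}}s^2/\delta,\,\sqrt{(\theta\lambda s)^3/(3\delta)}\}$; this is where the hypothesis is genuinely needed.

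The error budget then picks up the interpolation error between grid points: $\theta s\sup_r\|\overline\Phi_r-\Phi_r\|\le\theta s(\varepsilon+\gamma)+\theta L_{\mathrm{ideal}}s^2/J\le\delta/4+\delta/2$. Adding the $\delta$ from the proposition keeps the total below $2\delta$.
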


\begin{proof}
    For the moment, let us regard $W(r)$ as an exact block encoding of some $B_r$ and define $\widetilde{\Phi}_r' \coloneqq \frac{B_r + B_r^\dagger}{2}$, which is $(\varepsilon + \gamma)$-close to $\Phi_r$ in operator norm. The $\HAMT$ oracle requires Hermitian unitaries, so dilate
    \begin{equation}
        W_h(r) \coloneqq \begin{pmatrix}
            0 & W(r)\\
            W(r)^\dagger & 0
        \end{pmatrix},
    \end{equation}
    which is an exact block encoding of $\widetilde{\Phi}'_r$ in the $\ket{+}$ basis of an ancilla qubit and costs one query to controlled $W(r)$ and $W(r)^\dagger$ each. Then, the circuit for $\HAMT$ is simply the product of $J$ different $W_h(r_j)$, controlled on the state $\ket{j}$ in a register of $\log_2 J$ qubits, where $r_j = js/J$ for $j = 0, 1, \ldots, J-1$. We will apply \cref{prop:dyson_series_complexity} using $\HAMT$ and bound the total error to the ideal time-ordered exponential.

    In the notation of \cref{eq:HAMT_defn}, we have $A(r_j) = \theta\widetilde{\Phi}_{r_j}'$ so that each $W_h(r_j) = \BE[\theta\widetilde{\Phi}_{r_j}'; \theta\lambda, a + \lceil \log_2 N \rceil + 1, 0]$. Invoke \cref{prop:dyson_series_complexity} with $\HAMT$ to get a block encoding
    \begin{equation}
    S \coloneqq \BE\l[ \TO\exp\l( -i \int_0^s A(r) \, dr \r); 1, \O(a + \log(NJ) ), \delta \r]
    \end{equation}
    for some $\delta > 0$. The block encodings $W_h(r)$ have $\alpha \equiv \theta\lambda$ for all $r$, implying that $S$ costs
    \begin{equation}
        Q = \O\l(\theta\lambda s + \log\frac{1}{\delta}\r)
    \end{equation}
    queries to $\HAMT$. To estimate the Lipschitz constant in $\|A(r) - A(r')\| \leq L|r - r'|$, we first use Duhamel's formula for the ideal $\Phi_r$ (see~\cite{haah2021quantum} for example):
    \begin{equation}
        \frac{d}{dr} \l( e^{-itH_r} V e^{itH_r} \r) = -i \int_0^t dt' \, e^{-i(t-t')H_r} [V, e^{-it'H_r} V e^{it'H_r}] e^{i(t-t')H_r},
    \end{equation}
    whose norm is at most $2|t| \|V\|^2$. Thus if we integrate in $r$,
    \begin{equation}
        \|e^{-itH_r} V e^{itH_r} - e^{-itH_{r'}} V e^{itH_{r'}}\| \leq 2|t| \|V\|^2 |r - r'|.
    \end{equation}
    Hence
    \begin{equation}
        \|\Phi_r - \Phi_{r'}\| \leq 2\|V\|^2 |r-r'| \int_\R dt \, |t| f_\beta(t).
    \end{equation}
    This remaining integral in $t$ can be elegantly evaluated using the series representation for $\log\coth x$;~however, it suffices to use a computer algebra system to get the closed-form expression
    \begin{equation}
        \int_\R dt \, |t| f_\beta(t) = \frac{4}{\pi\beta} \int_0^\infty dt \, t \log\coth\l(\frac{\pi t}{2\beta}\r) = \frac{7\zeta(3)}{\pi^3} \beta,
    \end{equation}
    where $\zeta(3) \approx 1.202$ is Ap\'{e}ry's constant. This quantity $L_{\mathrm{ideal}} = \frac{14 \, \zeta(3)}{\pi^3} \beta \|V\|^2$ is the ideal Lipschitz constant;~we will need a robust version for the approximation $\widetilde{\Phi}_r'$.

    To do so, we define a piecewise-linear family $\overline{\Phi}_r$ such that $\overline{\Phi}_{r_j} = \widetilde{\Phi}_{r_j}'$ for all $j$, and otherwise linearly interpolates in between the quadrature nodes. (On the last interval $[r_{J-1}, s]$ we assume $\overline{\Phi}_r$ remains constant.) Since $\|\Phi_r - \Phi_{r'}\| \leq L_{\mathrm{ideal}} |r - r'|$, on the nodes we have
    \begin{equation}
        \l\| \widetilde{\Phi}_{r_{j+1}}' - \widetilde{\Phi}_{r_j}' \r\| \leq L_{\mathrm{ideal}} \frac{s}{J} + 2(\varepsilon + \gamma).
    \end{equation}
    Hence by interpolation,
    \begin{equation}
        \l\| \overline{\Phi}_r - \overline{\Phi}_{r'} \r\| \leq \l( L_{\mathrm{ideal}} + \frac{2J(\varepsilon + \gamma)}{s} \r) |r - r'|
    \end{equation}
    and so we can choose the Lipschitz constant for $A(r) = \theta \overline{\Phi}_r$ to be $L = \theta \l( L_{\mathrm{ideal}} + \frac{2J(\varepsilon + \gamma)}{s} \r)$. Note that we need to assume that the error parameters of $W(r)$ obey
    \begin{equation}
        \varepsilon + \gamma \leq \frac{\delta}{4\theta s},
    \end{equation}
    so that the condition $ \frac{Ls^2}{\delta} \leq \frac{\theta L_{\mathrm{ideal}} s^2}{\delta} + \frac{J}{2}$ holds, as required by \cref{eq:TD_J_requirement}. Thus $L = \O(\theta(\beta\|V\|^2 + J\delta/s^2))$, whence we may choose some $J$ that satisfies
    \begin{equation}
        J = \O\l( Q + \frac{\theta\beta(\|V\| s)^2}{\delta} + \frac{(\theta\lambda s)^{3/2}}{\sqrt{\delta}} \r).
    \end{equation}
    
    Now let us bound the error from the ideal unitary $S_{\mathrm{ideal}} \coloneqq \TO\exp\l( -i \theta \int_0^s dr \, \Phi_r \r)$. Let $\Pi$ be the projector onto the block in which $S$ encodes. Then
    \begin{equation}
    \begin{split}
        \|\Pi S \Pi - S_{\mathrm{ideal}}\| &\leq \l\| \TO\exp\l( -i \int_0^s A(r) \, dr \r) - \TO\exp\l( -i \theta \int_0^s dr \, \Phi_r \r) \r\| + \delta\\
        &\leq \theta s \max_{r \in [0, s]} \l\| \overline{\Phi}_r - \Phi_r \r\| + \delta\\
        &\leq \theta s (\varepsilon + \gamma) + \frac{\theta L_{\mathrm{ideal}} s^2}{J} + \delta\\
        &\leq \frac{\delta}{4} + \frac{\delta}{2} + \delta < 2\delta,
    \end{split}
    \end{equation}
    where the second inequality is a standard stability bound for time-dependent Hamiltonians (see~\cite{haah2021quantum} again). This implies that $S$ is in fact a block encoding of $S_{\mathrm{ideal}}$ with error at most $2\delta$.
\end{proof}

Finally, we can plug this block encoding of time-dependent Hamiltonian simulation into LCHS to get the complexity of block encoding the QBP operator $\eta_s$.

\begin{theorem}\label{thm:eta_s_cost}
    Fix $s \in (0, 1]$ and assume access to $B_r(t) = \BE[e^{-itH_r} V e^{itH_r}; \lambda, a, \gamma]$ for every $r \in [0, s]$ and $t$ as in \cref{lem:LCU_Phi_s} (e.g., $V \succeq 0$). Let $H_* = \max_{r \in [0, s]} \|H_r\|$, $\beta > 0$, and $\varepsilon \in (0, \O(1)]$. There exists a block encoding
    \[
        \BE\l[ \eta_s \equiv \TO\exp\l( -\frac{\beta}{2} \int_0^s dr \, \Phi_r \r); \Theta(1), \O\l( a + \log((1 + \beta H_*) \beta\lambda) + \log\log\frac{\lambda}{\varepsilon} \r), \varepsilon \r],
    \]
    constructed from
    \[
    \begin{aligned}
        \Ot\l( (1 + \beta H_*) \frac{\beta^3 \lambda^3}{\varepsilon} \r) \quad &\text{queries to } B_r(t), \text{ and}\\
        \Ot\l( \beta\lambda a + (1 + \beta H_*) \frac{\beta^3 \lambda^3}{\varepsilon} \r) \quad &\text{additional quantum gates},
    \end{aligned}
    \]
    provided that $\gamma \leq \frac{c \, \varepsilon}{\beta s \log \varepsilon^{-1}}$ for some sufficiently small constant $c > 0$.
\end{theorem}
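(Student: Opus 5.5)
The plan is to compose the three ingredients already established: Corollary~\ref{lem:LCU_Phi_s} supplies block encodings of $\Phi_r$, Lemma~\ref{lem:TD_Phi} turns these into block encodings of the time-ordered real-time evolutions $\TO\exp(-i\theta\int_0^s\Phi_r\,dr)$, and Proposition~\ref{prop:optimal_lchs} turns those into a block encoding of the imaginary-time operator $\eta_s$. In Proposition~\ref{prop:optimal_lchs} we take $A(r)=\frac{\beta}{2}\Phi_r$. Since $V\succeq 0$ and $f_\beta$ is a probability density, each $\Phi_r$ is an average of unitary conjugates of $V$, so it is PSD. We also have $\|A\|_{L^1}\le \beta s\|V\|/2\le\beta\lambda/2$, which gives the claimed ancilla count $a+\O(\log((1+\beta H_*)\beta\lambda)+\log\log(\lambda/\varepsilon))$. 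Here the $\log N$ and $\log J$ registers come from the earlier lemmas, and $\log\log$ comes from the LCHS overhead.

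The LCHS construction needs real-time simulations $\TO\exp(-iR\int_0^s A(r)\,dr)$ for $|R|=\O(\log\varepsilon^{-1})$, each to error $\Theta(\varepsilon)$. We obtain these from Lemma~\ref{lem:TD_Phi} with $\theta=R\beta/2$, so $|\theta|=\O(\beta\log\varepsilon^{-1})$, and with $\delta=\Theta(\varepsilon)$. The hypothesis of that lemma requires the error of $W(r)$ to satisfy $\varepsilon'+\gamma\le\delta/(4\theta s)$. We therefore choose the quadrature accuracy $\varepsilon'$ in Corollary~\ref{lem:LCU_Phi_s} to be of order $\varepsilon/(\beta s\log\varepsilon^{-1})$. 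We then impose $\gamma\le c\,\varepsilon/(\beta s\log\varepsilon^{-1})$, which is exactly the stated proviso on $\gamma$. With this choice, $N=\O((1+\beta H_*)\log(\lambda\beta/\varepsilon))$ is polylogarithmic apart from the factor $(1+\beta H_*)$.

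The cost is a product of three factors. First, Lemma~\ref{lem:TD_Phi} uses $\O(QJ)$ queries to $W(r)$, with
\[
Q=\O\bigl(\beta\lambda\log\varepsilon^{-1}+\log\varepsilon^{-1}\bigr),\qquad
J=\O\Bigl(\frac{\beta^2\|V\|^2\log\varepsilon^{-1}}{\varepsilon}+\frac{(\beta\lambda\log\varepsilon^{-1})^{3/2}}{\sqrt{\varepsilon}}\Bigr).
\]
Using $\|V\|\le\lambda$, this gives $QJ=\Ot(\beta^3\lambda^3/\varepsilon)$. Second, each $W(r)$ costs $N$ queries to $B_r(t)$. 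Third, LCHS calls the real-time simulation only $\polylog$ many times, at most $\Ot(1)$ overhead. Multiplying these gives $\Ot((1+\beta H_*)\beta^3\lambda^3/\varepsilon)$ queries to $B_r(t)$. The gate count adds Remark~\ref{rem:TDHS_gates}'s $\Ot(Qa)=\Ot(\beta\lambda a)$ to the LCHS gate term $\O(Q\log\|A\|_{L^1}\polylog\varepsilon^{-1})$. The LCU state preparations over $N$ and $J$ nodes are absorbed into the same $\Ot((1+\beta H_*)\beta^3\lambda^3/\varepsilon)$ term.

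The main obstacle is the error propagation through LCHS, in two respects. First, Proposition~\ref{prop:optimal_lchs} is stated for exact block encodings of $A(r)$, whereas we only have a $2\delta$-accurate real-time simulation. I would handle this with the standard robustness argument: LCHS is a linear combination of simulations with weights of total $\ell_1$-norm $\O(1)$, so the errors of the real-time blocks add to $\O(\delta)$. Second, the LCHS guarantee is for $\TO\exp(-\int_0^s A)$ with $A$ PSD. I would check that the interpolated $\overline{\Phi}_r$ from the proof of Lemma~\ref{lem:TD_Phi} is close enough to PSD. Failing that, I would shift by $+(\varepsilon'+\gamma)\I$ and use a stability bound for time-ordered exponentials of nearly-PSD generators. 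The shift perturbs $\eta_s$ by only $\O(\beta s(\varepsilon'+\gamma))=\O(\varepsilon/\log\varepsilon^{-1})$. Combining these two contributions with the $\Theta(\varepsilon)$ LCHS error yields a total error of $\varepsilon$ after rescaling constants.
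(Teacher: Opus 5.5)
Your proposal is correct and follows the paper's proof. It uses the same composition of Corollary~\ref{lem:LCU_Phi_s}, Lemma~\ref{lem:TD_Phi} (with $\theta=R\beta/2$, $\delta=\Theta(\varepsilon)$) and Proposition~\ref{prop:optimal_lchs}, the same choice $\varepsilon_\Phi,\gamma=\O(\varepsilon/(\beta s\log\varepsilon^{-1}))$, the same cost multiplication, and the same $\ell_1$-robustness argument for the LCHS linear combination.

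Your second worry is unnecessary. Lemma~\ref{lem:TD_Phi} already gives $2\delta$-accurate encodings of $\TO\exp(-i\theta\int_0^s\Phi_r\,dr)$ for the exact PSD $\Phi_r$. So, as the paper does, you compare against the LCHS combination for the exact generator $\frac{\beta}{2}\Phi_r$, and $\overline{\Phi}_r$ never needs to be PSD or shifted.

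One bookkeeping slip: the $\log J$ register is of order $\log(\beta\lambda/\varepsilon)$, so it is not absorbed into the stated $\log\log(\lambda/\varepsilon)$ term. The stated ancilla count instead corresponds to Proposition~\ref{prop:optimal_lchs} with $q=a+\lceil\log_2 N\rceil+1$.
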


\begin{proof}
    The setup is given by \cref{prop:optimal_lchs} with the ideal PSD generator $A(r) = \frac{\beta}{2} \Phi_r$. The only subtlety we have to keep track of is that we only have access to block encodings of approximations $A(r) = \frac{\beta}{2} \widetilde{\Phi}_r' \approx \frac{\beta}{2} \Phi_r$. Thus our strategy will be to determine the complexity of constructing $U_s = \TO\exp\l( -\int_0^s A(r) \, dr \r)$ and then bound the error $\|U_s - \eta_s\|$.

    Let $W_h^{(R)}(r) = \BE[R \cdot A(r); \frac{|R|\beta}{2} \lambda, a + \lceil \log_2 N \rceil + 1, 0]$ for $|R| \leq \O(\log\varepsilon_U^{-1})$, where $\varepsilon_U$ is the ultimate desired error on $U_s$. From \cref{lem:LCU_Phi_s}, every query to $W_h^{(R)}(r)$ costs $N$ queries to $B_r(t)$. The relevant costs of the real-time evolutions used in the LCHS block encoding come from the costs of time-dependent Hamiltonian simulation:
    \begin{itemize}
        \item \textbf{Number of queries:}~for $\delta$ error,
        \begin{equation}
        \begin{split}
            \mathcal{Q} &= \O\l( |R|\beta\lambda s + \log\frac{1}{\delta} \r) \times \O\l( \frac{|R|(\beta\|V\| s)^2}{\delta} + \frac{(|R|\beta\lambda s)^{3/2}}{\sqrt{\delta}} \r)\\
            &= \Ot\l( \frac{R^2 (\beta \lambda s)^3}{\delta} + \frac{(|R|\beta\lambda s)^{5/2}}{\sqrt{\delta}} \r) = \Ot\l( \frac{(|R|\beta\lambda s)^3}{\delta} \r)
        \end{split}
        \end{equation}
        from \cref{lem:TD_Phi}.

        \item \textbf{Number of gates:}
        \begin{equation}
            \mathcal{G} = \Ot\l(\l(|R|\beta\lambda s + \log\frac{1}{\delta}\r)(a + \log N)\r)
        \end{equation}
        from \cref{rem:TDHS_gates}.
    \end{itemize}
    Hence the cost to block encode $U_s$ is $\mathcal{Q}$ queries to $W_h^{(R)}(r)$ and $\O(\mathcal{G} + \mathcal{Q} \log(\beta\lambda s) \polylog \varepsilon_U^{-1})$ additional one- and two-qubit gates, by \cref{prop:optimal_lchs}. We get the final asymptotic complexities by recognizing that we can take:
    \begin{align}
        \alpha_U = \Theta(1), &\quad \delta = \Theta(\varepsilon_U),\\
        |R| = \O(\log \varepsilon_U^{-1}), &\quad N = \O\l( (1 + \beta H_*) \log\frac{\lambda}{\varepsilon_\Phi} \r),\\
        \text{and } &\quad s \leq 1.
    \end{align}
    Note that $(\lambda, a, \gamma)$ are the inherent block-encoding parameters for $V$, while $\varepsilon_\Phi + \gamma$ is the error between $\widetilde{\Phi}_r'$ and $\Phi_r$.

    Now we estimate the error between $U_s$ and $\eta_s$. First we bound the error between the constructed operator $\widetilde{U}_s$ and the operator $\widetilde{U}_s^{\mathrm{ideal}}$ that we would have constructed had we had access to $\Phi_r$ instead of $\widetilde{\Phi}_r'$. Recall that the LCHS method implements a linear combination:
    \begin{equation}
        \widetilde{U}_s = \sum_j c_j T_j, \quad \widetilde{U}_s^{\mathrm{ideal}} = \sum_j c_j T_j^{\mathrm{ideal}}
    \end{equation}
    where $T_j, T_j^{\mathrm{ideal}}$ are time-dependent Hamiltonian simulations of $\widetilde{\Phi}', \Phi$, respectively. Thus
    \begin{align}
        \l\| \widetilde{U}_s - \widetilde{U}_s^{\mathrm{ideal}} \r\| &\leq \sum_j |c_j| \cdot \l\| T_j - T_j^{\mathrm{ideal}} \r\| \notag\\
        &\leq \sum_j |c_j| \l( \l\| T_j - \TO\exp\l( -i \frac{R_j \beta}{2} \int_0^s dr \, \widetilde{\Phi}'_r \r) \r\| + \l\| T_j^{\mathrm{ideal}} - \TO\exp\l( -i \frac{R_j \beta}{2} \int_0^s dr \, \Phi_r \r) \r\| \r. \notag\\
        &\phantom{\leq \sum_j |c_j| \bigg(} + \l. \l\| \TO\exp\l( -i \frac{R_j \beta}{2} \int_0^s dr \, \widetilde{\Phi}'_r \r) - \TO\exp\l( -i \frac{R_j \beta}{2} \int_0^s dr \, \Phi_r \r) \r\| \r)\\
        &\leq \sum_j |c_j| \l( 2\delta + \frac{|R_j|\beta}{2} \int_0^s \|\widetilde{\Phi}_r' - \Phi_r \| \, dr \r) \notag\\
        &= \O\l( \delta + \max_j |R_j| \beta s (\varepsilon_\Phi + \gamma) \r) \notag
    \end{align}
    since $\sum_j |c_j| = \alpha_U = \Theta(1)$. We already have $\delta = \Theta(\varepsilon_U)$;~it remains to choose both
    \begin{equation}
        \varepsilon_\Phi, \gamma = \O\l( \frac{\varepsilon_U}{\beta s \log \varepsilon_U^{-1}} \r)
    \end{equation}
    to bound this error by $\O(\varepsilon_U)$.
    
    Finally, the error between $\widetilde{U}_s^{\mathrm{ideal}}$ and $\eta_s$ is $\varepsilon_U$ by construction of the LCHS implementation. Hence we get
    \begin{equation}
        \l\| \widetilde{U}_s - \eta_s \r\| \leq \l\| \widetilde{U}_s - \widetilde{U}_s^{\mathrm{ideal}} \r\| + \l\| \widetilde{U}_s^{\mathrm{ideal}} - \eta_s \r\| = \O(\delta) + \varepsilon_U = \O(\varepsilon_U).
    \end{equation}
    Choosing $\varepsilon = \Theta(\varepsilon_U)$ with a sufficiently small constant factor proves the claim.
\end{proof}

Let us now state our generic protocol for preparing quantum Gibbs states. We assume that $H = H_0 + V$ and that an efficient circuit to prepare a purification of the Gibbs state for $H_0$ at any temperature is available. The output is a pure state approximating a purification of $\rho^{(\beta)}(H_1)$ with controlled error. The algorithm can be easily modified to restrict to only mixed Gibbs states as inputs and outputs;~this can reduce the space overhead by $n$ qubits, but costs quadratically more rounds to successfully prepare the state (due to a lack of amplitude amplification). For this reason, we focus only on the purified framework here. Recall that the canonical purification of an $n$-qubit mixed state $\rho$ is the $2n$-qubit pure state
\begin{equation}
    \ket{\sqrt{\rho}} \coloneqq (\sqrt{\rho} \otimes \I) \ket{\Omega}, \quad \text{where } \ket{\Omega} \coloneqq \sum_{x=0}^{2^n-1} \ket{x}\ket{x}.
\end{equation}

\begin{theorem}\label{thm:QBP_prep_alg}
    Let $H_1 = H_0 + V$ and $\rho_s$ be the Gibbs state of $H_s$ at fixed inverse temperature $\beta \geq 0$. Suppose $H_0$ and $V$ are accessed as exact block encodings $U_0 = \BE[H_0; \lambda_0, a_0, 0]$ and $U_V = \BE[V; \lambda_V, a_V, 0]$, respectively. Also assume we have access to a quantum circuit $P_0$ that prepares the purification of $\rho_0$, i.e., $P_0 : \ket{0^p} \ket{0^{2n}} \mapsto \ket{0^p} \ket{\sqrt{\rho_0}}$ for some integer $p \geq 0$. Then for $\varepsilon \in (0, \O(1)]$ there is a quantum circuit $P_1$ with gate complexity and query complexity (to $P_0, U_0, U_V$, and their inverses) of
    \[
        \poly(e^{\beta\|V\|}, \beta, \lambda_{0}, \lambda_V, 1/\varepsilon),
    \]
    such that
    \begin{equation}
        \| P_1 \ket{0^{b}} \ket{0^{2n}} - \ket{0^{b}} \ket{\sqrt{\rho_1}} \| \leq \varepsilon
    \end{equation}
    where the size $b$ of the ancilla register is linear in $a_0, a_V$, and $p$, and at most logarithmic in all other parameters.
\end{theorem}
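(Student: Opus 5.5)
The plan is to run quantum belief propagation at inverse temperature $\beta/2$ rather than $\beta$. I apply it to \emph{both} halves of the canonical purification, one half complex-conjugated, so that the unknown polar unitary never appears. Write $\eta^{(\beta/2)}_1$ for the QBP operator of \cref{eq:QBP_op} built with $f_{\beta/2}$. The QBP identity at inverse temperature $\beta/2$ gives
\[
\eta^{(\beta/2)}_1\, e^{-\beta H_0/2}\, \eta^{(\beta/2)\dagger}_1 = e^{-\beta H_1/2}.
\]
Using $(\I\otimes A^{\T})\ket{\Omega}=(A\otimes\I)\ket{\Omega}$ and $\bar\eta^{\T}=\eta^\dagger$, the product operator acts as
\[
\bigl(\eta^{(\beta/2)}_1\otimes\bar\eta^{(\beta/2)}_1\bigr)\ket{\sqrt{\rho_0}}
=\Bigl(\tfrac{1}{\sqrt{Z_0}}\,\eta^{(\beta/2)}_1 e^{-\beta H_0/2}\eta^{(\beta/2)\dagger}_1\otimes\I\Bigr)\ket{\Omega}
=\sqrt{\tfrac{Z_1}{Z_0}}\,\ket{\sqrt{\rho_1}}.
\]
This is exactly the canonical purification, with amplitude $\sqrt{Z_1/Z_0}\geq e^{-\beta\|V\|/2}$, since $e^{-\beta\|V\|}Z_0\le Z_1$.

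The steps, in order, are as follows.

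(i) Shift $V\to V+\|V\|\I$ so that $V\succeq0$. This only rescales $e^{-\beta H_1/2}$ by a scalar, so $\rho_1$ is unchanged, and the block encoding of $V$ is adjusted by LCU with normalization $O(\lambda_V)$.

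(ii) Build $B_r(t)=\BE[e^{-itH_r}Ve^{itH_r};\lambda_V,a,\gamma]$ via \cref{rem:B(t)-gates}. The block encoding of $H_r=H_0+rV$ is obtained by LCU of $U_0$ and $U_V$, with normalization $\lambda_0+\lambda_V$. Then invoke \cref{thm:eta_s_cost} with $s=1$ and $\beta/2$ to get $\BE[\eta^{(\beta/2)}_1;\alpha,\cdot,\varepsilon']$ with $\alpha=\Theta(1)$. Here $H_*\le\|H_0\|+\|V\|\le\lambda_0+\lambda_V$, so the cost is $\poly(\beta,\lambda_0,\lambda_V,1/\varepsilon')$.

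(iii) Obtain a block encoding of $\bar\eta^{(\beta/2)}_1$ by running the same construction with the complex-conjugated circuit. Conjugating every elementary gate gives $\bar U_0$ and $\bar U_V$, which block-encode $\bar H_0$ and $\bar V$. The whole QBP pipeline commutes with entrywise conjugation: the quadrature weights are real, and $\overline{e^{-itH}}=e^{it\bar H}$ is absorbed by the conjugated simulation. Queries to $\bar U_0$ and $\bar U_V$ therefore count as queries to the conjugated oracles. I regard this as legitimate because the oracles are given as circuits; this is the step I expect to need the most care in stating.

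(iv) Prepare $\ket{\sqrt{\rho_0}}$ with $P_0$, apply the two block encodings on the system and purifying registers respectively, and flag success on the all-zero ancillas. The good branch is $\alpha^{-2}\sqrt{Z_1/Z_0}\,\ket{\sqrt{\rho_1}}$ up to error $O(\varepsilon')$. Apply standard amplitude amplification using reflections built from $P_0$, $P_0^\dagger$ and the ancilla projector. This takes $O(\alpha^2 e^{\beta\|V\|/2})$ rounds, and fixed-point amplification \cite{yoder2014fixed} is used if the amplitude is known only through the lower bound.

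(v) Error analysis. The block-encoding error $O(\varepsilon')$ in the unnormalized branch becomes a relative error $O(\varepsilon' e^{\beta\|V\|/2})$ after normalization, and amplification multiplies it by at most the number of rounds. Choosing $\varepsilon'=\varepsilon\, e^{-O(\beta\|V\|)}$ therefore gives final error at most $\varepsilon$, with total cost $\poly(e^{\beta\|V\|},\beta,\lambda_0,\lambda_V,1/\varepsilon)$ by step (ii). The ancilla count is read off \cref{thm:eta_s_cost} plus $p$ and $O(\log)$ amplification qubits, which is linear in $a_0,a_V,p$ and logarithmic otherwise.

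The main obstacle is justifying step (iii), access to $\bar\eta$, together with the bookkeeping that the conjugated and original LCHS approximations both meet the $\gamma$ condition of \cref{thm:eta_s_cost}. Should conjugate oracles be disallowed, I would instead try to exploit that after Jordan--Wigner the relevant Hamiltonians can be taken real, so that $\bar\eta=\eta$.
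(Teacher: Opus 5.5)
Your proposal is correct and is essentially the paper's proof. The paper likewise shifts $V$ to be PSD and runs QBP along $H'_s=(H_0+s\hat V)/2$ at inverse temperature $\beta$, which is exactly your $\eta^{(\beta/2)}_1$; it applies $U\otimes U^*$ to $\ket{\sqrt{\rho_0}}$, taking the conjugated block encoding for granted as you do, and finishes with fixed-point amplitude amplification and $\delta=\Theta(e^{-\beta\|V\|}\varepsilon)$.

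One bookkeeping slip: after the shift the good-branch amplitude is $e^{-\beta\|V\|/2}\sqrt{Z_1/Z_0}\in[e^{-\beta\|V\|},1]$, not $\sqrt{Z_1/Z_0}\ge e^{-\beta\|V\|/2}$. The round count is therefore $O(e^{\beta\|V\|})$ rather than $O(e^{\beta\|V\|/2})$, which is harmless for the stated $\poly(e^{\beta\|V\|},\dots)$ bound.
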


\begin{proof}
    The cases $\beta = 0$ or $V = 0$ are trivial, so we assume $\beta > 0$ and $\|V\| > 0$ throughout. Shift $V$ to be PSD by defining $\hat{V} \coloneqq V + \|V\| \I$. Also define $H_s' \coloneqq (H_0 + s\hat{V})/2$. Let $\eta_s$ be the corresponding QBP operator for this path $H_s'$ at inverse temperature $\beta$. Observe that the purified Gibbs state of $H_0$ is
    \begin{equation}
        \ket{\sqrt{\rho_0}} = \l(\sqrt{\rho_0} \otimes \I\r) \ket{\Omega},
    \end{equation}
    and so by the QBP identity (\cref{cor:QBP_density_matrix}) we have
    \begin{equation}
        \l(\eta_1 \sqrt{\rho_0} \eta_1^\dagger \otimes \I\r) \ket{\Omega} = c\ket{\sqrt{\rho_1}}, \quad \text{where } c \coloneqq e^{-\beta\|V\|/2}\sqrt{\frac{Z_1}{Z_0}}.
    \end{equation}
     Here, $Z_0 = \tr(e^{-\beta H_0})$ and $Z_1 = \tr(e^{-\beta (H_0 + V)})$ are the usual partition functions of the desired Gibbs states. At the same time, we can rewrite the left-hand side above as
    \begin{equation}
         \l(\eta_1 \sqrt{\rho_0} \eta_1^\dagger \otimes \I\r) \ket{\Omega} = \l(\eta_1 \sqrt{\rho_0} \otimes \eta_1^*\r) \ket{\Omega} = (\eta_1 \otimes \eta_1^*) \ket{\sqrt{\rho_0}}.
    \end{equation}
    Since $0 \preceq \hat{V} \preceq 2\|V\|\I$, the min--max principle (e.g., see~\cite{carlen2014remainder}) gives
    \begin{equation}
        e^{-2\beta\|V\|} Z_0 \leq \tr(e^{-\beta(H_0 + \hat{V})}) \leq Z_0
    \end{equation}
    and therefore $e^{-\beta \|V\|} \leq c \leq 1$.
    
    Now let $U$ be the $(\alpha, q, \delta)$-block encoding of $\eta_1$ from \cref{thm:eta_s_cost}. Write $\widetilde{\eta}_1$ for the $\delta$-approximation and apply:
    \begin{equation}\label{eq:UotU_on_rho0}
    \begin{split}
        (U \otimes U^*) \ket{0^{2q}} \ket{0^{p}} \ket{\sqrt{\rho_0}} &= \ket{0^{2q}} \ket{0^{p}} \frac{\widetilde{\eta}_1 \otimes \widetilde{\eta}_1^*}{\alpha^2} \ket{\sqrt{\rho_0}} + \ket{{\perp}}\\
        &= \ket{0^{2q}} \ket{0^{p}} \frac{1}{\alpha^2} \l( c\ket{\sqrt{\rho_1}} + \ket{\mathrm{err}} \r) + \ket{{\perp}}
    \end{split}
    \end{equation}
    where $\ket{{\perp}}$ is orthogonal to $\ket{0^{2q}}$. The state $\ket{\mathrm{err}}$ has norm at most
    \begin{equation}
    \begin{split}
        \| \ket{\mathrm{err}} \| &= \| (\eta_1 \otimes \eta_1^*) \ket{\sqrt{\rho_0}} - (\widetilde{\eta}_1 \otimes \widetilde{\eta}_1^*) \ket{\sqrt{\rho_0}} \|\\
        &\leq \| \eta_1 \otimes \eta_1^* - \widetilde{\eta}_1 \otimes \widetilde{\eta}_1^* \|\\
        &\leq (\|\eta_1\| + \|\widetilde{\eta}_1\|) \| \eta_1 - \widetilde{\eta}_1 \|\\
        &\leq (2 + \delta) \delta,
    \end{split}
    \end{equation}
    where the final line comes from the fact that $\hat{V} \succeq 0$, so $\Phi_s \succeq 0$ and hence $\|\eta_1\| \leq 1$. Let $\ket{w} \coloneqq c\ket{\sqrt{\rho_1}} + \ket{\mathrm{err}}$ be the state in the success branch of \cref{eq:UotU_on_rho0}. For $\delta = \O(c)$ sufficiently small, $\|\ket{w}\| = \Omega(c)$ and so
    \begin{equation}
        \l\| \frac{\ket{w}}{\|\ket{w}\|} - \ket{\sqrt{\rho_1}} \r\| = \O\l(\frac{\delta}{c}\r) = \O(e^{\beta\|V\|}\delta).
    \end{equation}

    Now apply fixed-point amplitude amplification~\cite{yoder2014fixed,gilyen2019quantum} flagged on the $\ket{0^{2q}}$ ancilla. The success-branch amplitude is $\omega \coloneqq \frac{1}{\alpha^2} \|\ket{w}\| = \Omega(e^{-\beta\|V\|})$, so we can amplify this branch up to amplitude at least $1 - \chi^2$ by making
    \begin{equation}
        L = \O\l(\frac{\log(1/\chi)}{\omega}\r) = \O\l( e^{\beta\|V\|} \log(1/\chi) \r)
    \end{equation}
    queries to controlled $U \otimes U^*$ and $P_0$ (and their inverses). The resulting amplified state $\ket{\Psi}$ therefore has error
    \begin{equation}
        \| \ket{\Psi} - \ket{0} \ket{0^{2q}} \ket{0^{p}} \ket{\sqrt{\rho_1}} \| \leq \chi + \|\ket{\mathrm{err}}\| \leq \chi + \O(e^{\beta\|V\|} \delta).
    \end{equation}
    To make this at most $\varepsilon$, we can choose $\chi = \Theta(\varepsilon)$ and $\delta = \Theta(e^{-\beta \|V\|} \varepsilon)$.
    
    The final complexity for preparing this state is then $\O(L)$ times the cost of each $\eta_1$ block encoding (\cref{thm:eta_s_cost}):
    \[
    \begin{aligned}
        Q = \Ot\l( e^{2\beta\|V\|} (\|H_0\| + \|V\|) \frac{\beta^4 \lambda_V^3}{\varepsilon} \r) \quad &\text{queries to } B_r(t), \text{ and}\\
        \Ot\l( Q + e^{\beta\|V\|} \beta\lambda_V a_V \r) \quad &\text{additional quantum gates}
    \end{aligned}
    \]
    where $\|H_0\| \leq \lambda_{0}$ and $\|V\| \leq \lambda_V$. By \cref{rem:B(t)-gates}, each $B_r(t)$ can be constructed with sufficiently small error using only $\poly(\lambda_0, \lambda_V, |t|)$ gates where $|t| = \Ot(\beta)$.
\end{proof}

\end{document}